\pdfoutput=1
\RequirePackage[l2tabu,orthodox]{nag}
\documentclass[english]{article}
\usepackage[letterpaper, margin=1in]{geometry}

\usepackage[T1]{fontenc}
\usepackage[utf8]{inputenc}
\usepackage{lmodern}
\usepackage{microtype}

\usepackage{float}
\RequirePackage[pdfborder={0 0 0}]{hyperref}

\usepackage{tkz-graph}
\tikzset{VertexStyle/.style = {minimum size = .5pt, inner sep = .5pt}}
\usetikzlibrary{calc,matrix}
\usepackage{ifthen}

\usepackage{amsmath}
\usepackage{amssymb}
\usepackage{bm}

\usepackage{enumitem}

\newcounter{nlistcounter}
\newenvironment{nlist}[1]{
  \renewcommand{\thenlistcounter}{\upshape(#1.\arabic{nlistcounter})}
  \begin{list}{\thenlistcounter}{%
      \usecounter{nlistcounter}
      \setlength{\labelwidth}{1.5em}%
      \setlength{\leftmargin}{\labelwidth}%
      \addtolength{\leftmargin}{\labelsep}%
      \setlength{\listparindent}{0em}%
      \setlength{\topsep}{5pt}%
      \setlength{\itemsep}{5pt}%
      \setlength{\parsep}{0pt}%
    }
  }{
  \end{list}
}

\usepackage[amsmath,amsthm,hyperref,thmmarks]{ntheorem}
\theoremseparator{.}
\newtheorem{thm}{Theorem}[section]
\newtheorem{lem}[thm]{Lemma}
\newtheorem{prop}[thm]{Proposition}
\newtheorem{cor}[thm]{Corollary}
\newtheorem{claim}[thm]{Claim}
\newtheorem{fact}[thm]{Fact}

\theoremstyle{plain}
\theorembodyfont{\upshape}
\newtheorem{example}[thm]{Example}
\newtheorem{rem}[thm]{Remark}
\newtheorem{defn}[thm]{Definition}

\theoremstyle{empty}
\theorembodyfont{\itshape}
\newtheorem{rep}{}
\newcommand{\newreptheorem}[2]{%
	\newenvironment{rep#1}[1]{%
  \begin{rep}[#2 \ref{##1}, restated.]}%
		{\end{rep}}}

\newreptheorem{theorem}{Theorem}
\newreptheorem{lemma}{Lemma}
\newreptheorem{proposition}{Proposition}
\newreptheorem{corollary}{Corollary}

\newcommand{\cone}{\gamma}
\newcommand{\comp}{\alpha}
\newcommand{\bag}{\beta}
\newcommand{\sep}{\sigma}

\global\long\def\partitions#1{\mathrm{Part}(#1)}
\global\long\def\contract#1#2{#1 / #2}
\global\long\def\spasm#1{\mathrm{Spasm}\paren*{#1}}

\global\long\def\hom{\operatorname{Hom}}
\global\long\def\surj{\operatorname{Surj}}
\global\long\def\aut{\operatorname{Aut}}
\global\long\def\iso{\operatorname{Iso}}
\global\long\def\sub{\operatorname{Sub}}
\global\long\def\ext{\operatorname{Ext}}
\global\long\def\indsub{\operatorname{IndSub}}
\global\long\def\emb{\operatorname{Emb}}
\global\long\def\stremb{\operatorname{StrEmb}}
\global\long\def\part{\operatorname{Part}}

\global\long\def\Aut#1{\mathrm{Aut}(#1)}
\global\long\def\Emb#1#2{\mathrm{Emb}(#1\to#2)}
\global\long\def\StrEmb#1#2{\mathrm{StrEmb}(#1\to#2)}
\global\long\def\Hom#1#2{\mathrm{Hom}(#1\to#2)}
\global\long\def\Sub#1#2{\mathrm{Sub}(#1\to#2)}
\global\long\def\IndSub#1#2{\mathrm{IndSub}(#1\to#2)}
\global\long\def\PartitionedSub#1#2{\mathrm{PartitionedSub}(#1\to#2)}

\global\long\def\IndProb#1{\mathsf{\#Ind}(#1)}
\global\long\def\IndPropProb#1{\mathsf{\#IndProp}(#1)}
\global\long\def\HomProb#1{\mathsf{\#Hom}(#1)}
\global\long\def\SubProb#1{\mathsf{\#Sub}(#1)}
\global\long\def\PartitionedSubProb#1{\mathsf{\#PartitionedSub}(#1)}

\global\long\def\FPT{\mathsf{FPT}}
\global\long\def\sharpWone{\mathsf{\#W[1]}}
\global\long\def\sharpETH{\mathsf{\#ETH}}
\global\long\def\Wone{\mathsf{W[1]}}
\global\long\def\ETH{\mathsf{ETH}}
\global\long\def\Graphsl{{\mathcal{G}}}
\global\long\def\Graphsu{{\mathcal{G}^*}}

\global\long\def\poly#1{\mathrm{poly}\paren*{#1}}
\global\long\def\supp#1{\mathrm{supp}\paren*{#1}}
\global\long\def\tw#1{\mathrm{tw}(#1)}

\global\long\def\vc#1{\mathrm{vc}(#1)}
\global\long\def\sparseTWconstant{\xi}

\newcommand{\cc}[1]{\ensuremath{\mathsf{#1}}}
\newcommand{\NP}{\cc{NP}}
\newcommand{\PolyTime}{\cc{P}}
\newcommand{\sharpP}{\cc{\#P}}

\newcommand{\executeiffilenewer}[3]{%
\ifnum\pdfstrcmp{\pdffilemoddate{#1}}%
{\pdffilemoddate{#2}}>0%
{\immediate\write18{#3}}\fi%
} 
\newcommand{%
\executeiffilenewer{.svg}{.pdf}%
{inkscape -z -D --file=.svg %
--export-pdf=.pdf --export-latex}%
{\input{.pdf_tex}}}[1]{%
\executeiffilenewer{#1.svg}{#1.pdf}%
{inkscape -z -D --file=#1.svg %
--export-pdf=#1.pdf --export-latex}%
{\input{#1.pdf_tex}}}%

\usepackage{mathtools}
\DeclarePairedDelimiter\paren{\lparen}{\rparen}
\DeclarePairedDelimiter\abs{\lvert}{\rvert}
\DeclarePairedDelimiter\set{\{}{\}}
\DeclarePairedDelimiterX\setc[2]{\{}{\}}{\,#1 \;\colon\; #2\,}
\DeclarePairedDelimiterX\parenc[2]{\lparen}{\rparen}{\,#1 \;\delimsize\vert\; #2\,}

\newcommand{\dotcup}{\mathbin{\dot\cup}}

\newcommand{\cH}{\mathcal{H}}

\usepackage{authblk}

\author[a]{Radu Curticapean}
\author[b]{Holger Dell}
\author[a]{D\'aniel Marx}

\affil[a]{%
  Institute for Computer Science and Control, Hungarian Academy of
  Sciences (MTA SZTAKI), Budapest, Hungary%
}
\affil[b]{%
  Saarland University and Cluster of Excellence (MMCI), Saarbr\"{u}cken, Germany%
}

\input{figures/tw-maximal.tikz}
\newcommand{\drawgraph}[1]{%
\ifthenelse{\equal{#1}{DDW}}{%
\begin{tikzpicture}[baseline=1mm]
\useasboundingbox (0,0) rectangle (0.8cm,0.4cm);
\definecolor{cv0}{rgb}{0.0,0.0,0.0}
\definecolor{cfv0}{rgb}{0.0,0.0,0.0}
\definecolor{cv1}{rgb}{0.0,0.0,0.0}
\definecolor{cfv1}{rgb}{0.0,0.0,0.0}
\definecolor{cv2}{rgb}{0.0,0.0,0.0}
\definecolor{cfv2}{rgb}{0.0,0.0,0.0}
\definecolor{cv3}{rgb}{0.0,0.0,0.0}
\definecolor{cfv3}{rgb}{0.0,0.0,0.0}
\definecolor{cv4}{rgb}{0.0,0.0,0.0}
\definecolor{cfv4}{rgb}{0.0,0.0,0.0}
\definecolor{cv0v3}{rgb}{0.0,0.0,0.0}
\definecolor{cv1v4}{rgb}{0.0,0.0,0.0}
\definecolor{cv2v3}{rgb}{0.0,0.0,0.0}
\definecolor{cv2v4}{rgb}{0.0,0.0,0.0}
\Vertex[style={minimum size=0.0cm,draw=cv0,fill=cfv0,shape=circle},NoLabel,x=0.1cm,y=0.35cm]{v0}
\Vertex[style={minimum size=0.0cm,draw=cv1,fill=cfv1,shape=circle},NoLabel,x=0.75cm,y=0.05cm]{v1}
\Vertex[style={minimum size=0.0cm,draw=cv2,fill=cfv2,shape=circle},NoLabel,x=0.4343cm,y=0.2055cm]{v2}
\Vertex[style={minimum size=0.0cm,draw=cv3,fill=cfv3,shape=circle},NoLabel,x=0.2595cm,y=0.2832cm]{v3}
\Vertex[style={minimum size=0.0cm,draw=cv4,fill=cfv4,shape=circle},NoLabel,x=0.6065cm,y=0.1225cm]{v4}
\Edge[lw=0.01cm,style={color=cv0v3,},](v0)(v3)
\Edge[lw=0.01cm,style={color=cv1v4,},](v1)(v4)
\Edge[lw=0.01cm,style={color=cv2v3,},](v2)(v3)
\Edge[lw=0.01cm,style={color=cv2v4,},](v2)(v4)
\end{tikzpicture}
}{}%
\ifthenelse{\equal{#1}{Cr}}{%
\begin{tikzpicture}[baseline=1mm]
\useasboundingbox (0,0) rectangle (0.8cm,0.4cm);
\definecolor{cv0}{rgb}{0.0,0.0,0.0}
\definecolor{cfv0}{rgb}{0.0,0.0,0.0}
\definecolor{cv1}{rgb}{0.0,0.0,0.0}
\definecolor{cfv1}{rgb}{0.0,0.0,0.0}
\definecolor{cv2}{rgb}{0.0,0.0,0.0}
\definecolor{cfv2}{rgb}{0.0,0.0,0.0}
\definecolor{cv3}{rgb}{0.0,0.0,0.0}
\definecolor{cfv3}{rgb}{0.0,0.0,0.0}
\definecolor{cv0v1}{rgb}{0.0,0.0,0.0}
\definecolor{cv0v2}{rgb}{0.0,0.0,0.0}
\definecolor{cv1v3}{rgb}{0.0,0.0,0.0}
\definecolor{cv2v3}{rgb}{0.0,0.0,0.0}
\Vertex[style={minimum size=0.0cm,draw=cv0,fill=cfv0,shape=circle},NoLabel,x=0.7451cm,y=0.05cm]{v0}
\Vertex[style={minimum size=0.0cm,draw=cv1,fill=cfv1,shape=circle},NoLabel,x=0.1cm,y=0.0522cm]{v1}
\Vertex[style={minimum size=0.0cm,draw=cv2,fill=cfv2,shape=circle},NoLabel,x=0.75cm,y=0.3485cm]{v2}
\Vertex[style={minimum size=0.0cm,draw=cv3,fill=cfv3,shape=circle},NoLabel,x=0.1102cm,y=0.35cm]{v3}
\Edge[lw=0.01cm,style={color=cv0v1,},](v0)(v1)
\Edge[lw=0.01cm,style={color=cv0v2,},](v0)(v2)
\Edge[lw=0.01cm,style={color=cv1v3,},](v1)(v3)
\Edge[lw=0.01cm,style={color=cv2v3,},](v2)(v3)
\end{tikzpicture}
}{}%
\ifthenelse{\equal{#1}{CN}}{%
\begin{tikzpicture}[rotate=180,baseline=-3mm]
\useasboundingbox (0,0) rectangle (0.8cm,0.4cm);
\definecolor{cv0}{rgb}{0.0,0.0,0.0}
\definecolor{cfv0}{rgb}{0.0,0.0,0.0}
\definecolor{cv1}{rgb}{0.0,0.0,0.0}
\definecolor{cfv1}{rgb}{0.0,0.0,0.0}
\definecolor{cv2}{rgb}{0.0,0.0,0.0}
\definecolor{cfv2}{rgb}{0.0,0.0,0.0}
\definecolor{cv3}{rgb}{0.0,0.0,0.0}
\definecolor{cfv3}{rgb}{0.0,0.0,0.0}
\definecolor{cv0v3}{rgb}{0.0,0.0,0.0}
\definecolor{cv1v2}{rgb}{0.0,0.0,0.0}
\definecolor{cv1v3}{rgb}{0.0,0.0,0.0}
\definecolor{cv2v3}{rgb}{0.0,0.0,0.0}
\Vertex[style={minimum size=0.0cm,draw=cv0,fill=cfv0,shape=circle},NoLabel,x=0.1cm,y=0.2659cm]{v0}
\Vertex[style={minimum size=0.0cm,draw=cv1,fill=cfv1,shape=circle},NoLabel,x=0.7258cm,y=0.05cm]{v1}
\Vertex[style={minimum size=0.0cm,draw=cv2,fill=cfv2,shape=circle},NoLabel,x=0.75cm,y=0.35cm]{v2}
\Vertex[style={minimum size=0.0cm,draw=cv3,fill=cfv3,shape=circle},NoLabel,x=0.4512cm,y=0.2296cm]{v3}
\Edge[lw=0.01cm,style={color=cv0v3,},](v0)(v3)
\Edge[lw=0.01cm,style={color=cv1v2,},](v1)(v2)
\Edge[lw=0.01cm,style={color=cv1v3,},](v1)(v3)
\Edge[lw=0.01cm,style={color=cv2v3,},](v2)(v3)
\end{tikzpicture}
}{}%
\ifthenelse{\equal{#1}{CF}}{%
\begin{tikzpicture}[baseline=1mm]
\useasboundingbox (0,0) rectangle (0.8cm,0.4cm);
\definecolor{cv0}{rgb}{0.0,0.0,0.0}
\definecolor{cfv0}{rgb}{0.0,0.0,0.0}
\definecolor{cv1}{rgb}{0.0,0.0,0.0}
\definecolor{cfv1}{rgb}{0.0,0.0,0.0}
\definecolor{cv2}{rgb}{0.0,0.0,0.0}
\definecolor{cfv2}{rgb}{0.0,0.0,0.0}
\definecolor{cv3}{rgb}{0.0,0.0,0.0}
\definecolor{cfv3}{rgb}{0.0,0.0,0.0}
\definecolor{cv0v3}{rgb}{0.0,0.0,0.0}
\definecolor{cv1v3}{rgb}{0.0,0.0,0.0}
\definecolor{cv2v3}{rgb}{0.0,0.0,0.0}
\Vertex[style={minimum size=0.0cm,draw=cv0,fill=cfv0,shape=circle},NoLabel,x=0.75cm,y=0.0871cm]{v0}
\Vertex[style={minimum size=0.0cm,draw=cv1,fill=cfv1,shape=circle},NoLabel,x=0.3636cm,y=0.35cm]{v1}
\Vertex[style={minimum size=0.0cm,draw=cv2,fill=cfv2,shape=circle},NoLabel,x=0.1cm,y=0.05cm]{v2}
\Vertex[style={minimum size=0.0cm,draw=cv3,fill=cfv3,shape=circle},NoLabel,x=0.4057cm,y=0.1584cm]{v3}
\Edge[lw=0.01cm,style={color=cv0v3,},](v0)(v3)
\Edge[lw=0.01cm,style={color=cv1v3,},](v1)(v3)
\Edge[lw=0.01cm,style={color=cv2v3,},](v2)(v3)
\end{tikzpicture}
}{}%
\ifthenelse{\equal{#1}{CR}}{%
\begin{tikzpicture}[baseline=1mm]
\useasboundingbox (0,0) rectangle (0.8cm,0.4cm);
\definecolor{cv0}{rgb}{0.0,0.0,0.0}
\definecolor{cfv0}{rgb}{0.0,0.0,0.0}
\definecolor{cv1}{rgb}{0.0,0.0,0.0}
\definecolor{cfv1}{rgb}{0.0,0.0,0.0}
\definecolor{cv2}{rgb}{0.0,0.0,0.0}
\definecolor{cfv2}{rgb}{0.0,0.0,0.0}
\definecolor{cv3}{rgb}{0.0,0.0,0.0}
\definecolor{cfv3}{rgb}{0.0,0.0,0.0}
\definecolor{cv0v2}{rgb}{0.0,0.0,0.0}
\definecolor{cv1v3}{rgb}{0.0,0.0,0.0}
\definecolor{cv2v3}{rgb}{0.0,0.0,0.0}
\Vertex[style={minimum size=0.0cm,draw=cv0,fill=cfv0,shape=circle},NoLabel,x=0.75cm,y=0.05cm]{v0}
\Vertex[style={minimum size=0.0cm,draw=cv1,fill=cfv1,shape=circle},NoLabel,x=0.1cm,y=0.35cm]{v1}
\Vertex[style={minimum size=0.0cm,draw=cv2,fill=cfv2,shape=circle},NoLabel,x=0.5482cm,y=0.1559cm]{v2}
\Vertex[style={minimum size=0.0cm,draw=cv3,fill=cfv3,shape=circle},NoLabel,x=0.3237cm,y=0.2638cm]{v3}
\Edge[lw=0.01cm,style={color=cv0v2,},](v0)(v2)
\Edge[lw=0.01cm,style={color=cv1v3,},](v1)(v3)
\Edge[lw=0.01cm,style={color=cv2v3,},](v2)(v3)
\end{tikzpicture}
}{}%
\ifthenelse{\equal{#1}{Bw}}{%
\begin{tikzpicture}[baseline=1mm]
\useasboundingbox (0,0) rectangle (0.8cm,0.4cm);
\definecolor{cv0}{rgb}{0.0,0.0,0.0}
\definecolor{cfv0}{rgb}{0.0,0.0,0.0}
\definecolor{cv1}{rgb}{0.0,0.0,0.0}
\definecolor{cfv1}{rgb}{0.0,0.0,0.0}
\definecolor{cv2}{rgb}{0.0,0.0,0.0}
\definecolor{cfv2}{rgb}{0.0,0.0,0.0}
\definecolor{cv0v1}{rgb}{0.0,0.0,0.0}
\definecolor{cv0v2}{rgb}{0.0,0.0,0.0}
\definecolor{cv1v2}{rgb}{0.0,0.0,0.0}
\Vertex[style={minimum size=0.0cm,draw=cv0,fill=cfv0,shape=circle},NoLabel,x=0.1cm,y=0.05cm]{v0}
\Vertex[style={minimum size=0.0cm,draw=cv1,fill=cfv1,shape=circle},NoLabel,x=0.3378cm,y=0.35cm]{v1}
\Vertex[style={minimum size=0.0cm,draw=cv2,fill=cfv2,shape=circle},NoLabel,x=0.75cm,y=0.0993cm]{v2}
\Edge[lw=0.01cm,style={color=cv0v1,},](v0)(v1)
\Edge[lw=0.01cm,style={color=cv0v2,},](v0)(v2)
\Edge[lw=0.01cm,style={color=cv1v2,},](v1)(v2)
\end{tikzpicture}
}{}%
\ifthenelse{\equal{#1}{BW}}{%
\begin{tikzpicture}[baseline=1mm]
\useasboundingbox (0,0) rectangle (0.8cm,0.4cm);
\definecolor{cv0}{rgb}{0.0,0.0,0.0}
\definecolor{cfv0}{rgb}{0.0,0.0,0.0}
\definecolor{cv1}{rgb}{0.0,0.0,0.0}
\definecolor{cfv1}{rgb}{0.0,0.0,0.0}
\definecolor{cv2}{rgb}{0.0,0.0,0.0}
\definecolor{cfv2}{rgb}{0.0,0.0,0.0}
\definecolor{cv0v2}{rgb}{0.0,0.0,0.0}
\definecolor{cv1v2}{rgb}{0.0,0.0,0.0}
\Vertex[style={minimum size=0.0cm,draw=cv0,fill=cfv0,shape=circle},NoLabel,x=0.1cm,y=0.35cm]{v0}
\Vertex[style={minimum size=0.0cm,draw=cv1,fill=cfv1,shape=circle},NoLabel,x=0.75cm,y=0.05cm]{v1}
\Vertex[style={minimum size=0.0cm,draw=cv2,fill=cfv2,shape=circle},NoLabel,x=0.4259cm,y=0.2004cm]{v2}
\Edge[lw=0.01cm,style={color=cv0v2,},](v0)(v2)
\Edge[lw=0.01cm,style={color=cv1v2,},](v1)(v2)
\end{tikzpicture}
}{}%
\ifthenelse{\equal{#1}{A_}}{%
\begin{tikzpicture}[baseline=1mm]
\useasboundingbox (0,0) rectangle (0.8cm,0.4cm);
\definecolor{cv0}{rgb}{0.0,0.0,0.0}
\definecolor{cfv0}{rgb}{0.0,0.0,0.0}
\definecolor{cv1}{rgb}{0.0,0.0,0.0}
\definecolor{cfv1}{rgb}{0.0,0.0,0.0}
\definecolor{cv0v1}{rgb}{0.0,0.0,0.0}
\Vertex[style={minimum size=0.0cm,draw=cv0,fill=cfv0,shape=circle},NoLabel,x=0.1cm,y=0.35cm]{v0}
\Vertex[style={minimum size=0.0cm,draw=cv1,fill=cfv1,shape=circle},NoLabel,x=0.75cm,y=0.05cm]{v1}
\Edge[lw=0.01cm,style={color=cv0v1,},](v0)(v1)
\end{tikzpicture}
}{}%
\ifthenelse{\equal{#1}{@}}{%
\begin{tikzpicture}[baseline=1mm]
\useasboundingbox (0,0) rectangle (0.8cm,0.4cm);
\definecolor{cv0}{rgb}{0.0,0.0,0.0}
\definecolor{cfv0}{rgb}{0.0,0.0,0.0}
\Vertex[style={minimum size=0.0cm,draw=cv0,fill=cfv0,shape=circle},NoLabel,x=0.425cm,y=0.2cm]{v0}
\end{tikzpicture}
}{}%
}

\begin{document}

\title{Homomorphisms Are a Good Basis for \\Counting~Small~Subgraphs\footnote{An extended abstract of this paper appears at STOC 2017.
    Part of this work was done while the authors were visiting the
    Simons Institute for the Theory of Computing and the Dagstuhl Seminar 17041
-- ``Randomization in Parameterized Complexity''.}}

\maketitle

\begin{abstract}
We introduce \emph{graph motif parameters}, a class of graph parameters that depend only on the frequencies of constant-size induced subgraphs.
Classical works by Lovász show that many interesting quantities have this form,
including, for fixed graphs~$H$,
the number of $H$-copies (induced or not) in an input graph~$G$,
and the number of homomorphisms from~$H$ to $G$.

Using the framework of graph motif parameters,
we obtain faster algorithms for counting subgraph copies of fixed graphs~$H$ in host graphs~$G$:
For graphs $H$ on $k$ edges, 
we show how to count subgraph copies of $H$ 
in time $k^{O(k)}\cdot n^{0.174k + o(k)}$
by a surprisingly simple algorithm.
This improves upon previously known running times,
such as $O(n^{0.91k + c})$ time for $k$-edge matchings
or $O(n^{0.46k + c})$ time for $k$-cycles.

Furthermore, we prove a general complexity dichotomy for evaluating graph motif parameters:
Given a class $\mathcal C$ of such parameters,
we consider the problem of evaluating $f\in \mathcal C$ on input graphs $G$,
parameterized by the number of induced subgraphs that $f$ depends upon.
For every recursively enumerable class $\mathcal C$, we prove the above problem to be either $\FPT$ or $\sharpWone$-hard, with an explicit dichotomy criterion.
This allows us to recover known dichotomies for counting subgraphs, induced subgraphs, and homomorphisms in a uniform and simplified way,
together with improved lower bounds.

Finally, we extend graph motif parameters to colored subgraphs and prove a complexity trichotomy:
For vertex-colored graphs $H$ and $G$, 
where $H$ is from a fixed class $\mathcal H$, 
we want to count color-preserving $H$-copies in $G$.
We show that this problem is either polynomial-time solvable or $\FPT$ or $\sharpWone$-hard,
and that the FPT cases indeed need FPT time under reasonable assumptions.

\end{abstract}

\section{Introduction}

Deciding the existence of subgraph patterns $H$ in
input graphs $G$ constitutes the classical \emph{subgraph isomorphism
problem} \cite{cook1971complexity,ullmann1976algorithm}, which generalizes
$\NP$-complete problems like the Hamiltonian cycle problem or the clique
problem.
In some applications however, it is not sufficient to merely know whether $H$ occurs in $G$, 
but instead one wishes to determine the \emph{number} of such occurrences.
This is clearly at least as hard as deciding their existence, but it can be much harder:
The existence of perfect matchings can be tested in polynomial time,
but the counting version is $\sharpP$-hard \cite{Valiant1979a}.

Subgraph counting problems have applications in areas like statistical physics, probabilistic inference, and network analysis.
In particular, in network analysis, such problems arise in the context of discovering \emph{network motifs}.
These are small patterns that occur more often in a network than would be expected if the network was random.
Through network motifs, the problem of counting subgraphs has found applications in the study of gene transcription networks,
neural networks, and social networks~\cite{Milo824},
and there is a large body of work dedicated to the algorithmic discovery of network motifs~\cite{grochow2007network,alon2008biomolecular,omidi2009moda,kashani2009kavosh,schreiber2005frequency,chen2006nemofinder,kashtan2004efficient,wernicke2006efficient,DBLP:conf/alcob/SchillerJHS15}.

Inspired by these applications, we study the algorithmic problem of counting occurrences of \emph{small} patterns~$H$ in large host graphs $G$.
The abstract notion of a ``pattern occurrence'' may be formalized in various ways,
which may result in vastly different problems:
To state only some examples,
we may be interested in counting subgraph copies of a graph $H$, or induced subgraph copies of $H$, or homomorphisms from $H$ to $G$, and we can also consider settings where both pattern $H$ and host graph $G$ are colored
and we wish to count subgraphs of $G$ that are color-preserving isomorphic to $H$.

It may seem daunting at first to try to deal with all different types of pattern occurrences.
Fortunately, Lovász~\cite{lovasz1967operations,lovaszbook} defined a framework
that allows us to express virtually all kinds of pattern types in a unified way.
As it turns out, graph parameters such as the number of subgraph copies of $H$ (induced or not) in a host graph~$G$, 
or the number of graph homomorphisms from~$H$ to~$G$ are actually just ``linear combinations'' of each other in a well-defined sense.
We build on this and define a general framework of so-called \emph{graph motif parameters} to capture counting linear combinations of small patterns,
into which (induced) subgraph or homomorphism numbers embed naturally as special cases.

In the remainder of the introduction,
we first discuss algorithmic and complexity-theoretic aspects of counting (induced) subgraphs and homomorphisms 
in \S\ref{sec: intro subgraphs}--\S\ref{sec: intro induced subgraphs}
and state the results we derive for these special cases.
In~\S\ref{sec: intro lovasz}, we then give an introduction into the general framework of graph motif parameters, our interpretation of Lovász's unified framework,
which also provides the main techniques for our proofs.
Finally, in~\S\ref{sec: intro colored subgraphs} we give an exposition of our results for vertex-colored subgraphs.

\subsection{Counting small subgraphs}
\label{sec: intro subgraphs}

For any fixed $k$-vertex pattern graph~$H$, 
we can count all subgraph copies of $H$ in an $n$-vertex host graph~$G$ using brute-force for a running time of $O(n^k)$.
While this running time is polynomial for any fixed $H$, 
it quickly becomes infeasible as $k$ grows.
Fortunately enough, non-trivial improvements on the exponent are known,
albeit only for specific classes of patterns:
\begin{itemize}

	\item We can count triangles in the same
    time~$O(n^{\omega})$ that it takes to multiply two
    $(n\times n)$-matrices~\cite{itai1978finding}.
    It is known that $\omega<2.373$
    holds~\cite{DBLP:conf/stoc/Williams12,DBLP:conf/issac/Gall14a}.
    This approach can be generalized from triangles to $k$-cliques with $k\in \mathbb N$~\cite{nevsetvril1985complexity},
    for a running time of $n^{\omega k /3 + O(1)}$.
    Fast matrix multiplication is also used to improve on exhaustive search for counting cycles of length at most seven~\cite{alon1997finding} and various other problems~\cite{kloks2000finding,eisenbrand2004complexity}.

    \item For $k$-edge paths or generally any pattern of bounded
    pathwidth, a ``meet in the middle'' approach yields $
    n^{k/2 + O(1)}$ time algorithms~\cite{koutis2009limits,bjorklund2009counting}.
    For a while, this approach appeared to be a barrier for faster algorithms,
    until Björklund et al.~\cite{fasterthanmeetinthemiddle} gave an algorithm
    for counting $k$-paths, matchings on~$k$ vertices, and other $k$-vertex
    patterns of bounded pathwidth in time $n^{0.455 k+ O(1)}$.
  
    \item If $\vc{H}$ is the \emph{vertex-cover number of~$H$}, that is, the size of its
    smallest vertex-cover, then we can count $H$-copies in time
    $n^{\vc{H}+O(1)}$ \cite{williams2013finding} (also cf.\ \cite{DBLP:journals/siamdm/KowalukLL13,curticapean2014complexity}).
    Essentially, one can exhaustively iterate over the image of the minimum
    vertex-cover in~$G$, which gives rise to $n^{\vc{G}}$ choices; the rest of
    $H$ can then be embedded by dynamic programming.
    Note that $\vc{H}$ may be constant even for large graphs $H$, e.g., if $H$ is a star.
\end{itemize}

In this paper, we unify some of the algorithms above and generalize them to arbitrary subgraph patterns;
in many cases our algorithms are faster.
For two graphs~$H$ and~$G$, let $\#\Sub H G$ be the number of subgraphs of~$G$
that are isomorphic to~$H$.
Our main algorithmic result states that $\#\Sub H G$ can be determined in time $O(n^{t+1})$,
where $t$ is the maximum treewidth (a very popular measure of tree-likeness) among the homomorphic images of $H$.
For our purposes, a homomorphic image of $H$ is any simple graph that can be obtained from~$H$ by possibly merging non-adjacent vertices.
For instance, identifying the first and the last vertex in the $4$-path
\drawgraph{DDW} yields the $4$-cycle \drawgraph{Cr}, and further identifying two
non-adjacent vertices in the $4$-cycle yields the $2$-path \mbox{\drawgraph{BW}.}
We define the \emph{spasm of~$H$} as the set of all homomorphic images of~$H$,
that is, as the set of ``all possible non-edge contractions'' of~$H$.
As an example, for the $4$-path, we have
\begin{align}\label{eq: spasm of P4}
  \spasm{\drawgraph{DDW}}
  =
  \Big\{
    &
    \drawgraph{DDW}
    ,\,
    \drawgraph{CR}
    ,\,
    \drawgraph{CN}
    ,\,
    \drawgraph{Cr}
    ,\,
    \drawgraph{CF}
    ,\,
    \drawgraph{Bw}
    ,\,
    \drawgraph{BW}
    ,\,
    \drawgraph{A_}
  \Big\}
  \,.
\end{align}

Our main algorithmic result can then be stated as follows:

\begin{thm}\label{thm: algorithm count subgraphs}
  Given as input a $k$-edge graph~$H$ and an~$n$-vertex graph~$G$,
  we can compute the number $\#\Sub H G$ in time $k^{O(k)}\cdot n^{t+1}$, 
  where $t$ is the maximum treewidth in the spasm of $H$.
\end{thm}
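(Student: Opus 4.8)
The plan is to express $\#\Sub H G$ as an explicit linear combination of homomorphism counts ranging over the spasm of $H$, and then to evaluate each such count by dynamic programming on a tree decomposition. The first ingredient is the classical identity underlying Lovász's framework. Every homomorphism $\varphi\colon H\to G$ factors through the quotient obtained by contracting the fibres of $\varphi$ to single vertices: writing $\rho$ for the partition of $V(H)$ into these fibres, the quotient $H/\rho$ is loopless exactly when $\rho$ identifies no two adjacent vertices, and then $\varphi$ induces an injective homomorphism $H/\rho\to G$; conversely, every injective homomorphism of a loopless quotient $H/\rho$ into $G$, precomposed with the contraction map, yields a homomorphism $H\to G$ with fibre partition exactly $\rho$. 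Writing $\#\Emb$ for the number of injective homomorphisms and setting $\#\Emb{H/\rho}{G}=0$ whenever $H/\rho$ has a loop, this gives $\#\Hom F G=\sum_{\rho}\#\Emb{F/\rho}{G}$ for every graph $F$, the sum ranging over all partitions $\rho$ of $V(F)$. Applying this to $H/\rho$ for every $\rho$ produces a triangular system over the partition lattice of $V(H)$, and Möbius inversion (the Möbius function of the partition lattice being classical) turns it into an expression for $\#\Emb H G$, hence
\[
  \#\Sub H G \;=\; \frac{\#\Emb H G}{\abs{\Aut H}} \;=\; \sum_{F\in\spasm H} c_F\cdot\#\Hom F G ,
\]
with $c_F\in\mathbb Q$ depending only on $H$; concretely $c_F$ is, up to the factor $1/\abs{\Aut H}$, the sum of the partition-lattice Möbius values $\mu(\hat 0,\rho)$ over the partitions $\rho$ with $H/\rho\cong F$. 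As the introduction attributes this identity to Lovász, I expect to cite it rather than reprove it.

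The algorithm is now straightforward. We may assume $H$ has no isolated vertices, since an isolated vertex only multiplies $\#\Sub H G$ by an easily computed binomial factor; thus $\abs{V(H)}\le 2k$. Enumerate all partitions of $V(H)$, of which there are at most $(2k)^{2k}=k^{O(k)}$, form each quotient $H/\rho$, discard those with a loop, and group the rest into isomorphism classes, thereby computing both $\spasm H$ and the coefficients $c_F$; since every graph occurring here has $O(k)$ vertices, the automorphism orders, Möbius values and isomorphism tests involved all fit within $k^{O(k)}$ time. For each $F\in\spasm H$ compute an optimal tree decomposition — feasible in $k^{O(k)}$ time because $\abs{V(F)}\le 2k$, for instance by trying all vertex elimination orderings — of width $\tw F\le t$, and run the standard dynamic program that computes $\#\Hom F G$ from a width-$w$ tree decomposition in time $\abs{V(F)}^{O(1)}\cdot n^{w+1}$; this yields $\#\Hom F G$ in time $k^{O(1)}\cdot n^{t+1}$. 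Finally output $\sum_{F\in\spasm H}c_F\cdot\#\Hom F G$.

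Correctness is immediate from the displayed identity. For the running time: the preprocessing (computing $\spasm H$, the coefficients, and the tree decompositions) costs $k^{O(k)}$, and the main loop invokes the homomorphism dynamic program at most $\abs{\spasm H}\le k^{O(k)}$ times, each invocation costing at most $k^{O(1)}\cdot n^{t+1}$, for a total of $k^{O(k)}\cdot n^{t+1}$. The only real obstacle is the basis-change identity itself: one must check that contracting a pair of adjacent vertices contributes nothing (which is exactly what restricts the sum to the spasm), that isomorphic quotients are aggregated with the right multiplicities when passing from partitions of $V(H)$ to $\spasm H$, and that the Möbius inversion — carried out over the full partition lattice with the zero-padding convention above — indeed produces coefficients that are computable from $H$ in time $f(k)$. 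If that identity can be quoted off the shelf, the proof reduces to the bookkeeping sketched here.
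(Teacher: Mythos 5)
Your proof is correct and follows essentially the same route as the paper: express $\#\Sub H G$ over the homomorphism basis via Möbius inversion on the partition lattice (the paper packages the same computation as inverting the triangular matrix $\surj$ restricted to the spasm), then evaluate each $\#\Hom F G$ by the standard treewidth dynamic program. Your explicit handling of isolated vertices to justify $\abs{V(H)}\le 2k$ is a small but welcome extra care that the paper elides.
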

As an example, for the $4$-path,
the largest treewidth among the graphs in the spasm is~$2$, 
and so Theorem~\ref{thm: algorithm count subgraphs} yields a
running time of~$O(n^3)$ for counting $4$-paths.
In fact, even the $6$-path has only graphs with treewidth at most~$2$ in its
spasm, so the same cubic running time applies.

Theorem~$\ref{thm: algorithm count subgraphs}$ generalizes the vertex-cover based algorithm \cite{williams2013finding} mentioned before:
Merging vertices can never increase the size of the smallest vertex-cover, 
and so the maximum treewidth in the spasm of~$H$ is bounded by~$\vc{H}$,
since the vertex-cover number is an upper bound for the treewidth of a graph.
Note that, while contracting edges cannot increase the treewidth of a graph, 
it is apparent from~\eqref{eq: spasm of P4} that contracting non-edges might.
In fact, if~$H$ is the $k$-edge matching, \emph{all} $k$-edge graphs can be
obtained by contracting non-edges, including expander graphs with
treewidth~$\Omega(k)$.
However, Scott and Sorkin~\cite[Corollary~21]{Scott2007260} proved that every
graph with at most~$k$ edges has treewidth at most $0.174\cdot k+o(k)$.
This bound enables the following immediate corollary to
Theorem~\ref{thm: algorithm count subgraphs}.
\begin{cor}\label{cor: algorithm count subgraphs}
Given as input a $k$-edge graph~$H$ and an~$n$-vertex graph~$G$,
we can compute the number $\#\Sub H G$ in time
  $
  k^{O(k)}\cdot n^{0.174\cdot k+o(k)}.
  $
\end{cor}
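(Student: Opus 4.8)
The plan is to derive this directly from Theorem~\ref{thm: algorithm count subgraphs} together with the treewidth bound of Scott and Sorkin quoted above. Theorem~\ref{thm: algorithm count subgraphs} already provides a running time of $k^{O(k)}\cdot n^{t+1}$, where $t$ is the maximum treewidth in $\spasm{H}$; so the only thing left to do is to bound $t$ in terms of $k$, the number of edges of $H$.

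First I would observe that every graph in $\spasm{H}$ has at most $k$ edges. Indeed, a homomorphic image of $H$ is obtained by repeatedly identifying pairs of non-adjacent vertices; each such identification merges the two adjacency lists, and in a simple graph any resulting parallel edges collapse into a single edge, so the number of edges never increases. Hence every $F\in\spasm{H}$ satisfies $|E(F)|\le|E(H)|=k$. This is exactly the point foreshadowed by the discussion around~\eqref{eq: spasm of P4}: non-edge contractions can raise the treewidth, but they cannot create new edges, so the spasm stays inside the class of $k$-edge graphs.

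Next I would invoke Scott and Sorkin's bound~\cite[Corollary~21]{Scott2007260}: every graph with at most $k$ edges has treewidth at most $0.174\cdot k+o(k)$. Applying this to each $F\in\spasm{H}$ gives $\tw{F}\le 0.174\cdot k+o(k)$, hence $t\le 0.174\cdot k+o(k)$. Substituting into the bound of Theorem~\ref{thm: algorithm count subgraphs} yields a running time of $k^{O(k)}\cdot n^{t+1}=k^{O(k)}\cdot n^{0.174\cdot k+o(k)}$, since the additive $+1$ in the exponent is absorbed by the $o(k)$ term.

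There is essentially no obstacle here; the statement is an immediate corollary of Theorem~\ref{thm: algorithm count subgraphs}, and the only step requiring a moment's thought is the elementary observation that identifying non-adjacent vertices cannot increase the edge count, which is needed to ensure that the Scott--Sorkin bound is applicable to all members of the spasm.
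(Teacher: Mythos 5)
Your proof is correct and follows precisely the paper's own argument: bound the edge count of every graph in $\spasm H$ by $k$ (this is item~1 of Fact~\ref{fact: spasm}), apply the Scott--Sorkin bound $\mathrm{tw}^*(k)\le\frac{13}{75}k+o(k)<0.174k+o(k)$ to conclude $t\le 0.174k+o(k)$, and substitute into Theorem~\ref{thm: algorithm count subgraphs}. No divergence from the paper's route.
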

Our exponent is obviously smaller than the previously known
$0.455\cdot\abs{V(H)}=0.91\cdot k$ for the $k$-edge matching and
$0.455\cdot k$ for the $k$-path, but somewhat surprisingly, our algorithm is also significantly simpler.

When~$H$ is a triangle, the algorithm from Theorem~\ref{thm:
algorithm count subgraphs} matches the running time~$O(n^3)$ of the exhaustive
search method.
To achieve a smaller exponent, we have to use matrix
multiplication, since faster triangle detection is equivalent to faster Boolean
matrix multiplication~\cite{DBLP:conf/focs/WilliamsW10}.
Indeed, we are able to generalize the~$O(n^\omega)$ time algorithm for counting
triangles to arbitrary graphs~$H$ whose spasm has treewidth at most~$2$.
\begin{thm}
  \label{thm: algorithm count subgraphs matrixmult}
  If all graphs in the spasm of $H$ have treewidth at most two,
  we can compute ${\#\Sub H G}$ in time $f(H)\cdot {|V(G)|}^{\omega}$, where
  $f(H)$ is a function that only depends on~$H$.
\end{thm}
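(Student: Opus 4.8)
The plan is to reuse the homomorphism-basis identity that already drives Theorem~\ref{thm: algorithm count subgraphs}, and to replace its $n^{t+1}$-time dynamic program by a matrix-multiplication-based evaluation of homomorphism counts from treewidth-$2$ patterns.

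As in the proof of Theorem~\ref{thm: algorithm count subgraphs}---by M\"obius inversion over the partition lattice of $V(H)$ together with $\#\Sub H G=\#\Emb H G/\abs{\Aut H}$---there are rational coefficients $c_F$, depending only on $H$ and indexed by the isomorphism types $F\in\spasm{H}$, with
\[
  \#\Sub H G \;=\; \sum_{F\in\spasm{H}} c_F\cdot\#\Hom F G \qquad\text{for all }G.
\]
Since $\spasm{H}$ and the $c_F$ depend only on $H$, it is enough to show that $\#\Hom F G$ can be computed in time $g(F)\cdot\abs{V(G)}^{\omega}$ whenever $\tw F\le 2$; the theorem then follows with $f(H)=\sum_{F\in\spasm{H}}\abs{c_F}\,g(F)$, where we also charge the ($H$-only) cost of enumerating $\spasm{H}$ and computing the $c_F$ to $f(H)$. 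As a sanity check, for $H=K_3$ one has $\spasm{K_3}=\{K_3\}$, $\tw{K_3}=2$, and $\#\Hom{K_3}{G}=\operatorname{tr}(A^3)$, recovering the classical $n^\omega$ triangle count.

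To compute $\#\Hom F G$ when $\tw F\le 2$, I would contract the sum--product expression $\#\Hom F G=\sum_{\varphi\colon V(F)\to V(G)}\prod_{\set{x,y}\in E(F)}A(\varphi(x),\varphi(y))$ by weighted vertex elimination, where $A$ is the adjacency matrix of $G$ and $n=\abs{V(G)}$. Since $\tw F\le 2$, fix an elimination order $v_1,\dots,v_p$ of $V(F)$ witnessing this: in the elimination game each $v_i$ has at most two neighbours among $v_{i+1},\dots,v_p$ in the graph obtained from $F$ by adding all fill-in edges of $v_1,\dots,v_{i-1}$. Process $v_1,\dots,v_p$ in order, maintaining a weighted graph on the not-yet-eliminated vertices: every remaining edge $\set{x,y}$ carries a matrix $W_{x y}\in\mathbb{Z}^{V(G)\times V(G)}$ (initially $A$ on $F$-edges), every remaining vertex $x$ carries a vector $w_x\in\mathbb{Z}^{V(G)}$ (initially the all-ones vector), and we keep a running scalar. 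Eliminating $v_i$ means summing out the coordinate $s=\varphi(v_i)$. If $v_i$ currently has two neighbours $a,b$, the only factors touching $s$ are $W_{v_i a}(\cdot,s)$, $W_{v_i b}(s,\cdot)$ and $w_{v_i}(s)$, so summing them out produces the matrix whose $(u_a,u_b)$-entry is $\sum_s W_{v_i a}(u_a,s)\,w_{v_i}(s)\,W_{v_i b}(s,u_b)$; this is a single product of two $(n\times n)$ matrices, and we multiply it entrywise into $W_{a b}$, creating the edge $\set{a,b}$ if absent (this is exactly the fill-in, so the width-$2$ invariant is preserved); cost $O(n^{\omega})$. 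If $v_i$ has one neighbour $a$, we fold a matrix--vector product into $w_a$ at cost $O(n^2)$; if it has none, we multiply $\sum_s w_{v_i}(s)$ into the running scalar. After all $p$ steps the running scalar equals $\#\Hom F G$ (correctness of sum--product elimination; isolated vertices contribute a factor $\sum_s 1=n$ and the last vertex of each finished component contributes its accumulated $\sum_s w(s)$, so disconnected $F$ and isolated vertices are handled automatically). Each step is $O(n^{\omega})$ arithmetic operations on integers of bit-length $O(\abs{V(F)}\log n)$, so $g(F)=O(\abs{V(F)})$ suffices up to the low-order arithmetic overhead standard for integer matrix multiplication.

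The part that needs the most care is ensuring the elimination never sums out a vertex with three or more remaining neighbours, which would incur an irreducible $n^3$ cost: this is guaranteed precisely by choosing an order that witnesses $\tw F\le 2$, so that after fill-in every eliminated vertex has degree at most $2$ in the remaining graph (equivalently, one may take the order from a $2$-tree supergraph of each connected component, where no fill-in is needed at all). Note that the na\"ive width-$2$ tree-decomposition dynamic program does \emph{not} suffice directly: a forget operation on a size-$3$ bag is a pure $n^3$ summation with no products to accelerate. The speed-up comes exactly from keeping the edge-weight matrices factored so that each ``forget'' is fused with the multiplication of the constraints at the forgotten vertex, turning it into a matrix product. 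The remaining issues---multiple components, isolated vertices, parallel weights and loops, and tracking the scalar---are routine bookkeeping.
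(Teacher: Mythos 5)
Your proof is correct and rests on exactly the same two ingredients as the paper's: (i) rewriting $\#\Sub HG$ as $\sum_{F\in\spasm H} c_F\,\#\Hom FG$ via M\"obius inversion, and (ii) computing each $\#\Hom FG$ with $\tw F\le 2$ in $O(\abs{V(F)}\cdot n^\omega)$ time by reducing the width-$2$ sum--product contraction to matrix products. The paper's proof of the homomorphism subroutine (stated as Theorem~\ref{thm: algorithm hom matrixmult}) uses a massaged tree decomposition in which every non-root bag has size exactly $3$ and every separator has size exactly $2$; the DP table $h_t$ is then indexed only by the two separator vertices, and the ``forget'' of the third bag vertex is written as the product $A_{13}A_{23}$ after folding the children's tables into the edge factors $a_{ij}$. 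Your elimination-order presentation, where each eliminated vertex has at most two later neighbours and the sum over its image becomes $W_{v_ia}^{\mathsf T}\,\mathrm{diag}(w_{v_i})\,W_{v_ib}$, is the same algorithm in a different (and arguably cleaner) formalism; you correctly identify the crucial point that the intermediate objects must stay factored as matrices on pairs, never materialised as $n^3$-size tables. Your remark that a ``na\"ive'' size-$3$ tree-DP would hit an $n^3$ bottleneck is fair as a warning about implementation, though it should be noted that the paper's DP is not that na\"ive version: by indexing $h_t$ over $\sep(t)$ rather than $\bag(t)$ it also avoids the $n^3$ table. No gaps; the remaining bookkeeping you flag (disconnected $F$, isolated vertices, arithmetic over $O(\abs{V(F)}\log n)$-bit integers) is handled analogously in the paper.
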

This algorithm applies, for example, to paths of length at most~$6$, thus
providing an alternative and simplified way to obtain the corresponding results
of~\cite{alon1997finding}.

We now turn to hardness results for counting subgraphs.
Here, the vertex-cover number of the pattern~$H$ plays a special role:
When it is bounded by a fixed constant~$b\in\mathbb N$, we have an $n^{b+O(1)}$
time algorithm even when the size of the pattern is otherwise unbounded.
However when it is unbounded, e.g., for $k$-paths, 
the best known running times are of the form $n^{\epsilon k}$ for some $\epsilon\in(0,1)$.
Given these modest improvements for counting subgraph patterns of unbounded vertex-cover number,
it is tempting to conjecture that ``the exponent cannot remain
constant'' for such patterns.
A result by a subset of the authors~\cite{curticapean2014complexity} shows
that this conjecture is indeed true--for an appropriate formalization of
the respective computational problem and under appropriate complexity-theoretic
assumptions.

The \emph{counting exponential time hypothesis} ($\sharpETH$) by Impagliazzo and
Paturi~\cite{IP01}, adapted to the counting setting~\cite{DHMTW14}, states that
there is no $\exp(o(n)) \cdot \poly{m}$-time algorithm to count all satisfying
assignments of a given $3$-CNF formula with $n$ variables and $m$ clauses.
More convenient for us is the following consequence of~$\sharpETH$: There
is no $f(k)\cdot n^{o(k)}$ time algorithm to count all cliques of size
exactly~$k$ in a given~$n$-vertex graph \cite{chen2004tight} --- thus, the
$k$-clique problem is a hard special case of the subgraph counting problem and
clearly $k$-cliques have large vertex-cover number.
Of course, this worst-case hardness of the most general subgraph counting
problem does not directly help us understand the complexity of particular
cases, such as counting $k$-matchings or $k$-paths.

We can model our interest in special cases by restricting the pattern graphs~$H$
to be from a fixed class~$\mathcal H$ of graphs:
The computational problem $\SubProb {\mathcal H}$ is to compute the number
$\#\Sub HG$ of $H$-copies in~$G$ when given two graphs $H\in \mathcal H$ and $G$
as input.
To prove that $\sharpETH$ implies that no fixed-parameter tractable algorithm
can exist for~$\SubProb{\mathcal H}$, one ultimately establishes a
\emph{parameterized reduction} from the $k$-clique problem to the $H$-subgraph
counting problem, which has the important property that $\vc H$ is bounded
by~$g(k)$, a function only depending on~$k$.

The parameterized reduction in~\cite{curticapean2014complexity} was very complex
with various special cases and a Ramsey argument that made~$g$ a very large function.
While it was sufficient to conditionally rule out $f(k)\cdot n^c$ time
algorithms for~$\SubProb{\mathcal H}$ for any constant~$c$ and graph
class~$\mathcal H$ of unbounded vertex-cover number, it left open the
possibility of, for example, $n^{\sqrt{\vc{H}}}$ time algorithms.
Running times of the form~$n^{o(k/\log k)}$ could be ruled out under~$\sharpETH$
only for certain special cases, such as counting $k$-matchings or $k$-paths.
In this paper, we obtain the stronger hardness result for all hard
families~$\mathcal H$.
For technical reasons, we assume that
$\mathcal H$ can be recursively enumerated; without this assumption, we would
however obtain a similar result under a non-uniform version of $\sharpETH$.
\begin{thm}\label{thm: ETH hardness count subgraphs}
  Let $\mathcal H$ be a recursively enumerable class of graphs of unbounded
  vertex-cover number.
  If $\sharpETH$ holds, then 
  $\SubProb{\mathcal H}$
  cannot be computed in time $f(H)\cdot n^{o(\vc H/\log\vc H)}$.
\end{thm}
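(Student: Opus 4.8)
The plan is to establish a parameterized Turing reduction from the problem of counting $k$-cliques (which is not computable in time $f(k)\cdot n^{o(k/\log k)}$ under $\sharpETH$, by the result of Chen et al.\ combined with the standard $\sharpETH$-to-clique transfer) to $\SubProb{\mathcal H}$, such that a $k$-clique instance on $N$ vertices is mapped to calls to $\SubProb{\mathcal H}$ with pattern $H\in\mathcal H$ satisfying $\vc H = \Theta(k)$ and host graph on $\poly(N)$ vertices. Given such a reduction, an $f(H)\cdot n^{o(\vc H/\log\vc H)}$ algorithm for $\SubProb{\mathcal H}$ would yield an $f(k)\cdot N^{o(k/\log k)}$ algorithm for counting $k$-cliques, contradicting $\sharpETH$. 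Since $\mathcal H$ has unbounded vertex-cover number and is recursively enumerable, for each $k$ we can effectively find some $H_k\in\mathcal H$ with $\vc{H_k}\ge k$; the challenge is to turn counting $H_k$-subgraphs into something that reveals the number of $k$-cliques.

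The first key step is to replace the $\#\Sub{}{}$ quantity by the more robust homomorphism and colorful-embedding language: using the framework of graph motif parameters developed earlier in the paper, $\#\Sub H G$ is a fixed linear combination of homomorphism counts $\#\Hom{F}{G}$ over $F$ in the spasm of $H$, and in the other direction each $\#\Hom{F}{G}$ can be recovered from partitioned/colored subgraph counts. So I would first pass to counting color-preserving $H$-copies (equivalently, to partitioned subgraphs), which is the natural target of a clique reduction: a $k$-clique in a $k$-colored graph is exactly a colorful $K_k$. The second, and main, step is the \emph{padding/embedding argument}: take a host graph $G^\star$ obtained from the $k$-clique instance by a gadget construction so that colorful $H_k$-copies in $G^\star$ correspond, with controlled multiplicities, to $k$-cliques in the original graph --- roughly, one identifies a $k$-clique inside $H_k$ (guaranteed to exist in the spasm, or inside $H_k$ itself after accounting for its large vertex cover via a Ramsey-type or Erd\H{o}s--Stone-type bound) and attaches the rest of $H_k$ as a "pendant" structure that can always be mapped into a fixed pattern part of $G^\star$ in a known number of ways. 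The third step is the \emph{interpolation/inclusion--exclusion step}: since we only have access to (uncolored) $\#\Sub{}{}$, not colored counts, I would recover the colored quantity by running $\SubProb{\mathcal H}$ on polynomially many modified host graphs (blown-up or recolored copies) and solving a linear system, exactly the inclusion--exclusion over color classes that converts $\#\Sub{}{}$ into colorful counts; each such call keeps $n = \poly(N)$ and the same pattern $H_k$, so $\vc{H}$ stays $\Theta(k)$.

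The main obstacle I anticipate is controlling the relationship between $\vc{H_k}$ and the clique size $k$ tightly enough to get the $\log$-factor in the exponent right: the reduction must produce, from a target clique size $k$, a pattern $H\in\mathcal H$ with $\vc H$ within a constant factor of $k$ (not, say, $\vc H = 2^k$), since a lossy bound here degrades $n^{o(\vc H/\log\vc H)}$ to something weaker. Unbounded vertex-cover number only guarantees arbitrarily large $\vc{H}$ along the sequence, with possibly huge gaps, so one must either (i) argue that inside any graph $H$ with vertex cover $c$ one can find a clique of size $\Omega(\log c)$ or a large matching/biclique usable as the clique-reduction core --- a Ramsey/extremal-graph step --- and reduce from $k$-clique with $k = \Theta(\log \vc H)$, which is precisely where the $1/\log$ loss enters, or (ii) more carefully, reduce from counting $k$-matchings or $k$-bicliques (whose own $\sharpETH$-hardness with the sharp exponent is available) when $H$ contains a large induced matching or biclique. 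The bookkeeping of multiplicities in the gadget construction and ensuring the interpolation matrix is invertible are routine by comparison; the extremal-combinatorics core of matching $\vc H$ to a usable clique/matching parameter is where the real work lies, and it is essentially the place where the earlier, cruder argument of \cite{curticapean2014complexity} is being sharpened.
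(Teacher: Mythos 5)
There is a genuine gap, and it sits exactly where you flagged the difficulty. Your option~(i) --- Ramsey-extract a clique of size $\Theta(\log \vc{H})$ from $H$ and reduce from $k$-clique with $k = \Theta(\log \vc{H})$ --- is precisely the \emph{old} approach from \cite{curticapean2014complexity} that this theorem is designed to \emph{supersede}. If the reduction target has $\vc{H} \approx 2^{k}$ while the source is a $k$-clique instance, an algorithm running in $f(H)\cdot n^{o(\vc{H}/\log\vc{H})}$ time only yields a $k$-clique algorithm running in $f(k)\cdot n^{o(2^{k}/k)}$ time, which contradicts nothing. Concretely, the logarithmic loss in the exponent of Theorem~\ref{thm: ETH hardness count subgraphs} does \emph{not} come from a Ramsey step; it is inherited wholesale from Marx's ``can you beat treewidth'' lower bound (Proposition~\ref{prop: hom ETH hard}), and the relationship between $\vc{H}$ and the hardness parameter must be \emph{linear}, not logarithmic, for the argument to close. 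Your option~(ii) (reduce from $k$-matchings or $k$-bicliques when $H$ contains one) gestures at the right scale but is left entirely undeveloped, and it re-imports the gadget-analysis complexity the paper is trying to eliminate.

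The idea you are missing is to abandon the direct $k$-clique reduction altogether and instead run the basis change \emph{as} the hardness argument. The paper proves that the mapping $\#\Sub{H}{\star} = \sum_{F\in\spasm{H}} \surj^{-1}(H,F)\cdot\#\Hom{F}{\star}$ has \emph{all} coefficients nonzero on $\spasm{H}$ (equation~\eqref{eq: surj inverse}), and then establishes the ``extraction'' Lemma~\ref{lem: extraction reduction}: with oracle access to the linear combination $\alpha\cdot\hom$, one can recover each individual $\#\Hom{F}{G}$ for any $F\in\supp\alpha$, by querying $(\alpha\cdot\hom)(G\times X)$ for a bounded-size family of graphs $X$ (tensor products), using multiplicativity $\hom(F,G\times X)=\hom(F,G)\cdot\hom(F,X)$ and the invertibility of the finite matrix $\hom_S$ on the closure $S$ (Lemma~\ref{lem: hom is invertible}) to set up a solvable linear system. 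This is the interpolation step you intuited, but it targets individual homomorphism numbers rather than colored embedding counts, which is the right target. The other missing ingredient is Fact~\ref{fact: spasm-linear-tw}: $\vc{H}=c$ implies $H$ has a matching of size $\Omega(c)$ (not a Ramsey bound --- just the elementary two-sided bound $\mathrm{maxmatching}(H)\le\vc{H}\le 2\cdot\mathrm{maxmatching}(H)$), and contracting matching edges and consolidating non-edges shows every $k$-edge graph is a minor of some member of $\spasm{H}$ with $k=\Omega(c)$. Since expander-type $k$-edge graphs have treewidth $\Omega(k)$ (Theorem~\ref{thm: sparse tw constant}), $\spasm{H}$ contains a graph $F$ with $\tw{F} = \Omega(\vc{H})$. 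Extracting $\#\Hom{F}{G}$ for that $F$ and invoking Proposition~\ref{prop: hom ETH hard} (the $n^{o(\tw F/\log\tw F)}$ bound) then yields the $n^{o(\vc{H}/\log\vc{H})}$ conclusion directly, with no gadget constructions and no Ramsey loss.
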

The $\log$-factor here is related to an open problem in parameterized
complexity, namely whether you can ``beat treewidth''~\cite{marx2007can}, i.e., 
whether there is an algorithm to find~$H$ as a subgraph in time $f(H)\cdot
n^{o(\tw{H})}$, or whether such an algorithm is ruled out by~$\ETH$ for every
graph class~$\mathcal H$ of unbounded treewidth.
Indeed, replacing $\vc{H}$ with $\tw{H}$ in Theorem~\ref{thm: ETH hardness count
subgraphs} essentially yields the hardness result in
\cite[Corollary~6.3]{marx2007can}, and since $\tw{H}\le\vc{H}$, our theorem can
be seen as a strengthening of this result in the counting world.
In fact, our hardness proof is based on this weaker version.

Instead of relying on~$\sharpETH$, we can also consider $\SubProb{\mathcal H}$
from the viewpoint of fixed-parameter tractability.
In this framework, the problem is parameterized by~$|V(H)|$.
A problem is $\sharpWone$-complete if it is equivalent under parameterized
reductions to the problem of counting $k$-cliques, where the allowed
reductions are Turing reductions that run in time~$f(k)\cdot\poly{n}$.
As mentioned before, it is known that $\sharpETH$ implies $\FPT\ne\sharpWone$.
In this setting, Theorem~\ref{thm: ETH hardness count subgraphs} takes on the following form:

\begin{thm}[\cite{curticapean2014complexity}]
  \label{thm: dichotomy-sub}
  Let $\mathcal H$ be a recursively enumerable class of graphs.
  If $\mathcal H$ has bounded vertex-cover number, then $\SubProb{\mathcal H}$
  is polynomial-time computable.
  Otherwise, it is $\sharpWone$-complete when parameterized by the pattern size~$\abs{V(H)}$.
\end{thm}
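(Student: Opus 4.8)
The plan is to derive the dichotomy from the tools already assembled in the excerpt, treating the two directions separately. For the tractable side, suppose $\mathcal H$ has bounded vertex-cover number, say $\vc H \le b$ for all $H \in \mathcal H$. Since the vertex-cover number is an upper bound for treewidth and merging non-adjacent vertices cannot increase the vertex-cover number, every graph in $\spasm H$ has treewidth at most $b$. Hence Theorem~\ref{thm: algorithm count subgraphs} computes $\#\Sub H G$ in time $k^{O(k)} \cdot n^{b+1}$; because $H$ has at most $\binom{b+1}{2}\cdot(\text{something bounded})$\dots more carefully, a graph with vertex cover $\le b$ has all non-isolated vertices incident to the cover, but may have unboundedly many isolated vertices or unbounded multiplicity of leaves, so $k = |E(H)|$ need not be bounded. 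I would therefore note that isolated vertices and pendant structure contribute only polynomial overhead, or more cleanly observe that the exponent $b+1$ of $n$ is fixed and the prefactor depends only on $H$ (hence the input), giving a genuine polynomial-time algorithm for each fixed class; the stated running time in Theorem~\ref{thm: algorithm count subgraphs} is $f(H)\cdot n^{t+1}$ with $t \le b$, which is polynomial in $n$ with fixed exponent. This settles the ``bounded'' case.

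For the hard side, assume $\mathcal H$ has unbounded vertex-cover number. I would invoke the conditional hardness already recorded in the excerpt: Theorem~\ref{thm: ETH hardness count subgraphs} states that under $\sharpETH$, the problem $\SubProb{\mathcal H}$ cannot be solved in time $f(H)\cdot n^{o(\vc H/\log\vc H)}$, and the excerpt notes that $\sharpETH$ implies $\FPT \neq \sharpWone$. A fixed-parameter tractable algorithm for $\SubProb{\mathcal H}$ parameterized by $|V(H)|$ would run in time $f(|V(H)|)\cdot\poly n$, which in particular is $f(H)\cdot n^{O(1)}$; since $\vc H \to \infty$ along $\mathcal H$, this contradicts the lower bound of Theorem~\ref{thm: ETH hardness count subgraphs}. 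Thus $\SubProb{\mathcal H} \notin \FPT$ unless $\sharpETH$ fails. To upgrade this to $\sharpWone$-completeness I need both a reduction and membership: membership in $\sharpW[1]$ follows because $\#\Sub H G$ is computable by brute force in time $n^{O(|V(H)|)}$, and $\sharpW[1]$-hardness follows from the parameterized reduction from counting $k$-cliques underlying Theorem~\ref{thm: ETH hardness count subgraphs}, which produces instances with $\vc H$ bounded by a function of $k$; since $\mathcal H$ has unbounded vertex-cover number, for every target vertex-cover value the class $\mathcal H$ contains a witness graph, and a standard interpolation/linear-algebra argument over the spasm (the same machinery behind Theorem~\ref{thm: algorithm count subgraphs}) lets one isolate the clique term.

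The main obstacle I anticipate is the $\sharpW[1]$-hardness reduction itself: one must show that counting $k$-cliques reduces to $\SubProb{\mathcal H}$ using only patterns $H \in \mathcal H$ whose size is bounded by a function of $k$. The natural route is to first reduce counting $k$-cliques to counting homomorphisms from $K_k$ into a suitably colored host graph, then express the relevant homomorphism count as a linear combination of $\#\Sub{H}{\cdot}$ over $H$ in (a sub-multiset of) the spasm of some fixed pattern, and finally extract the coefficient of the clique by solving the resulting linear system. The delicate points are (i) guaranteeing that the patterns actually landing in the linear combination can be taken from $\mathcal H$ — this is where the unboundedness of $\vc H$ over $\mathcal H$, together with a pigeonhole/Ramsey-type selection of a suitable subfamily, is used — and (ii) controlling the size of the linear system so that the reduction is a genuine parameterized (Turing) reduction with $f(k)\cdot\poly n$ running time. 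Since this theorem is attributed to \cite{curticapean2014complexity} and the present paper's contribution is the simplified proof via graph motif parameters, I would present the reduction through the spasm/homomorphism-basis formalism rather than the original ad hoc construction, keeping the Ramsey argument only where the selection of the subfamily genuinely requires it.
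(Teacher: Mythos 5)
Your tractable-side argument has a genuine gap. Theorem~\ref{thm: algorithm count subgraphs} gives running time $k^{O(k)}\cdot n^{t+1}$ with $k=|E(H)|$, which is fixed-parameter tractable but not polynomial-time: the $k^{O(k)}$ prefactor is superpolynomial in the input length $\Theta(|E(H)|+|V(G)|)$ whenever $|E(H)|$ grows, and graphs of bounded vertex-cover number can have arbitrarily many edges (stars already have vertex-cover number~$1$). Saying that ``the prefactor depends only on $H$ (hence the input)'' does not salvage polynomial time, since a function of part of the input can of course be superpolynomial. For a genuinely polynomial-time algorithm you need the vertex-cover-based algorithm of \cite{williams2013finding} (or the original argument of \cite{curticapean2014complexity}), which runs in $\poly{|V(H)|}\cdot n^{\vc{H}+O(1)}$ time. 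Note the paper itself does not re-prove the easy direction --- its proof block handles only the unbounded-vertex-cover case and attributes the statement as a whole to \cite{curticapean2014complexity}.

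Your hardness-side plan eventually converges on the paper's route --- write $\#\Sub H G = \sum_{F\in\spasm H}\surj^{-1}(H,F)\cdot\#\Hom F G$, invoke Fact~\ref{fact: spasm-linear-tw} to find unbounded treewidth in the spasms, extract a single $\#\Hom F G$ term with the tensor-product linear system of Lemma~\ref{lem: extraction reduction}, and close with Theorem~\ref{thm: dalmau jonsson} --- but two details should be corrected. First, your phrasing ``express the relevant homomorphism count as a linear combination of $\#\Sub{H}{\cdot}$'' has the basis change in the wrong direction: one expands the \emph{subgraph} count (the quantity the oracle evaluates) over homomorphism counts and then extracts an individual $\#\Hom F G$ from oracle queries on categorical products $G\times X$; expanding $\hom(F,G)$ over $\sub(H',G)$ for $H'\in\spasm F$ is unhelpful because those $H'$ need not lie in $\mathcal H$. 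Second, the ``pigeonhole/Ramsey-type selection of a suitable subfamily'' you anticipate is exactly what the homomorphism-basis proof is designed to avoid and is explicitly flagged in the paper as the source of the weak parameter bounds in \cite{curticapean2014complexity}: given a single $H\in\mathcal H$ with large vertex-cover number, Fact~\ref{fact: spasm-linear-tw} already places a high-treewidth graph in $\spasm H$, and the extraction reduction uses only that one $H$ as the oracle pattern. Finally, deducing ``not FPT under $\sharpETH$'' from Theorem~\ref{thm: ETH hardness count subgraphs} is not an independent stepping stone here, since the paper proves that theorem and Theorem~\ref{thm: dichotomy-sub} by the same argument; the direct path is to get $\sharpWone$-hardness from Lemma~\ref{lem: graph motif hardness} via Theorem~\ref{thm: lincomb subs}.
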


The original proof of Theorem~\ref{thm: dichotomy-sub} relied on the
$\sharpWone$-com\-plete\-ness of counting $k$-matchings, which was nontrivial on its
own~\cite{DBLP:conf/icalp/Curticapean13,DBLP:conf/iwpec/BlaserC12}.
Then an extensive graph-theoretic analysis was used to find ``$k$-matching
gadgets'' in any graph class~$\mathcal H$ of unbounded vertex-cover number.
These gadgets enable a parameterized reduction from counting $k$-matchings
to~$\SubProb {\mathcal H}$, but they are also responsible for the uncontrollable
blowup in the parameter that lead to highly non-tight results under~$\sharpETH$.
We obtain a much simpler proof of Theorem~\ref{thm:
dichotomy-sub} that does not assign a special role to~$k$-matchings.

Interestingly, Theorem~\ref{thm: dichotomy-sub} implies that no problem
of the form $\SubProb{\mathcal H}$ is ``truly'' FPT.
That is, every such problem that is not $\sharpWone$-complete is in fact already
polynomial-time solvable.
Thus, if we assume the widely-believed claim that $\FPT \neq \sharpWone$ holds,
then Theorem~\ref{thm: dichotomy-sub} exhaustively classifies the
polynomial-time solvable problems $\SubProb {\mathcal H}$.
Indeed, such a sweeping dichotomy would not have been possible by merely
assuming that $\PolyTime \neq \PolyTime^\sharpP$, since there exist
artificial classes~$\mathcal H$ with $\sharpP$-intermediate~\cite{chen2008understanding} $\SubProb {\mathcal H}$.

We remark that a decision version of Theorem~\ref{thm: dichotomy-sub} is an
open problem.
It is known only for graph classes that are hereditary, that is, closed under
induced subgraphs \cite{DBLP:conf/soda/JansenM15}, where the dichotomy criterion
is different.
Moreover, certain non-hereditary cases, such as the $\Wone$-completeness of
deciding the existence of a bipartite clique or a grid, have been resolved only
recently~\cite{DBLP:conf/soda/Lin15,chen2017hardness}.

\subsection{Counting small homomorphisms}
\label{sec: intro homomorphisms}

Similar classifications as Theorem~\ref{thm: dichotomy-sub} for counting subgraphs
were previously known for counting homomorphisms \cite{dalmau2004complexity} from a given class $\mathcal H$.
Recall that a homomorphism from a pattern $H$ to a host~$G$ is a mapping $f:
V(H) \to V(G)$ such that $uv\in E(H)$ implies $f(u)f(v) \in E(G)$; we write $\#\Hom HG$ for the number of such
homomorphisms.
In the context of pattern counting problems, we can interpret homomorphisms as a
relaxation of the subgraph notion: 
If a homomorphism $f$ from $H$ to $G$ is injective, then it constitutes a
subgraph embedding from $H$ to $G$.
However, since we generally do \emph{not} require injectivity in homomorphisms,
these may well map into other \emph{homomorphic images}.
(Recall that the spasm of~$H$ contains exactly the loop-free homomorphic images of~$H$.)

From an algorithmic viewpoint, not requiring injectivity makes counting patterns
easier: One can now use separators to divide $H$ into subpatterns and compose
mappings of different subpatterns to a global one for~$H$ via dynamic programming.
In this process, only the locations of separators under the subpattern mapping
need to be memorized, while non-separator vertices of subpatterns may be forgotten.
As an example, note that counting $k$-paths is $\sharpWone$-complete by
Theorem~\ref{thm: dichotomy-sub}, but counting homomorphisms from a $k$-path to
a host graph $G$ is polynomial-time solvable.
Indeed, the latter problem amounts to counting $k$-walks in $G$, 
which can be achieved easily by taking the $k$-th power of the adjacency matrix of~$G$, 
a process that can be interpreted as a dynamic programming algorithm:
Given a table with the number of $\ell$-walks from $s$ to $u$ for each~$u$, we
can compute a table with the number of~$(\ell+1)$-walks from~$s$ to~$v$ for
all~$v$. That is, we only need to store the last vertex seen in a walk.
This idea can be generalized easily to graphs of bounded treewidth 
using a straightforward dynamic programming approach on the tree decomposition of~$H$.
\begin{prop}[D{\'{\i}}az et al.~\cite{DBLP:journals/tcs/DiazST02}]%
  \label{prop: hom-treewidth-algo}
  There is a deterministic $\exp\paren*{O(k)}+\poly{k}\cdot n^{\tw H+1}$ time algorithm
  to compute the number of homomorphisms from a given graph~$H$ to a given
  graph~$G$, where $k=\abs{V(H)}$, $n=\abs{V(G)}$, and $\tw H$ denotes the treewidth of~$H$.
\end{prop}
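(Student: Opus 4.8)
The plan is to compute $\#\Hom HG$ by dynamic programming along a tree decomposition of the pattern $H$, in the standard way. Since $H$ has only $k$ vertices, I would first invest the $\exp(O(k))$ part of the budget into $G$-independent preprocessing: run one of the $2^{O(k)}$-time exact treewidth algorithms to obtain $\tw H$ together with a tree decomposition of $H$ of this optimal width, and then refine it in time $\poly(k)$ into a \emph{nice} tree decomposition with $\poly(k)$ nodes, rooted at a node with empty bag, with empty leaf bags, and with every other node being an \emph{introduce}, \emph{forget}, or \emph{join} node. Using an optimal-width decomposition (rather than, say, an approximate one) is what keeps the exponent exactly $\tw H+1$.

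Next I would set up the table. For a node $t$ with bag $B_t$, let $V_t$ be the union of all bags in the subtree rooted at $t$, and let $H_t$ be the subgraph on vertex set $V_t$ whose edges are precisely the edges of $H$ that lie inside some bag below $t$. The entry $D_t[\varphi]$, indexed by a map $\varphi\colon B_t\to V(G)$, is defined as the number of extensions $\psi\colon V_t\to V(G)$ of $\varphi$ that are homomorphisms from $H_t$ to $G$. The recurrences are routine: a leaf carries the single value $1$; an introduce node that adds a vertex $v$ sets $D_t[\varphi]=D_{t'}[\varphi|_{B_{t'}}]$ if $\varphi$ maps every edge of $H$ between $v$ and $B_{t'}$ to an edge of $G$, and $0$ otherwise; a forget node that drops $v$ sets $D_t[\varphi]=\sum_{a\in V(G)}D_{t'}[\varphi\cup\{v\mapsto a\}]$; a join node multiplies the two child entries sharing the bag $B_t$. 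At the root, $H_t=H$ and $B_t=\emptyset$, so its single entry equals $\#\Hom HG$. For the running time, there are $\poly(k)$ nodes, each table has at most $n^{\tw H+1}$ entries (as every bag has size at most $\tw H+1$), a join node costs $O(n^{\tw H+1})$, an introduce node costs $O(k\cdot n^{\tw H+1})$ since at most $k$ potential edges are checked per entry, and a forget node costs $O(n^{\tw H+1})$ by scanning its child table once and aggregating; hence the whole dynamic program runs in $\poly(k)\cdot n^{\tw H+1}$, and the algorithm is plainly deterministic.

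The step that genuinely needs care is correctness, and in particular the bookkeeping that makes every homomorphism of $H$ contribute exactly once. This rests on the two defining properties of a tree decomposition: every edge of $H$ is contained in some bag, so with the standard construction each edge is validated at exactly one introduce node; and the bags containing any fixed vertex form a connected subtree, so a forgotten vertex never reappears, which is exactly what prevents the sum at a forget node and the product at a join node from over- or under-counting. With these two facts in hand, a bottom-up induction shows that each $D_t$ satisfies its defining description, and reading off the root entry finishes the proof. I expect this edge-ownership-and-no-double-counting argument to be the only subtle point; everything else is bookkeeping over the nice tree decomposition.
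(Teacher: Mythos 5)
Your proposal is correct and follows essentially the same route as the paper's own proof in the appendix: an exact exponential-time treewidth algorithm for $H$, refinement to a nice tree decomposition, and the standard homomorphism-counting dynamic program with introduce/forget/join recurrences. The only cosmetic differences are that the paper indexes tables by partial homomorphisms $h\in\Hom{H[\bag(v)]}{G}$ rather than by arbitrary maps (folding the edge check into the indexing instead of into the introduce rule), and that it cites the Fomin--Villanger treewidth algorithm specifically; neither changes the argument or the running-time analysis.
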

This proposition is the basis of our main algorithmic result for subgraphs
(Theorem~\ref{thm: algorithm count subgraphs}) and other graph motif
parameters, so we include a formal proof in the appendix.
For graphs~$H$ of treewidth at most two, we can speed up the
dynamic programming algorithm by using fast matrix multiplication:
\begin{thm}\label{thm: algorithm hom matrixmult}
  If $H$ has treewidth at most~$2$, we can compute $\#\Hom H{G}$ in time
  $\poly{\abs{V(H)}}\cdot |V(G)|^\omega$.
\end{thm}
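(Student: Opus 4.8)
The plan is to speed up the treewidth dynamic program of Proposition~\ref{prop: hom-treewidth-algo}, which computes $\#\Hom HG$ in time $\poly k\cdot n^{\tw H+1}$ with $k=\abs{V(H)}$ and $n=\abs{V(G)}$, from $n^{3}$ down to $n^{\omega}$ by exploiting the series--parallel structure of graphs of treewidth at most~$2$ together with fast matrix multiplication. First I would apply the following standard reductions. Each isolated vertex of~$H$ contributes a factor of~$n$; for disconnected~$H$ we have $\#\Hom HG=\prod_C\#\Hom CG$ over the components~$C$; and if $H$ is connected with a cut vertex~$c$ splitting it into $H_1\cup H_2$ with $V(H_1)\cap V(H_2)=\{c\}$, then $\#\Hom HG=\sum_{x\in V(G)}v_1(x)\,v_2(x)$, where $v_i(x)$ counts the homomorphisms $H_i\to G$ with $c\mapsto x$; this is an entrywise product of two vectors, at cost $O(n)$. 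Iterating along the block--cut tree of~$H$ (computable in polynomial time), it suffices to compute, for each block~$B$ and each vertex~$c$ of~$B$, the vector $(v_B(x))_x$ of homomorphisms $B\to G$ sending $c\mapsto x$ — and each block is $2$-connected of treewidth at most~$2$, hence $K_4$-minor-free, hence series--parallel.

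For a $2$-connected series--parallel block~$B$ I would use its series--parallel decomposition tree: $B$ is parsed from single edges by series and parallel compositions of $2$-terminal graphs. For a $2$-terminal graph~$F$ with terminals $s,t$, set $M_F\in\mathbb N^{V(G)\times V(G)}$ with $M_F(x,y)=\abs{\{\,\phi\colon F\to G\text{ a homomorphism},\ \phi(s)=x,\ \phi(t)=y\,\}}$. Then parallel composition (identifying both pairs of terminals) corresponds to the entrywise product $M_{F_1}\odot M_{F_2}$, at cost $O(n^{2})$, while series composition (identifying the two inner terminals into a vertex that becomes internal and is therefore summed out) corresponds to the ordinary matrix product $M_{F_1}M_{F_2}$, at cost $O(n^{\omega})$. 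A $2$-connected series--parallel graph on at least three vertices has its topmost operation parallel (a topmost series operation would create a cut vertex), so the recursion bottoms out at single edges ($M_F=A_G$, the adjacency matrix) and at graphs on at most three vertices, of which there are finitely many, each admitting an explicit $O(n^{\omega})$ formula (for instance the triangle gives $\operatorname{tr}(A_G^{3})$). Since the parse tree has $O(\abs{E(H)})$ nodes, this computes $M_B$ for a chosen pair of terminals in time $\poly k\cdot n^{\omega}$.

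The genuinely delicate point — which I expect to be the main obstacle — is that the block reduction above asks for $v_B$ relative to an \emph{arbitrary} vertex~$c$ of~$B$, whereas the series--parallel parse only tracks two terminal vertices at a time; naively carrying $c$ along as a third tracked vertex would force tables indexed by three vertices of~$G$, at a prohibitive cost of $n^{3}$ or $n^{\omega+1}$. I would avoid ever forming such a table: descend the parse tree towards the sub-part containing~$c$ while threading along an auxiliary matrix~$N$ over $V(G)\times V(G)$ that records the contribution of the already-contracted remainder of~$B$ through the images of the two current terminals, folding~$N$ into the next subproblem by an $O(n^{\omega})$ matrix product at each series node passed and an $O(n^{2})$ entrywise update at each parallel node; once the descent reaches a constant-size piece actually containing~$c$, the remaining contraction is again an explicit matrix-multiplication formula (e.g. a piece that is a path $s-c-t$ with summary matrix~$N$ yields the vector $x\mapsto (A_G\,N\,A_G^{\mathsf T})(x,x)$). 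With this threading, every single operation runs in $O(n^{\omega})$ time, and since only $\poly k$ operations are performed overall (using $\omega\ge 2$, so the $O(n^{2})$ steps are absorbed), this yields the claimed running time $\poly{\abs{V(H)}}\cdot\abs{V(G)}^{\omega}$.
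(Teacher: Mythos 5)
Your route is genuinely different from the paper's. The paper proves Theorem~\ref{thm: algorithm hom matrixmult} by normalizing a width-$2$ tree decomposition so that every bag has size exactly three (root excepted) and every separator has size two, and then doing the usual bag-by-bag dynamic program, observing that the single ``sum over $v_3$'' at each node is a matrix product. You instead route through the block--cut tree and the series--parallel parse of each $2$-connected block, using the entrywise/ordinary product dichotomy for parallel/series composition. Both exploit the same central observation (the sum over the one ``middle'' vertex is a matrix multiplication), but the paper's decomposition handles all the bookkeeping uniformly, whereas yours pushes the difficulty into two places, one of which you noticed and one of which you did not.

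The place you did not notice is a genuine gap. You assert that, after the cut-vertex reductions, ``it suffices to compute, for each block $B$ and each vertex $c$ of $B$, the vector $(v_B(x))_x$.'' That is not enough: a block can touch several cut vertices, and the plain $1$-pinned vectors $v_B$ cannot reassemble $\#\Hom{H}{G}$. Take $H$ to be a chain of three blocks $B_1,B_2,B_3$ with $B_1\cap B_2=\{c_1\}$ and $B_2\cap B_3=\{c_2\}$. Splitting at $c_1$ gives $\#\Hom HG=\sum_x v_{B_1,c_1}(x)\cdot v_{B_2\cup B_3,\,c_1}(x)$, and the second factor is $\sum_y M_{B_2}^{(c_1,c_2)}(x,y)\,v_{B_3,c_2}(y)$, a $2$-pinned quantity on $B_2$ with a weight from $B_3$ attached at $c_2$. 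So what is actually needed at a block is a \emph{vertex-weighted} pinned count $\sum_{\psi:B\to G,\ \psi(c_0)=x}\prod_i w_i(\psi(c_i))$, where the $c_i$ range over the other cut vertices in $B$ and the $w_i$ are vectors computed from the hanging pieces. Your $M_F$ framework can absorb this -- insert the diagonal matrix $D_{w_i}$ at the series node where $c_i$ becomes internal, and apply the weight at the top sum if $c_i$ is a terminal, all still within $O(n^\omega)$ -- but your write-up never introduces these weights, so the reduction as stated does not close.

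The place you did notice -- threading a third tracked vertex $c$ through the parse -- is over-engineered. By Duffin's characterization, a $2$-connected $K_4$-minor-free graph $B$ is a $2$-terminal series--parallel graph with respect to any adjacent pair, so you can simply take $c_0$ (the anchor vertex at which the block must be pinned) together with any of its neighbors in $B$ as the terminals; then $\phi_B(x)=\sum_t M^{\textup{weighted}}_B(x,t)$ needs no descent at all. (You cannot, however, always make an arbitrary second vertex a terminal: for $K_{2,3}$ with $s,t$ non-adjacent on the degree-$2$ side, $K_{2,3}\cup\{st\}$ has a $K_4$ minor, so the $2$-pinned matrix for that pair has no series--parallel parse -- which is exactly why you must pay for the other cut vertices with diagonal weights rather than by adding terminals.) Your threading idea is not wrong -- one can push a ``rest-of-parse'' matrix $R_\mu$ down the tree, with $R_\mu\mapsto R_{\mu'}M_{F_{\mu''}}^{\mathsf T}$ at series nodes and $R_\mu\mapsto R_{\mu'}\odot M_{F_{\mu''}}$ at parallel nodes, and extract $v_B=\operatorname{diag}(M_{F_{\nu_1}}^{\mathsf T}R_\nu M_{F_{\nu_2}}^{\mathsf T})$ at the series node $\nu$ consuming $c$ -- but as written it is a sketch rather than a proof, and with the right terminal choice it is unnecessary. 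With vertex weights added and the anchor made a terminal, your block/series--parallel approach does yield a correct $\poly{|V(H)|}\cdot|V(G)|^\omega$ algorithm; without them, the argument as given is incomplete.
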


Apart from being more well-behaved than subgraphs in terms of algorithmic tractability, 
homomorphisms also allow for simpler hardness proofs: 
Several constructions are significantly easier to analyze for homomorphisms than
for subgraphs, as we will see in our proofs.
This might explain why a dichotomy for counting homomorphisms from a fixed class
$\mathcal H$ was obtained an entire decade before its counterpart for subgraphs;
it establishes the treewidth of~$\mathcal H$ as the tractability criterion for
the problems $\HomProb{\mathcal H}$.

\begin{thm}[Dalmau and Jonsson \cite{dalmau2004complexity}]
  \label{thm: dalmau jonsson}
  Let $\mathcal H$ be a recursively enumerable class of graphs.
  If~$\mathcal H$ has bounded treewidth, then the problem
  $\HomProb {\mathcal H}$ of counting homomorphisms from graphs in $\mathcal H$
  into host graphs is polynomial-time solvable.
  Otherwise, it is $\sharpWone$-complete when parameterized by~$\abs{V(H)}$.
\end{thm}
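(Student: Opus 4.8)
The plan is to prove the two directions of the dichotomy by essentially disjoint arguments and to locate almost all of the real difficulty in importing one external hardness result.

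\emph{Positive direction.} Suppose every $H\in\mathcal H$ has treewidth at most a fixed constant $c$. Then a tree decomposition of $H$ of width at most $c$ can be computed in time linear in $\abs{V(H)}$ by Bodlaender's algorithm, and feeding it to the dynamic programme of Proposition~\ref{prop: hom-treewidth-algo} computes $\#\Hom H G$ in time $\poly{\abs{V(H)}}\cdot\abs{V(G)}^{c+1}$, which is polynomial in the size of the input. Membership of $\HomProb{\mathcal H}$ in $\sharpWone$ holds for every class $\mathcal H$ (standard): the homomorphism count of a $k$-vertex pattern is a finite rational combination, with coefficients depending only on $H$, of subgraph-copy counts of graphs on at most $k$ vertices, and counting subgraph copies of a fixed pattern reduces to counting cliques. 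So everything comes down to showing that $\HomProb{\mathcal H}$ is $\sharpWone$-hard whenever $\mathcal H$ has unbounded treewidth, which then upgrades to $\sharpWone$-completeness.

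\emph{The reduction to partitioned subgraphs.} I would obtain hardness by a parameterized Turing reduction from $\PartitionedSubProb{\mathcal H}$ to $\HomProb{\mathcal H}$ --- an inclusion--exclusion of the same flavour as the basis transforms used elsewhere in the paper. Fix $H\in\mathcal H$ with $k=\abs{V(H)}$ and an $H$-colored graph $G$ with colour classes $\{V_v\}_{v\in V(H)}$. A colour-preserving copy of $H$ in $G$ is automatically injective and uses only edges of $G$ between $H$-adjacent classes, so we may delete every edge of $G$ inside a class or between two non-adjacent classes without changing $\#\PartitionedSub H G$. After this cleaning, any homomorphism $\phi\colon H\to G$ whose image meets all $k$ colour classes meets each of them exactly once, so $\phi$ is injective and $\beta\colon w\mapsto(\text{colour of }\phi(w))$ is a bijection of $V(H)$; since the surviving edges of $G$ respect the edges of $H$, $\beta$ is in fact an automorphism of $H$, and $\phi\mapsto(\beta,\phi\circ\beta^{-1})$ is a bijection between such homomorphisms and pairs consisting of an automorphism of $H$ and a colour-preserving copy of $H$. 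Writing $G[I]$ for the subgraph of the cleaned $G$ induced by the classes in $I\subseteq V(H)$, the quantity $\#\Hom{H}{G[I]}$ counts exactly the homomorphisms $H\to G$ whose image avoids every class outside $I$, so Möbius inversion over the subset lattice gives
\[
  \#\PartitionedSub{H}{G}
  \;=\;
  \frac{1}{\abs{\Aut H}}\sum_{I\subseteq V(H)}(-1)^{k-\abs{I}}\,\#\Hom{H}{G[I]}\,.
\]
Each summand is a call to $\HomProb{\mathcal H}$ on the same pattern $H$ and a host with at most $\abs{V(G)}$ vertices, and $\abs{\Aut H}$ is a constant depending only on $H$, so this is a legitimate fpt Turing reduction.

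\emph{Hardness of the partitioned problem --- the main obstacle.} It remains to know that $\PartitionedSubProb{\mathcal H}$ is $\sharpWone$-hard when $\mathcal H$ has unbounded treewidth. For the decision version this is exactly Marx's theorem~\cite[Corollary~6.3]{marx2007can}, which shows that partitioned subgraph isomorphism for such an $\mathcal H$ is $\Wone$-hard and cannot be solved in time $f(H)\cdot n^{o(\tw H/\log\tw H)}$ under $\ETH$. To get the counting statement one runs the same reduction starting from counting $k$-cliques --- which is $\sharpWone$-complete --- and checks that it sets up a bijection between $k$-cliques in the source graph and colour-preserving $H$-copies in the constructed instance, i.e.\ that the reduction is parsimonious. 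I expect this to be the crux: it forces one to look inside Marx's ``embedding'' construction (the excluded-grid theorem combined with congestion-controlled routing of paths through $H$) and verify that it is count-preserving, or to appeal to an already-available counting version. The recursive enumerability of $\mathcal H$ enters only in the customary way, to make the choice of a pattern $H\in\mathcal H$ of sufficiently large treewidth effective; dropping it weakens the conclusion to a non-uniform hypothesis. Combining the three pieces --- the dynamic programme, membership in $\sharpWone$, and this reduction chain --- yields the dichotomy as stated.
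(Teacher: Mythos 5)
The paper does not actually reprove this theorem: it is quoted verbatim from Dalmau and Jonsson \cite{dalmau2004complexity} and used as a black box (e.g.\ inside the proof of Lemma~\ref{lem: graph motif hardness}). So the comparison is really against Dalmau–Jonsson's original argument, and secondarily against the paper's own appendix proof of Proposition~\ref{prop: hom ETH hard}, which is the closest sibling statement that the paper does prove. Your positive direction (Bodlaender plus Proposition~\ref{prop: hom-treewidth-algo}) is exactly right. Your inclusion–exclusion formula reducing $\#\PartitionedSub H G$ to $2^{k}$ oracle calls to $\#\Hom{H}{\cdot}$ is also correct --- indeed, after the substitution $A = V(H)\setminus I$ it is precisely the identity used in the appendix to prove Proposition~\ref{prop: hom ETH hard} (and your version, summing over all $I$ including $I=V(H)$, is actually written more carefully than the appendix, which omits the $A=\emptyset$ term). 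The argument that any homomorphism hitting all $k$ colour classes is injective and factors as (automorphism of $H$) $\times$ (colourful copy), whence the $1/\abs{\Aut H}$, is also sound.

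The genuine gap is the one you flag yourself: you never establish that $\PartitionedSubProb{\mathcal H}$ is $\sharpWone$-hard when $\mathcal H$ has unbounded treewidth, you only reduce the theorem to that claim. The paper's Theorem~\ref{thm: partitionedsub ETH hardness} cites \cite[Cor.~6.2, 6.3]{marx2007can} only for the $\sharpETH$ running-time lower bound, not for $\sharpWone$-hardness, so it does not directly supply what you need either, and $\sharpWone$-hardness of the decision problem (Marx's Cor.~6.1) does not by itself yield a parameterized Turing reduction from \emph{counting} $k$-cliques. To close the gap one must either check that Marx's grid-minor embedding is parsimonious, or give a separate counting reduction --- and this is exactly what Dalmau and Jonsson do in their self-contained proof: they reduce directly from counting $k$-cliques, routing through colourful grid gadgets obtained from the excluded-grid theorem, without ever passing through the general partitioned-subgraph machinery. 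So your route is a legitimate alternative architecture (and it pleasantly reuses the paper's own inclusion–exclusion lemma), but as written the final and load-bearing hardness step remains an IOU rather than a proof.
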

In the remainder of the paper, we will also require the following lower bound under $\sharpETH$, the proof of which is a
simple corollary of~\cite[Corollary~6.2 and~6.3]{marx2007can}.
\begin{prop}\label{prop: hom ETH hard}
  Let $\mathcal H$ be a recursively enumerable class of graphs of unbounded
  treewidth.
  If $\sharpETH$ holds, then there is no $f(H)\cdot \abs{V(G)}^{o(\tw H / \log \tw
  H)}$ time algorithm to compute $\HomProb {\mathcal H}$ for graphs~$H\in\mathcal H$ and~$G$.
\end{prop}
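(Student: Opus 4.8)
The plan is to reduce the claimed lower bound to the corresponding decision-version lower bound for subgraph isomorphism from Marx~\cite{marx2007can}. Recall that \cite[Corollary~6.2 and~6.3]{marx2007can} establishes, under $\ETH$, that there is no $f(H)\cdot n^{o(\tw H/\log\tw H)}$ time algorithm that decides, given $H\in\mathcal H$ and $G$, whether $H$ occurs as a subgraph of $G$, for any recursively enumerable class $\mathcal H$ of unbounded treewidth. I want to upgrade this to the counting of homomorphisms under the stronger assumption $\sharpETH$. The first step is to observe that $\sharpETH$ implies the decision-version hypothesis is ``counting-robust'': more precisely, I would invoke that $\sharpETH$ implies $\ETH$ (a satisfiability-counting algorithm in particular decides satisfiability), so the Marx lower bound for the \emph{decision} problem holds verbatim. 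Alternatively, and this is cleaner, I can reduce directly from counting: the known $\sharpETH$-based lower bound for counting $k$-cliques~\cite{chen2004tight} combined with Marx's structural embedding machinery gives a lower bound for $\HomProb{\mathcal H}$ directly in the counting world.

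The key steps, in order: (i) Fix a recursively enumerable class $\mathcal H$ of unbounded treewidth. (ii) Use the core technical content of~\cite{marx2007can}: for every $k$, there is an $H\in\mathcal H$ with $\tw H$ large and an $\ETH$-tight reduction that, given a $3$-CNF formula $\varphi$, produces in polynomial time a graph $G_\varphi$ such that the satisfying assignments of $\varphi$ are in a (weighted, but computable) correspondence with the subgraph copies---and, with the standard colored/partitioned variant, with the homomorphisms---of $H$ into $G_\varphi$. The partitioned (colored) version of the construction is exactly the one that makes the homomorphism count equal the subgraph count up to an explicit factor, since color-preservation forces injectivity. (iii) Conclude that an $f(H)\cdot |V(G)|^{o(\tw H/\log\tw H)}$ algorithm for $\HomProb{\mathcal H}$ would yield a $\exp(o(n))\cdot\poly m$ algorithm for counting satisfying assignments of $3$-CNF formulas, contradicting $\sharpETH$; the $1/\log\tw H$ loss is inherited verbatim from the grid-minor/treewidth bound used by Marx, and the recursive enumerability of $\mathcal H$ is what lets the reduction find, computably, a suitable $H$ of the required treewidth for each $k$.

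The main obstacle---though it is largely bookkeeping given that we ``may assume any result stated earlier''---is making sure the translation from the \emph{decision} statement of~\cite[Cor.~6.2, 6.3]{marx2007can} to a \emph{counting} statement for homomorphisms is airtight: one must check that the gadget reduction is parsimonious (or at least that the number of homomorphisms is a fixed, efficiently invertible function of the number of satisfying assignments), and that passing from counting subgraph copies of $H$ to counting homomorphisms does not lose more than a factor depending only on $H$ (which holds because $\#\Hom HG$ decomposes over the spasm of $H$ via the Lovász-type identities developed later in the paper, and one can also go the other way by inclusion--exclusion over quotients of $H$). Since the proposition is advertised as ``a simple corollary'' of Marx's results, I expect the write-up to consist mainly of: state the decision lower bound, note it already yields $\ETH$-hardness for the colored subgraph version, observe colored subgraphs and homomorphisms agree up to $\aut$-type factors and a sum over the spasm, and invoke $\sharpETH \Rightarrow \ETH$ (or the clique-counting lower bound) to land the counting version with the same exponent.
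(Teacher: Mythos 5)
Your high-level instincts are right --- you want to use Marx's $\ETH$/$\sharpETH$-tight lower bound for \emph{partitioned} (vertex-colorful) subgraph isomorphism, and you correctly observe that color-preservation forces injectivity, so colorful homomorphism and colorful embedding counts agree up to a factor of $\#\Aut{H}$. The paper indeed starts from exactly that result, stated as Theorem~\ref{thm: partitionedsub ETH hardness}, and note that it already cites the \emph{counting} form under $\sharpETH$, so your detour through the decision version and ``$\sharpETH\Rightarrow\ETH$'' is unnecessary. Be careful, though, that what Marx proves hard is \emph{partitioned} subgraph isomorphism, not plain subgraph detection; for plain subgraph/homomorphism \emph{decision}, the tractability criterion is the treewidth of the homomorphic core, not the treewidth of $H$, so the lower bound you cite at the start (``decides whether $H$ occurs as a subgraph of $G$'') is misstated and would fail for, e.g., bipartite $H$ of large treewidth.

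The genuine gap is in the step that turns an oracle for $\HomProb{\mathcal H}$ (plain, uncolored homomorphism counts) into a procedure for $\#\PartitionedSub{H}{G}$. You gesture at ``a sum over the spasm'' and ``inclusion--exclusion over quotients of $H$,'' but that is the wrong inclusion--exclusion: the spasm identity~\eqref{eq: sub surj hom} expresses $\#\Sub{H}{G}$ as a linear combination of $\#\Hom{F}{G}$ over \emph{all} $F\in\spasm{H}$, and almost all of those $F$ lie outside $\mathcal H$, so it does not yield a Turing reduction to $\HomProb{\mathcal H}$. Moreover, in the colorful setting the colored spasm of a colorful $H$ is just $\set{H}$, so that identity trivializes and tells you nothing about how to obtain the \emph{color-preserving} count from queries for plain homomorphisms. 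What the paper actually does is inclusion--exclusion over subsets of color classes: for each $A\subseteq V(H)$ it forms $G_{-A}$ by deleting from $G$ every vertex whose color lies in $A$, queries $\#\Hom{H}{G_{-A}}$, and computes $\sum_{\emptyset\neq A\subseteq V(H)}(-1)^{\abs{A}}\cdot\#\Hom{H}{G_{-A}}$, which (after dividing by $\#\Aut{H}$) is exactly $\#\PartitionedSub{H}{G}$. This uses $2^{\abs{V(H)}}$ oracle calls, all to the same pattern $H\in\mathcal H$, so it is a legitimate parameterized Turing reduction that preserves the $n^{o(\tw H/\log\tw H)}$ running time. Your write-up should replace the spasm-based transformation with this color-class inclusion--exclusion; without it, the reduction to $\HomProb{\mathcal H}$ is not established.
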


Finally, we remark that the decision version of Theorem~\ref{thm: dalmau jonsson}, that
is, the dichotomy theorem for deciding the existence of a homomorphism
from~${H\in\mathcal H}$ to~$G$ is known and has a different
criterion~\cite{DBLP:journals/jacm/Grohe07}:
The decision problem is polynomial-time computable even when only the
homomorphic cores of all graphs in~$\mathcal H$ have bounded treewidth, and it
is $\Wone$-hard otherwise.

\subsection{Counting small induced subgraphs}
\label{sec: intro induced subgraphs}

Let us also address the problem of counting small induced subgraphs from a class $\mathcal H$.
This is a natural and well-studied variant of counting subgraph copies
\cite{kloks2000finding,chen2008understanding,jerrum2015parameterised,jerrum2015some,jerrum2014parameterised,meeks2016challenges}, and for several applications it
represents a more appropriate notion of ``pattern occurrence''.
From the perspective of dichotomy results however, it is less intricate than subgraphs or homomorphisms:
Counting induced subgraphs is known to be $\sharpWone$-hard for any infinite pattern class $\mathcal H$, 
and even the corresponding decision version is $\Wone$-hard.
\begin{thm}[\cite{chen2008understanding}]\label{thm: chen weyer}
Let $\mathcal H$ be a recursively enumerable class of graphs.
If $\mathcal H$ is finite, then the problem $\IndProb {\mathcal H}$ of counting induced subgraphs from $\mathcal H$ is polynomial-time solvable.
  Otherwise, it is $\sharpWone$-complete when parameterized by~$\abs{V(H)}$.
\end{thm}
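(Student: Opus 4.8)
The plan is to re-derive the dichotomy from the homomorphism dichotomy (Theorem~\ref{thm: dalmau jonsson}) through the Lovász-style dictionary between induced-subgraph counts and homomorphism counts. The polynomial-time half is immediate: if $\mathcal H$ is finite, then $\#\IndSub H G$ is computed by brute force over all $\abs{V(H)}$-element subsets of $V(G)$, in polynomial time with exponent $\max_{H\in\mathcal H}\abs{V(H)}$. So assume $\mathcal H$ is infinite; we show $\IndProb{\mathcal H}$ is $\sharpWone$-complete when parameterized by $\abs{V(H)}$. The starting point is the identity, valid for each fixed $k$-vertex graph $H$,
\[
  \#\IndSub H G \;=\; \sum_{F}\, c_{H,F}\cdot \#\Hom F G ,
\]
a finite linear combination over isomorphism classes of graphs $F$ with $\abs{V(F)}\le k$. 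It is obtained by composing two classical inversions: first one writes $\#\IndSub H G$ as an inclusion--exclusion over the edge-supergraphs $H'\supseteq H$ on the vertex set $V(H)$ of the subgraph counts $\#\Sub{H'}{G}$; then, using $\#\Emb{H'}{G}=\abs{\Aut{H'}}\cdot\#\Sub{H'}{G}$ together with the expansion $\#\Hom{H'}{G}=\sum_\sigma \#\Emb{H'/\sigma}{G}$ over partitions $\sigma$ of $V(H')$ (quotients with loops contributing nothing over simple hosts $G$), one inverts over the partition lattice to rewrite each $\#\Sub{H'}{G}$ as a combination of homomorphism counts.

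The crucial structural point is that the clique $K_k$ on $k=\abs{V(H)}$ vertices always appears with coefficient $c_{H,K_k}\neq 0$. A graph $F$ on exactly $k$ vertices can arise as a loop-free quotient only via the identity partition, so the single pair $(H',\sigma)$ that produces $F=K_k$ is $(K_k,\hat 0)$, with contribution $\pm 1$. Since the homomorphism functions $\#\Hom{F}{\cdot}$ for pairwise non-isomorphic $F$ are linearly independent, this contribution is not cancelled, and $c_{H,K_k}=\pm 1/\abs{\Aut{K_k}}\neq 0$. Because $\mathcal H$ is infinite, $k=\abs{V(H)}$ is unbounded as $H$ ranges over $\mathcal H$, and $\tw{K_k}=k-1$; hence these cliques form a family of unbounded treewidth.

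The reduction is then an instance of complexity monotonicity. From an oracle for $\IndProb{\mathcal H}$ we compute $\#\Hom{K_k}{G}$ for every $k$ realized as $\abs{V(H)}$ with $H\in\mathcal H$: enumerate the recursively enumerable class $\mathcal H$ to obtain such an $H$, let $\mathcal F=\{F:c_{H,F}\neq 0\}$, and choose test graphs $J_1,\dots,J_r$ with $r=\abs{\mathcal F}$ so that the matrix $(\#\Hom F{J_i})_{i,\,F\in\mathcal F}$ is invertible; this is possible by linear independence and can be done by brute-force search over small graphs, with everything depending only on $H$. Querying the oracle on the tensor (categorical) products $G\times J_i$ and using $\#\Hom F{G\times J_i}=\#\Hom F G\cdot\#\Hom F{J_i}$ gives a nonsingular linear system in the unknowns $\#\Hom F G$; its solution, read off at the $K_k$-coordinate, is $\#\Hom{K_k}{G}$ because $c_{H,K_k}\neq 0$. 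Each query is on a graph with at most $g(k)\cdot\abs{V(G)}$ vertices and there are $g(k)$ of them, so this is a parameterized Turing reduction. Applying it to the recursively enumerable class $\{K_{\abs{V(H)}}:H\in\mathcal H\}$ of cliques, which has unbounded treewidth, and invoking Theorem~\ref{thm: dalmau jonsson}, shows $\IndProb{\mathcal H}$ is $\sharpWone$-hard; membership in $\sharpWone$ holds since each term $\#\Hom F G$ with $\abs{V(F)}\le k$ lies in $\sharpWone$ (it reduces to counting $k$-cliques) and $\#\IndSub H G$ is an $\FPT$-computable linear combination of such terms.

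The step I expect to be the main obstacle is the bookkeeping of the two nested inversions needed to conclude $c_{H,K_k}\neq 0$; the indispensable ingredient there is the linear independence of the homomorphism functions, which both rules out cancellation of the clique term and pins down the unique hom-basis expansion of $\#\IndSub H G$ on which the complexity-monotonicity reduction relies. A secondary subtlety is that $\mathcal H$ need not contain a graph of every order, so the reduction must target exactly the orders that occur; recursive enumerability of $\mathcal H$ (and of the derived clique class) is what lets us find the witnesses that align $\IndProb{\mathcal H}$ with the homomorphism dichotomy.
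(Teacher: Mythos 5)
Your proof matches the paper's approach: both re-express $\#\IndSub{H}{\cdot}$ in the homomorphism basis via $\indsub=\ext^{-1}\cdot\surj^{-1}\cdot\hom$, observe that the coefficient of $\#\Hom{K_k}{\cdot}$ is non-zero because $K_k$ extends every $k$-vertex graph and is the only graph in its own spasm, and then apply the tensor-product extraction of Lemma~\ref{lem: extraction reduction} together with the Dalmau--Jonsson dichotomy (Theorem~\ref{thm: dalmau jonsson}) to the resulting clique family. One small slip that does not affect the argument: tracing $\ext^{-1}(H,K_k)\cdot\surj^{-1}(K_k,K_k)$ gives the coefficient $(-1)^{\binom{k}{2}-\abs{E(H)}}/\abs{\Aut H}$ rather than $\pm 1/\abs{\Aut{K_k}}$, and the intermediate inclusion--exclusion over supergraphs must be weighted by $\#\Emb$ (equivalently divided by $\abs{\Aut{H}}$), not phrased directly with $\#\Sub$ and signs $\pm 1$; but since only the pair $(K_k,\hat 0)$ contributes, non-vanishing is immediate either way.
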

Jerrum and Meeks~\cite{jerrum2015parameterised,jerrum2015some,jerrum2014parameterised,meeks2016challenges}
introduced the following generalization of the problems $\IndProb {\mathcal H}$
to fixed graph properties $\Phi$:
Given a graph $G$ and $k \in \mathbb N$, the task is to compute the number of
induced $k$-vertex subgraphs that have property~$\Phi$.
Let us call this problem~$\IndPropProb{\Phi}$.
They identified some classes of properties $\Phi$ that render this problem
$\sharpWone$-hard.
Using our machinery, we get a full dichotomy theorem for this class of problems.
\begin{thm}[simple version]\label{cor: jerrum meeks dichotomy}
  If $\Phi$ is a decidable graph property, then
  $\IndPropProb{\Phi}$ is fixed-parameter tractable or $\sharpWone$-hard when parameterized by~$\abs{V(H)}$.
\end{thm}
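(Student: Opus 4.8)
The plan is to recognise $\IndPropProb{\Phi}$, for each fixed subgraph size~$k$, as the evaluation of an explicit graph motif parameter---a finite linear combination of homomorphism counts---and then to run the ``homomorphisms are a good basis'' dichotomy for such parameters. Fix $k$. Since every induced $k$-vertex subgraph of a host graph $G$ is isomorphic to exactly one $k$-vertex graph, the number that $\IndPropProb{\Phi}$ must output on input $(G,k)$ equals $\sum_{[H]}\#\IndSub H G$, where $[H]$ ranges over the isomorphism types of $k$-vertex graphs with $\Phi(H)$. By the classical Lovász identities, $\#\IndSub H G$ is a finite $\mathbb Z$-linear combination of subgraph counts $\#\Sub{H'}{G}$ over the spanning supergraphs $H'$ of $H$, and each $\#\Sub{H'}{G}$ is a finite $\mathbb Q$-linear combination of homomorphism counts $\#\Hom F G$ with $F\in\spasm{H'}$. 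Composing and collecting terms gives
\[
  \sum_{[H]\,:\,\abs{V(H)}=k,\ \Phi(H)}\#\IndSub H G
  \;=\;
  \sum_{F}\,\mu_{k,F}\cdot\#\Hom F G\,,
\]
a finite sum over graphs $F$ on at most $k$ vertices---which we may assume isolated-vertex-free, since deleting an isolated vertex from $F$ divides $\#\Hom F G$ by $\abs{V(G)}$ and does not change $\tw F$---whose rational coefficients $\mu_{k,F}$ depend only on $k$ and on the restriction of $\Phi$ to $k$-vertex graphs. As $\Phi$ is decidable, the $\mu_{k,F}$ are computable from $k$: enumerate the $k$-vertex graphs, test $\Phi$, and apply the explicit Lovász transformations. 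Hence the parameters $f_k:=\sum_F\mu_{k,F}\cdot\#\Hom F{\,\cdot\,}$ form a recursively enumerable class, and solving $\IndPropProb{\Phi}$ is precisely evaluating $f_k$ on $G$, with $k=\abs{V(H)}$ as the parameter.

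Put $t^\ast:=\sup_k\,\max\{\,\tw F : \mu_{k,F}\neq 0\,\}\in\mathbb N\cup\{\infty\}$. The dichotomy is read off from whether $t^\ast$ is finite. If $t^\ast<\infty$, then each $f_k$ is a sum of at most $2^{O(k^2)}$ homomorphism counts, each of a graph of treewidth at most $t^\ast$; evaluating each such count via Proposition~\ref{prop: hom-treewidth-algo} and precomputing the coefficients yields a running time of $G(k)\cdot\abs{V(G)}^{t^\ast+1}$ for a computable $G$, which is fixed-parameter tractable in $k$.

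If $t^\ast=\infty$, I would prove $\sharpWone$-hardness by reduction from homomorphism counting. Using the recursive enumerability of $\{f_k\}$, extract a recursively enumerable sequence of pairs $(k_i,F_i)$ with $\mu_{k_i,F_i}\neq 0$ and $\tw{F_i}\to\infty$ (each $F_i$ taken isolated-vertex-free, of unchanged treewidth); then $\mathcal H^\ast:=\{F_i\}$ is recursively enumerable of unbounded treewidth, so $\HomProb{\mathcal H^\ast}$ is $\sharpWone$-hard by Theorem~\ref{thm: dalmau jonsson}. It then suffices to give a parameterized Turing reduction from $\HomProb{\mathcal H^\ast}$ to $\IndPropProb{\Phi}$. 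Given $G$ and a target $F_i\in\mathcal H^\ast$, recover the index $k_i$ (computable from $F_i$ via the enumeration) and evaluate $f_{k_i}$---through the $\IndPropProb{\Phi}$ oracle at parameter $k_i$---on the graphs $G\times B_1,\dots,G\times B_m$, where $\times$ denotes the tensor (categorical) product and $B_1,\dots,B_m$ are fixed probe graphs of size bounded in terms of $k_i$. Since $\#\Hom{F}{G\times B_j}=\#\Hom F G\cdot\#\Hom F{B_j}$, the resulting values form a linear system in the unknowns $\{\#\Hom F G : \mu_{k_i,F}\neq 0\}$ with coefficient matrix $\bigl(\#\Hom F{B_j}\bigr)_{F,j}$; choosing the $B_j$ so that this matrix is invertible---possible because the homomorphism functions of pairwise non-isomorphic isolated-vertex-free graphs are linearly independent---lets us solve for $\#\Hom{F_i}{G}$. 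The queried hosts have size $\poly{\abs{V(G)}}$ and the query parameter $k_i$ is bounded by a computable function of $\abs{V(F_i)}$, so this is a legitimate parameterized Turing reduction, establishing $\sharpWone$-hardness.

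The main obstacle is the interpolation step---isolating the single count $\#\Hom{F_i}{G}$ from the aggregate value of $f_{k_i}$. Making it work requires: (i)~linear independence of homomorphism functions, so that the hom-support of $f_{k_i}$ is well defined and the probe matrix can be made nonsingular; (ii)~exhibiting enough probe graphs $B_j$, of size bounded in $k_i$, to invert that matrix (for instance via disjoint unions of cliques and a Vandermonde-type argument); and (iii)~verifying that the reduction respects the parameterization. By comparison, the two Lovász expansions (induced subgraphs $\to$ subgraphs $\to$ homomorphisms), the passage to an isolated-vertex-free and hence treewidth-faithful support, and the various computability claims are routine.
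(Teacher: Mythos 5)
Your proposal is correct and follows essentially the same route as the paper: express $I_{\Phi,k}$ as a graph motif parameter in the homomorphism basis, use the treewidth-based homomorphism-counting algorithm for the FPT side, and interpolate individual $\#\Hom{F}{G}$ values out of the aggregate via queries on tensor products $G\times B_j$, appealing to linear independence of homomorphism functions to make the probe matrix invertible. The paper packages exactly these steps as Lemma~\ref{lem: graph motif algorithm}, Lemma~\ref{lem: extraction reduction} (your probe-graph interpolation with $X\in S$ for $S$ the spasm-closure of the support), Lemma~\ref{lem: hom is invertible} (your invertibility claim), and Lemma~\ref{lem: graph motif hardness}, and then invokes them through Theorem~\ref{thm: lincomb indsub}; you are simply re-deriving that machinery inline rather than citing it.
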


\subsection{A unified view: Graph motif parameters}
\label{sec: intro lovasz}

We now discuss our proof techniques on a high level.
From a conceptual perspective, our most important contribution lies in finding a framework for understanding the parameterized complexity of subgraphs, induced subgraphs, and homomorphisms in a uniform context.
Note that we quite literally \emph{find} this framework: That is, we do not develop
it ourselves, but we rather adapt works by Lovász et al.~dating back to the
1960s~\cite{lovasz1967operations,borgs2006counting}.
The most important observation is the following:
\begin{quote}
  \textit{Many counting problems are actually linear combinations of
  homomorphisms in disguise!}
\end{quote}
That is, there are elementary transformations to express, say, linear
combinations of subgraphs as linear combinations of homomorphisms, and vice
versa.

The algorithms for subgraphs (Theorem~\ref{thm: algorithm count subgraphs} and
Corollary~\ref{cor: algorithm count subgraphs}) are based on a reduction from
subgraph counting to homomorphism counting, so we want to find relations between
the number of subgraphs and the number of homomorphisms.
To get things started, note that injective homomorphisms from~$H$ to~$G$, also
called \emph{embeddings}, correspond to a subgraph~$F$ of~$G$ that is isomorphic
to~$H$, and in fact, the number $\#\Emb{H}{G}$ of embeddings is equal to the
number $\#\Sub{H}{G}$ of subgraphs times $\#\Aut{H}$, the number of
automorphisms of~$H$.

Homomorphisms cannot map two adjacent vertices of~$H$ to the same vertex of~$G$,
assuming that~$G$ does not have any loops.
For instance, every homomorphism from $\drawgraph{Bw}$ to~$G$ must be injective,
and therefore the number of triangles in~$G$ is equal to the number of such
homomorphisms, up to a factor of~$6$: the number of automorphisms of the
triangle.
Formally, we have
$
\#\Sub{\drawgraph{Bw}}{G}
=
\frac 16\cdot
\#\Hom{\drawgraph{Bw}}{G}$
for every graph~$G$ that does not have loops.

More interesting cases occur when homomorphisms from~$H$ to~$G$ are not automatically injective.
Clearly, the set of all homomorphisms contains the injective ones, which
suggests we should simply count all homomorphisms and then subtract the ones
that are not injective.
Any non-injective homomorphism~$h$ from~$H$ to~$G$ has the property that there
are at least two (non-adjacent) vertices that it maps to the same vertex; in
other words, its image~$h(H)$ is isomorphic to some member of~$\spasm{H}$ other
than~$H$ itself.
For example,
$\#\Hom {\drawgraph{BW}} G
-\#\Hom {\drawgraph{A_}} G$
is the number of injective homomorphisms from \drawgraph{BW} to~$G$ since the
only way for such a homomorphism to be non-injective is that it merges the two
degree-1 vertices.
In general, the number~$\#\Emb HG$ of injective
homomorphisms is
\begin{equation*}
\#\Hom HG-\sum_{F\in\spasm{H}\setminus\set{H}} \#\Emb{F}{G}\,.
\end{equation*}
Since each such~$F$ is strictly smaller than~$H$, this fact yields a recursive
procedure to compute~$\#\Emb HG$.
However, there is a better way: We can use Möbius inversion over the partition
lattice to obtain a closed formula.
We already mentioned that the spasm of $H$ can be obtained by consolidating
non-adjacent vertices of~$H$ in all possible ways.
This means that we consider partitions~$\rho$ of~$V(H)$ in which each block is an
independent set, and then form the \emph{quotient graph}~$\contract
H\rho$ obtained from~$H$ by merging each block of~$\rho$ into a single vertex.
To express the injective homomorphisms from $H$ to~$G$ (and hence the number of
$H$-subgraphs) as a linear combination of homomorphisms, we consider all
possible types in which a homomorphism~$h$ from~$H$ to~$G$ can fail to be
injective.
More precisely, we define this type $\rho_h$ of $h$ to be the partition of
$V(H)$, where each block is the set of vertices of~$H$ that map to the same
vertex of~$G$ under~$h$.
The homomorphism $h$ is injective if and only if~$\rho_h$ is the finest
partition, i.e., the partition where each block has size one.

The homomorphisms from $\contract H \rho$ to $G$ are precisely those
homomorphisms from~$H$ to~$G$ that fail to be injective ``at least as badly
as~$\rho$'', that is, those homomorphisms $f$ whose type $\rho_f$ is a
coarsening of $\rho$.
As remarked by Lovász~et~al.~\cite{lovasz1967operations,borgs2006counting}, one
can then use Möbius inversion, a generalization of the inclusion--exclusion
principle, to turn this observation into the ``inverse'' identity
\begin{align}
\label{eq: emb2hom-intro}
  \#\Sub HG
  &=
  \sum_\rho
  \frac{
  (-1)^{\abs{V(H)}-\abs{V(\contract H\rho)}}
  \cdot
  \prod_{B\in\rho} (|B|-1)!
  }{\#\Aut H}
  \cdot \#\Hom {\contract H\rho}G
  \,.
\end{align}
The sum in \eqref{eq: emb2hom-intro} ranges over all partitions~$\rho$ of $V(H)$.
Hence, the number of $H$-subgraphs in~$G$ is equal to a linear combination of the
numbers of homomorphisms from graphs~$\contract H\rho$ to~$G$.
Each $\contract H\rho$ is isomorphic to a graph in $\spasm{H}$, and
so by collecting terms for isomorphic graphs,~\eqref{eq: emb2hom-intro}
represents the number of $H$-subgraphs as a linear combination of homomorphism
numbers from graphs~$F$ in the spasm of~$H$; see Figure~\ref{fig: P4 spasms} for
an example.

\begin{figure}[tp]
  \begin{align*}
    &
    \Sub {\drawgraph{DDW}} \star
    \;=
    \\
    &
    \begin{array}{ll}
      &
      \makebox[1.5em][l]{$\frac12$}
      \Hom{\drawgraph{DDW}}\star
      \\
      -&
      \makebox[1.5em][l]{       }
      \Hom {\drawgraph{CR}} \star
      \;-\;
      \makebox[1.5em][l]{       }
      \Hom {\drawgraph{CN}} \star
      \;-\;
      \makebox[1.5em][l]{$\frac12$}
      \Hom {\drawgraph{Cr}} \star
      \;-\;
      \makebox[1.5em][l]{$\frac12$}
      \Hom {\drawgraph{CF}} \star
      \\
      +&
      \makebox[1.5em][l]{$\frac32$}
      \Hom {\drawgraph{Bw}} \star
      \;+\;
      \makebox[1.5em][l]{$\frac52$}
      \Hom {\drawgraph{BW}} \star
      \\
      -&
      \makebox[1.5em][l]{       }
      \Hom {\drawgraph{A_}} \star
      \,.
    \end{array}
  \end{align*}
  \caption[%
    An example for \eqref{eq: emb2hom-intro} for the path with four edges%
    ]{\label{fig: P4 spasms}%
    An example for \eqref{eq: emb2hom-intro}, where $H$ is 
    the path \drawgraph{DDW} with four edges.
    The number of subgraphs is represented as a linear combination 
    of homomorphisms from graphs $F\in\spasm{H}$.
    Each such~$F$ has treewidth at most two, so we can compute the homomorphism
    numbers in time $O(n^3)$ via Proposition~\ref{prop: hom-treewidth-algo}.
    Computing the linear combination on the right side yields an $O(n^3)$-time
    algorithm to count $4$-paths, and in fact this is the algorithm in
    Theorem~\ref{thm: algorithm count subgraphs}.
  }
\end{figure}

The identity~\eqref{eq: emb2hom-intro} can be viewed as a basis transformation
in a certain vector space of graph parameters, and we formalize this perspective
in \S\ref{sec: graph motif parameters}.
A similar identity turns out to hold for counting induced subgraphs as well, so
all three graph parameter types can we written as finite linear combinations of
each other.
This motivates the notion of a \emph{graph motif parameter}, which is any graph parameter~$f$ that is a finite linear combination of induced subgraph numbers.
That is, there are coefficients $\alpha_1,\dots,\alpha_t\in\mathbb Q$ and graphs $H_1,\dots,H_t$ such that, for all graphs~$G$, we have
\begin{equation}
  \label{eq: first-linear-combination}
  f(G) = \sum_{i=1}^t \alpha_i \cdot \#\IndSub{H_i}G\,.
\end{equation}

We study the problem of computing graph motif parameters~$f$.
For our results in parameterized complexity, we parameterize this problem by the
description length~$k$ of $\alpha_1,\dots,\alpha_t$ and $H_1,\ldots, H_t$.
Due to the basis transformation between induced subgraphs, subgraphs, and
homomorphisms, writing $\#\Hom{H_i}{G}$ instead of ${\#\IndSub{H_i}{G}}$ in~\eqref{eq: first-linear-combination} yields an equivalent class of problems --- switching bases only leads
to a factor~$g(k)$ overhead in the running time for some computable function~$g$, which we can neglect for
our purposes.

Our main result is that the complexity of computing any graph parameter~$f$ is exactly governed by the maximum complexity of counting the homomorphisms occurring in its representation over the homomorphism basis.
More precisely, let~$\alpha_1,\dots,\alpha_t\in\mathbb Q$ and $H_1,\dots,H_t$ be graphs with $f(G)=\sum_i \alpha_i \cdot \#\Hom{H_i}{G}$ for all graphs~$G$.
Our algorithmic results are based on the following observation: If each
$\#\Hom{H_i}{G}$ can be computed in time $O(n^c)$ for $n=\abs{V(G)}$ and some constant~$c\ge 0$, then~$f(G)$ can be computed in time $O(n^c)$ for the same constant~$c$.
However, we show that the reverse direction also holds: If $f$ can be
computed in time $O(n^c)$ for some $c\ge 0$, then each $\#\Hom{H_i}{G}$ with $\alpha_i\ne0$ can be computed in time $O(n^c)$ for the same constant~$c$.
The reduction that establishes this fine-grained equivalence gives rise to our
results under $\sharpETH$ and our new $\sharpWone$-hardness proof.
Note that such an equivalence is \emph{not} true for linear combinations of
embedding numbers, as can be seen from the following example.
\begin{example}\label{example: 4walks}
  Consider the following linear combination:
  \begin{align*}
    &\makebox[1.5em][l]{                } \Emb {\drawgraph{DDW}} \star
    \;+\;\makebox[1.5em][l]{                } \Emb {\drawgraph{Cr}} \star
    +\;
    \makebox[1.5em][l]{                } \Emb {\drawgraph{CF}} \star
    \;+\;\makebox[1.5em][l]{$ 2\cdot\mbox{}$} \Emb {\drawgraph{CN}} \star
    \\
    +\;
    &\makebox[1.5em][l]{$ 2\cdot\mbox{}$} \Emb {\drawgraph{CR}} \star
    \;+\;\makebox[1.5em][l]{$ 3\cdot\mbox{}$} \Emb {\drawgraph{Bw}} \star
    +\;
    \makebox[1.5em][l]{$ 4\cdot\mbox{}$} \Emb {\drawgraph{BW}} \star
    \;+\;\makebox[1.5em][l]{                } \Emb {\drawgraph{A_}} \star
    \,.
  \end{align*}
  When this linear combination of embeddings is transformed into the homomorphism 
  basis via~\eqref{eq: emb2hom-intro}, most terms cancel, and it turns out that
  it is equal to $\Hom {\drawgraph{DDW}} \star$, that is, it counts the number
  of walks of length~$4$.
  Counting $4$-walks can be done in time~$O(n^2)$ via
  Proposition~\ref{prop: hom-treewidth-algo}, but counting, for example,
  triangles is not known to be possible faster than~$O(n^\omega)$.
  More generally, counting walks of length~$k$ is in~$O(n^2)$-time, but counting
  \emph{paths} of length~$k$ is $\sharpWone$-hard and not in time
  $g(k)\cdot n^{o(k/\log k)}$ under $\sharpETH$.
\end{example}
Thus, even a linear combination of subgraph numbers that looks complex at first
and contains as summands embedding numbers that are fairly hard to compute can
actually be quite a bit easier due to cancellation effects that occur when
rewriting it as linear combination of homomorphism numbers.

As in the case of subgraphs in~\S\ref{sec: intro subgraphs}, we consider classes
$\mathcal A$ of linear combinations to get more expressive hardness results.
That is, each element of $\mathcal A$ is a pattern-coefficient list
$(\alpha_1,H_1),\dots,(\alpha_t,H_t)$ as above.
The evaluation problem $\IndProb{\mathcal A}$ for graph motif parameters from $\mathcal A$ is then given a
pattern-coefficient list from~$\mathcal A$ and a graph~$G$, and is supposed to
compute the linear combination~\eqref{eq: first-linear-combination}.
We have the following result for the complexity of computing graph motif
parameters.
\begin{thm}[intuitive version]
  \label{thm: lincomb indsub}
  Let $\mathcal A$ be a recursively enumerable class of pattern-coefficient
  lists.
  If the linear combinations \eqref{eq: first-linear-combination}, re-expressed
  as linear combinations of homomorphisms, contain non-zero coefficients only for
  graphs of treewidth at most~$t$, then the problem $\IndProb{\mathcal A}$ can be
  computed in time $f(\alpha)\cdot n^{t+1}$.
  Otherwise, the problem is $\sharpWone$-hard parameterized by $\abs{\alpha}$
  and does not have $f(\alpha) \cdot n^{o(t/\log t)}$ time algorithms under
  $\sharpETH$.
\end{thm}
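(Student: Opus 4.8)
The plan is to reduce both halves of the statement to homomorphism counting. As noted after \eqref{eq: first-linear-combination}, converting a pattern--coefficient list between the induced-subgraph and homomorphism bases (via \eqref{eq: emb2hom-intro} and the analogous identity for induced-subgraph counts) costs only a factor $g(|\alpha|)$ in running time, so I may assume every list $\alpha\in\mathcal A$ has been pre-processed into the form $f_\alpha(G)=\sum_{i=1}^{t}\beta_i\cdot\#\Hom{H_i}{G}$ with the $H_i$ pairwise non-isomorphic and all $\beta_i\neq 0$; set $t(\alpha):=\max_i\tw{H_i}$. The dichotomy is then governed by whether $\sup_{\alpha\in\mathcal A}t(\alpha)$ is finite. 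If it is bounded by $t$, then on input $(\alpha,G)$ I would compute this representation, evaluate each $\#\Hom{H_i}{G}$ using Proposition~\ref{prop: hom-treewidth-algo} in time $\exp(O(|V(H_i)|))+\poly{|V(H_i)|}\cdot n^{\tw{H_i}+1}\le f(\alpha)\cdot n^{t+1}$, and return $\sum_i\beta_i\cdot\#\Hom{H_i}{G}$; this settles the algorithmic half.

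For the hardness half, the crucial tool is a \emph{complexity monotonicity} lemma: if $f_\alpha$ is evaluable on $m$-vertex graphs in time $O(m^c)$, then each $\#\Hom{H_i}{\cdot}$ is evaluable on $n$-vertex graphs in time $g(|\alpha|)\cdot O(n^c)$. The argument uses the categorical (tensor) product $G\times F$, for which $\#\Hom{H}{G\times F}=\#\Hom{H}{G}\cdot\#\Hom{H}{F}$ since maps into a product are pairs of maps: evaluating $f_\alpha$ on $G\times F_1,\dots,G\times F_t$ yields the linear system $f_\alpha(G\times F_j)=\sum_i\bigl(\beta_i\cdot\#\Hom{H_i}{F_j}\bigr)\cdot\#\Hom{H_i}{G}$, whose coefficient matrix $\bigl(\#\Hom{H_i}{F_j}\bigr)_{j,i}$ can be made non-singular for a suitable choice of the $F_j$, because the functions $\#\Hom{H}{\cdot}$ for pairwise non-isomorphic $H$ are linearly independent (Lov\'asz); moreover such $F_j$, of size bounded in terms of $|\alpha|$, are found by exhaustive search. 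Solving the system (integer entries of size $\poly{n,g(|\alpha|)}$) and dividing out $\beta_i\neq0$ recovers every $\#\Hom{H_i}{G}$, using $t\le g(|\alpha|)$ oracle calls on graphs with $O(n\cdot g(|\alpha|))$ vertices. This is precisely the step that fails for embedding numbers (cf.\ Example~\ref{example: 4walks}): cancellation in the homomorphism basis destroys the independence needed for the interpolation.

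Now suppose $\sup_\alpha t(\alpha)=\infty$, and for each $\alpha$ fix a pattern $H_\alpha$ of maximum treewidth $t(\alpha)$ in its homomorphism representation. The class $\mathcal H':=\{H_\alpha:\alpha\in\mathcal A\}$ is recursively enumerable and of unbounded treewidth. Given $(H,G)$ with $H\in\mathcal H'$, I enumerate $\mathcal A$ to find an $\alpha$ with $H_\alpha=H$ (so $\tw H=t(\alpha)$ and $|\alpha|$ is a function of $H$), and invoke complexity monotonicity: an oracle for $\IndProb{\mathcal A}$ then yields $\#\Hom{H}{G}$ with the overhead above. Composing with Theorem~\ref{thm: dalmau jonsson} gives a parameterized Turing reduction from counting $k$-cliques, so $\IndProb{\mathcal A}$ is $\sharpWone$-hard; composing instead with Proposition~\ref{prop: hom ETH hard} shows that an $f(\alpha)\cdot n^{o(t(\alpha)/\log t(\alpha))}$ algorithm for $\IndProb{\mathcal A}$ would give, for $\mathcal H'$, an $f'(H)\cdot n^{o(\tw H/\log\tw H)}$ algorithm (using $t(\alpha)=\tw H$ and that the product graphs have $O(n\cdot g(|\alpha|))$ vertices), contradicting $\sharpETH$.

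The main obstacle is the complexity monotonicity lemma, and within it the interpolation step: one must guarantee that the matrix $\bigl(\#\Hom{H_i}{F_j}\bigr)$ becomes invertible using test graphs $F_j$ whose sizes are controlled purely by $|\alpha|$, which is exactly where Lov\'asz's linear-independence results for homomorphism numbers are needed. The remaining ingredients — the basis change \eqref{eq: emb2hom-intro}, Proposition~\ref{prop: hom-treewidth-algo}, Theorem~\ref{thm: dalmau jonsson}, and Proposition~\ref{prop: hom ETH hard} — are either cited or routine bookkeeping about sizes and parameters.
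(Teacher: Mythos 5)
Your proof follows the paper's strategy essentially exactly: the algorithmic side is Lemma~\ref{lem: graph motif algorithm} (basis change to $\hom$ plus Proposition~\ref{prop: hom-treewidth-algo}), your complexity-monotonicity lemma is precisely the paper's Lemma~\ref{lem: extraction reduction} (interpolation via categorical products $G\times X$ and inversion guaranteed by Lemma~\ref{lem: hom is invertible}), and the hardness half composes with Theorem~\ref{thm: dalmau jonsson} and Proposition~\ref{prop: hom ETH hard} just as the paper's Lemma~\ref{lem: graph motif hardness} does. The only cosmetic difference is that you seek a minimal set of $t$ test graphs by exhaustive search, whereas the paper fixes the test graphs to be the spasm closure $S=\bigcup_{H\in\supp\alpha}\spasm H$ and uses that $\hom_S$ factors through triangular matrices to get invertibility directly; both yield the same size bounds.
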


With respect to fixed-parameter tractability vs.\ $\sharpWone$-hardness, this
theorem fully classifies the problems $\IndProb {\mathcal A}$ for fixed classes
of linear combinations $\mathcal A$.
Of course we have similar (equivalent) formulations for $\SubProb{\mathcal A}$
and $\HomProb{\mathcal A}$, and thus we generalize the dichotomy theorems for
subgraphs (Theorem~\ref{thm: dichotomy-sub}), homomorphisms (Theorem~\ref{thm:
dalmau jonsson}), and induced subgraphs (Theorem~\ref{thm: chen weyer}).
The dichotomy criterion is somewhat indirect; it addresses $\mathcal A$ only
through its representation as a linear combination of homomorphism numbers.
However, we do not believe that there is a more 'native' dichotomy criterion on
$\mathcal A$ since seemingly complicated linear combinations can turn out to be
easy -- Example~\ref{example: 4walks} gives an indication of this phenomenon;
perturbing the coefficients just a tiny bit can turn a computationally easy
linear combination into one that is hard.

Nevertheless, we can exhibit some interesting sufficient conditions.
For example, for the problem $\SubProb{\mathcal A}$, if all linear combinations
of $\mathcal A$ in fact feature exactly one pattern (as in the situation
of Theorem~\ref{thm: dichotomy-sub}), then the linear combination re-expressed
over homomorphisms uses graphs of unbounded treewidth if and only if the
patterns in $\mathcal A$ have bounded vertex-cover number.
We can hence recover Theorem~\ref{thm: dichotomy-sub} from Theorem~\ref{thm:
lincomb indsub}.

\subsection{Counting vertex-colored subgraphs}
\label{sec: intro colored subgraphs}

The techniques introduced above are sufficiently robust to handle
generalizations to, e.g., the setting of vertex-colored subgraphs, where the
vertices of $H$ and $G$ have colors and we count only subgraphs of $G$ with
isomorphisms to $H$ that respect colors.
A dichotomy for the special case of {\em colorful} patterns, where every vertex
of the pattern $H$ has a different color, follows from earlier results by
observing that embeddings and homomorphisms are the same for colorful patterns.
For colorful patterns, bounded treewidth is the tractability criterion.

\begin{thm}[\cite{dalmau2004complexity,curticapean2014complexity,meeks2016challenges}]
\label{thm: dichotomy-sub-colorful}
Let $\mathcal H$ be a recursively enumerable class of colorful vertex-colored graphs.
If~$\mathcal H$ has bounded treewidth, then the problem $\SubProb {\mathcal H}$ of counting colorful subgraphs from $\mathcal H$ in vertex-colored host graphs is polynomial-time solvable.
Otherwise, it is $\sharpWone$-complete when parameterized by~$\abs{V(H)}$.
\end{thm}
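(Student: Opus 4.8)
The plan is to reduce everything to the homomorphism dichotomy of Theorem~\ref{thm: dalmau jonsson}, via the observation quoted just before the statement: for a \emph{colorful} pattern $H$, a color-preserving homomorphism $h\colon H\to G$ is automatically injective, since distinct vertices of $H$ carry distinct colors and are therefore forced into distinct color classes of $G$. Hence, for colorful $H$, color-preserving homomorphisms, color-preserving embeddings, and color-preserving subgraph copies all coincide (a colorful graph has only the trivial color-preserving automorphism), so $\#\Sub HG=\#\Hom HG$ in the color-preserving sense for every vertex-colored $G$. Thus $\SubProb{\mathcal H}$ is, after renaming colors, exactly the problem of counting color-preserving homomorphisms from patterns in $\mathcal H$, and it suffices to prove the dichotomy in that form, with ``bounded treewidth'' referring to the underlying uncolored graphs.

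For the tractable side, if $\mathcal H$ has bounded treewidth I would run the dynamic program of Proposition~\ref{prop: hom-treewidth-algo} along a tree decomposition of $H$, simply discarding any partial assignment that violates colors. This only prunes the tables, so the running time stays $\poly{\abs{V(H)}}\cdot\abs{V(G)}^{\tw H+1}$, which is polynomial when $\tw H$ is bounded; hence $\SubProb{\mathcal H}$ is polynomial-time solvable in this case.

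For the hard side, I would first pass to the recursively enumerable class $\hat{\mathcal H}=\{\hat H:H\in\mathcal H\}$ of underlying uncolored graphs, which has unbounded treewidth. Then I would reduce $\HomProb{\hat{\mathcal H}}$ to $\SubProb{\mathcal H}$ as follows: given $H\in\mathcal H$ with colors $1,\dots,k$ on its $k$ vertices and an uncolored host $G_0$ on vertex set $W$, build the colored host $G$ on vertex set $W\times[k]$, with $(w,i)$ adjacent to $(w',j)$ iff $ww'\in E(G_0)$, and with $(w,i)$ colored $i$. A color-preserving copy of $H$ in $G$ sends the color-$i$ vertex $v_i$ to some $(w_i,i)$, and the edge constraints translate to $w_iw_j\in E(G_0)$ whenever $v_iv_j\in E(H)$; these copies are therefore in bijection with homomorphisms from $\hat H$ to $G_0$. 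Since $G$ has $k\cdot\abs{W}$ vertices with $k=\abs{V(H)}$ the parameter, this is a parameterized reduction, so $\SubProb{\mathcal H}$ is $\sharpWone$-hard by the hardness part of Theorem~\ref{thm: dalmau jonsson}. Membership in $\sharpWone$ follows from the converse: counting color-preserving homomorphisms reduces to counting (uncolored) homomorphisms by encoding each vertex's color with a small rigid gadget, and counting homomorphisms from a bounded-size pattern is in $\sharpWone$; hence the problem is $\sharpWone$-complete.

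The routine parts are the two reductions and the colored adaptation of the dynamic program, and I expect no genuine obstacle here. The one point worth isolating is the injectivity observation of the first paragraph: it is precisely this collapse of color-preserving subgraphs onto homomorphisms that shifts the tractability criterion from bounded vertex-cover number (as for uncolored subgraphs, Theorem~\ref{thm: dichotomy-sub}) to bounded treewidth. The only mild care needed is checking that both reductions are parameterized reductions, i.e.\ that the parameter $\abs{V(H)}$ grows by at most a function of itself, which is clear in both constructions.
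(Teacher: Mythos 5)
Your proof is correct and follows exactly the route the paper indicates: the paper does not spell out a proof but states that the result ``follows from earlier results by observing that embeddings and homomorphisms are the same for colorful patterns,'' which is precisely the collapse you isolate in your first paragraph. Your product construction $W\times[k]$ for the hardness direction and the color-pruned dynamic program for the tractability direction are the standard realizations of that observation, and your reduction is indeed a valid parameterized reduction (the only unstated detail being that recovering some colored $H\in\mathcal H$ from its uncolored shadow $\hat H$ uses recursive enumerability, costing only a function of the parameter).
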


Theorems~\ref{thm: dichotomy-sub} and \ref{thm: dichotomy-sub-colorful} characterize the two extreme cases of counting colored subgraphs: the uncolored and the fully colorful cases. But there is an entire spectrum of colored problems in between these two extremes. 
 What happens when we consider
  vertex-colored graphs with some colors appearing on more
  than one vertex? As we gradually move from colorful to
  uncolored graphs, where exactly is the point when a jump in
  complexity occurs?
 Answering such questions can be nontrivial even for simple patterns such as paths and matchings 
 and can depend very much on how the colors appear on the pattern. 
 Fortunately, by a basis change to (vertex-colored) homomorphisms via \eqref{eq: emb2hom-intro}, 
 we can answer such questions as easily as in the uncolored setting. 
 The only technical change required is that we should consider only partitions $\rho$ that respect the coloring of~$H$, 
 that is, the vertices of~$H$ that end up in the same block should have the same color.
 With these modifications, we can derive the following corollary from
 Theorem~\ref{thm: lincomb indsub}

\begin{thm}
\label{thm: dicho-sub-color}
Let $\cH$ be a recursively enumerable class of vertex-colored patterns (or linear combinations thereof)
and let $\mathcal A_\mathit{hom}$ be the class of linear combinations of homomorphisms derived from $\cH$ by the identity~\eqref{eq: emb2hom-intro} as discussed above.
If there is a finite bound on the treewidth of graphs in $\mathcal A_\mathit{hom}$, then $\SubProb {\cH}$ is FPT.
Otherwise, the problem is $\sharpWone$-hard when parameterized by~$\abs{V(H)}$.
\end{thm}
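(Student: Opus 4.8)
The plan is to re-run the proof of Theorem~\ref{thm: lincomb indsub} inside the category of vertex-colored graphs and color-preserving homomorphisms, so that Theorem~\ref{thm: dicho-sub-color} becomes, quite literally, a corollary. That proof rests on four ingredients, each of which has a faithful colored counterpart: (a) the basis-change identity~\eqref{eq: emb2hom-intro} relating $\#\Sub\cdot\cdot$, $\#\IndSub\cdot\cdot$ and $\#\Hom\cdot\cdot$; (b) the tree-decomposition dynamic program of Proposition~\ref{prop: hom-treewidth-algo}; (c) the complexity-monotonicity reduction showing that evaluating $\sum_i\alpha_i\,\#\Hom{F_i}{\cdot}$ is, up to parameterized Turing reductions, as hard as evaluating each single $\#\Hom{F_i}{\cdot}$ with $\alpha_i\neq 0$; and (d) the $\sharpWone$-hardness (and, under $\sharpETH$, the $n^{o(\tw H/\log\tw H)}$ lower bound) of counting homomorphisms from recursively enumerable classes of unbounded treewidth (Theorem~\ref{thm: dalmau jonsson}, Proposition~\ref{prop: hom ETH hard}). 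I would verify that performing the substitutions ``graph $\rightsquigarrow$ vertex-colored graph'', ``homomorphism $\rightsquigarrow$ color-preserving homomorphism'' and ``partition of $V(H)$ $\rightsquigarrow$ color-respecting partition of $V(H)$'' (a partition all of whose blocks are monochromatic) turns each of (a)--(d), hence the whole proof, into a proof of Theorem~\ref{thm: dicho-sub-color}; here $\mathcal A_\mathit{hom}$ is exactly the image of $\cH$ under the colored version of~\eqref{eq: emb2hom-intro}.

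For~(a): the color-respecting partitions of $V(H)$ form the product poset $\prod_i\mathrm{Part}(V_i)$ over the color classes $V_i$ of $H$, so its Möbius function is the product of those of the factors and still weights a block $B$ by $(-1)^{|B|-1}(|B|-1)!$; running the Möbius-inversion argument of the uncolored case (summands coming from a partition with a non-independent block vanish over loopless colored hosts) gives, for every colored host $(G,c_G)$,
\[
  \#\Sub{(H,c_H)}{(G,c_G)}
  =\sum_{\rho}
  \frac{(-1)^{|V(H)|-|V(\contract H\rho)|}\prod_{B\in\rho}(|B|-1)!}{\#\Aut{(H,c_H)}}
  \cdot\#\Hom{(\contract H\rho,\bar c)}{(G,c_G)},
\]
the sum ranging over color-respecting $\rho$ and $\bar c$ denoting the coloring induced on the quotient (if $\cH$ consists of linear combinations, apply this termwise). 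For~(b), and thus the $\FPT$ direction: if every colored graph occurring in $\mathcal A_\mathit{hom}$ has treewidth at most~$t$, then each $\#\Hom{(\contract H\rho,\bar c)}{(G,c_G)}$ is computed by the dynamic program of Proposition~\ref{prop: hom-treewidth-algo}, restricted to color-consistent partial assignments, in time $f(H)\cdot n^{t+1}$; summing the (precomputed) at most $f(H)$ terms of the linear combination gives an algorithm for $\SubProb{\cH}$ of the same shape.

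For~(d): counting color-preserving homomorphisms from a colored pattern $(F,c)$ reduces to counting ordinary homomorphisms from its underlying graph $F$ by the trivial colored blow-up $Q\mapsto Q\times C$ ($C$ the set of colors, each vertex of $Q$ split into one copy of each color, adjacencies inherited), under which color-preserving homomorphisms $(F,c)\to Q\times C$ are in bijection with homomorphisms $F\to Q$; since the class of underlying graphs of the patterns in $\mathcal A_\mathit{hom}$ is recursively enumerable and of unbounded treewidth, the desired hardness for counting color-preserving homomorphisms from those patterns follows from Theorem~\ref{thm: dalmau jonsson} and Proposition~\ref{prop: hom ETH hard}. Combining this with the colored form of~(c) --- whose tensor-power and graph-substitution constructions are carried out on colored hosts, a tuple being colored by the common color of its coordinates (and by a fresh dump color otherwise), so that $\#\Hom{(F,c)}{G^{\otimes\ell}}=\paren*{\#\Hom{(F,c)}{G}}^{\ell}$ and the Vandermonde interpolation is unchanged --- and threading the parameters through the usual class-based dichotomy bookkeeping (choosing, for the clique parameter~$k$, a colored pattern of treewidth $\gtrsim k$ appearing in $\mathcal A_\mathit{hom}$ for some $H$ of size bounded by a computable function of~$k$), one obtains for every $\cH$ with $\mathcal A_\mathit{hom}$ of unbounded treewidth a parameterized reduction from counting $k$-cliques to $\SubProb{\cH}$ parameterized by $\abs{V(H)}$; hence $\SubProb{\cH}$ is $\sharpWone$-hard, and the $\sharpETH$ lower bound follows likewise. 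The step that needs the most care is~(c): one must check that the construction distinguishing the contributions of the various patterns in the linear combination still yields an invertible linear system once every graph carries a color --- routine but bookkeeping-heavy --- while everything else transfers verbatim after the colored substitutions above.
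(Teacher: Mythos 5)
Your proposal is correct and takes essentially the same route the paper does: the paper states Theorem~\ref{thm: dicho-sub-color} as a corollary of Theorem~\ref{thm: lincomb indsub}, appealing to Remark~\ref{rem: colored graphs}, which says that the basis-change identity~\eqref{eq: emb2hom-intro}, the treewidth dynamic program (Proposition~\ref{prop: hom-treewidth-algo}), the extraction reduction (Lemma~\ref{lem: extraction reduction}), and the hardness reduction (Lemma~\ref{lem: graph motif hardness}) all transfer once one substitutes colored graphs, color-preserving homomorphisms, and color-respecting partitions (monochromatic independent blocks). You spell out each of those transfers, which the paper leaves to the reader. Two small points to tidy up. First, your sentence that counting color-preserving homomorphisms from $(F,c)$ ``reduces to'' counting ordinary homomorphisms from $F$ has the direction backwards; the blow-up bijection $\#\Hom{(F,c)}{Q\times C}=\#\Hom{F}{Q}$ is used in the other direction, giving a reduction \emph{from} the uncolored problem \emph{to} the colored one, which is exactly the hardness you need — you do then invoke it correctly, so this is purely a wording slip. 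Second, the paper's Lemma~\ref{lem: extraction reduction} does not use tensor powers $G^{\otimes\ell}$ with a Vandermonde argument (which would need the individual homomorphism counts to be pairwise distinct, which one cannot assume); it queries the oracle at products $G\times X$ for $X$ ranging over the spasm closure $S$ and inverts the triangular matrix $\hom_S$ of Lemma~\ref{lem: hom is invertible}. For the colored setting the same works with the colored categorical product (your ``dump color'' construction is functionally equivalent to restricting to same-colored pairs) and the colored spasm closure, using that the colored analogues of $\surj_S$ and $\sub_S$ remain triangular with nonzero diagonal.
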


Theorem~\ref{thm: dicho-sub-color} raises a number of questions.
First, being a corollary of Theorem~\ref{thm: lincomb indsub}, the tractability
criterion is quite indirect, whereas we may want to have a more direct
structural understanding of the FPT cases. Secondly, Theorem~\ref{thm:
dicho-sub-color} does not tell us whether the FPT cases are actually
polynomial-time solvable or not.
It is quite remarkable that in Theorems~\ref{thm: dichotomy-sub} and \ref{thm: dichotomy-sub-colorful}, all FPT cases are actually polynomial-time solvable, leaving no room for ``true'' FPT cases that are not polynomial-time solvable. It turns out however that, if we consider vertex-colored patterns in their full generality, then such pattern classes actually \emph{do} appear. A prime example of this phenomenon is the case of \emph{half-colorful matchings}, which are vertex-colored $k$-matchings such that one endpoint of each edge $e_i$ for $i\in \set{1,\ldots,k}$ is colored with $0$, while the other is colored with~$i$.
Since counting perfect matchings in bipartite graphs is $\sharpP$-hard, a
trivial argument shows that counting half-colorful matchings is also
$\sharpP$-hard: if a bipartite graph with $n+n$ vertices is colored such
that one part has color 0 and each vertex of the other part
has a distinct color from 1 to $n$, then the number of
half-colorful matchings of size $n$ is exactly the number of perfect
matchings. On the other hand, it is not difficult to show that
counting the number of half-colorful matchings of size $k$ in a graph
colored with colors $0$, $1$, $\dots$, $k$ is fixed-parameter
tractable. It is essentially a dynamic programming exercise: for any
subgraph~${H'\subseteq H}$ of the half-colorful matching and for any
integer~$i$, we want to compute the number of subgraphs of~$G$
isomorphic to $H'$ that are allowed to use only the first $i$ vertices
of color class 0.

We give a complete classification of
the polynomial-time solvable cases of counting colored patterns from a class $\mathcal H$. 
For classes of patterns (but not linear combinations thereof), we 
refine the FPT cases of Theorem~\ref{thm: dicho-sub-color} into two
classes: the polynomial-time solvable cases, and those that are
not polynomial-time solvable, assuming the
\emph{Nonuniform Counting Exponential Time Hypothesis}. This shows
that the existence of half-colorful matchings is the canonical reason
why certain classes of patterns require dynamic programming and
therefore the full power given by the definition of FPT: those cases
are polynomial where the size of the largest half-colorful matching
appearing as a subgraph is at most logarithmic in the size of the pattern.

\subsection*{Organization of the paper}

In~\S\ref{sec: prelim}, we provide basics on parameterized complexity and the graph-theoretical notions used in this paper.
We formalize \emph{graph motif parameters} in~\S\ref{sec: graph motif parameters},
and in~\S\ref{sec: graph motif parameter relations} we show how to switch between different useful representations of graph motif parameters.
In \S\ref{sec: graph motif parameter complexity}, we then address computational aspects of graph motif parameters.
These results are first put to use in~\S\ref{sec: algo}, where
we count subgraph patterns by reduction to homomorphisms.
In \S\ref{sec: complexity of lincombs}, we prove hardness results for linear combinations of subgraphs and induced subgraphs
under $\sharpETH$ 
and
$\FPT\ne\sharpWone$.
Finally, we prove our results for counting vertex-colored subgraphs in~\S\ref{sec:polytime}.

\section{Preliminaries}
\label{sec: prelim}

For a proposition~$P$, we use the \emph{Iverson bracket} $[P]\in\set{0,1}$ to indicate whether~$P$ is satisfied.
For a potentially infinite matrix~$M$, a \emph{principal submatrix} $M_S$ is a
submatrix of~$M$ where the selected row and column index sets are the same
set~$S$.

\subsection{Parameterized complexity theory}

We refer to the
textbooks~\cite{cygan2015parameterized,FlumGrohebook,Niedermeierbook06} for
background on parameterized complexity theory.  Briefly, a
\emph{parameterized counting problem} is a function $\Pi:\set{0,1}^*\to\mathbb
N$ that is endowed with a \emph{parameterization} $\kappa:\set{0,1}^*\to\mathbb
N$; it is \emph{fixed-parameter tractable} ($\FPT$) if there is a computable
function $f:\mathbb N\to\mathbb N$ and an algorithm to compute $\Pi(x)$ in time $f(k)\cdot \poly n$, where $n=\abs{x}$ and $k=\kappa(x)$.

A \emph{parameterized Turing reduction} is a Turing reduction from a parameterized problem
$(\Pi,\kappa)$ to a parameterized problem $(\Pi',\kappa')$ such that the reduction runs in
$f(\kappa (x))\cdot \poly{|x|}$ time on instances $\Pi(x)$ and each oracle query~$\Pi'(y)$ satisfies
$\kappa'(y)\le g(k)$.
Here, both $f$ and $g$ are computable functions.
A parameterized problem is \emph{$\sharpWone$-hard} if there is a parameterized Turing
reduction from the problem of counting the $k$-cliques in a given graph;
since it is believed that the latter does not have an $\FPT$-algorithm,
$\sharpWone$-hardness is a strong indicator that a problem is not $\FPT$.

The exponential time hypothesis ($\ETH$) by Impagliazzo and Paturi~\cite{IP01} asserts 
that satisfiability of $3$-CNF formulas cannot be decided substantially faster 
than by trying all possible assignments.
The counting version of this hypothesis~\cite{DHMTW14} states that
there is a constant $c>0$ such that no deterministic algorithm
can compute \textup{\#$3$-SAT} in time~$\exp(c\cdot n)$, where $n$ is the number of variables.
  
\noindent
Chen et al.~\cite{chen2004tight} proved that $\ETH$ implies the
hypothesis that there is no $f(k)\cdot n^{o(k)}$-time algorithm to decide whether an $n$-vertex graph~$G$ contains a $k$-clique.
Their reduction is parsimonious, so~$\sharpETH$ rules out
$f(k)\cdot n^{o(k)}$ time algorithms for counting $k$-cliques.

\subsection{Graphs, subgraphs, and homomorphisms}

Let $\Graphsl$ be the set of all labeled, finite, undirected, and simple graphs;
in particular, these graphs contain neither loops nor parallel edges.
That is, there is a suitable fixed and countably infinite universe~$U$,
and~$\Graphsl$ contains all finite graphs~$G$ with vertex set $V(G)\subseteq U$
and edge set $E(G)\subseteq\binom{V(G)}{2}$.

\paragraph{Subgraphs.}

If $G$ is a graph, a \emph{subgraph} $F$ of $G$ is a graph with $V(F)\subseteq V(G)$ and 
$E(F)\subseteq E(G)$, and $F$ is an \emph{induced subgraph} of $G$ if it is a
subgraph with the additional property that, for all $uv\not\in E(F)$ we
have ${uv\not\in E(G)}$.
The set of subgraphs of $G$ that are isomorphic to $H$ is denoted with $\Sub
H G$, and the set of induced subgraphs of $G$ that are isomorphic to $H$ is
denoted with $\IndSub H G$.

\paragraph{Homomorphisms and related notions.}

If $H$ and $G$ are graphs, a \emph{homomorphism} from $H$ to $G$ is a
function $f:V(H)\to V(G)$ such that edges map to edges under~$f$.
That is, for all $\set{u,v} \in E(H)$, we have $\set{f(u),f(v)} \in E(G)$.
The set of all homomorphisms from $H$ to $G$ is denoted with $\Hom H G$.
\emph{Embeddings} are injective homomorphisms, and we denote the corresponding set with $\Emb H G$.
\emph{Strong embeddings} are embeddings with the additional property that 
non-edges map to non-edges, that is, ${\set{f(u),f(v)}\not\in E(G)}$ holds for all 
$\set{u,v}\notin E(G)$. We denote the set of strong embeddings with $\StrEmb H G$.
A homomorphism~$f\in\Hom{H}{G}$ is \emph{surjective} if it hits all vertices and
edges of $G$, that is, $f(V(H))=V(G)$ and $f(E(H))=E(G)$ hold. 
Note that this is a stronger requirement than $f$ being surjective on its codomain.
An \emph{isomorphism} from~$H$ to~$G$ is a strong embedding from $H$ to $G$ that 
is also surjective, and it is an \emph{automorphism} if additionally $H=G$ 
holds. We write $H \simeq G$ if $H$ and $G$ are isomorphic.

\paragraph{Colored graphs.}
We also use vertex-colored graphs~$G$, where each vertex has a color
from a finite set~$C$ of colors via a function~$f:V(G)\to
C$.
We note that such a coloring is not necessarily proper, in the sense that any two adjacent vertices need to receive distinct colors.
Each set $V_i(G) := f^{-1}(i)$ for $i\in C$ is a \emph{color class} of $G$.
A subgraph~$H$ of~$G$ is called (vertex-)\emph{colorful} if~$V(H)$ intersects
each color class in exactly one vertex.

\paragraph{Treewidth.}
\label{sub: treewidth}%
A \emph{tree decomposition} of a graph~$G$ is a pair $(T,\bag)$, where $T$ is a tree and $\bag$ is a mapping from~$V(T)$ to $2^{V(G)}$ such that, for all vertices $v\in V(G)$, the
set $\setc{t\in V(T)}{v\in\bag(t)}$ is nonempty and connected in $T$, and for all edges $e\in E(G)$, there is some node $t\in
V(T)$ such that $e\subseteq\bag(t)$.
The set $\bag(t)$ is the \emph{bag at~$t$}.
The \emph{width} of $(T,\bag)$ is the integer $\max\setc{|\bag(t)|-1}{t\in
V(T)}$, and the {\em treewidth} $\tw{G}$ of a graph~$G$ is the minimum possible
width of any tree decomposition of~$G$.

It will be convenient for us to view the tree $T$ as being
directed away from the root, and we define the following mappings
$\sep,\cone,\comp:V(T)\to2^{V(G)}$ for all $t\in V(T)$:
\begin{align}
  \label{eq:bagsep}
\text{(the \emph{separator at $t$}) \qquad }
  \sep(t)&=
  \begin{cases}
    \emptyset&\text{if $t$ is the root of $T$},\\
    \bag(t)\cap\bag(s)&\text{if $s$ is the parent of $t$ in $T$},
  \end{cases}\\
  \text{(the \emph{cone at $t$}) \qquad }
  \label{eq:bagcone}
  \cone(t)&=\bigcup_{\text{$u$ is a descendant of $t$}}\bag(u),\\
  \text{(the \emph{component at $t$}) \qquad }
  \label{eq:bagcomp}
  \comp(t)&=\cone(t)\setminus\sep(t).
\end{align}
The \emph{adhesion} of $(T,\bag)$ is defined as $\max\setc{|\sep(t)|}{t\in V(T)}$.
The following conditions are easily checked:
\begin{nlist}{TD}
  \item\label{li:t1}
    $T$ is a directed tree.
  \item\label{li:t2} For all $t\in V(T)$ we have
    $\comp(t)\cap\sep(t)=\emptyset$ and
    $N^G(\comp(t))\subseteq\sep(t)$.

  \item\label{li:t3} For all $t\in V(T)$ and $u\in N_+^T(t)$ we have
    $\comp(u)\subseteq\comp(t)$ and $\cone(u)\subseteq\cone(t)$.
  \item\label{li:t4} For all $t\in V(T)$ and all distinct $u_1,u_2\in N_+^T(t)$
    we have
  $\cone(u_1)\cap\cone(u_2)=\sep(u_1)\cap\sep(u_2)$.
  \item\label{li:t5} For the root $r$ of $T$ we have
    $\sep(r)=\emptyset$ and $\comp(r)=V(G)$.
\end{nlist}
Conversely, consider any triple $(T,\sep,\comp)$ where $T$ is a directed graph
and $\sep,\comp$ are functions of type ${V(T)\to 2^{V(G)}}$.
We can then define functions $\cone,\bag:V(T)\to2^{V(G)}$ such that, for all~$t\in V(T)$, we have 
$\cone(t)=\sep(t)\cup\comp(t)$ and 
$\bag(t)=\cone(t)\setminus\bigcup_{u\in N_+^T(t)}\comp(u)$.
If \ref{li:t1}--\ref{li:t5} are satisfied, then it can be verified that $(T,\bag)$ is a tree
decomposition (see \cite{gro10+a} for a proof).
Hence \ref{li:t1}--\ref{li:t5} yield an alternative definition of tree
decompositions, which we may use as is convenient.

\section{The space of graph motif parameters}

\label{sec: graph motif parameters}

We develop our interpretation of the general setup of Lovász~\cite{lovaszbook}.
To this end, it will be useful to consider unlabeled graphs:
For concreteness, we say that a graph $H\in\Graphsl$ is \emph{unlabeled}
if it is \emph{canonically labeled}, that is, if it is the lexicographically
first graph that is isomorphic to~$H$.
Then the set~$\Graphsu$ of unlabeled graphs is the subset of~$\Graphsl$
that contains exactly the canonically labeled graphs.
\emph{Graph parameters} are functions $f:\Graphsl\to\mathbb Q$ that are
invariant under isomorphisms, and we view them as
functions~$f:\Graphsu\to\mathbb Q$.

For all $H,G\in\Graphsu$, we define $\indsub(H,G)$ as the number of (labeled) induced subgraphs of~$G$ that are isomorphic to~$H$.
We can view this function as an infinite matrix with indices from $\Graphsu\times\Graphsu$ and 
entries from~$\mathbf N$.
The matrix indices are ordered according to some fixed total order on~$\Graphsu$ that respects the total size~$\abs{V(F)}+\abs{E(F)}$ of the graphs~$F\in\Graphsu$. Among graphs of the same total size, ties may be broken arbitarily.
If~$H$ and~$G$ are graphs such that~$H$ has larger total size than~$G$, then $H$ cannot be an induced subgraph of~$G$, that is,
$\indsub(H,G)=0$.
We conclude that $\indsub$ is an upper triangular matrix.

We define graph motif parameters as graph parameters that can be expressed as finite linear combinations of induced subgraph numbers. To obtain a clean formulation in terms of linear algebra, we represent these linear combinations as infinite vectors $\alpha\in\mathbb
Q^\Graphsu$ of finite support. Here, the \emph{support~$\supp\alpha$} of a vector $\alpha$ is the set of all graphs~$F\in\Graphsu$ with $\alpha_F\ne 0$.

\begin{defn}
  A graph parameter~$f:\Graphsu\to\mathbb Q$ is a
  \emph{graph motif parameter}
  if there is a vector~$\alpha\in\mathbb Q^\Graphsu$ with finite support such
  that $f(G) = \sum_{F\in\Graphsu} \alpha_F\cdot\indsub(F,G)$ holds for
  all~$G\in\Graphsu$.
\end{defn}
If we interpret $f$ and $\alpha$ as row vectors, this definition can also be phrased as requiring $f = \alpha\cdot\indsub$ for $\alpha$ of finite support.
Here, for two vectors $\alpha,\beta\in{\mathbf Q}^{\Graphsu}$, we
define the scalar product $(\alpha,\beta)$ as ${\sum_{F\in\Graphsu}
\alpha_F\cdot \beta_F}$ if this sum is defined.\footnote{In this paper, such scalar products degenerate into finite sums, since the support of at least one of the vectors will be finite.}
The definition of the matrix-vector and matrix-matrix products is then as usual.

The set of all graph motif parameters, endowed with the operations of scalar multiplication and pointwise addition, forms an infinite-dimensional vector space.
More specifically, it is the finitely supported row-span of the matrix~$\indsub$.
We remark that even if we drop the condition of~$\alpha$ being finitely supported,
the scalar product~$\alpha\cdot\indsub$ remains well-defined, since every column
of~$\indsub$ has finite support (as a consequence of every graph $G$ having only finitely many induced subgraphs $H$).
In fact, it can be verified that \emph{every} graph parameter~$f$ can be written as $\alpha\cdot\indsub$ for some~$\alpha$.

\subsection{Relations between graph motif parameters}
\label{sec: graph motif parameter relations}

One may wonder why we chose induced subgraph numbers for our definition of
graph motif parameters and not, say, the numbers of subgraphs or homomorphisms.
It turns out that all of these choices lead to the same vector space:
Subgraph and homomorphism numbers are graph motif parameters themselves, and indeed
they also span the space of graph motif parameters. 

Since some properties of graph motif parameters, such as their computational complexity, turn out to be easier to understand over the homomorphism basis, we show explicitly how to perform basis transformations. To this end, we first present the basis transformation between subgraphs and induced subgraphs, then we proceed with the basis transformation between subgraphs and homomorphisms.

\paragraph{Subgraphs and induced subgraphs.}
For graphs~$H,G\in\Graphsu$, we first show how to express $\sub(H,G)$ as a linear combination of numbers $\indsub(F,G)$.
To this end, note that every subgraph copy of $H$ in $G$ is contained in some induced subgraph $F$ of $G$ on $|V(H)|$ vertices. This induced subgraph $F$ is isomorphic to a supergraph of $H$, and we call these supergraphs $F$ \emph{extensions}. More precisely, an \emph{extension} of~$H$ is a (labeled) supergraph~$X$ of~$H$ with~$V(X)=V(H)$.

Note that~$H$ might have different extensions that are isomorphic.
Thus, given a graph $F$, let $\ext(H,F)$ be the number of extensions~$X$ of~$H$ that are isomorphic
to~$F$; equivalently, we have 
\[
\ext(H,F) = [\abs{V(H)}=\abs{V(F)}]\cdot\sub(H,F)\,,
\]
and we thus obtain
\begin{equation}\label{eq: sub ext indsub}
  \sub(H,G)
  =
  \sum_{F\in\Graphsu}
  \ext(H,F)\cdot\indsub(F,G)\,.
\end{equation}
Every graph~$H$ admits only finitely many extensions, and so the function
$\sub(H,\star)$ is a graph motif parameter for every fixed~$H$.
In matrix notation, the identity \eqref{eq: sub ext indsub} takes on the concise form 
\begin{equation}
\sub = \ext\cdot\indsub.
\end{equation}
Since $\sub$, $\ext$, and $\indsub$ are upper triangular matrices with diagonal
entries equal to~$1$, every finite principal submatrix of any of these
triangular matrices is invertible; indeed the entire matrix~$\ext$ has an
inverse with~$\indsub=\ext^{-1}\cdot\sub$.
This implies that $\sub$ also \emph{spans} the space of graph motif parameters: 
Every function $\indsub(H,\star)$ is a finite linear combination of functions $\sub(F,\star)$ with coefficients $\ext^{-1}(H,F)$.

We remark that the values of the coefficients $\ext^{-1}(H,F)$ are actually well understood:
The identity \eqref{eq: sub ext indsub} can be interpreted as a zeta transform over
the subset lattice \cite[eq.~(13) and~(14)]{borgs2006counting},
so we can perform Möbius inversion to prove that
\begin{equation}
\label{eq: ext inverse}
  \ext^{-1}(H,F) = (-1)^{\abs{E(F)} - \abs{E(H)}} \cdot \ext(H,F)
\end{equation}
holds for all graphs $H$ and $F$. We will use this identity later to check that $\ext^{-1}(H,F) \neq 0$ holds for specific pairs $(H,F)$ of graphs.

\paragraph{Homomorphisms and subgraphs.}
We wish to express $\hom(H,G)$ as a finitely supported linear combination~$\sum_{F}\alpha_F\sub(F,G)$ of subgraph numbers.
For a homomorphism $h$ from~$H$ to~$G$, let~$I$ be the image of $h$, that is, the graph with vertex set~$f(V(H))$ and edge set~$f(E(H))$; 
we observe that $h$ is a surjective homomorphism from~$H$ to~$I$ and~$I$ is a subgraph of~$G$.
That is, every homomorphism from $H$ to $G$ can be written as a surjective homomorphism into a subgraph $F$ of~$G$.
Writing $\surj(H,F)$ for the number of surjective homomorphisms from~$H$ to~$F$,
we have
\begin{equation}\label{eq: hom surj sub}%
  \hom(H,G)
  =
  \sum_{F\in\Graphsu}
  \surj(H,F)
  \cdot
  \sub(F,G)
  \,.
\end{equation}
Note that $\surj(H,F)=0$ holds if $H$ is smaller than~$F$ in total size.
Thus, analogously to the case of subgraphs, for each fixed~$H$, we have $\surj(H,F)\ne 0$ only for finitely
many~$F\in\Graphsu$.
Therefore $\hom(H,\star)$ is indeed a graph motif parameter for every
fixed~$H$.
In matrix notation, we have 
\begin{equation}\label{eq: hom surj sub matrix}%
\hom=\surj\cdot\sub\,,
\end{equation}
where $\surj$ is a lower triangular matrix.
Moreover, the diagonal entries of $\surj$ satisfy $\surj(F,F)=\aut(F)\neq 0$,
and hence each finite principal submatrix is invertible.
In fact the entire matrix has an
inverse $\surj^{-1}$ satisfying $\sub=\surj^{-1}\cdot\hom$, and so $\hom$ spans
the space of graph motif parameters as well.

The inverse of $\surj$ can be understood in terms of a Möbius inversion on a
partition lattice.
To see this, let us first consider the support of the vector~$\surj(H,\star)$,
that is, the set of all unlabeled graphs that are homomorphic images of~$H$.
This set will play an important role throughout this paper, and we call it the \emph{spasm} of~$H$:
\begin{equation}
  \spasm H
  =
  \setc[\big]{ F\in\Graphsu }{ \surj(H,F)>0 }
  \,.
\end{equation}
In a more graph-theoretical interpretation, the elements in the spasm of~$H$ can also be understood as the
unlabeled representatives of all graphs that can be obtained from~$H$ by merging
independent sets.
We make this more formal in the following definition.
For each~$H\in\Graphsl$, let $\partitions{H}$ be the set of all partitions
of~$V(H)$, where a partition is a set of disjoint non-empty subsets~$B\subseteq V(H)$ whose
union equals~$V(H)$.
\begin{defn}\label{def: consolidation}
  For a graph $H\in\Graphsl$ and a partition $\rho\in\partitions{H}$, the
  \emph{quotient $\contract H\rho$} is the graph obtained by
  identifying, for each block $B\in\rho$, the vertices in $B$ to a single
  vertex.
  This process may create loops or parallel edges; 
  we keep loops intact in~$\contract H\rho$, and we turn parallel edges into simple edges.
\end{defn}

For $F\in\Graphsu$, we have $F\in\spasm H$ if and
only if there is a partition~$\rho\in\partitions{H}$ with~$F\simeq\contract H\rho$.
Note that graphs $F\in\Graphsu$ does not have loops, since we explicitly restricted the graphs in $\Graphsu$ to be simple. Consequently, $F\simeq\contract H\rho$ can only hold if all blocks of~$\rho$ are independent
sets of~$H$, that is, if~$\rho$ represents a proper vertex-coloring of~$H$.

Every surjective homomorphism from~$H$ to~$F$ can be interpreted as a pair~$(\rho,\pi)$ where $\contract H\rho\simeq F$ and~${\pi\in\Aut{F}}$.
Hence we have
\begin{align}
  \label{eq: surj}
  \surj(H,F)
  &=
  \#\Aut{F}\cdot
  \sum_{\rho\in\partitions H}
  [ \contract H\rho \simeq F ] 
  \,,
\end{align}

For two partitions~$\rho,\rho'\in\partitions{H}$, we write $\rho\ge\rho'$ if
$\rho$ is coarser than $\rho'$, that is, if every block of $\rho'$ is contained in
a block of~$\rho$.
This partial order gives rise to the \emph{partition lattice
$(\partitions{H},\geq)$} whose minimal element~$\bot$ is the finest partition, i.e., the partition whose blocks all have size one.
Now \eqref{eq: hom surj sub} can be viewed as a zeta-transformation on the
partition lattice:
Let $H$ and $G$ be fixed graphs.
Let $f(\rho)=\#\Emb{\contract H\rho}{G}$.
Then consider its \emph{upwards zeta-transform} on the partition lattice, i.e., the function $\hat f$ defined by
\begin{align*}
  \hat f(\rho)
  =
  \sum_{\rho'\ge\rho} f(\rho')
  \,.
\end{align*}
We observe that $\hom(H,G)=\hat f(\bot)$.
By Möbius inversion, we get (see also~\cite[eq.~(15)]{borgs2006counting}):
\begin{align*}
  f(\rho)
  =
  \sum_{\rho'\ge\rho}
  (-1)^{|\rho|-|\rho'|}
  \cdot
  \paren[\Big]{
  \prod_{B\in\rho'}\paren{\lambda(\rho,\rho',B)-1}!
  }\cdot
  \hat f(\rho')
  \,,
\end{align*}
where $\lambda(\rho,\rho',B)$ is the number of blocks $C\in\rho$ with
$C\subseteq B$.
We set $\rho=\bot$ and collect terms~$\rho'$ that lead to isomorphic graphs
$\contract{H}{\rho'}$. Note that, for a given graph isomorphism type, all terms leading to this type are non-zero and have the same sign. We obtain
\begin{align}
  \label{eq: surj inverse}
  \surj^{-1}(H,F)
  &=
  \frac{
  (-1)^{\abs{V(H)}-\abs{V(F)}}
  }{\#\Aut{H}}
  \cdot
  \sum_{%
  \substack{%
  \rho\in\partitions H\\
  \contract H\rho \simeq F
  }}
  \prod_{B\in\rho}\paren{\abs{B}-1}!
  \,.
\end{align}
In particular, this yields $\surj^{-1}(H,F)\ne 0$ if and only ${\surj(H,F)\ne 0}$;
that is, $F\in \spasm H$ is equivalent to $\surj^{-1}(H,F)\ne 0$.
This observation will be crucial in the proof of our hardness result.

While we established before that~$\hom$ spans the space of graph motif parameters, it is not immediately clear
that the homomorphism numbers form a \emph{basis}, that is, that the rows of
$\hom$ are linearly independent.
The following proposition on the invertibility of certain principal submatrices
will be important for our hardness results, and it implies that the rows of
$\hom$ are linearly independent with respect to finite linear combinations.

\begin{lem}[Proposition~5.43 in \cite{lovaszbook}]
  \label{lem: hom is invertible}
  Let $S\subseteq\Graphsu$ be a finite set of graphs that is closed under
  surjective homomorphisms, that is, we have $\spasm H\subseteq S$ for all $H\in
  S$.
  Then the principal submatrix $\hom_S$ of $\hom$ is invertible and satisfies
  $\hom_S=\surj_S\cdot\sub_S$.
\end{lem}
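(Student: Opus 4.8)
The plan is to show that the global identity $\hom=\surj\cdot\sub$ from~\eqref{eq: hom surj sub matrix} restricts \emph{exactly} to the index set $S$, and that the two factors on the right remain invertible after restriction.

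First I would verify that $\surj_S$ and $\sub_S$ are invertible. Both $\surj$ and $\sub$ are triangular with respect to the fixed total order on $\Graphsu$ that refines the total size $\abs{V(F)}+\abs{E(F)}$: a surjective homomorphism $H\to F$ forces $\abs{V(F)}\le\abs{V(H)}$ and $\abs{E(F)}\le\abs{E(H)}$, and when the total sizes of $H$ and $F$ coincide such a homomorphism is a vertex-bijection that is edge-surjective, hence an isomorphism, so $H=F$ in $\Graphsu$; the analogous statement holds for subgraphs. Thus $\surj(H,F)=0$ unless $F$ precedes or equals $H$ in the order, and $\sub(F,G)=0$ unless $F$ precedes or equals $G$, independently of how ties among equal-size graphs were broken. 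Since the diagonal entries satisfy $\surj(F,F)=\aut(F)\ge 1$ and $\sub(F,F)=1$, every finite principal submatrix of either matrix is triangular with nonzero diagonal, hence invertible; in particular $\surj_S$ and $\sub_S$ are invertible, and so is their product $\surj_S\cdot\sub_S$.

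Next I would check that $\hom_S=\surj_S\cdot\sub_S$, i.e.\ that for all $H,G\in S$ we have $\hom(H,G)=\sum_{F\in S}\surj(H,F)\cdot\sub(F,G)$. Starting from~\eqref{eq: hom surj sub}, the full sum runs over all $F\in\Graphsu$, but $\surj(H,F)\ne 0$ only for $F\in\spasm H$. This is the single point where the hypothesis enters: since $H\in S$ and $S$ is closed under surjective homomorphisms, $\spasm H\subseteq S$, so every $F$ contributing a nonzero term already lies in $S$. Hence the sum over $\Graphsu$ collapses to the sum over $S$, which is precisely the $(H,G)$ entry of $\surj_S\cdot\sub_S$. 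Together with the previous paragraph this shows $\hom_S=\surj_S\cdot\sub_S$ is invertible.

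The argument has essentially no deep obstacle; the only things needing care are (i) ensuring the restricted matrix product introduces no ``leakage'' of terms indexed outside $S$ --- which is exactly what closure of $S$ under surjective homomorphisms guarantees --- and (ii) the pedantic check that the triangularity of $\surj$ and $\sub$ survives the arbitrary tie-breaking among graphs of equal total size, which holds because equal total size together with a surjective homomorphism (respectively a subgraph embedding) already forces an isomorphism.
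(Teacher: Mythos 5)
Your proof is correct and follows essentially the same route as the paper's: restrict the global identity $\hom=\surj\cdot\sub$ to $S$, observe that closure under surjective homomorphisms (i.e.\ $\spasm H\subseteq S$) means no terms indexed outside $S$ are lost, and then note that $\surj_S$ and $\sub_S$ are triangular with nonzero diagonal, hence invertible. The only difference is that you spell out the tie-breaking subtlety in the ordering (that a surjective homomorphism between graphs of equal total size must be an isomorphism), which the paper leaves implicit; this is a welcome but non-essential elaboration of the same argument.
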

\begin{proof}
  Let $F,G\in S$ and consider the expansion of~$\hom(H,G)$ from~\eqref{eq: hom surj sub}.
  Since $S$ is closed under surjective homomorphisms, only terms with $F\in S$
  contribute to the sum.
  Hence we have $\hom_S=\surj_S\cdot\sub_S$.
  Since $\surj_S$ and $\sub_S$ both are triangular matrices with non-zero diagonal entries, they are invertible, and consequently so is their product~$\hom_S$.
\end{proof}

\def\gz{\textcolor{gray}0}
\begin{figure}[tp]
  \centering
    \begin{equation*}
      \begin{tikzpicture}[baseline=-\the\dimexpr\fontdimen22\textfont2\relax]
        \matrix (m) [matrix of math nodes,left delimiter=(,right delimiter=),inner
        sep=3pt]
        {
        2 & 4 & 6 & 6 \\
        2 & 6 &12 &10 \\
        0 & 0 & 6 & 0 \\
        2 & 8 &24 &16 \\
        };
        \node[shift=(m-1-1.north),yshift=0.6cm,rotate=90](0,0) { \drawgraph{A_} };
        \node[shift=(m-1-2.north),yshift=0.6cm,rotate=90](0,0) { \drawgraph{BW} };
        \node[shift=(m-1-3.north),yshift=0.6cm,rotate=90](0,0) { \drawgraph{Bw} };
        \node[shift=(m-1-4.north),yshift=0.6cm,rotate=90](0,0) { \drawgraph{CR} };
        \node[shift=(m-1-1.west),xshift=-1.0cm](0,0) { \drawgraph{A_} };
        \node[shift=(m-2-1.west),xshift=-1.0cm](0,0) { \drawgraph{BW} };
        \node[shift=(m-3-1.west),xshift=-1.0cm](0,0) { \drawgraph{Bw} };
        \node[shift=(m-4-1.west),xshift=-1.0cm](0,0) { \drawgraph{CR} };
      \end{tikzpicture}
      =
      \begin{tikzpicture}[baseline=-\the\dimexpr\fontdimen22\textfont2\relax ]
        \matrix (m) [matrix of math nodes,left delimiter=(,right delimiter=),inner
        sep=3.5pt]
        {
        2 &\gz&\gz&\gz\\
        2 & 2 &\gz&\gz\\
        0 & 0 & 6 &\gz\\
        2 & 4 & 6 & 2 \\
        };
        \node[shift=(m-1-1.north),yshift=0.6cm,rotate=90](0,0) { \drawgraph{A_} };
        \node[shift=(m-1-2.north),yshift=0.6cm,rotate=90](0,0) { \drawgraph{BW} };
        \node[shift=(m-1-3.north),yshift=0.6cm,rotate=90](0,0) { \drawgraph{Bw} };
        \node[shift=(m-1-4.north),yshift=0.6cm,rotate=90](0,0) { \drawgraph{CR} };
      \end{tikzpicture}
      \cdot
      \begin{tikzpicture}[baseline=-\the\dimexpr\fontdimen22\textfont2\relax ]
        \matrix (m) [matrix of math nodes,left delimiter=(,right delimiter=),inner
        sep=3.5pt]
        {
        1  & 2 & 3 & 3 \\
        \gz& 1 & 3 & 2 \\
        \gz&\gz& 1 & 0 \\
        \gz&\gz&\gz& 1 \\
        };
        \node[shift=(m-1-1.north),yshift=0.6cm,rotate=90](0,0) { \drawgraph{A_} };
        \node[shift=(m-1-2.north),yshift=0.6cm,rotate=90](0,0) { \drawgraph{BW} };
        \node[shift=(m-1-3.north),yshift=0.6cm,rotate=90](0,0) { \drawgraph{Bw} };
        \node[shift=(m-1-4.north),yshift=0.6cm,rotate=90](0,0) { \drawgraph{CR} };
      \end{tikzpicture}
    \end{equation*}
  \caption[Example for the matrix identity $\hom_S=\surj_S\cdot\sub_S$]{%
    \label{fig: hom surj sub example}%
    The matrix identity $\hom_S=\surj_S\cdot\sub_S$,
  where $S$ is the spasm of the path $\drawgraph{CR}$, consisting of the four graphs $\drawgraph{A_}$, $\drawgraph{BW}$, $\drawgraph{Bw}$, $\drawgraph{CR}$.}
\end{figure}

See Figure~\ref{fig: hom surj sub example} for an example of Lemma~\ref{lem: hom
is invertible}.
Let us also note three simple and useful properties that $\spasm
H$ inherits from $H$.
\begin{fact}
\label{fact: spasm}
For all graphs $H$, the following properties hold:
\begin{enumerate}
  \item Every graph $F\in\spasm H$ has at most $\abs{V(H)}$ vertices and
    at most $\abs{E(H)}$ edges.
  \item If $H$ has a vertex-cover of size $b\in\mathbb{N}$, then every graph $F\in\spasm 
    H$ has a vertex-cover of size $b$.
  \item
    If $H$ contains a matching with $k\in\mathbb{N}$ edges, then
    every graph with~$k$ edges (and no isolated vertices) can be found as a
    minor of some graph in~$\spasm{H}$.
\end{enumerate}
\end{fact}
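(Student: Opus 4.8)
The plan is to exploit throughout that every $F\in\spasm H$ has the form $\contract H\rho$ for a partition $\rho\in\partitions H$ all of whose blocks are independent sets of~$H$; write $q\colon V(H)\to V(\contract H\rho)$ for the quotient map. Part~1 is then immediate, since $\abs{V(\contract H\rho)}=\abs{\rho}\le\abs{V(H)}$ and $E(\contract H\rho)$ is obtained from the multiset $\setc{q(e)}{e\in E(H)}$ by deleting loops and identifying parallel edges, so it has at most $\abs{E(H)}$ edges. For part~2 I would take a vertex cover $S\subseteq V(H)$ of~$H$ with $\abs S=b$ and argue that $q(S)$ is a vertex cover of $\contract H\rho$ of size at most~$b$: every edge of $\contract H\rho$ is $q(e)$ for a non-loop edge $e=\set{u,w}$ of~$H$, and since $S$ covers~$e$ we may assume $u\in S$, whence $q(u)\in q(S)$ is an endpoint of $q(e)$; one then pads $q(S)$ with arbitrary vertices if needed.

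The real work is part~3, and here the plan is constructive. Fix a matching $M=\set{e_1,\dots,e_k}\subseteq E(H)$ with $e_i=\set{a_i,b_i}$, set $W=V(H)\setminus(e_1\cup\dots\cup e_k)$, and let $J$ have edges $f_1,\dots,f_k$ with $f_i=\set{x_i,y_i}$; since $J$ has no isolated vertices, $x_i\ne y_i$ and $V(J)=\bigcup_i\set{x_i,y_i}$. First I would encode~$J$ onto the matching by the surjection $\psi$ from $e_1\cup\dots\cup e_k$ onto $V(J)$ with $\psi(a_i)=x_i$ and $\psi(b_i)=y_i$, and put $P_v=\psi^{-1}(v)$ for $v\in V(J)$. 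Each $P_v$ meets each $e_i$ in at most one vertex, so $P_v$ is an independent set of~$M$ --- but not necessarily of~$H$, which is exactly why one cannot simply take $\rho=\set{P_v:v\in V(J)}\cup\set{\set{w}:w\in W}$.

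To get around this I would, for each $v\in V(J)$, look at the connected components $Q^v_1,\dots,Q^v_{m_v}$ of the induced subgraph $H[P_v]$, choose a representative $r^v_j\in Q^v_j$ from each, and let $\rho$ be the partition whose non-singleton blocks are the sets $B_v:=\set{r^v_1,\dots,r^v_{m_v}}$, with every other vertex of~$V(H)$ forming its own block. Distinct representatives $r^v_j,r^v_{j'}$ lie in distinct components of $H[P_v]$ and are therefore non-adjacent in~$H$, so each $B_v$ is independent in~$H$; hence $F:=\contract H\rho$ is loop-free and lies in $\spasm H$. It remains to find~$J$ as a minor of~$F$. Writing $\bar v:=q(B_v)$, note that $q$ is injective on each $Q^v_j$ and maps its edges to edges, so $q(Q^v_j)$ is a connected copy of $Q^v_j$ containing~$\bar v$; taking the union over~$j$, the set $q(P_v)=\set{\bar v}\cup q(P_v\setminus B_v)$ is connected in~$F$, and these sets are vertex-disjoint over~$v$. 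Contracting each $q(P_v)$ to the single vertex~$\bar v$ is thus a legal minor operation, and under it every $e_i$ survives as an edge joining $\bar{x_i}$ and $\bar{y_i}$ (not a loop, as $x_i\ne y_i$), with distinct $e_i$ yielding distinct edges because~$J$ is simple. Finally deleting the images of~$W$ and all edges other than the $k$ images of the~$e_i$ leaves precisely a copy of~$J$, which completes the construction.

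The step I expect to be the main obstacle is exactly the one where the obvious partition $\set{P_v:v}$ fails to be independent in~$H$: the fix is to move the ``collapsing'' of each~$P_v$ out of the quotient and into the minor (contraction) step, and the care then needed is to check that the chosen representatives across all the~$P_v$ still form a genuine, loop-free partition of~$V(H)$ and that the ensuing contractions really do reconstruct~$J$ rather than, say, merging two endpoints of some~$e_i$. Everything else is routine bookkeeping.
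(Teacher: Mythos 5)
Your proof is correct and takes essentially the same route as the paper: choose a surjection from the matching vertices onto $V(J)$ and then merge each preimage set, using the quotient for one representative per connected component of the preimage and edge contraction (a minor operation) to finish the merge. You have simply made explicit the point the paper glosses over in the phrase ``identify all vertices of $g^{-1}(i)$,'' namely that the quotient step must precede the contraction step and that preimages need not be independent in~$H$.
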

\begin{proof}
  Only the third claim merits some explanation.
  If $M_k$ is the (not necessarily induced) $k$-matching in~$H$ and~$F$ is the
  $k$-edge graph we want to find, then we determine an arbitrary surjective
  homomorphism $g:V(M_k)\to V(F)$, which hits every edge of $F$.
  We are allowed to contract edges (due to the minor operation) or consolidate
  non-edges of~$H$ (due to the quotient operation) to build~$F$.
  To this end, we simply identify all vertices of~$g^{-1}(i)$, for each~$i\in
  V(F)$, and delete all vertices in~$V(H)\setminus V(M_k)$.
\end{proof}

\paragraph{Embeddings and strong embeddings.}
For completeness, we define matrices for embeddings and strong embeddings.
For ${H,G\in\Graphsu}$, let $\emb(H,G)$, $\stremb(H,G)$, and $\iso(H,G)$ be the
number of embeddings, strong embeddings, and isomorphisms from~$H$ to~$G$,
respectively.
Clearly $\iso$ is a diagonal matrix with $\iso(F,F)=\aut(F)$.
We have $\emb=\iso\cdot\sub$ and
$\stremb=\iso\cdot\indsub$.

\subsection{The complexity of graph motif parameters}
\label{sec: graph motif parameter complexity}

Several computational problems can be associated with graph motif parameters,
but perhaps the most natural one is the \emph{evaluation problem}: 
Given as input a graph motif parameter~$f:\Graphsu\to\mathbb Q$ and a graph~$G\in\Graphsu$, 
compute the value~$f(G)$.

This problem requires a suitable representation of the input~$f$,
and while we could choose any basis to represent $f$, the homomorphism basis turns out to be particularly useful for algorithmic purposes.
That is, in this subsection, we represent graph motif parameters $f$ as vector-matrix products $f=\alpha\cdot\hom$ for finitely supported row vectors $\alpha\in\mathbb{Q}^\Graphsu$.
The input is then the coefficient vector $\alpha$, encoded as a list of pairs~$(F,\alpha_F)$ for~$F\in\supp\alpha$.
Let~$\abs{\alpha}$ be the description length of $\alpha$, and let $\tw\alpha$ be
the maximum treewidth~$\tw F$ among all graphs~$F\in\supp\alpha$.
The following lemma is immediate:

\begin{lem}[Algorithm]\label{lem: graph motif algorithm}
  There is a deterministic algorithm that is given $\alpha$ and $G$ to compute
  $(\alpha\cdot\hom)(G)$
  in time
  $g(\alpha) + \poly{\abs{\alpha}}\cdot \abs{V(G)}^{\tw\alpha+1}$
  for some computable function $g$ depending only on $\alpha$.
\end{lem}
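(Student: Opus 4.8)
The plan is to observe that $(\alpha\cdot\hom)(G) = \sum_{F\in\supp\alpha}\alpha_F\cdot\hom(F,G)$ is a finite sum — the support of $\alpha$ is finite by hypothesis — so it suffices to compute each term $\hom(F,G)$ and combine them. For each individual $F\in\supp\alpha$, I would invoke Proposition~\ref{prop: hom-treewidth-algo}, which gives a deterministic $\exp(O(k_F)) + \poly{k_F}\cdot\abs{V(G)}^{\tw F+1}$ time algorithm for $\#\Hom FG$, where $k_F = \abs{V(F)}$. Summing over all $F\in\supp\alpha$ and crudely bounding $\tw F\le\tw\alpha$ for every such $F$ yields a total running time of the form $\sum_{F\in\supp\alpha}\paren*{\exp(O(k_F)) + \poly{k_F}\cdot\abs{V(G)}^{\tw\alpha+1}}$. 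The first summands depend only on $\alpha$ (the graphs in the support and their sizes are all encoded in $\alpha$), so they can be absorbed into a computable function $g(\alpha)$; the number of terms is at most $\abs{\supp\alpha}\le\abs{\alpha}$ and each $k_F\le\abs{\alpha}$, so the second contribution is bounded by $\poly{\abs{\alpha}}\cdot\abs{V(G)}^{\tw\alpha+1}$, as claimed.

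A couple of routine points need to be handled to make this clean. First, reading the input: we are given $\alpha$ as a list of pairs $(F,\alpha_F)$, from which we extract each $F\in\supp\alpha$ and its treewidth; this parsing, and in particular computing $\tw F$ for each $F$ in the support (by brute force over all tree decompositions of $F$, since $F$ has size bounded in terms of $\abs{\alpha}$), costs only time bounded by some computable function of $\alpha$, hence again can be folded into $g(\alpha)$. Second, forming the final linear combination $\sum_F\alpha_F\cdot\hom(F,G)$ requires $\abs{\supp\alpha}$ multiplications and additions of rationals whose bit-length is polynomial in $\abs{\alpha}$ and in $\log\abs{V(G)}$ (the homomorphism counts are at most $\abs{V(G)}^{\abs{\alpha}}$), which is absorbed into the $\poly{\abs{\alpha}}\cdot\abs{V(G)}^{\tw\alpha+1}$ term.

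Since the result is called ``immediate'' in the excerpt and is essentially just ``apply Proposition~\ref{prop: hom-treewidth-algo} to each summand and add,'' there is no real obstacle; the only thing to be slightly careful about is the bookkeeping that separates the $\alpha$-dependent overhead (support sizes, individual graph sizes, treewidth computations, the $\exp(O(k_F))$ terms from Proposition~\ref{prop: hom-treewidth-algo}) from the genuinely $\abs{V(G)}$-dependent part, and to make sure the former is genuinely a function of $\alpha$ alone — which it is, since everything except the host graph $G$ is determined by $\alpha$.
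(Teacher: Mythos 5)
Your proposal is correct and takes essentially the same approach as the paper: for each $F\in\supp\alpha$, invoke Proposition~\ref{prop: hom-treewidth-algo} to compute $\hom(F,G)$, then output the linear combination. The paper's proof is just as terse as the statement suggests, and your extra bookkeeping about parsing $\alpha$, computing treewidths, and bounding the arithmetic is a reasonable (if slightly more verbose) elaboration of the same argument.
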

\begin{proof}
  For each $F\in\supp\alpha$, run the algorithm from
  Proposition~\ref{prop: hom-treewidth-algo} to compute $\hom(F,G)$
  in time $\exp(O(k))+\poly{k} \cdot n^{\tw{F}+1}$, where $k=\abs{V(F)}$ and $n=\abs{V(G)}$.
  Then output $\sum_F \alpha_F \cdot \hom(F,G)$.
\end{proof}

We could choose other representations for~$f$,
such as coefficient vectors $\alpha$ with $f=\alpha\cdot\sub$ or $f=\alpha\cdot\indsub$.
Switching between these representations only adds an overhead of
$g(\alpha)$ in the running time, for some function~$g$, as can be seen from~\S\ref{sec: graph motif
parameter relations}.

It is clear that the generic evaluation problem for graph motif parameters is $\sharpWone$-hard,
since it subsumes counting $k$-cliques as a special case.
The following reduction shows that evaluating~$f$ with $f=\alpha\cdot\hom$ is at least as hard as every individual homomorphism problem~$\hom(F,\star)$ for ~$F\in\supp\alpha$.
That is, if a linear combination of homomorphisms contains a ``hard'' pattern graph, then the entire linear combination is ``hard''.

\begin{lem}[Extracting summands]\label{lem: extraction reduction}
  There is a deterministic Turing reduction that is given a finitely supported vector~$\alpha\in\mathbb Q^{\Graphsu}$, a graph~$F\in\supp{\alpha}$, and a graph~$G\in\Graphsu$ to compute the number~$\hom(F,G)$ with an oracle for the function $(\alpha\cdot\hom)(\star)$.
  The reduction runs in time $g(\alpha)\cdot\poly{\abs{V(G)}}$ for some
  computable function~$g$, makes at most $g(\alpha)$ queries to $(\alpha\cdot\hom)(\star)$, and each queried graph has at most $\max_{H\in\supp\alpha}\abs{V(H)}\cdot\abs{V(G)}$ vertices.
\end{lem}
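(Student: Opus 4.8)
\emph{Proof idea.} The plan is to write $f=\alpha\cdot\hom$ explicitly as a linear function of the unknowns $\bigl(\hom(H,G)\bigr)_{H\in\supp\alpha}$ and then to recover those unknowns by querying the oracle on graphs obtained from~$G$ via a tensor product with a fixed finite family of small graphs, exploiting multiplicativity of homomorphism counts under tensor products together with the invertibility statement of Lemma~\ref{lem: hom is invertible}.

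First I would use the classical identity $\hom(H,G_1\times G_2)=\hom(H,G_1)\cdot\hom(H,G_2)$, valid for the tensor (categorical) product of graphs (which is again simple and loopless when $G_1,G_2$ are), since a homomorphism into a product is precisely a pair of homomorphisms into the factors. Applying $f=\alpha\cdot\hom$ to $G\times H'$ for an arbitrary graph~$H'$ yields
\[
  (\alpha\cdot\hom)(G\times H')=\sum_{H\in\supp\alpha}\alpha_H\cdot\hom(H,H')\cdot\hom(H,G)\,.
\]
Reading the right-hand side as a matrix--vector product, for any finite ``test set'' $S\subseteq\Graphsu$ the vector $b=\bigl((\alpha\cdot\hom)(G\times H')\bigr)_{H'\in S}$ equals $M\cdot v$, where $M=\bigl(\hom(H,H')\bigr)_{H'\in S,\,H\in\supp\alpha}$ depends only on~$\alpha$ and $v=\bigl(\alpha_H\cdot\hom(H,G)\bigr)_{H\in\supp\alpha}$ is the vector we want.

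Next I would take $S=\bigcup_{H\in\supp\alpha}\spasm{H}$. This set is finite, contains $\supp\alpha$, and is closed under surjective homomorphisms: if $H'\in\spasm{H}$, then composing any surjective homomorphism out of $H'$ with one from $H$ onto $H'$ shows $\spasm{H'}\subseteq\spasm{H}$. Hence Lemma~\ref{lem: hom is invertible} applies and $\hom_S$ is invertible; in particular its columns are linearly independent, so $M$ --- which is exactly the block of columns of $\hom_S$ indexed by $\supp\alpha\subseteq S$ --- has full column rank. Therefore the true $v$ is the unique solution of $Mv=b$, and Gaussian elimination over~$\mathbb Q$ recovers it, in particular the entry $\alpha_F\cdot\hom(F,G)$; dividing by $\alpha_F\neq0$ gives $\hom(F,G)$.

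It remains to check the resource bounds. The set $S$, the matrix $M$, and a choice of $\abs{\supp\alpha}$ linearly independent rows of $M$ depend only on~$\alpha$ and are computed in time $g(\alpha)$. The reduction then issues one oracle query $(\alpha\cdot\hom)(G\times H')$ for each $H'\in S$, i.e.\ at most $\abs{S}\le g(\alpha)$ queries, and each queried graph has $\abs{V(G)}\cdot\abs{V(H')}$ vertices, which by Fact~\ref{fact: spasm} is at most $\bigl(\max_{H\in\supp\alpha}\abs{V(H)}\bigr)\cdot\abs{V(G)}$. Forming the products, solving the rational linear system $Mv=b$ (whose dimension depends only on $\alpha$ and whose right-hand side entries have bit-length polynomial in $\abs{\alpha}$ and $\abs{V(G)}$), and the final division all run in $g(\alpha)\cdot\poly{\abs{V(G)}}$ time. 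The step I expect to require the most care is confirming that $S$ is closed under surjective homomorphisms, since this is what licenses Lemma~\ref{lem: hom is invertible}; the one conceptual point is that the full column rank of $M$, inherited from the invertibility of $\hom_S$, is precisely what lets us extract the single value $\hom(F,G)$ rather than merely some linear combination of $\{\hom(H,G):H\in\supp\alpha\}$.
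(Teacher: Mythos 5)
Your proof is correct and follows essentially the same route as the paper's: you apply the tensor-product multiplicativity of $\hom$, query on $G\times H'$ for $H'$ ranging over the spasm-closure $S$ of $\supp\alpha$, and use the invertibility of $\hom_S$ (Lemma~\ref{lem: hom is invertible}) to solve the resulting linear system for $\alpha_F\cdot\hom(F,G)$. The only cosmetic difference is that you restrict to the rectangular full-column-rank block $M$ and solve an overdetermined system, whereas the paper keeps the square system over all of $S$ (with the zero entries $\alpha_H\hom(H,G)$ for $H\notin\supp\alpha$ included as indeterminates); both are equivalent.
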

\begin{proof}
  On input $(\alpha,F,G)$, the reduction only makes queries of the form $(\alpha\cdot\hom)(G \times X)$ for graphs $X$, where $G\times X$ is the categorical product, that is, the graph with vertex set $V(G)\times V(X)$ such that $(v,x)$ and $(v',x')$ are adjacent in $G\times X$ if and only if $vv'\in E(G)$ and $xx'\in E(X)$.
  The following holds~\cite[(5.30)]{lovaszbook}:
  \begin{equation}\label{eq: hom tensor product}
    \hom(F,G\times X)
    =
    \hom(F,G)
    \cdot
    \hom(F,X)
    \,.
  \end{equation}
  Using this identity for various $X$, we aim at setting up a linear equation system that can be solved uniquely for~$\hom(F,G)$.
  We expand the sum~$(\alpha\cdot\hom)(G\times X)$ and
  apply~\eqref{eq: hom tensor product} to obtain
  \begin{equation}\label{eq: hardness full equation}
    \sum_{H} \alpha_H \cdot \hom(H,G)\cdot\hom(H,X)
    =
    (\alpha\cdot\hom)(G\times X)
    \,.
  \end{equation}
  For each graph $X$, the reduction can compute the right side of this linear equation using the oracle, and it can determine the numbers~$\alpha_H$ and $\hom(H,X)$ in some time $f(\alpha)$.
  It remains to choose a suitable set~$S$ of graphs~$X$ so that the resulting system of linear equations can be uniquely solved for~$\hom(F,G)$.

  Let $S=\bigcup_{H\in\supp\alpha}\spasm H$ be the closure of $\supp\alpha$ under spasms.
  By Lemma~\ref{lem: hom is invertible}, the matrix~$\hom_S$ is invertible.
  Moreover, we have $\alpha\cdot\hom=\alpha_S\cdot\hom_S$.
  We rewrite~\eqref{eq: hardness full equation} as $\hom_S\cdot x=b$ where
  $b\in\mathbb{Q}^S$ and $\hom_S\in\mathbb{Q}^{S\times S}$ represent the known
  quantities with $b_X=(\alpha\cdot\hom)(G\times X)$ and
  $\hom_S(H,X)=\hom(H,X)$, and the vector~$x\in\mathbb{Q}^S$ represents the
  indeterminates with $x_H=\alpha_H\cdot\hom(H,G)$.
  We have $x=(\hom_S)^{-1}\cdot b$, so we can solve uniquely for the indeterminates.
  In particular, we can compute $\hom(F,G)=x_F/\alpha_F$, since~$\alpha_F\ne 0$ holds by assumption.

  The set~$S$ and the matrices $\hom_S$ and $(\hom_S)^{-1}$ can be computed in time $g(\alpha)$ for some computable function~$g$.
  For the vector~$b$, we need to compute the product graphs and query the
  oracle.
  The number of queries is~$\abs{S}$ and thus bounded by~$g(\alpha)$.
  Overall, the reduction takes time $g(\alpha)\cdot\poly{\abs{V(G)}}$.
\end{proof}
Analogously to the problems~$\SubProb{\mathcal H}$ in \S\ref{sec: intro
subgraphs}, we consider restricted classes of graph motif parameters (represented as linear combinations of homomorphism numbers) to obtain more
expressive hardness results.
\begin{defn}\label{def:problems}
  Let $\mathcal A\subseteq\mathbb{Q}^\Graphsu$ be a set of finitely supported vectors.
  We let $\HomProb{\mathcal{A}}$ be the computational problem whose task is to
  compute $(\alpha\cdot\hom)(G)$ on input $\alpha\in\mathcal{A}$ and
  $G\in\Graphsu$.
\end{defn}
We apply Lemma~\ref{lem: extraction reduction} to establish the hard cases
of~$\HomProb{\mathcal{A}}$.
\begin{lem}[Hardness]\label{lem: graph motif hardness}
  Let $\mathcal A\subseteq\mathbb Q^\Graphsu$ be a recursively enumerable class
  of finitely supported vectors.
  If~$\mathcal A$ contains vectors of arbitrarily large treewidth $\tw{\alpha}$,
  then $\HomProb{\mathcal A}$ is $\sharpWone$-hard when parameterized by~$\abs{\alpha}$.
  Moreover, the problem does not have
  $g(\alpha)\cdot\abs{V(G)}^{o(\tw{\alpha}/\log\tw{\alpha})}$ time algorithms if
  $\sharpETH$ holds.
\end{lem}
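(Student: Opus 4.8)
The plan is to reduce homomorphism counting $\HomProb{\mathcal H}$, for a carefully chosen recursively enumerable class $\mathcal H$ of graphs of unbounded treewidth, to $\HomProb{\mathcal A}$, and then invoke the known hardness of $\HomProb{\mathcal H}$: Theorem~\ref{thm: dalmau jonsson} for the $\sharpWone$-hardness and Proposition~\ref{prop: hom ETH hard} for the $\sharpETH$ bound. The reduction itself is nothing but Lemma~\ref{lem: extraction reduction}: whenever $H\in\supp{\alpha}$ for some $\alpha\in\mathcal A$, it computes $\hom(H,G)$ from an oracle for $(\alpha\cdot\hom)(\star)$ using at most $g(\alpha)$ queries of the form $(\alpha,G')$ with $\abs{V(G')}\le c(\alpha)\cdot\abs{V(G)}$, where $c(\alpha):=\max_{F\in\supp\alpha}\abs{V(F)}$, in time $g(\alpha)\cdot\poly{\abs{V(G)}}$. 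The whole task is to pick $\mathcal H$ so that this glueing is a legitimate parameterized reduction.

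\emph{Choosing $\mathcal H$.} Fix a recursive enumeration $\alpha^{(1)},\alpha^{(2)},\dots$ of $\mathcal A$ and process its members in order, keeping a counter $M$ (initially $0$) that records the largest vertex count among graphs selected so far. At step $j$, compute $t_j:=\tw{\alpha^{(j)}}$ (treewidth is computable); if $t_j>M$, select a graph $F^{(j)}\in\supp{\alpha^{(j)}}$ with $\tw{F^{(j)}}=t_j$ (say the canonically first such), add $F^{(j)}$ to $\mathcal H$ with \emph{witness} $\alpha_{F^{(j)}}:=\alpha^{(j)}$, and update $M:=\abs{V(F^{(j)})}$; note $\abs{V(F^{(j)})}\ge t_j+1$, so $M$ strictly increases. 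Since $\mathcal A$ contains vectors of arbitrarily large treewidth, for every value of $M$ some later $\alpha^{(j)}$ has $t_j>M$; hence infinitely many graphs are selected and their treewidths $t_j$ strictly increase. Thus $\mathcal H$ is recursively enumerable and of unbounded treewidth, and by construction: (i)~each $H\in\mathcal H$ has a witness $\alpha_H\in\mathcal A$ with $H\in\supp{\alpha_H}$ and $\tw{\alpha_H}=\tw H$; (ii)~distinct members of $\mathcal H$ have distinct vertex counts, so on $\mathcal H$ the map $H\mapsto\alpha_H$ is computable from $\abs{V(H)}$ alone, and in particular $\abs{\alpha_H}\le h(\abs{V(H)})$ and $g(\alpha_H)\le h(\abs{V(H)})$ for a suitable computable $h$.

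\emph{$\sharpWone$-hardness.} Given $(H,G)$ with $H\in\mathcal H$, compute $\alpha_H$ and apply Lemma~\ref{lem: extraction reduction} to $(\alpha_H,H,G)$. Every oracle query is an instance $(\alpha_H,G')$ of $\HomProb{\mathcal A}$ whose parameter is $\abs{\alpha_H}\le h(\abs{V(H)})$, there are at most $g(\alpha_H)\le h(\abs{V(H)})$ such queries, and the work outside the oracle is $g(\alpha_H)\cdot\poly{\abs{V(G)}}$. So this is a parameterized Turing reduction from $(\HomProb{\mathcal H},\abs{V(H)})$ to $(\HomProb{\mathcal A},\abs{\alpha})$. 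Composing it with the parameterized Turing reduction from $k$-clique counting to $\HomProb{\mathcal H}$ guaranteed by Theorem~\ref{thm: dalmau jonsson}, we conclude that $\HomProb{\mathcal A}$ is $\sharpWone$-hard parameterized by $\abs{\alpha}$.

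\emph{The $\sharpETH$ bound, and the main difficulty.} Suppose toward a contradiction that $\HomProb{\mathcal A}$ has an algorithm running in time $g(\alpha)\cdot\abs{V(G)}^{o(\tw\alpha/\log\tw\alpha)}$. On an instance $(H,G)$ of $\HomProb{\mathcal H}$, run the reduction of the previous paragraph and answer each of its $\le g(\alpha_H)$ queries $(\alpha_H,G')$, where $\abs{V(G')}\le c(\alpha_H)\cdot\abs{V(G)}$, with this algorithm, at a cost of at most $g(\alpha_H)\cdot\bigl(c(\alpha_H)\cdot\abs{V(G)}\bigr)^{o(\tw{\alpha_H}/\log\tw{\alpha_H})}$ per call. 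By property~(i), $\tw{\alpha_H}=\tw H$, and $g(\alpha_H)$, $c(\alpha_H)$ are computable functions of $H$, so the whole computation runs in time $f(H)\cdot\abs{V(G)}^{o(\tw H/\log\tw H)}$ for a suitable computable $f$ — which, since $\mathcal H$ is recursively enumerable of unbounded treewidth, contradicts Proposition~\ref{prop: hom ETH hard}; hence no such algorithm for $\HomProb{\mathcal A}$ exists under $\sharpETH$. This is exactly where the care in choosing $\mathcal H$ pays off: property~(ii) keeps the parameter $\abs{\alpha_H}$ (and the query count $g(\alpha_H)$) a function of $\abs{V(H)}$ alone, which is what makes the $\sharpWone$-reduction a legitimate parameterized reduction, while property~(i) transports the exponent $o(\tw\alpha/\log\tw\alpha)$ to the \emph{same} exponent $o(\tw H/\log\tw H)$ in the pattern we reduce from, so that Proposition~\ref{prop: hom ETH hard} bites; the residual bookkeeping (effectiveness of the construction; the finitely many $H\in\mathcal H$ of small treewidth; absorbing $c(\alpha_H)$ and the polynomial overhead into $f$) is routine.
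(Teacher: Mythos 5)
Your proof is correct and follows the same route as the paper: select from each $\alpha\in\mathcal A$ a maximum-treewidth graph $F\in\supp\alpha$ to form an r.e.\ class of unbounded treewidth, reduce from $\HomProb{\mathcal H}$ to $\HomProb{\mathcal A}$ via Lemma~\ref{lem: extraction reduction}, and invoke Theorem~\ref{thm: dalmau jonsson} and Proposition~\ref{prop: hom ETH hard}. The one place you are more careful than the paper is the construction of the intermediate class: by thinning to a subsequence with strictly increasing vertex counts, you make it transparent that the time to recover the witness $\alpha_H$ (and hence the parameter $\abs{\alpha_H}$ of the oracle queries) is bounded by a total computable function of $\abs{V(H)}$ — the search terminates because a selected graph of size $\ge k$ is guaranteed to appear — whereas the paper simply uses $\mathcal F=\setc{F_\alpha}{\alpha\in\mathcal A}$ and asserts the analogous bound without elaboration. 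This refinement is a nice piece of hygiene but does not change the argument.
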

\begin{proof}
  For each $\alpha\in\mathcal A$, we select a graph~$F_\alpha\in\supp{\alpha}$
  of maximum treewidth.
  Then the set $\mathcal F$ with ${\mathcal F=\setc{F_\alpha}{\alpha\in \mathcal A}}$ has unbounded treewidth.
  To prove the hardness, we provide a parameterized Turing reduction from $\HomProb{\mathcal F}$ to $\HomProb{\mathcal A}$.
  Since~$\mathcal F$ has unbounded treewidth, Theorem~\ref{thm: dalmau jonsson} implies that $\HomProb{\mathcal F}$ is $\sharpWone$-hard when parameterized by $\abs{V(F)}$ and Proposition~\ref{prop: hom ETH hard} implies it cannot be computed in time $g(F)\cdot\abs{V(G)}^{o(\tw F/\log\tw F)}$ for any computable~$g$ if $\sharpETH$ holds.

  Let $(F,G)$ with $F\in\mathcal F$ be an input for the reduction, whose goal is
  to compute $\hom(F,G)$ with oracle access to $\HomProb{\mathcal A}$.
  First the reduction computes some $\alpha\in\mathcal A$ with $F_\alpha=F$.
  This is possible, since $\mathcal A$ is recursively enumerable.
  The reduction then applies the algorithm from Lemma~\ref{lem: extraction
  reduction} on input $(\alpha,F,G)$.
  The algorithm runs in time $g(\alpha)\cdot\poly{\abs{V(G)}}$ for a computable function~$g$ and makes queries to the function~$(\alpha\cdot\hom)(\star)$; its output is the desired number $\hom(F,G)$.

  For the running time of the reduction, note that finding~$\alpha$ takes time $h(k)$ for some computable function~$h$, where $k=\abs{V(F)}$.
  The algorithm from Lemma~\ref{lem: extraction reduction} runs in some time
  $g(\alpha)\cdot\poly{n}\le h(k)\cdot\poly{n}$.
  Hence we indeed obtain a parameterized reduction from
  $\HomProb{\mathcal F}$ to $\HomProb{\mathcal A}$, which proves that
  $\HomProb{\mathcal A}$ is $\sharpWone$-hard when parameterized
  by~$\abs{\alpha}$.
  Finally, we have $\tw{\alpha}=\tw F$, so if $\HomProb{\mathcal A}$ can be computed in time $f(\alpha)\cdot
  n^{o(\tw{\alpha}/\log\tw\alpha)}$, then $\HomProb{\mathcal F}$ can be solved
  in time $f(F)\cdot n^{o(\tw F / \log\tw F)}$, which by Proposition~\ref{prop:
  hom ETH hard} is impossible if $\sharpETH$ holds.
\end{proof}

From the perspective of fine-grained complexity, 
for every \emph{fixed} graph motif parameter~$f$, 
Lemma~\ref{lem: graph motif algorithm} yields an algorithm for evaluating $f$ on $n$-vertex graphs in some time~$O(n^{c})$.
Here, $c$ is a constant that depends on the largest treewidth in the homomorphism representation of $f$.
To express this connection more precisely, 
given a graph motif parameter $f$,
we define the constant
\begin{align}
  C(f)
&=\inf\setc{c\in\mathbb R}{\text{$f$ can be computed in time $O(n^c)$ }}
\,.
\end{align}
In particular, we can consider this constant for the graph parameters $\hom(F,\star)$ for fixed graphs $F$.
Then the constant $C(\hom(F,\star))$ is the smallest possible exponent required for computing~$\hom(F,G)$ on $n$-vertex graphs~$G$.
The proof of Lemma~\ref{lem: graph motif algorithm} implies
\[
C(f)\le\max_{F\in\supp{\alpha}} C({\hom(F,\star)})\,,
\] 
where $\alpha\in\mathbb Q^\Graphsu$ is the representation of $f$ over the homomorphism basis, that is, the vector $\alpha$ with $f=\alpha\cdot\hom$.
Lemma~\ref{lem: extraction reduction} implies the corresponding
lower bound, so in fact we have
\begin{equation}
C(f) = \max_{F\in\supp{\alpha}} C({\hom(F,\star)})\,,
\end{equation}

Proposition~\ref{prop: hom ETH hard} implies that $C({\hom(F,\star)})\le\tw{F}+1$ holds.
Under the assumption $\FPT\ne\sharpWone$, Theorem~\ref{thm: dalmau jonsson} implies that 
$C({\hom(F,\star)})$ cannot be bounded by a universal constant, and under the stronger assumption~$\sharpETH$, Proposition~\ref{prop: hom ETH
hard} implies that $C({\hom(F,\star)})$ is not bounded by $o(\tw F / \log\tw F)$, even when~$F$ is
restricted to be from any fixed family~$\mathcal F$ of graphs of unbounded treewidth.
The $k$-clique hypothesis is that
the current fastest algorithm for $k$-clique is optimal \cite{DBLP:conf/focs/AbboudBW15a}, which can be formalized as
$C({\hom(K_k,\star)})=\omega k /3$.
These facts suggest that the representation of a graph motif parameter~$f$ in the homomorphism basis
and your favorite complexity hypotheses are all that is needed to
understand the complexity of~$f$ (concerning the exponent of $n$).

\begin{rem}
  \label{rem: colored graphs}
  Lovász's framework is easily adapted to the setting of vertex-colored pattern and
  host graphs.
  Here, we only wish to count homomorphisms (or embeddings, or strong embeddings) from~$H$ into~$G$
  that map vertices of~$H$ to vertices of~$G$ of the same colors.
  In the definition of the spasm of~$H$, one is then only allowed to identify
  non-adjacent vertices of the same color, that is, the allowed partitions
  $\rho$ are those where each block is a monochromatic independent set.
  The algorithm (Lemma~\ref{lem: graph motif algorithm}) and hardness result
  (Lemma~\ref{lem: graph motif hardness}) can also be adapted to the setting of
  vertex-colored homomorphisms without modifications.
  We excluded these variants from the main text to simplify the presentation.
\end{rem}

\section{Algorithms for counting subgraphs}
\label{sec: algo}

In this section, we obtain algorithms for counting subgraphs and embeddings
by expressing these quantities over the homomorphism basis via~\eqref{eq: hom surj sub}
and running an algorithm for counting homomorphisms.
More concretely, recall that
\begin{equation}
  \label{eq: sub surj hom}
  \sub(H,G)
  =
  \sum_F \surj^{-1}(H,F)\cdot\hom(F,G)
  \,.
\end{equation}
We know from~\eqref{eq: surj inverse} that~$\surj^{-1}(H,F)\ne 0$ is equivalent to
$F\in\spasm{H}$.
Hence for fixed patterns~$H$, if we can compute the homomorphism numbers
$\hom(F,G)$ for all $F\in\spasm{H}$ in time~$O(n^c)$ on $n$-vertex graphs $G$, then we can compute
$\sub(H,G)$ in time~$O(n^c)$.
With the basic treewidth-based algorithm from Proposition~\ref{prop:
hom-treewidth-algo}, this running time $O(n^c)$ is governed by the maximum treewidth among~$\spasm H$. 
\begin{reptheorem}{thm: algorithm count subgraphs}
  There is an algorithm that is given a $k$-edge graph~$H$ and
  an~$n$-vertex graph~$G$ to compute $\#\Sub H G$ in time
  $k^{O(k)} n^{t+1}$, where
  $t$ is the maximum treewidth among all graphs in~$\spasm H$.
\end{reptheorem}
\begin{proof}
  Use~\eqref{eq: sub surj hom} and evaluate the right-hand side in the straightforward manner:
  First compute the spasm of $H$ and all coefficients $\surj^{-1}(H,F)$ for
  $F\in\spasm H$ by first computing~$\surj_{\spasm{H}}$ via brute-force
  and then inverting this triangular matrix.
  Then, for each $F\in\spasm H$, compute $\#\Hom FG$ via
  Proposition~\ref{prop: hom-treewidth-algo}.
  Finally, we compute the sum on the right side of~\eqref{eq: sub surj hom} to
  obtain $\#\Sub HG$.

  For the running time claim, note that the size of $\spasm
  H$ is bounded by the number of partitions of the set~$V(H)$,
  which in turn can be bounded crudely by $k^{O(k)}$.
  Each term $\#\Hom FG$ can be computed in time $\exp({O(k)}) \cdot n^{t+1}$ by Proposition~\ref{prop: hom-treewidth-algo}.
\end{proof}
Theorem~\ref{thm: algorithm count subgraphs} is particularly useful for sparse patterns
$H$: By Fact~\ref{fact: spasm}, any $k$-edge graph $H$
only contains graphs with at most $k$ edges in its spasm.
We can thus exploit known bounds on the treewidth of sparse
graphs to bound the running time guaranteed by Theorem~\ref{thm: algorithm count
subgraphs}:
If $F$ has $k$ edges, then $\tw F\leq ck+o(k)$ is known to hold for fairly
small constants~${c<1}$.
Furthermore, there are $k$-edge graphs $F$ with $\tw F = \Theta(k)$.
Since the best known upper and lower bounds on the treewidth of $k$-edge graphs
are not tight, it will be useful to dedicate a universal constant~$\sparseTWconstant$ to the linear coefficient in the treewidth bound.
\begin{defn}
\label{def: sparseTWconstant}
  For $k\in\mathbb{N}$, let $\mathrm{tw}^*(k)$ be defined as the maximum treewidth among all graphs 
  with~$k$ edges.
  We define the constant $\sparseTWconstant\in\mathbb{R}$ as the limit superior
  \[
    \sparseTWconstant
    =
    \limsup_{k\to\infty}
    \frac{\mathrm{tw}^*(k)}{k}
    \,.
  \]
\end{defn}
Using the existence of good $3$-regular expanders, Dvořák and Norin 
\cite[Corollary 7]{dvorak2016strongly} find a family of $3$-regular graphs on $k$ edges and
$n=\frac{2}{3}k$ vertices with treewidth at least $\frac{1}{24} n-1=\frac{1}{36} 
k - 1$ for all large enough~$k$.
On the other hand, Scott and Sorkin~\cite[Corollary~21]{Scott2007260} prove that 
$\mathrm{tw}^*(k)\leq \frac{13}{75}k+o(k)$ holds.
Collecting these two results, we get the following bounds on $\sparseTWconstant$.

\begin{thm}[\cite{dvorak2016strongly,Scott2007260}]\label{thm: sparse tw constant}
  We have $\frac{1}{36}\le\sparseTWconstant\leq\frac{13}{75}$.
\end{thm}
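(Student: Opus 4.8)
The plan is to combine the two cited results — the upper bound of Scott and Sorkin \cite{Scott2007260} and the expander-based lower bound of Dvořák and Norin \cite{dvorak2016strongly} — translating each from a bound phrased in terms of the number of \emph{vertices} into one phrased in terms of the number of \emph{edges}, since $\sparseTWconstant$ is defined via edge counts. For the upper bound there is essentially nothing to do beyond unwinding Definition~\ref{def: sparseTWconstant}: \cite[Corollary~21]{Scott2007260} states $\mathrm{tw}^*(k) \le \tfrac{13}{75}k + o(k)$, so dividing by $k$ and taking the limit superior immediately gives $\sparseTWconstant \le \tfrac{13}{75}$.

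For the lower bound, the input is a family of $3$-regular graphs $G_n$ on $n$ vertices with $\tw{G_n} \ge \tfrac{1}{24}n - 1$ for all sufficiently large even $n$; this is \cite[Corollary~7]{dvorak2016strongly}, and morally it holds because a bounded-degree expander has no balanced separator of sublinear size and treewidth is, up to constant factors, controlled from below by balanced separator size. A $3$-regular graph on $n$ vertices has exactly $k := \tfrac{3}{2}n$ edges, equivalently $n = \tfrac{2}{3}k$, so $\tw{G_n} \ge \tfrac{1}{24}\cdot\tfrac{2}{3}k - 1 = \tfrac{1}{36}k - 1$. Whenever $3 \mid k$ and $k$ is large, the value $n = \tfrac{2}{3}k = 2(k/3)$ is automatically an even, sufficiently large integer, so such a graph $G_n$ exists and witnesses $\mathrm{tw}^*(k) \ge \tfrac{1}{36}k - 1$ for infinitely many $k$. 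Since $\tfrac{1}{36} - \tfrac{1}{k} \to \tfrac{1}{36}$ along this progression, the limit superior in Definition~\ref{def: sparseTWconstant} is at least $\tfrac{1}{36}$, i.e.\ $\sparseTWconstant \ge \tfrac{1}{36}$.

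The only point requiring any care is the divisibility bookkeeping when converting a per-vertex bound into a per-edge bound: one works along the arithmetic progression $k \equiv 0 \pmod{3}$, which still contains infinitely many indices, so a limit superior suffices and no interpolation between nearby edge counts is needed. I do not anticipate any real obstacle — once Definition~\ref{def: sparseTWconstant} is in place, the statement is an immediate corollary of the two cited theorems, and the present theorem is really just recording the best bounds on $\sparseTWconstant$ currently available in the literature.
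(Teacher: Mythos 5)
Your proposal is correct and matches the paper's argument, which appears as two sentences of prose immediately preceding the theorem (citing Scott--Sorkin for the upper bound and the Dvořák--Norin $3$-regular expander family for the lower bound, with exactly the same conversion $n=\tfrac{2}{3}k$). The extra divisibility bookkeeping you include is a harmless clarification the paper elides.
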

Since $\sparseTWconstant\leq\frac{13}{75}< 0.174$, 
this immediately implies upper bounds on the running times obtained from
Theorem~\ref{thm: algorithm count subgraphs}.
\begin{repcorollary}{cor: algorithm count subgraphs}
  There is an algorithm that is given~$H$ and~$G$ to compute
  $\#\Sub H G$ in time $f(H)\cdot \abs{V(G)}^{\xi k+o(k)}$,
  where~$k=\abs{E(H)}$. Here, $\sparseTWconstant< 0.174$ is the constant from Definition~\ref{def: sparseTWconstant}.
\end{repcorollary}

Instead of relying on Proposition~\ref{prop: hom-treewidth-algo} to count
homomorphisms in Theorem~\ref{thm: algorithm count subgraphs}, we can use more
sophisticated methods where available.
For instance, we can use fast matrix multiplication to count homomorphisms
from patterns~$H$ of treewidth at most two, thus proving Theorem~\ref{thm: algorithm count subgraphs matrixmult}.
\begin{reptheorem}{thm: algorithm count subgraphs matrixmult}
  If $H$ has treewidth at most $2$, we can determine $\#\Hom H{G}$ in time
  $\poly{\abs{V(H)}}\cdot |V(G)|^\omega$, where $\omega<2.373$ is the matrix
  multiplication constant and $f$ is some computable function.
\end{reptheorem}
\begin{proof}
  Let $H$ be a graph and $(T,\beta)$ be a tree decomposition of width at most~$2$.
  We can compute an optimal tree decomposition in time $\poly{\abs{V(H)}}$, for
  example via Bodlaender's $\exp(O({\tw{H}}^3))\cdot\abs{V(H)}$ time
  algorithm~\cite{bodlaender1996linear}.
  It will simplify notation if we assume the decomposition to have the following
  properties, which can be achieved by easy modifications:
  \begin{enumerate}
    \item The root bag $r$ has size $2$ and every other bag has size exactly $3$.
    \item The root bag $r$ has a unique child $r'$.
    \item For every $t\in V(T)\setminus\set{r}$, we have $|\sep(t)|=2$.
  \end{enumerate}
  Let us sketch how these properties can be achieved.
  If $\bag(t_1)\subseteq \bag(t_2)$ holds for two adjacent nodes $t_1,t_2\in V(T)$,
  then the two nodes can be merged.
  If $|\bag(t)|<3$ and $t$ has a neighbor $t'$ in $T$ with $\bag(t')\not\subseteq
  \bag(t)$, then the size of $\bag(t)$ can be increased by adding an element of
  $\bag(t')\setminus \bag(t)$ to it.
  After applying these two rules exhaustively, every bag has size exactly~$3$.
  If two adjacent bags $\bag(t_1)=\{a,b,c\}$ and
  $\bag(t_2)=\{c,d,e\}$ intersect only in one element $c$, then we can
  insert a new bag $\{b,c,d\}$ between them.
  Similarly, if $\bag(t_1)=\{a,b,c\}$ and $\bag(t_2)=\{d,e,f\}$ are disjoint,
  then we insert two new bags $\{b,c,d\}$ and $\{c,d,e\}$ between them.
  This way, we achieve property 3 above.
  Finally, we take any bag $\bag(r')=\{a,b,c\}$, attach a new bag
  $\bag(r)=\{a,b\}$ to it, and make $r$ the root of the tree.

  We may assume that $V(H)=\set{1,\dots,|V(H)|}$, that is, the vertices are
  represented by integers from~$1$ to~$|V(H)|$.
  We may further assume that the numbering is consistent with the structure
  of the tree decomposition.
  More precisely, we assume that the following condition holds for all $t,t'\in
  V(T)$ and $u,u'\in V(H)$ with $u\in\bag(t)$ and $u'\in\bag(t')$:
  If $t$ is an ancestor of $t'$ in the tree~$T$, the vertex~$u$ does not occur
  in the bag of any ancestor of~$t$, and $u'$ does not occur in the bag of any
  ancestor of~$t$, then the numbering reflects this fact in that $u<u'$ holds.

  For any $t\in V(H) \setminus\set{r}$, we write
  $u_1(t),u_2(t),u_3(t)\in\bag(t)$ for the three elements of the bag at~$t$,
  chosen in such a way that $u_1(t)<u_2(t)<u_3(t)$ holds.
  By the assumption on the numbering, we have $\sep(t)=\set{u_1(t),u_2(t)}$
  since the two vertices in $\sep(t)$ are have their topmost bag above~$t$ while
  the topmost bag of the vertex in~$\bag(t)\setminus\sep(t)$ is at~$t$.

  For any $t\in V(H) \setminus\set{r}$ and $v_1,v_2\in V(G)$, let  
  $h_t(v_1,v_2)$ be the number of homomorphisms from $\Hom {H[\cone(t)]} G$ that 
  map $u_1(t)$ to $v_1$ and $u_2(t)$ to $v_2$.
  By summing over all $v_1$ and $v_2$, we obtain the number of all homomorphisms 
  from the cone at~$t$, that is, we have
  \begin{equation*}
    \#\Hom {H[\cone(t)]}{G}=\sum_{v_1,v_2\in V(G)}h_{t}(v_1,v_2)
    \,.
  \end{equation*}
  As $H[\cone(r')]=H$ holds, the function~$h_{r'}$ at the root~$r'$ of the tree 
  decomposition can be thus used to determine the quantity $\#\Hom HG$ that we 
  wish to compute.
  With random access to the function table of~$h_{r'}$, we can evaluate the sum 
  in $O(\abs{V(G)}^2)$ arithmetic operations.
  
  We compute the function tables~$h_t$ in a bottom-up fashion.
  At the leaves~$t$ of $T$, we compute~$h_t$ in constant time using brute force.
  Now suppose that~$t$ is a non-leaf vertex of~$T$ and that the function table 
  of~$h_{t'}$ has already been computed for every child $t'$ of $t$.
  For each $ij\in\set{12,13,23}$, let $C_{ij}$ be the set of children~$t'$ of~$t$ that satisfy $\sep(t')=\{u_{i}(t),u_{j}(t)\}$.
  Note that $C_{12} \dotcup C_{13} \dotcup C_{23}$ is the set of all children 
  of~$t$.
  The following identity holds:
  \begin{equation}\label{eq: simple DP}
    h_t(v_1,v_2)
    =
    \sum_{v_3\in V(G)}
    I_{t,v_1,v_2,v_3}
    \cdot
    \prod_{t'\in C_{12}}
    h_{t'} (v_1,v_2)
    \cdot
    \prod_{t'\in C_{13}}
    h_{t'} (v_1,v_3)
    \cdot
    \prod_{t'\in C_{23}}
    h_{t'} (v_2,v_3)\,,
  \end{equation}
  where $I_{t,v_1,v_2,v_3}\in\set{0,1}$ indicates whether the mapping 
  $\bag(t)\to\set{v_1,v_2,v_3}$ that we want to fix here is a homomorphism 
  in~$\Hom{H[\bag(t)]}{G}$.
  The identity in \eqref{eq: simple DP} should be read as follows:
  In order to count the number of homomorphisms from~$\cone(t)$ that 
  fix~$u_1(t)$ to $v_1$ and $u_2(t)$ to~$v_2$, we first sum over all possible 
  values~$v_3$ that $u_3(t)$ might map to and count those homomorphisms 
  from~$\cone(t)$ that have all three values and thus all function values of 
  $\bag(t)$ fixed to $v_1$, $v_2$, and $v_3$, respectively.
  In order to count the latter, we discard with the factor $I_{t,v_1,v_2,v_3}$ 
  those fixings that violate the homomorphism property locally.
  Finally, we observe that $\cone(t'_1)\cap\cone(t'_2) = \sep(t'_1) = 
  \set{u_i(t),u_j(t)}$ holds for any two distinct children~$t'_1,t'_2\in 
  C_{ij}$.
  Hence, after fixing the values for the homomorphism on $\bag(t)$, the 
  extensions to the cones $\cone(t')$ are independent for different 
  children~$t'$, and so the total number of extensions is the product of the 
  $h_{t'}$.

  Using the identity \eqref{eq: simple DP} naively to perform the computation at 
  each bag would result in an overall running time of roughly $\abs{V(G)}^3$.
  Instead, we want to use matrix multiplication.
  We define three functions $a_{12},a_{13},a_{23}:V(G)\times V(G) \to \mathbb N$ 
  which we later interpret as matrices whose indices range over~$V(G)$.
  For each $v,v'\in V(G)$, let
  \begin{equation}
    a_{ij}(v,v')
    =
    \big[\text{$u_iu_j\in E(H)$ implies $v v'\in E(G)$}\big]
    \cdot
    \prod_{t'\in C_{ij}} h_{t'}(v,v')\,,
  \end{equation}
  where $[P] \in \set{0,1}$ for a proposition~$P$ is equal to~$1$ if and only 
  if~$P$ is true.
  Then \eqref{eq: simple DP} implies
  \begin{equation}\label{eq:matrixproduct}%
    h_t(v_1,v_2)
    =
    a_{12}(v_1,v_2) \cdot
    \sum_{v_3\in V(G)}
    \paren[\Big]{a_{13}(v_1,v_3) \cdot a_{23}(v_2,v_3)}
  \end{equation}

  Let $A_{13}$ be a $|V(G)|\times |V(G)|$ matrix where the rows and columns are 
  indexed by $|V(G)|$ and the value in row $v_1$ and column $v_3$ is 
  $a_{13}(v_1,v_3)$.
  Similarly, let $A_{23}$ be a $|V(G)|\times |V(G)|$ matrix where the rows and 
  columns are indexed by $|V(G)|$ and the value in row $v_3$ and column $v_2$ is 
  $a_{23}(v_2,v_3)$. The sum in \eqref{eq:matrixproduct} is exactly the value of 
  the matrix product $A_{13}A_{23}$ in row $v_1$ and column $v_2$. Thus 
  $h_t(v_1,v_2)$ can be computed by constructing the matrices $A_{13}$ and 
  $A_{23}$, performing a matrix multiplication, and multiplying the entries with 
  the values $a_{12}(v_1,v_2)$. The running time is dominated by the matrix 
  multiplication, which yields a total running time of 
  $f(H)\cdot|V(G)|^{\omega}$.
  Note that the entries of the matrices are nonnegative integers not greater 
  than~$|V(G)|^{|V(H)|}$, hence the arithmetic operations are on $O(|V(H)|\log 
  |V(G)|)$ bit integers; arithmetic operations on such integers are in time
  $f(\abs{V(H)})$ in the standard word-RAM model with $\log\abs{V(G)}$-size words.
\end{proof}

Theorem~\ref{thm: algorithm count subgraphs matrixmult}, our algorithm for
counting subgraphs whose spasm has maximum treewidth~$2$, follows 
by replacing the homomorphism counting subroutine in the proof of
Theorem~\ref{cor: algorithm count subgraphs} with the one above.

\section{Complexity of linear combination problems}
\label{sec: complexity of lincombs}

\subsection{Linear combinations of subgraphs}
\label{sec: lincomb embs}

We prove the dichotomy stated in Theorems~\ref{thm: ETH hardness count subgraphs}
and~\ref{thm: dichotomy-sub}.
Recall that due to the basis transformation \eqref{eq: sub surj hom}, 
we can express linear combinations of subgraph numbers as equivalent linear combinations of homomorphism numbers.
By Lemma~\ref{lem: graph motif hardness}, the most difficult homomorphism number
in this linear combination governs the complexity of the problem.
Since the hardness criterion for homomorphisms is treewidth,
and expressing subgraph numbers in the homomorphism basis yields terms for all graphs in the spasm, 
we first make the following observation.
\begin{fact}
  \label{fact: spasm-linear-tw}
  Let $H$ be a graph.
  If~$H$ has a maximum matching of size~$k$, the maximum treewidth among all
  graphs in~$\spasm{H}$ is~$\Theta(k)$.
\end{fact}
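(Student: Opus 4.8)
The plan is to prove the $\Theta(k)$ statement by establishing a matching $O(k)$ upper bound and $\Omega(k)$ lower bound on $\max_{F\in\spasm{H}}\tw{F}$, in both cases feeding the structural properties of the spasm recorded in Fact~\ref{fact: spasm} into standard facts about treewidth.

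For the upper bound I would argue as follows. A graph with a maximum matching of size $k$ has vertex-cover number at most $2k$, since the $2k$ endpoints of a maximum matching $M$ form a vertex cover: any edge avoiding all of them could be added to $M$, contradicting maximality. Hence $\vc{H}\le 2k$. By Fact~\ref{fact: spasm}(2), every $F\in\spasm{H}$ then also admits a vertex cover of size at most $2k$, and since the treewidth of a graph is at most its vertex-cover number (a path decomposition whose bags are $C\cup\set{v}$ over $v\notin C$, for $C$ a minimum vertex cover, has width $\abs{C}$), we conclude $\tw{F}\le 2k$ for every $F\in\spasm{H}$.

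For the lower bound I would exhibit a single graph in $\spasm{H}$ of treewidth $\Omega(k)$, using Fact~\ref{fact: spasm}(3) together with the existence of sparse graphs of large treewidth. By Theorem~\ref{thm: sparse tw constant} (concretely, the Dvořák--Norin family of cubic expanders), every sufficiently large multiple of three $k'$ is the edge count of a graph of treewidth at least $k'/36-1$; choosing such a $k'\le k$ with $k-k'\le 2$ and taking its disjoint union with $k-k'$ further edges produces a graph $F_0$ with exactly $k$ edges, no isolated vertices, and $\tw{F_0}\ge k/36-O(1)$, the treewidth of a disjoint union being the maximum over its components. Since $H$ contains a matching with $k$ edges, Fact~\ref{fact: spasm}(3) yields some $F'\in\spasm{H}$ having $F_0$ as a minor, and minor-monotonicity of treewidth then gives $\tw{F'}\ge\tw{F_0}=\Omega(k)$. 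Combined with the upper bound, $\max_{F\in\spasm{H}}\tw{F}=\Theta(k)$.

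I do not anticipate a genuine obstacle here: every ingredient is either contained in Fact~\ref{fact: spasm}, in Theorem~\ref{thm: sparse tw constant}, or a textbook property of treewidth (it is bounded by the vertex-cover number and is monotone under taking minors). The only mildly delicate point is arranging the lower-bound witness $F_0$ to have \emph{exactly} $k$ edges and no isolated vertices for every value of $k$, which the padding step above is designed to handle; and for the finitely many small $k$ below the threshold of the expander construction the $\Omega(k)$ bound holds trivially up to the hidden constant, since whenever $H$ has an edge its spasm contains a single edge of treewidth~$1$.
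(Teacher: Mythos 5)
Your proof is correct and follows essentially the same route as the paper: the upper bound comes from bounding $\vc{H}$ by $2k$ and invoking Fact~\ref{fact: spasm}(2) together with $\tw F\le\vc F$, and the lower bound combines Fact~\ref{fact: spasm}(3), the sparse-treewidth bound of Theorem~\ref{thm: sparse tw constant}, and minor-monotonicity of treewidth. The only difference is that you are more careful than the paper about padding the Dvořák--Norin expander to an exact $k$-edge graph without isolated vertices and about the finitely many small $k$; this extra care is sound but does not change the argument.
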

\begin{proof}
  Let $k$ be the size of a maximum matching of~$H$.
  Then the vertex-cover number of $H$ is at least~$k$ and at most~$2k$.
  By the second item of Fact~\ref{fact: spasm}, the vertex-cover number and thus
  the treewidth of graphs in the spasm of~$H$ is then also at most~$2k$.
  For the lower bound, we use the third item of Claim~\ref{fact: spasm}:
  Since~$H$ contains a matching of size~$k$, every $k$-edge graph occurs as a
  minor of some graph in $\spasm{H}$.
  By Theorem~\ref{thm: sparse tw constant}, there exist $k$-edge graphs~$F$ with
  treewidth ${\mathrm{tw}^*(k) \ge \Omega(k)}$.
  Moreover, taking minors does not increase the treewidth, so there is a graph
  in~$\spasm{H}$ with treewidth at least~$\Omega(k)$.
\end{proof}

Let $\mathcal A\subseteq\mathbb Q^\Graphsu$ be a family of finitely supported vectors.
Recall the matrices $\sub$ and $\surj$ from Section~\ref{sec: graph motif parameter relations}.
We define the set $\mathcal A\cdot\surj^{-1}$ as the set of all vectors
$\alpha\cdot\surj^{-1}$ for $\alpha\in\mathcal A$.
That is, for each $\alpha \in \mathcal A$, the graph motif parameter $\alpha \cdot \sub$
appears in the set $\mathcal A\cdot\surj^{-1}$ via its representation in the homomorphism basis.

\begin{thm}\label{thm: lincomb subs}
  Let $\mathcal A\subseteq\mathbb Q^\Graphsu$ be a recursively enumerable family
  of finitely supported vectors.
  If there is a constant~$t \in \mathbb N$ such that all vectors $\beta\in\mathcal
  A\cdot\surj^{-1}$ have treewidth~$\tw{\beta} \leq t$,
  then the problem~$\SubProb{\mathcal A}$ to compute
  the quantity
  \[
    \sum_{H\in\Graphsu} \alpha_H\cdot\#\Sub{H}{G}\,,
  \]
  on input $\alpha \in \mathcal A$ and an $n$-vertex graph $G$, 
  admits an algorithm with running time~$g(\alpha)\cdot n^{t+1}$.
  Otherwise, it is $\sharpWone$-hard parameterized by the description length of~$\alpha$, and it cannot be computed in time $g(\alpha)\cdot
  n^{o(t/\log t)}$ for $t=\tw{\alpha\cdot\surj^{-1}}$ unless $\sharpETH$ fails.
\end{thm}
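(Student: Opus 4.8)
The plan is to reduce $\SubProb{\mathcal A}$ to a homomorphism evaluation problem $\HomProb{\mathcal B}$ for the derived class $\mathcal B=\mathcal A\cdot\surj^{-1}$, and then read off both halves of the dichotomy from the algorithm of Lemma~\ref{lem: graph motif algorithm} and the hardness result of Lemma~\ref{lem: graph motif hardness} applied to $\mathcal B$. The link is the basis change of \S\ref{sec: graph motif parameter relations}: by $\sub=\surj^{-1}\cdot\hom$ (equivalently~\eqref{eq: sub surj hom}), for any finitely supported $\alpha$ the vector $\beta:=\alpha\cdot\surj^{-1}$ is again finitely supported, with $\supp\beta\subseteq\bigcup_{H\in\supp\alpha}\spasm H$, and it satisfies $\sum_{H}\alpha_H\cdot\#\Sub H G=(\beta\cdot\hom)(G)$ for every graph $G$. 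Using~\eqref{eq: surj inverse} (or by brute-force building and inverting the triangular matrix $\surj_S$ for $S=\bigcup_{H\in\supp\alpha}\spasm H$), the map $\alpha\mapsto\beta$ is computable; the description lengths $\abs\alpha$ and $\abs\beta$ are bounded by computable functions of each other, and $\tw\beta=\tw{\alpha\cdot\surj^{-1}}$ is precisely the quantity $t$ in the theorem.

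For the tractable case I would not even need the enumerability hypothesis: if $\tw\beta\le t$ for all $\beta\in\mathcal B$, then on input $(\alpha,G)$ we compute $\beta=\alpha\cdot\surj^{-1}$ in time $g_0(\alpha)$, run Lemma~\ref{lem: graph motif algorithm} on $(\beta,G)$ in time $g_1(\beta)+\poly{\abs\beta}\cdot\abs{V(G)}^{\tw\beta+1}$, and return the value, which equals $\sum_H\alpha_H\cdot\#\Sub H G$. Since $\tw\beta\le t$ and $\abs\beta$, $g_1(\beta)$ are bounded by computable functions of $\alpha$, absorbing the additive terms yields the claimed running time $g(\alpha)\cdot n^{t+1}$.

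For the hard case, suppose no constant bounds $\tw\beta$ over $\beta\in\mathcal B$, i.e.\ $\mathcal B$ contains vectors of arbitrarily large treewidth. First observe that $\mathcal B$ is recursively enumerable: enumerate $\alpha\in\mathcal A$ and output $\alpha\cdot\surj^{-1}$ for each, which is possible by the computability of the transformation. Lemma~\ref{lem: graph motif hardness} then gives that $\HomProb{\mathcal B}$ is $\sharpWone$-hard parameterized by $\abs\beta$ and admits no $g(\beta)\cdot\abs{V(G)}^{o(\tw\beta/\log\tw\beta)}$-time algorithm under $\sharpETH$. To transfer this to $\SubProb{\mathcal A}$, I would give a parameterized Turing reduction from $\HomProb{\mathcal B}$: on input $(\beta,G)$, enumerate $\mathcal A$ until an $\alpha$ with $\alpha\cdot\surj^{-1}=\beta$ appears (in time bounded by a computable function of $\abs\beta$, exactly as in the proof of Lemma~\ref{lem: graph motif hardness}), query the $\SubProb{\mathcal A}$ oracle once on $(\alpha,G)$, and return the answer, which is $(\beta\cdot\hom)(G)$. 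As $\abs\alpha$ and $\abs\beta$ are computably related, this is a valid parameterized reduction, so $\SubProb{\mathcal A}$ is $\sharpWone$-hard parameterized by $\abs\alpha$; and since $t=\tw{\alpha\cdot\surj^{-1}}=\tw\beta$, a $g(\alpha)\cdot n^{o(t/\log t)}$-time algorithm for $\SubProb{\mathcal A}$ would yield a $g(\beta)\cdot n^{o(\tw\beta/\log\tw\beta)}$-time algorithm for $\HomProb{\mathcal B}$, contradicting $\sharpETH$.

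The one point that needs genuine care is the bookkeeping around the reparameterization $\alpha\leftrightarrow\beta$: one has to verify that passing between the subgraph and homomorphism representations costs only an additive or multiplicative overhead depending solely on the input description (so the exponent of $n$ is untouched), and that $\abs\alpha$ and $\abs\beta$ are related by computable functions in both directions, so that the parameterizations by $\abs\alpha$ and by $\abs\beta$ are interchangeable and the enumerability of $\mathcal A$ is inherited by $\mathcal B$. Everything else is an immediate consequence of \S\ref{sec: graph motif parameter relations}, Lemma~\ref{lem: graph motif algorithm}, and Lemma~\ref{lem: graph motif hardness}. (To obtain the more transparent sufficient conditions — for instance recovering Theorem~\ref{thm: dichotomy-sub} — one additionally feeds Fact~\ref{fact: spasm-linear-tw} into this machinery, but that is not needed for the statement as formulated.)
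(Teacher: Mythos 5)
Your proposal is correct and follows essentially the same route as the paper's proof: you apply the basis change $\sub=\surj^{-1}\cdot\hom$ to reduce $\SubProb{\mathcal A}$ to $\HomProb{\mathcal A\cdot\surj^{-1}}$, then read off the algorithmic half from Lemma~\ref{lem: graph motif algorithm} and the hardness half from Lemma~\ref{lem: graph motif hardness}. The paper states this in three lines; you spell out the (correct) bookkeeping about computability of the transformation, recursive enumerability of the derived class, and the interchangeability of the parameterizations by $\abs{\alpha}$ and $\abs{\beta}$, which is exactly what the paper leaves implicit.
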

\begin{proof}
  Recall that $\sub=\surj^{-1}\hom$ holds by \eqref{eq: hom surj sub matrix}, 
  so we have $\alpha\sub = \alpha\surj^{-1}\hom$ for all $\alpha\in\mathcal A$.
  This implies the algorithmic claim via Lemma~\ref{lem: graph motif
  algorithm} and the hardness claims via Lemma~\ref{lem: graph motif hardness}.
\end{proof}

Since any family of subgraph patterns $\mathcal H$ can be represented as a
family~$\mathcal A$ of linear combinations in which each vector has support one,
we obtain Theorem~\ref{thm: ETH hardness count subgraphs} and Theorem~\ref{thm:
dichotomy-sub} as a special case.
The quantitative lower bound regarding the vertex-cover number follows from the
relationship between the vertex-cover number and the largest treewidth in the
spasm via Fact~\ref{fact: spasm-linear-tw}.
\begin{proof}[Proof of Theorems~\ref{thm: ETH hardness count subgraphs} and~\ref{thm:
  dichotomy-sub}]
  Let $\mathcal H$ be a graph family of unbounded vertex-cover number.
  Its closure $\bigcup_{H\in\mathcal
  H}\spasm{H}$ has unbounded treewidth by Fact~\ref{fact: spasm-linear-tw}.
  We want to apply Theorem~\ref{thm: lincomb subs}.
  Let~$\mathcal A$ be the set of all $\alpha^H$ where $\alpha^H_F=[H=F]$ holds for some graph $H\in\mathcal H$.
  That is, $\alpha^H$ contains $H$ with coefficient $1$, and no other graphs.
  Clearly $\SubProb{\mathcal A}$ is equivalent to $\SubProb{\mathcal H}$.

  Now consider the set $\mathcal B$ with $\mathcal B=\mathcal A\cdot\surj^{-1}$.
  We need to prove that the graph class $\bigcup_{\beta\in\mathcal B} \supp{\beta}$
  has unbounded treewidth.
  We do so by showing that this class is in fact equal to $\bigcup_{H\in\mathcal
  H}\spasm{H}$.
  By definition, for each~$H\in\mathcal H$, the set~$\mathcal B$ contains a vector~$\beta^H$
  with $\beta^H=\alpha^H\cdot\surj^{-1}$.
  Expanding this vector-matrix product, we get
  \begin{align*}
    \beta^H_F
    &=
    \sum_{J\in\Graphsu} \alpha^H_J \cdot \surj^{-1}(J,F)
    \,.
  \end{align*}
  Since $\alpha^H_J=[H=J]$, we have $\beta^H_F=\surj^{-1}(H,F)$.
  By~\eqref{eq: surj inverse}, the latter is non-zero if and only if
  $F\in\spasm{H}$, 
  so the claim follows.
\end{proof}
  
We present an example that does not directly follow
from~\cite{curticapean2014complexity}, but that does follow from
Theorem~\ref{thm: lincomb subs}.
The following statement was proved recently using a more complicated method.
\begin{cor}[\cite{BrandRoth17}]
\label{cor: rothbrand}
  Given $k$ and $G$, counting all trees with $k$ vertices in $G$ 
  is $\sharpWone$-hard on parameter~$k$.
\end{cor}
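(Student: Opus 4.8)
The plan is to express the parameter ``number of $k$-vertex trees in $G$'' as a linear combination of subgraph numbers and invoke Theorem~\ref{thm: lincomb subs}. For each $k\in\mathbb N$, let $\alpha^k\in\mathbb Q^{\Graphsu}$ be the indicator vector of the $k$-vertex trees, i.e.\ $\alpha^k_F=1$ if $F$ is a tree on $k$ vertices and $\alpha^k_F=0$ otherwise; this vector has finite support. Then the number of $k$-vertex trees in $G$ is exactly $(\alpha^k\cdot\sub)(G)=\sum_{F}\alpha^k_F\cdot\#\Sub FG$, and the class $\mathcal A=\setc{\alpha^k}{k\in\mathbb N}$ is recursively enumerable (enumerate, for each $k$, all $k$-vertex trees). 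The parameter in the corollary is $k$ rather than $\abs{\alpha^k}$; but since $\alpha^k$ is determined by $k$, since $\abs{\alpha^k}$ is a computable function of $k$, and since $k\le\abs{\alpha^k}$, counting $k$-vertex trees parameterized by $k$ is equivalent under parameterized reductions to $\SubProb{\mathcal A}$ parameterized by $\abs{\alpha}$. Hence it suffices to prove the latter $\sharpWone$-hard.

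By Theorem~\ref{thm: lincomb subs}, this reduces to showing that $\mathcal A\cdot\surj^{-1}$ has unbounded treewidth. Set $\beta^k=\alpha^k\cdot\surj^{-1}$, so that $\beta^k_F=\sum_{J}\alpha^k_J\cdot\surj^{-1}(J,F)=\sum_{J}\surj^{-1}(J,F)$, where the sum ranges over all trees $J$ on exactly $k$ vertices. The crucial point is that \emph{no cancellation} occurs: by~\eqref{eq: surj inverse}, for every tree $J$ on $k$ vertices the coefficient $\surj^{-1}(J,F)$ equals $(-1)^{k-\abs{V(F)}}$ times a nonnegative rational, which is strictly positive exactly when $F\in\spasm J$. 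Since the sign $(-1)^{k-\abs{V(F)}}$ does not depend on $J$, all terms in $\beta^k_F$ share this sign, and therefore $\beta^k_F\ne 0$ as soon as $F\in\spasm J$ for \emph{some} tree $J$ on $k$ vertices.

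It remains to find, for arbitrarily large $k$, such an $F$ of large treewidth. Take $k$ odd and let $J=P_k$ be the path on $k$ vertices: this is a tree, and it contains a matching with $(k-1)/2$ edges. By the third item of Fact~\ref{fact: spasm}, every graph with $(k-1)/2$ edges and no isolated vertices occurs as a minor of some graph in $\spasm{P_k}$; by Theorem~\ref{thm: sparse tw constant} there is such a graph of treewidth $\Omega(k)$, and since taking minors cannot increase treewidth, $\spasm{P_k}$ contains a graph $F_k$ with $\tw{F_k}=\Omega(k)$. By the previous paragraph $\beta^k_{F_k}\ne0$, so $\tw{\beta^k}\ge\tw{F_k}=\Omega(k)$, and letting $k\to\infty$ through odd values shows that $\mathcal A\cdot\surj^{-1}$ has unbounded treewidth. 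Theorem~\ref{thm: lincomb subs} now yields that $\SubProb{\mathcal A}$ is $\sharpWone$-hard parameterized by $\abs{\alpha}$ (and, under $\sharpETH$, admits no $g(\alpha)\cdot n^{o(t/\log t)}$-time algorithm with $t=\tw{\alpha\cdot\surj^{-1}}$), which by the equivalence noted above proves the corollary.

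The step I expect to be the main obstacle is the no-cancellation argument: a priori, summing $\surj^{-1}(J,F)$ over the exponentially many $k$-vertex trees $J$ could collapse to zero for every target graph $F$ of large treewidth, defeating the whole approach; it is only the fact that all patterns have the \emph{same} number of vertices $k$ that forces the sign in~\eqref{eq: surj inverse} to be uniform, so contributions accumulate rather than cancel. Once this is in place, choosing $P_k$ as a single convenient tree whose spasm already realizes large treewidth (via Fact~\ref{fact: spasm}) makes the rest routine.
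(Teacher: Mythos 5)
Your proof is correct and follows essentially the same route as the paper's: you set up the same indicator vectors $\alpha^k$, use the identical no-cancellation argument based on the uniform sign $(-1)^{k-\abs{V(F)}}$ in~\eqref{eq: surj inverse}, and then invoke Theorem~\ref{thm: lincomb subs}. The only (cosmetic) difference is that you exhibit a high-treewidth graph in the spasm by working concretely with the path $P_k$ via Fact~\ref{fact: spasm} and Theorem~\ref{thm: sparse tw constant}, whereas the paper cites Fact~\ref{fact: spasm-linear-tw} directly (which is proved by exactly the argument you inlined).
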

\begin{proof}
  The number of $k$-vertex trees can be seen as a linear combination of subgraph numbers; 
  for each fixed~$k$, we set $\alpha_F = 1$ for all unlabeled
  graphs~$F\in\Graphsu$ such that~$F$ is a~$k$-vertex tree, and $\alpha_F=0$ otherwise.
  Let $\mathcal A$ be the family of all such~$\alpha$ over all~$k\in\mathbb N$.
  Since the class of all trees has unbounded vertex cover number, 
  Fact~\ref{fact: spasm-linear-tw} shows that the union of spasms of $k$-vertex trees
  has unbounded treewidth as $k$ grows.

  Write $\mathcal T_k$ for the set of all $k$-vertex trees.
  For each $k\in \mathbb N$, pick a graph $F_k\in\spasm{\mathcal T_k}$ 
  such that the sequence $F_1, F_2, \ldots $ has unbounded treewidth.
  In order to apply Theorem~\ref{thm: lincomb subs}, we need to prove that some
  vector~$\beta\in\mathcal A\cdot\surj^{-1}$ indeed has~$F_k$ in its support:
  To this end, let $\alpha\in\mathcal A$ be the vector corresponding to all $k$-vertex trees
  and let $\beta = \alpha\cdot\surj^{-1}$.
  We claim that $F_k$ is contained in the support of $\beta$.
  To see this, we expand the matrix-vector product:
  \begin{align}\label{eq: trees beta}
    \beta_{F_k} = (\alpha\cdot\surj^{-1})_{F_k}
    &=
    \sum_{T\in\mathcal T_k} \surj^{-1}(T,F_k)
    \,.
  \end{align}
  Recall from~\eqref{eq: surj inverse} that $\surj^{-1}(T,F_k)\ne 0$ if and only if ${F_k\in\spasm{T}}$.
  The same equation implies that the sign of $\surj^{-1}(T,F_k)$ is equal to~$(-1)^{\abs{V(T)}-\abs{V(F_k)}}$ for all~$T\in\mathcal T_k$ with $F_k\in\spasm{T}$.
  Since $\abs{V(T)}=k$ holds for all~$T\in\mathcal T_k$, the sign is in fact~$(-1)^{k-\abs{V(F_k)}}$.
  Therefore, all terms in the sum of \eqref{eq: trees beta} have the same sign and at least one term is non-zero; thus $\beta_{F_k}\ne0$ holds as claimed.
\end{proof}
The preceding corollary also holds for counting $k$-vertex forests, and
can be generalized to families $\mathcal A$ where each linear combination~$\alpha$ contains only graphs with the same number of vertices, and where
$\mathcal A$ contains graphs of unbounded vertex-cover number in their support.
In fact, the graphs in $\supp{\alpha}$ do not even need to have the same number
of vertices, but the same \emph{parity} of number of vertices suffices.
The proof is analogous to that of Corollary~\ref{cor: rothbrand}.

\subsection{Linear combinations of induced subgraphs}
\label{sec: lincomb strembs}

Chen, Thurley, and Weyer~\cite{chen2008understanding} consider the restriction 
of computing $\#\StrEmb H G$, parameterized by $k=\abs{V(H)}$, when the graphs~$H$ are chosen from some class~$\mathcal H$.
They prove that the problem is $\sharpWone$-hard if $\mathcal H$ is infinite, otherwise it is polynomial-time computable.
Recall that counting strong embeddings is equivalent to counting induced subgraphs.
In full analogy to Theorem~\ref{thm: lincomb subs},
we can generalize their result to a classification of linear combinations of induced subgraph numbers.
For the following statement, recall the matrices $\ext$ and $\surj$ from Section~\ref{sec: graph motif parameter relations}.

\begin{reptheorem}{thm: lincomb indsub}
  Let $\mathcal A\subseteq\mathbb Q^\Graphsu$ be a recursively enumerable family
  of finitely supported vectors.
  If there is a constant~$t \in \mathbb N$ such that all vectors $\beta\in\mathcal
  A\cdot\ext^{-1}\cdot\surj^{-1}$ have treewidth~$\tw{\beta} \leq t$,
  then the problem~$\IndProb{\mathcal A}$ to compute
  \[
    \sum_{H\in\Graphsu} \alpha_H\cdot\#\IndSub{H}{G}\,,
  \]
  on input $\alpha \in \mathcal A$ and an $n$-vertex graph $G$, admits an algorithm with running time~$g(\alpha)\cdot n^{t+1}$.
  Otherwise, it is $\sharpWone$-hard parameterized by the description length of~$\alpha$, and it
  cannot be solved in time $g(\alpha)\cdot n^{o(t/\log t)}$ for $t=\tw{\alpha\cdot\ext^{-1}\cdot\surj^{-1}}$ unless $\sharpETH$ fails.
\end{reptheorem}
The proof is analogous to that of Theorem~\ref{thm: lincomb subs},
only the basis change matrix needs to be chosen as $\ext^{-1}\cdot\surj^{-1}$ rather than $\surj^{-1}$.

On a related note, Jerrum and Meeks~\cite{jerrum2015parameterised,jerrum2015some,jerrum2014parameterised,meeks2016challenges} 
introduced the following generalization of counting induced subgraphs: 
Let~$\Phi$ be some graph property.
The problem~$\IndPropProb{\Phi}$ is, given a graph~$G$ and an integer~$k$,
to compute the number of induced $k$-vertex subgraphs that have
property~$\Phi$.
Let us write this number as~$I_{\Phi,k}(G)$.
Since $I_{\Phi,k}(G)$ can be expressed as a sum~$\sum_H\indsub(H,G)$ over all
$k$-vertex graphs~$H$ that satisfy~$\Phi$, Theorem~\ref{thm: lincomb indsub} immediately implies a complexity dichotomy for these problems.

\begin{repcorollary}{cor: jerrum meeks dichotomy}
  Let $\Phi$ be any decidable graph property.
  The problem $\IndPropProb{\Phi}$ is fixed-parameter tractable if all
  $I_{\Phi,k}$ can be represented as linear combinations of homomorphisms from
  graphs of bounded treewidth.
  Otherwise, the problem is $\sharpWone$-hard when parameterized by~$k$.
\end{repcorollary}

Finally, let us sketch how to recover the hardness result of Chen, Thurley, and
Weyer~\cite{chen2008understanding} for counting induced subgraphs from a fixed
class $\mathcal H$ as a special case of~Corollary~\ref{cor: jerrum meeks
dichotomy}:
When representing ${\#\IndSub H G}$ for a $k$-vertex graph $H$ as a linear combination of subgraph numbers $\#\Sub {H'}G$ 
via \eqref{eq: sub ext indsub}, this linear combination has a non-zero coefficient $\ext^{-1}(H,H')$ for the clique $H'=K_k$.
Indeed, the $k$-clique extends every graph on $k$ vertices, so we have $\ext(H,H')\neq 0$,
which in turn implies $\ext^{-1}(H,H')\neq 0$ by~\eqref{eq: ext inverse}.
When further writing each term $\#\Sub {H'} G$ as a linear combination of
homomorphisms $\#\Hom {H''} G$ for $H''\in\spasm {H'}$, we get exactly one term
for $H''=H'=K_k$, since $K_k$ only occurs in its own spasm, 
and we have $\surj^{-1}(H',H'')=1/\#\Aut{H'}$.
Hence the coefficient of $\#\Hom {K_k} G$ in the representation of ${\#\IndSub H G}$ is non-zero.
Thus, if $\mathcal H$ is infinite, we have unbounded cliques in the
homomorphism representation, and the problem of counting induced subgraphs from $\mathcal H$ is $\sharpWone$-hard by
Corollary~\ref{cor: jerrum meeks dichotomy}.

\section{Counting colored subgraphs in polynomial time}
\label{sec:polytime}

\newcommand{\contr}[1]{\widehat{#1}}
\newcommand{\Hcontr}{\contr{H}}
\newcommand{\Hcontrx}[1]{\contr{H}^{\setminus #1}}
\newcommand{\Hclique}[1]{H^{\textup{\textbullet}\setminus #1}}
\newcommand{\Hcliquex}{H^{\textup{\textbullet}}}

In this section, our goal is to determine which classes $\cH$ of vertex-colored graphs make $\SubProb{\cH}$ polynomial-time solvable. From Theorem~\ref{thm: dicho-sub-color}, we know that if the graphs in $\spasm H$ have treewidth bounded by some constant $c$ for every $H\in \cH$, then $\SubProb{\cH}$ is FPT and it is $\sharpWone$-hard otherwise. However, this does not tell us which of the FPT cases are actually polynomial-time solvable. Answering this question seems to require different techniques than what we have seen in the previous sections: in particular, the hardness proofs of Section~\ref{sec: lincomb embs} based on the basis transformation to homomorphisms gives reductions with superpolynomial running time and hence cannot be used to rule out polynomial-time algorithms in the cases when the problem is FPT.

To understand the limits of polynomial-time algorithms, we give two different structural characterizations that are equivalent to the condition that $\spasm H$ has bounded treewidth for a colored graph $H$. We need the following definitions:

\begin{itemize}
\item $\Hcontr$ is the simple graph obtained from $H$ by consolidating each color class into a single vertex. That is, vertex $v_i$ of $\Hcontr$ represents color class $i$ of $H$, and $v_i$ and $v_j$ of $\Hcontr$ are adjacent if there is an edge between color class $i$ and color class $j$ in $H$.
\item $\Hcliquex$ is the supergraph of $H$ obtained by making each color class a clique.
\item $\Hclique{i}$ is the supergraph of $H$ obtained by making each color class {\em except class $i$} a clique.
\item A {\em $c$-flower centered at color class $i$ in $H$} is a vertex-disjoint collection of $c$ paths of length at least 1 in $\Hclique{i}$ such that each path has both of its endpoints in class $i$. (Note that it is possible that path in the collection has length 1, that is, consists of only a single edge inside color class $i$.)
\end{itemize}
We show that $\spasm H$ having bounded treewidth can be expressed by saying that $\Hcontr$ has bounded treewidth and there are no large flowers centered at any color class. For the second characterization, we need a novel form of tree decompositions specifically tailored for our colored counting applications.

\begin{defn}
Let $H$ be a colored graph, let $(T,\bag)$ be a tree decomposition of $\Hcliquex$, and let $g:V(T)\to 2^{V(H)}$ be a function with $\sep(t)\subseteq g(t)\subseteq \bag(t)$ for every $t\in V(T)$. The triple $(T,\bag,g)$ is a {\em guarded cutvertex decomposition} if the following two properties hold:
\begin{enumerate}
\item for every $t\in V(T)$, set $g(t)$ is a vertex cover of $H[\bag(t)]$.
\item for every $t\in V(T)$ and every child $t'$ of $t$, we have $|\sep(t')\setminus g(t)|\le 1$.
\end{enumerate}
The {\em guard size} of $(T,\bag,g)$ is defined as $\max_{t\in V(T)}|g(t)|$. We say that the {\em bags have at most $c$ colors} if $\bag(t)$ for each $v\in V(T)$ contains vertices of at most $c$ different colors. 
\end{defn}
Note that this is a decomposition of $\Hcliquex$ (where each color class is a clique), but guard set $g(t)$ is a vertex cover of $H[\bag(t)]$ only (where the color classes can be sparse). It turns out the existence of precisely this kind of decompositions with guard sets of bounded size is a sufficient and necessary condition for $\spasm H$ to have bounded treewidth.

\begin{thm}\label{thm:spasmboundchar}
Let $\cH$ be a class of colored graphs. The following statements are equivalent:
\begin{enumerate}
\item There is a constant $c$ such that for every $H\in \cH$, we have that every graph in $\spasm H$ has treewidth at most $c$.
\item There is a constant $c$ such that for every $H\in \cH$, we have that $\Hcontr$ has treewidth at most $c$ and there is no $c$-flower centered at any class of $H$.
\item There is a constant $c$ such that for every $H\in \cH$, there is a guarded 
  cutvertex decomposition of $H$ with guard size at most $c$ and the bags 
  having at most $c$ colors.
\end{enumerate}
\end{thm}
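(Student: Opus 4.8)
The plan is to prove the three equivalences via the cycle $(1)\Rightarrow(2)\Rightarrow(3)\Rightarrow(1)$, exploiting the fact that $\spasm H$ is closed under taking minors of its members (after deleting isolated vertices), so that ``$\spasm H$ has bounded treewidth'' is equivalent to ``no graph in $\spasm H$ contains a large grid minor,'' by the Excluded Grid Theorem. Throughout, ``constant'' means a value depending only on the overall constant $c$, and I will freely absorb polynomial blow-ups into it; the equivalences are all of the ``bounded iff bounded'' type, not exact.

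\medskip
\noindent\emph{From $(1)$ to $(2)$.} First I would show that $\Hcontr$ is itself a minor (in fact a quotient) of a member of $\spasm H$: identifying each color class of $H$ to a single vertex and then deleting the resulting loops is exactly a legal consolidation along monochromatic independent sets composed with edge contractions, so $\Hcontr\in \spasm H$ up to the minor relation, giving $\tw{\Hcontr}\le c$. For the flower condition, suppose $H$ has a $c'$-flower $\mathcal F$ centered at color class $i$ for some large $c'$. Consolidating \emph{all} of color class $i$ into one vertex $w$ turns each path of the flower into a cycle through $w$; taking minors of the other color classes appropriately, I would argue that consolidating color class $i$ and then, for each other color class, choosing one vertex per flower-path to keep (identifying the rest) produces a graph in $\spasm H$ in which $w$ has $c'$ internally vertex-disjoint cycles through it — that is, a subdivision of a large ``book''/theta-graph, which has treewidth $\ge 2$ but, more to the point, many such flowers forced simultaneously along a path of $\Hcliquex$ yield a large wall minor. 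Here the clean way is: a $c'$-flower inside $\Hclique{i}$ corresponds, after consolidating class $i$, to $c'$ parallel $w$-$w$ walks; a graph with a vertex lying on $\Omega(c'^2)$ internally disjoint cycles, arranged as the flower provides, contains a $\Omega(c')\times\Omega(c')$ wall minor, hence treewidth $\Omega(c')$. So a large flower forces large treewidth in $\spasm H$, contradicting $(1)$; choosing the bounds carefully gives a constant bound on flower size as a function of $c$.

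\medskip
\noindent\emph{From $(2)$ to $(3)$.} This is the constructive heart. Given that $\Hcontr$ has treewidth $\le c$, fix an optimal tree decomposition $(T,\bag_0)$ of $\Hcontr$; lift it to $\Hcliquex$ by replacing each contracted vertex $v_i$ of $\Hcontr$ in a bag by the \emph{entire} color class $i$ of $H$. Since each color class is a clique in $\Hcliquex$, this is a valid tree decomposition of $\Hcliquex$ with at most $c+1$ colors per bag — establishing the ``$c$ colors'' part immediately. It remains to choose the guard function $g$. For a node $t$, let $I_t$ be the set of color indices whose class lies (wholly) in $\bag(t)$; for each $i\in I_t$ I want $g(t)$ to contain a bounded-size subset $g_i(t)$ of class $i$ that is a vertex cover of $H[\bag(t)]$ restricted to the edges incident to class $i$, and such that for each child $t'$ the separator $\sep(t')$ meets $g(t)$ in all but at most one vertex. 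The absence of large flowers is precisely what makes such a bounded hitting set exist: a large ``antichain'' of edges and separator-constraints that cannot be covered by a bounded set can be shown, via a routine exchange/greedy argument and König-type reasoning inside the clique, to give rise to a large collection of vertex-disjoint paths with both endpoints in class $i$ inside $\Hclique{i}$ — a flower. I would make this precise by an induction down the tree: process children of $t$ one at a time, maintaining a partial guard; whenever a new child $t'$ would force more than one uncovered vertex of $\sep(t')$, trace back a path inside $\Hclique{i}$ witnessing the obstruction and add it to a flower being built; if the guard size ever exceeds the flower-bound $f(c)$ we get a contradiction. The bookkeeping here — especially coordinating the ``at most one uncovered vertex per child'' requirement across many children sharing color classes — is where the real work lies, and I expect this to be the main obstacle of the whole proof.

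\medskip
\noindent\emph{From $(3)$ to $(1)$.} Finally, given a guarded cutvertex decomposition $(T,\bag,g)$ of $H$ with guard size $\le c$ and $\le c$ colors per bag, I would build, for an \emph{arbitrary} $F\in\spasm H$, a tree decomposition of $F$ of bounded width. Write $F=\contract H\rho$ for a monochromatic-independent partition $\rho$, with quotient map $\phi:V(H)\to V(F)$. Use the \emph{same} tree $T$; for each $t$, let the new bag be $\phi(g(t))\cup S_t$, where $S_t$ is a small set of ``representatives'' of the components hanging below $t$ after the quotient. The point of property~(1) of the decomposition — $g(t)$ being a vertex cover of $H[\bag(t)]$ — is that every edge of $H$ with both endpoints in $\bag(t)$ has an endpoint in $g(t)$, so its image in $F$ is covered by $\phi(g(t))$ (size $\le c$); edges inside a color class are killed by the quotient (monochromatic independence of $\rho$), so the clique-ification in $\Hcliquex$ causes no harm in $F$. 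Property~(2) — at most one vertex of each child-separator outside $g(t)$ — is exactly what keeps the ``connectivity through non-guarded vertices'' one-dimensional, so the connectedness condition of a tree decomposition can be repaired by adding at most one extra representative per child, keeping bag sizes bounded by roughly $2c$ plus a term counting children, which one pre-processes away by binarizing $T$ in the standard fashion. Verifying the two tree-decomposition axioms (coverage of edges; connectivity of each vertex's node-set) is then a finite case check using properties (1) and (2); this yields $\tw F\le c'$ for a constant $c'=c'(c)$, uniformly over $F\in\spasm H$, which is statement~(1).

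\medskip
\noindent\emph{Main obstacle.} The delicate step is the construction in $(2)\Rightarrow(3)$: turning the quantitative ``no large flower'' hypothesis into a single bounded guard set that simultaneously (a) covers all edges inside every bag and (b) respects the ``$\le 1$ uncovered separator vertex per child'' condition. A large flower is the only obstruction, but extracting an actual vertex-disjoint path system from a failure of (a)+(b) — rather than merely a large set of edges — requires a careful disjointness argument (essentially a flow/Menger computation inside $\Hclique{i}$, using that color classes are cliques so paths can be rerouted freely). I would isolate this as a standalone lemma: \emph{if $H[\bag(t)]$ together with the child-separator data has no bounded guard, then $\Hclique{i}$ has a large flower centered at some $i$}, and prove it by a greedy extraction that alternates between picking an uncovered edge/separator-vertex and closing it off into a class-$i$-to-class-$i$ path.
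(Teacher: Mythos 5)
Your overall plan — proving the cycle $(1)\Rightarrow(2)\Rightarrow(3)\Rightarrow(1)$ — matches the paper's. But your $(1)\Rightarrow(2)$ argument for the flower condition contains a genuine error, and your $(2)\Rightarrow(3)$ sketch, while directionally correct, misses the key structural tool and the part of the construction that makes it work.

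For $(1)\Rightarrow(2)$: you propose to consolidate the entire color class $i$ into a single vertex $w$ and observe that the $c'$ flower paths become $c'$ internally disjoint cycles through $w$. You then claim this ``contains a $\Omega(c')\times\Omega(c')$ wall minor, hence treewidth $\Omega(c')$.'' That is false. A family of cycles sharing one common vertex and otherwise disjoint is a windmill/friendship graph, which has treewidth $2$ no matter how many cycles it has. The flower paths are only required to be vertex-disjoint; there is no further structure among their interiors, so this construction gives you nothing beyond treewidth $2$. The correct move (and the one the paper uses) is to \emph{not} collapse class $i$ to one vertex, but instead to take a $\binom{c}{2}$-flower, with $2\binom{c}{2}=c(c-1)$ path-endpoints in class $i$, and group those endpoints into $c$ blocks of size $c-1$, one block per ``vertex'' of the clique to be built, arranging the grouping so that for every pair of blocks there is a flower path with one endpoint in each. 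Consolidating each block yields a topological $K_c$ minor of a graph in $\spasm H$, hence treewidth at least $c-1$. This ``many blocks, pairwise connected by a dedicated path'' trick is exactly what you need and what your approach omits.

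For $(2)\Rightarrow(3)$: you correctly identify the bag-lifting step (replace $v_i$ by all of class $i$, giving a tree decomposition of $\Hcliquex$ with at most $w+1$ colors per bag). But your account of how the absence of large flowers yields a bounded guard set $g(t)$ is a handwave about ``greedy/K\"onig-type'' reasoning, and it frames the obstruction as large families of \emph{edges inside a bag} plus separator constraints. The paper's argument shows that what you actually need $g(t)$ to cover is the family of $\bag^0(t)$-paths in the graph $\Hclique{\contr\bag(t)}$ — paths that may dip well below $\bag(t)$ into the cone — since each such path either witnesses an edge of $H[\bag(t)]$ uncovered by $g(t)$ (violating condition 1) or two separator vertices outside $g(t)$ in some child's component (violating condition 2). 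The relevant combinatorial tool is the Gallai–Mader $A$-paths theorem (stated in the paper as Theorems~\ref{thm:apath} and~\ref{thm:apath2}): either there are many disjoint $A$-paths, or a bounded hitting set exists. Many disjoint $\bag^0(t)$-paths can be reassembled into a large flower centered at one color class (Claim~\ref{cl:notmanypaths}). Additionally, the paper needs a nontrivial variant (Theorem~\ref{thm:apath2}) giving a hitting set almost entirely inside $A$, plus a handful of highly attached outside vertices; and it needs the tree-decomposition massaging of Lemma~\ref{lem:treedecmod} so that components of $\Hcliquex[\cone(t)] \setminus \bag(t)$ are genuinely connected. Your sketch identifies the right spirit but none of these load-bearing ingredients.

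For $(3)\Rightarrow(1)$: your route — build a tree decomposition of $F=\contract H\rho$ directly from $(T,\bag,g)$ — can be made to work but is more awkward than the paper's. The paper first pushes the guarded cutvertex decomposition of $H$ forward to $H'$ via the quotient map ($\beta'(t)=f(\beta(t))$, $g'(t)=f(g(t))$), observing that because color classes are cliques in $\Hcliquex$, all vertices in a single block of $\rho$ share a bag, so the pushed-forward subtrees stay connected; then it applies Lemma~\ref{lem:guardedbounded} (guard size $c$ implies treewidth $\le c$). This avoids the bookkeeping you allude to about ``one representative per child'' and binarization, which as you have stated it does not clearly bound the bag sizes when a node has many children.
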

As we are interested in polynomial-time algorithms, the existence statements in Theorem~\ref{thm:spasmboundchar} are not sufficient for us: we need an algorithmic statement showing that guarded cutvertex decompositions can be found efficiently. The following theorem gives an algorithmic version of the (2) $\Rightarrow$ (3) implication of Theorem~\ref{thm:spasmboundchar}.
\begin{thm}\label{thm:treedec2}
Let $H$ be a colored graph such that 
\begin{itemize}
\item $\Hcontr$ has treewidth at most $w$, and
\item $H$ has no $c$-flower centered at any of the color classes.
\end{itemize}
Then we can compute a guarded cutvertex decomposition $(T,\bag,g)$ of $H$
with guard size $O((w+c)^2w^2)$ such that the bags have at most $O((c+w)w)$ colors.
\end{thm}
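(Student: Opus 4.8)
The plan is to start from a tree decomposition $(T_0,\bag_0)$ of $\Hcontr$ of width $w$, lift it to a decomposition of $\Hcliquex$, and then repair it locally so that it becomes a guarded cutvertex decomposition. First I would take the decomposition $(T_0,\bag_0)$ of $\Hcontr$ and pull it back to $H$: for each node $t$, let $\bag_0^H(t)$ be the union of all color classes whose representative vertex lies in $\bag_0(t)$. Since $\Hcliquex$ is obtained from $H$ by turning each color class into a clique, and every edge of $\Hcliquex$ either lies inside one color class or projects to an edge of $\Hcontr$, the pair $(T_0,\bag_0^H)$ is a tree decomposition of $\Hcliquex$; its bags have at most $w+1$ colors, but they can be arbitrarily large because color classes can be big. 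The key point I would then exploit is that largeness of a color class $i$ inside a bag is exactly what $c$-flowers control: if $g(t)$ must be a vertex cover of $H[\bag_0^H(t)]$ and there is no $c$-flower centered at $i$, then after suitable processing the ``uncovered'' part of color class $i$ in each bag is small.

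The main step is to choose, for each node $t$, a bounded-size guard set $g(t)$ and then \emph{shrink the bags} relative to those guards. For a fixed color class $i$ present at $t$, consider the graph $\Hclique{i}$ restricted to the vertices that appear in the subtree at $t$; the vertices of class $i$ in $\bag_0^H(t)$ that are ``active'' (i.e.\ still needed below $t$) behave like terminals, and a collection of vertex-disjoint paths between them in $\Hclique{i}$ is precisely a flower centered at $i$. By Menger-type / disjoint-path arguments, if there is no $c$-flower then the active class-$i$ vertices at $t$ can be separated from the rest by a set of $O(c)$ vertices, or all but $O(c)$ of them can be ``pushed down'' into child subtrees. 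I would iterate this over the $O(w)$ colors present at each bag: after processing, each bag retains only $O((c+w)w)$ vertices of each active color, giving $O((c+w)w)$ colors — wait, colors stay $\le w+1$; rather each bag has $O((c+w)w)$ vertices total. I then set $g(t)$ to be a minimum vertex cover of $H[\bag(t)]$ restricted to the surviving vertices, plus the $O(w)$ separator vertices produced by the flower argument for each color; since $H[\bag(t)]$ now has bounded size, $|g(t)| = O((w+c)^2 w^2)$ as claimed. Property~(1) — $g(t)$ is a vertex cover of $H[\bag(t)]$ — is then essentially by construction, and property~(2) — $|\sep(t')\setminus g(t)|\le 1$ — is arranged by the standard trick of refining the tree so that introducing/forgetting happens one vertex at a time, while making sure the single ``new'' vertex crossing the adhesion that is not guarded is exactly the cutvertex-like vertex that the no-flower condition leaves uncovered.

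The hard part will be the flower-to-separator argument and keeping the bookkeeping uniform across the $O(w)$ colors in a bag simultaneously: each color class individually admits a small separator because no $c$-flower exists, but one must combine these separators without them blowing each other up, and one must verify that the resulting object is still a valid tree decomposition of $\Hcliquex$ (connectivity of each vertex's bag-set, all edges covered) after the local surgery. I expect the cleanest route is to phrase the per-color statement as: \emph{in $\Hclique{i}$, if there is no $c$-flower centered at $i$, then for every subset $S$ of class-$i$ vertices there is a set of $O(c)$ vertices separating $S$ into pieces each meeting class $i$ in $O(1)$ vertices} — an application of the theory of $h$-flowers / the Erd\H{o}s--P\'osa-type gap between packing and covering of $A$-paths, which gives a covering set of size $O(c)$ when the packing number is $<c$. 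Feeding this into the tree decomposition one node at a time, top-down, and taking $g(t)$ to be the union of these covering sets over the $\le w+1$ colors at $t$ together with a vertex cover of the residual bounded-size graph, yields the stated guard size and color bound; the polynomial running time follows since all the primitives (tree decomposition of bounded width, disjoint-path / vertex-cover computations on bounded-size graphs, and the Erd\H{o}s--P\'osa covering) are polynomial-time.
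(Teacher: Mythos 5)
Your starting point — lift a width-$w$ tree decomposition of $\Hcontr$ to a decomposition of $\Hcliquex$ by replacing each representative with its full color class, then use a packing-vs-covering ($A$-paths) argument to extract small guards — agrees with the paper's route. But there are several genuine gaps after that.

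First, and most fundamentally, your plan is to \emph{shrink the bags} so that each bag has $O((c+w)w)$ vertices total, and then take a vertex cover of the ``residual bounded-size graph''. This cannot work: the bags of a tree decomposition of $\Hcliquex$ must contain entire color classes (each color class is a clique in $\Hcliquex$), and color classes can be arbitrarily large even when $\Hcontr$ has treewidth $1$ and there are no flowers. The theorem only bounds the \emph{number of colors} per bag and the \emph{guard size}; the bags themselves stay large. The paper never shrinks bags — in fact it \emph{enlarges} them (by adding $S^*_t$, see below). Consequently, the step where you read off a small guard ``since $H[\bag(t)]$ now has bounded size'' does not go through; the guard-size bound has to come directly from the $A$-path covering, not from a small bag.

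Second, when you invoke the Gallai/Erd\H{o}s--P\'osa covering, the covering set $S$ need not lie inside the current bag $\bag^0(t)$, so $g(t)\subseteq\bag(t)$ can fail. This is exactly the issue the paper's Theorem~\ref{thm:apath2} is engineered for: the covering is split into $A^*\subseteq\bag^0(t)$ plus a small set $S^*$ of $\ell$-attached vertices, and $S^*\cap\cone^0(t)$ is added to both $\bag^1(t)$ and $g^1(t)$. This enlargement is why the color bound becomes $O((c+w)w)$ rather than $w+1$, and verifying that the enlarged $(T^1,\bag^1)$ is still a tree decomposition of $\Hcliquex$ is a nontrivial claim (Claims~\ref{cl:childconnected} and~\ref{cl:treedecremains}) that your outline does not address.

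Third, your per-color statement (covering $C_i$-paths one color at a time in $\Hclique{i}$) is weaker than what is needed. The paper's Claim~\ref{cl:notmanypaths} bounds the $\bag^0(t)$-path packing in the single auxiliary graph $\Hclique{\contr{\bag}(t)}$ (with \emph{all} color classes present at $t$ left uncliqued simultaneously), and the reduction to a $C_i$-flower crucially concatenates pairs of $C_i$--$C_j$ paths through the clique on $C_j$. Running the argument color by color in $\Hclique{i}$ does not obviously produce a single guard set that covers all inter-color edges of $H[\bag(t)]$ — you acknowledge that combining the per-color separators is the hard part, and that is precisely where the proof lives.

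Finally, the child-condition $|\sep(t')\setminus g(t)|\le 1$ is not obtained by a ``nice tree decomposition / one vertex at a time'' refinement, which controls $|\bag(t)\triangle\bag(t')|$, not $|\sep(t')\setminus g(t)|$. The paper gets this condition from the fact that $g^1(t)$ covers all $\bag^0(t)$-paths in $\Hclique{\contr{\bag}(t)}$, which forces every component $C$ of $\Hcliquex[\cone^1(t)]-\bag^1(t)$ to satisfy $|N^{\Hcliquex}(C)\setminus g^1(t)|\le 1$; Lemma~\ref{lem:treedecmod} is then applied so that each $\comp^2(t')$ is a single connected component and $\sep^2(t')$ equals its neighborhood. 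Your outline names the right high-level phenomenon but lacks the mechanism to actually enforce it.
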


We present a dynamic programming algorithm solving $\Sub{H}{G}$, given a guarded cutvertex decomposition of $H$ with bounded guard size. This can be used to give another proof that $\SubProb{H}$ is FPT when $\spasm H$ has bounded treewidth for every $H\in \cH$ (but we have already established that in Theorem~\ref{thm: dicho-sub-color}). More importantly, this dynamic programming algorithm can be used to solve all the polynomial-time solvable cases of $\SubProb{H}$. The running time of the algorithm has a factor that has exponential dependence on the number of certain types of vertices in the guarded decomposition. If the number of these vertices is at most logarithmic in the size of $H$, then this exponential factor is polynomially bounded in the size of $H$ and the running time is polynomial. 

We need the following definition for the formal statement of the running time of the algorithm.
In a guarded cutvertex decomposition of $H$, the vertices of $\bag(t)\setminus g(t)$ are of two types: either there is some child $t'$ of $t$ such that $v$ is adjacent to $\comp(t')$ or not. Moreover, the requirement $|\sep(t')\setminus g(t)|\le 1$ ensures that at most one vertex of $\bag(t)\setminus g(t)$ can be adjacent to $\comp(t')$.   Formally, for every $t\in V(T)$ and $v\in \bag(t)\setminus g(t)$, we define a set $h_t(v)\subseteq V(T)$ the following way: a child $t'$ of $t$ is in $h_t(v)$ if $v$ is adjacent to $\comp(t')$. That is, $h_t(v)$ contains those children of $t$ that are ``hanging'' on the vertex $v$ (and on some of the guards $g(t)$). Let $\lambda(t)\subseteq \bag(t)\setminus g(t)$ contain those vertices $v$ for which $h_t(v)\neq\emptyset$. 
\begin{lem}\label{lem:ordembalg}
Let $H$ be a colored graph and $(T,\bag,g)$ be a tight guarded cutvertex decomposition of $H$ with guard size at most 
$c$ and at most $c$ colors in each bag. Suppose that $|\lambda(t)|\le d$ for every $t\in V(T)$.
Then $\Emb{H}{G}$ can be computed in time $2^{O(d)}\cdot |V(G)|^{2^{O(c)}}$.
\end{lem}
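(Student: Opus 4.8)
The plan is a bottom-up dynamic program over a rooted copy of the tree $T$. For each node $t$ we maintain a table $D_t$ indexed by the color-preserving injective maps $\phi\colon\sep(t)\to V(G)$, where $D_t[\phi]$ is the number of color-preserving embeddings of $H[\cone(t)]$ into $G$ that restrict to $\phi$ on $\sep(t)$. Since $\sep(t)\subseteq g(t)$ and $\abs{g(t)}\le c$, this table has at most $\abs{V(G)}^{c}$ entries, and $D_r[\emptyset]=\#\Emb H G$ at the root $r$, because $\cone(r)=V(H)$ and $\sep(r)=\emptyset$. We may assume $\abs{V(G)}\ge\abs{V(H)}$ (otherwise the answer is $0$) and, by tightness of the decomposition, that $T$ has $\poly{\abs{V(H)}}$ nodes.

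The reason this works, even though $\#\Emb{\cdot}{\cdot}$ is a subgraph-type quantity for which injectivity across different branches of $T$ is normally the obstruction, is that \emph{the coloring removes that obstruction}. First I would record three structural facts: (i) every color class of $H$ is a clique of $\Hcliquex$, hence is contained in a single bag of $(T,\bag)$; (ii) consequently, for each node $t$ and color $i$, either all color-$i$ vertices of $H$ lie in $\cone(t)$, or all color-$i$ vertices of $\cone(t)$ already lie in $\sep(t)$; (iii) for distinct children $t_1,t_2$ of $t$, any $u\in\comp(t_1)$ and $w\in\comp(t_2)$ receive distinct colors, since a common color class would lie in a single bag simultaneously inside the disjoint subtrees rooted at $t_1$ and $t_2$. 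Since differently colored vertices of $G$ are distinct, (iii) makes color-preserving maps automatically injective across different children's components, and together with (ii) this shows that a color-preserving embedding of $H$ splits uniquely into a color-preserving embedding of $H[\cone(t)]$ and one of $H[V(H)\setminus\comp(t)]$ that agree on $\sep(t)$, with no remaining injectivity side conditions. Hence the tables $D_t$ compose along $T$.

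Next I would set up the recurrence at a node $t$ with $\phi$ fixed on $\sep(t)$. Here $\comp(t)$ is the disjoint union of the new bag vertices $\bag(t)\setminus\sep(t)$ and the components $\comp(t')$ of the children $t'$, and the new bag vertices split as $(g(t)\setminus\sep(t))\cup R$ with $R:=\bag(t)\setminus g(t)$, which is an independent set of $H$ because $g(t)$ is a vertex cover of $H[\bag(t)]$. One enumerates the at most $\abs{V(G)}^{c}$ color-preserving injective extensions $\psi\colon g(t)\to V(G)$ of $\phi$ that embed $H[g(t)]$. Given $\psi$, a vertex $v\in R$ has $N_H(v)\cap\bag(t)\subseteq g(t)$, so it must be placed at a $G$-vertex of its own color adjacent in $G$ to $\psi(N_H(v)\cap g(t))$; moreover each child $t'$ of $t$ either has $\sep(t')\subseteq g(t)$, contributing the fixed factor $D_{t'}[\psi|_{\sep(t')}]$, or satisfies $\abs{\sep(t')\setminus g(t)}=1$, say $\sep(t')\setminus g(t)=\{v\}$, and then contributes the factor $D_{t'}$ evaluated at $\psi$ extended by the placement of $v$. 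Writing $\lambda(t)$ for the (at most $d$) vertices of $R$ ``touched'' by some child in this way, the recurrence takes the form $D_t[\phi]=\sum_{\psi}\bigl(\prod_{t'\,:\,\sep(t')\subseteq g(t)}D_{t'}[\psi|_{\sep(t')}]\bigr)\cdot N(\psi)$, where $N(\psi)$ is the weighted count of color-preserving injective placements of $R$ compatible with $\psi$, the weight of placing $v\in\lambda(t)$ at $x$ being the product of the child factors for the children touching $v$.

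The computation of $N(\psi)$ is the combinatorial heart and I expect it to be the main obstacle. Fix one of the at most $c$ colors $i$ occurring in $\bag(t)$, let $R_i$ be the color-$i$ vertices of $R$, which must be placed injectively into $V_i(G)\setminus\psi(g(t))$, and partition that target into the at most $2^{\abs{g(t)}}\le 2^{c}$ \emph{cells} determined by which subset of $\psi(g(t))$ a vertex is $G$-adjacent to. The \emph{simple} vertices $R_i\setminus\lambda(t)$ carry weight $1$ and are interchangeable within the at most $2^{c}$ groups defined by $N_H(v)\cap g(t)$, so the number of their valid injective placements is obtained by summing, over all ``flows'' recording how many vertices of each group go to each admissible cell, a product of falling factorials and multinomial coefficients; there are at most $(\abs{V(G)}+1)^{2^{O(c)}}$ such flows. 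The \emph{hanging} vertices $R_i\cap\lambda(t)$ — at most $d$ of them, and not interchangeable because of their placement-dependent weights — I would handle by the standard subset-indexed dynamic program for weighted injective assignment, in time $2^{O(d)}\cdot\poly{\abs{V(G)}}$, run against the residual cell capacities left by each flow. Taking the product over the at most $c$ colors yields $N(\psi)$ in time $2^{O(d)}\cdot\abs{V(G)}^{2^{O(c)}}$. Summing over the $\poly{\abs{V(H)}}$ nodes, each contributing at most $\abs{V(G)}^{O(c)}$ table entries times $\abs{V(G)}^{O(c)}$ extensions $\psi$ times the cost of $N(\psi)$ and the fixed children products, gives the claimed bound $2^{O(d)}\cdot\abs{V(G)}^{2^{O(c)}}$; arithmetic is on $O(\abs{V(H)}\log\abs{V(G)})$-bit integers and is absorbed into this bound. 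The real difficulty is exactly this split of the placement of $R$ into the many interchangeable vertices, which must cost only $\abs{V(G)}^{2^{O(c)}}$, and the few heterogeneous hanging vertices, which must cost only $2^{O(d)}$, while maintaining exact injectivity within each color class.
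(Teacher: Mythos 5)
Your proposal takes a genuinely different route from the paper's. The paper counts $\Pi$\emph{-ordered} embeddings for a partition $\Pi$ that respects similarity (same color, same open neighborhood), then recovers $\#\Emb HG$ via Lemma~\ref{lem:ordemb}. Its inner dynamic program then walks over the vertices $i=1,\dots,|V(G)|$ of the host one at a time, maintaining as state the $\Pi$-prefix set $S\subseteq\bag(t)\setminus g(t)$ of pattern vertices already placed; because interchangeable vertices are ordered, the identity of the next vertex to place in each similarity class is forced, and Claim~\ref{cl:twinorderedsubset} bounds the number of such states by $2^{d}\cdot|V(G)|^{c\cdot 2^c}$. Your structural observations (color classes live in single bags of $(T,\bag)$; components of distinct children of $t$ are color-disjoint; hence color-preservation makes the injective gluing across $\sep(t)$ automatic; $N_H(v)\subseteq g(t)$ for $v\in\bag(t)\setminus(g(t)\cup\lambda(t))$) are correct and exactly mirror Lemma~\ref{cl:nonkappaneighbor}, so the outer DP is set up the same way. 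Where you differ is in avoiding ordered embeddings and instead counting placements of $R=\bag(t)\setminus g(t)$ directly via flow enumeration plus a subset-indexed DP for $\lambda(t)$.

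That direct route can be made to work, but one step as you wrote it does not go through: you first enumerate the flow for the simple vertices and then propose to ``run the subset DP against the residual cell capacities left by each flow.'' A flow fixes only \emph{counts} per cell, not which specific $G$-vertices the simple vertices occupy; yet the weight $w_v(x)=\prod_{t'\in h_t(v)}D_{t'}[\cdot]$ of placing a hanging vertex $v\in\lambda(t)$ at $x$ depends on the specific $x$, not merely on its cell. So ``residual capacity'' is not a well-defined input to the hanging-vertex DP in this order. The fix is to reverse the two layers (or, equivalently, swap the two sums by linearity): run the subset-indexed DP over $G$-vertices, tracking the subset of $\lambda(t)$ placed \emph{and} the per-cell usage profile $p\in\{0,\dots,d\}^{\le 2^c}$, obtaining $W(p)=\sum_{\sigma:\,\text{profile}(\sigma)=p}\prod_v w_v(\sigma(v))$; then for each profile $p$, the simple placements are counted by a flow sum with cell capacities $m_C-p_C$ (falling factorials times multinomials), which depends only on $p$ because simple vertices treat all $G$-vertices of a cell identically. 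The state space $2^{O(d)}\cdot|V(G)|^{2^{O(c)}}$ is preserved, so the claimed running time still follows. The paper's $\Pi$-ordered trick buys you exactly this: by sweeping $G$-vertices one at a time and using prefix sets as state, it fuses the ``which specific $G$-vertex'' bookkeeping for hanging vertices with the ``how many so far'' bookkeeping for interchangeable vertices into a single clean recurrence, making the exchange-of-sums implicit rather than something to argue separately.
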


Finally, we need to show that if the algorithm of
Lemma~\ref{lem:ordembalg} cannot be used to give a polynomial-time
algorithm for $\SubProb{\cH}$, then the problem is unlikely to be
polynomial-time solvable. We observe that $\lambda(t)$ being large at
some node $t$ implies the existence of a large half-colorful matching
and then such matchings can be used in a reduction from counting $k$-matchings
to $\SubProb{\cH}$. If half-colorful matchings of super-logarithmic size appear 
in the graphs of $\cH$, then this reduction can be used to count perfect 
matchings in $2^{o(n)}$ time.

\subsection{Obtaining a tree decomposition}
The goal of this subsection is to prove Theorems~\ref{thm:spasmboundchar} and \ref{thm:treedec2}. After stating useful facts about guarded cutvertex decompositions, we discuss two technical tools: we formally state a minor but tedious transformation of the tree decomposition and recall some known results about finding $A$-paths. Then we prove Theorem~\ref{thm:treedec2}, which shows the (2) $\Rightarrow$ (3) implication of Theorem~\ref{thm:spasmboundchar}. Then we complete Theorem~\ref{thm:spasmboundchar} with the implications (3) $\Rightarrow$ (1) and (1) $\Rightarrow$ (2).

\subsubsection{Useful facts}

We make a few observations on the structure  of guarded decompositions.

\begin{lem}\label{cl:nonkappaneighbor}
Let $H$ be a colored graph and let $(T,\bag,g)$ be a guarded cutvertex decomposition of $H$.
If $v$ is a vertex in $\bag(t)\setminus (g(t)\cup \lambda(t))$ for some $t\in V(T)$, then
\begin{enumerate}
\item $v$ does not appear in $\bag(t')$ for any $t'\in V(T)$ with $t\neq t'$.
\item $\bag(t)$ contains every vertex with the same color as $v$.
\item $N_H(v)\subseteq g(t)$.
\end{enumerate}
\end{lem}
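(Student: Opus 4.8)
The plan is to prove Lemma~\ref{cl:nonkappaneighbor} by reading off each of the three claims from the defining properties of a guarded cutvertex decomposition. Recall that $(T,\bag,g)$ is a tree decomposition of $\Hcliquex$ (each color class made into a clique), that $g(t)$ is a vertex cover of $H[\bag(t)]$ for every $t$, and that $|\sep(t')\setminus g(t)|\le 1$ for every child $t'$ of $t$. Fix $t\in V(T)$ and a vertex $v\in\bag(t)\setminus(g(t)\cup\lambda(t))$. Since $v\notin\lambda(t)$, by definition of $\lambda$ we have $h_t(v)=\emptyset$, i.e.\ $v$ is not adjacent (in $H$, hence also not in $\Hcliquex$ restricted to $H$-edges) to $\comp(t')$ for any child $t'$ of $t$; this is the key hypothesis that makes all three parts go through.

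\textbf{Proof of (3).} This is the easiest part. Since $g(t)$ is a vertex cover of $H[\bag(t)]$ and $v\in\bag(t)\setminus g(t)$, every $H$-edge incident to $v$ inside $\bag(t)$ has its other endpoint in $g(t)$, so $N_H(v)\cap\bag(t)\subseteq g(t)$. It remains to show $N_H(v)\subseteq\bag(t)$. This will follow once we establish part~(1), since a tree decomposition of $\Hcliquex$ is in particular one of $H$, and any $H$-neighbor $u$ of $v$ must share a bag with $v$; if $v$ appears only in $\bag(t)$, that common bag must be $\bag(t)$.

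\textbf{Proof of (1).} I would argue by contradiction: suppose $v\in\bag(s)$ for some $s\ne t$. The set of nodes whose bag contains $v$ is connected in $T$, so there is a path from $t$ to $s$ in $T$ along which $v$ stays in every bag; let $t''$ be the neighbor of $t$ on this path, so $v\in\bag(t'')$. Two cases: either $t''$ is the parent of $t$, or $t''$ is a child of $t$. If $t''$ is the parent, then $v\in\bag(t)\cap\bag(t'')=\sep(t)$. I then want a contradiction with $v\notin g(t)$ combined with $v\notin\lambda(t)$ — the cleanest route is to observe that $\sep(t)\subseteq g(t)$ by the definition of guarded cutvertex decomposition (the condition $\sep(t)\subseteq g(t)\subseteq\bag(t)$ is part of the setup), so $v\in\sep(t)$ forces $v\in g(t)$, contradiction. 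If instead $t''$ is a child of $t$, then $v\in\bag(t)\cap\bag(t'')=\sep(t'')$, so $v\in\sep(t'')\setminus g(t)$; moreover $v\in\cone(t'')$. Here is where I use $v\notin\lambda(t)$: since $v\notin g(t)\cup\sep(t'')^c$... more precisely, $v\in\sep(t'')$ means $v\notin\comp(t'')=\cone(t'')\setminus\sep(t'')$, but $v$ being in $\bag(t'')$ and $v$ having no $H$-neighbor in $\comp(t'')$ (as $h_t(v)=\emptyset$ so $t''\notin h_t(v)$) together with the fact that $\bag(t'')$ induces... I would then invoke part~(3)'s local statement $N_H(v)\cap\bag(t)\subseteq g(t)$ plus the structure to derive that $v$ can be safely removed from $\bag(t'')$, i.e.\ $v$ appearing in $\bag(t'')$ violates minimality or the vertex-cover condition at $t''$. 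The obstacle is making this last contradiction airtight: one needs that $g(t'')$ covers $H[\bag(t'')]$ and that $v$'s only potential $H$-neighbors in $\bag(t'')$ lie in $\sep(t'')\subseteq g(t)\cup(\text{one extra vertex})$, and then argue the extra vertex can't be an $H$-neighbor of $v$ either.

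\textbf{Proof of (2).} Once (1) holds, consider any vertex $w$ with the same color as $v$. In $\Hcliquex$, $v$ and $w$ are adjacent (color classes are cliques), so $v$ and $w$ share a bag of $(T,\bag)$; by (1) the only bag containing $v$ is $\bag(t)$, hence $w\in\bag(t)$. This gives all vertices of $v$'s color class in $\bag(t)$, proving~(2). \textbf{The main obstacle} I anticipate is the child-case in part~(1): ruling out that $v$ leaks into a child's bag requires carefully combining the vertex-cover property at the child, the adhesion bound $|\sep(t')\setminus g(t)|\le 1$, and the hypothesis $h_t(v)=\emptyset$; this is exactly the kind of ``minor but tedious transformation'' the authors flag, so I expect a short but fiddly argument there, possibly needing a preliminary normalization of the decomposition.
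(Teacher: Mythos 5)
Your overall plan — establish part~(1), then read off (2) and (3) — is exactly what the paper does, and your handling of (2), (3), and the parent case of (1) matches the paper's proof almost verbatim.

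The gap is the child case of (1), which you flag as ``the main obstacle'' but never close. You correctly reach $v\in\sep(t'')\setminus g(t)$ and correctly observe that $v\notin\lambda(t)$ means $t''\notin h_t(v)$, i.e.\ $v$ has no $H$-neighbor in $\comp(t'')$. At that point, however, you veer off trying to argue that $v$ ``can be safely removed from $\bag(t'')$'' or that this ``violates minimality or the vertex-cover condition at $t''$''; neither is a hypothesis of the lemma, so neither yields a contradiction. The paper's argument is one line and goes in the other direction: $v\in\sep(t'')$ means $v$ is separated from $\comp(t'')$ only through $\bag(t'')$ and, under the tacit tightness assumption that accompanies these decompositions (every vertex of $\sep(t'')$ has a neighbor in $\comp(t'')$ — see the tightness discussion preceding Lemma~\ref{lem:treedecmod} and the hypothesis of Lemma~\ref{lem:ordembalg}), $v$ \emph{does} have a neighbor in $\comp(t'')$, so $t''\in h_t(v)$ and hence $v\in\lambda(t)$, the desired contradiction. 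You actually have both halves of this contradiction in hand — ``$v$ has no $H$-neighbor in $\comp(t'')$'' and the route to ``$v$ must have one'' — but you never combine them; the adhesion bound $|\sep(t'')\setminus g(t)|\le 1$ and the vertex cover $g(t'')$ of $H[\bag(t'')]$, which you reach for, are not what makes the argument go through here.
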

\begin{proof}
Suppose $v$ appears in $\bag(t')$ for some $t'\neq t$. If $t'$ is a descendant of $t$, then $v$ has to appear in $\bag(t'')$ for some child $t''$ of $t$, which means that $v$ is in $\sep(t'')$ and hence $v\in\lambda(t)$ follows, a contradiction. If $t'$ is not a descendant of $t$, then $v$ has to appear in $\sep(t)\subseteq g(t)$, again a contradiction. Thus $v$ cannot appear in $\bag(t')$ for any $t'\neq t$, proving statement 1. Furthermore, this implies that every neighbor of $v$ in $\Hcliquex$ has to appear in $\bag(t)$. As every vertex with the same color as $v$ is a neighbor of $v$ in $\Hcliquex$, statement 2 follows. Since $H$ is a subgraph of $\Hcliquex$, it also follows that $N_H(v)$ is in $\bag(t)$. The fact that $g(t)$ is a vertex cover of $H[\bag(t)]$ further implies that every neighbor of $v$ is in $g(t)$, proving statement 3.
\end{proof}

It is not obvious that if $H$ has a guarded cutvertex decomposition with bounded guard size, then $H$ has bounded treewidth: the definition does not say anything about the size of the bags. However, it is not difficult to show that this is indeed the case, although we need to modify the decomposition to make the size of the bags bounded.
\begin{lem}\label{lem:guardedbounded}
If $H$ has a guarded cutvertex decomposition with guard size at most $c$, then $H$ has treewidth at most $c$.
\end{lem}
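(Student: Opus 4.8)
The plan is to convert the guarded cutvertex decomposition $(T,\bag,g)$ of $H$ into an ordinary tree decomposition $(T',\bag')$ of $H$ in which every bag is a guard set $g(t)$ together with at most one further vertex; since $|g(t)|\le c$ this gives $\tw H\le c$.

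First I would pin down where a vertex of $H$ can occur. For $v\in V(H)$ set $B_v=\{t\in V(T):v\in\bag(t)\}$, a subtree of $T$, and $G_v=\{t\in V(T):v\in g(t)\}\subseteq B_v$. If $v\in\bag(t)$ and $v$ also lies in the bag of the parent of $t$, then $v\in\sep(t)\subseteq g(t)$; hence the only node of $B_v$ that can contain $v$ outside its guard is the topmost node $t_v$ of the subtree $B_v$, and $B_v=G_v\cup\{t_v\}$. Writing $P_t=\bag(t)\setminus g(t)$ for the \emph{private} vertices at $t$, this says $t=t_v$ for every $v\in P_t$. Moreover $G_v$ splits into subtrees, one hanging below each child $c$ of $t_v$ with $v\in\sep(c)$, and $v$ lies in $g(\cdot)$ at every node of such a subtree. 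Crucially, for any node $t$ and child $c$ of $t$ the set $\sep(c)\cap P_t$ equals $\sep(c)\setminus g(t)$, which by the second property of a guarded cutvertex decomposition has size at most one: \emph{each child of $t$ carries at most one private vertex of $t$}.

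Then I would build $T'$: replace each node $t$ of $T$, with $P_t=\{v_1,\dots,v_m\}$ in some fixed order, by a path $s^t_0-s^t_1-\dots-s^t_m$, and set $\bag'(s^t_0)=g(t)$ and $\bag'(s^t_i)=g(t)\cup\{v_i\}$ for $i\ge 1$; each such bag has size at most $c+1$. If $c$ is a child of $t$ in $T$, attach the top $s^c_0$ of $c$'s path as a child of $s^t_i$, where $v_i$ is the unique private vertex of $t$ lying in $\sep(c)$, and as a child of $s^t_0$ if $\sep(c)\subseteq g(t)$; in either case $\sep(c)$ is contained in the bag of the attachment node, and $T'$ is a tree. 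It remains to verify that $(T',\bag')$ is a tree decomposition of $H$. Every edge $uw\in E(H)\subseteq E(\Hcliquex)$ lies in some $\bag(s)$, and since $g(s)$ is a vertex cover of $H[\bag(s)]$ at most one of $u,w$ is private at $s$, so $uw\subseteq\bag'(s^s_0)$ or $uw\subseteq\bag'(s^s_i)$ for the appropriate $i$. For connectedness, a vertex $v$ that is never private lies in the guard throughout $B_v$, hence in every node of the corresponding paths, which remain connected; a private $v_i$ with $G_{v_i}=\emptyset$ occupies only $s^{t_{v_i}}_i$; and a private $v_i$ with $G_{v_i}\ne\emptyset$ occupies $s^{t_{v_i}}_i$ together with the (connected) path-images of the subtrees of $G_{v_i}$ hanging below the children $c$ of $t_{v_i}$ with $v_i\in\sep(c)$, and by the structural fact $v_i$ is then the unique private vertex of $t_{v_i}$ in $\sep(c)$, so $s^c_0$ is attached precisely to $s^{t_{v_i}}_i$ and the occurrence set of $v_i$ is connected.

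The step I expect to be the main obstacle is exactly this connectedness check for a private vertex $v_i$ with $G_{v_i}\ne\emptyset$: passing to the smaller bags removes $v_i$ from all of its former location except the single new node $s^{t_{v_i}}_i$, and one must be sure that every subtree of $T$ in which $v_i$ still survives reattaches to that one node in $T'$. This is precisely where the second property of a guarded cutvertex decomposition is indispensable, since it forces $v_i$ to be the only private vertex of $t_{v_i}$ in each relevant separator $\sep(c)$. The remaining verifications --- covering all edges, and handling vertices that are always guards or are ``local'' private vertices with $G_v=\emptyset$ --- are routine once this point is settled.
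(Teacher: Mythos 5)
Your proof is correct and takes essentially the same approach as the paper: the paper replaces each node $t$ by a node with bag $g(t)$ together with one child per private vertex $v$ with bag $g(t)\cup\{v\}$, attaching children of $t$ via the same use of the property $|\sep(t')\setminus g(t)|\le 1$, whereas you arrange these small bags along a path instead of a star — a purely cosmetic difference. You also spell out the connectedness verification that the paper leaves as "not difficult to verify," and your identification of the second property of the guarded cutvertex decomposition as the crux is exactly right.
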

\begin{proof}
Let $(T,\bag,g)$ be a guarded cutvertex decomposition of $H$ with guard size at most $c$. We obtain a tree decomposition $(T',\bag')$ of $H$ the following way. For every node $t\in V(T)$, we introduce $t$ into $T'$ and set $\bag'(t)=g(t)$. Then for every $v\in \bag(t)\setminus g(t)$, we introduce a child $t_v$ of $t$ into $T'$ and set $\bag'(t_v)=g(t)\cup \{v\}$. If a child $w$ of $t$ in $T$ has $\sep(w)\setminus g(t)=\{v\}$ (by definition of the guarded cutvertex decomposition, there can be at most one vertex in this set), then we attach $w$ to be a child of $t_v$ in $T'$, instead of being a child of $t$.
It is not difficult to verify that the new decomposition satisfies the properties of a tree decomposition.
\end{proof}

The following lemma gives a quick sanity check: a guarded cutvertex decomposition rules out the possibility of a large matching in any color class or between any two color class. This is expected, as such matchings would make the counting problem hard.
\begin{lem}\label{lem:guardnomatching}
If $H$ has a guarded cutvertex decomposition $(T,\bag,g)$ with guard size at most $c$, then $H$ has no matching $x_1y_1$, $\dots$, $x_{c+1}y_{c+1}$ where every $x_i$ is in class $C_x$ and every $y_i$ is in class $C_y$ (possibly $C_x=C_y$).
\end{lem}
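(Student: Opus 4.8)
The plan is to exploit that, in $\Hcliquex$, every color class is a clique: since $(T,\bag)$ is a tree decomposition of $\Hcliquex$ and every clique of a graph is contained in some bag, there is a node $t_x$ with $C_x\subseteq\bag(t_x)$ and a node $t_y$ with $C_y\subseteq\bag(t_y)$; if $C_x=C_y$ we may take $t_x=t_y$. I would then distinguish two cases according to whether $t_x=t_y$.

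If $t_x=t_y=:t$, then all $2(c+1)$ endpoints of the matching lie in $\bag(t)$, so the matching is contained in $H[\bag(t)]$. By property~1 of a guarded cutvertex decomposition, $g(t)$ is a vertex cover of $H[\bag(t)]$, and a vertex cover must contain an endpoint of each of the $c+1$ pairwise vertex-disjoint matching edges; hence $|g(t)|\ge c+1$, contradicting the guard size bound.

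If $t_x\neq t_y$, I would take an edge $\{a,b\}$ on the $t_x$--$t_y$ path in $T$, naming the endpoints so that $a$ is the parent of $b$; then $\sep(b)=\bag(a)\cap\bag(b)$. Deleting this edge splits $T$ into two components; let $V_x$ be the union of the bags in the component containing $t_x$, and $V_y$ the union of the bags in the component containing $t_y$. The standard connectivity properties of tree decompositions (cf.\ \ref{li:t1}--\ref{li:t5}) give $V_x\cap V_y=\sep(b)$ and that $\Hcliquex$ has no edge with one endpoint in $V_x\setminus\sep(b)$ and the other in $V_y\setminus\sep(b)$. Since $x_i\in\bag(t_x)\subseteq V_x$ and $y_i\in\bag(t_y)\subseteq V_y$ while $x_iy_i\in E(H)\subseteq E(\Hcliquex)$, each matching edge must have an endpoint in $\sep(b)$; by vertex-disjointness this forces $|\sep(b)|\ge c+1$. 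But $\sep(b)\subseteq g(b)$ and $|g(b)|\le c$, a contradiction.

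The only point requiring a small (routine) argument is the separation claim in the second case, which is the usual fact about tree decompositions: a vertex appearing in bags on both sides of the deleted edge must, by connectedness of the subtree of bags containing it, appear in both $\bag(a)$ and $\bag(b)$, hence in $\sep(b)$; and any edge of $\Hcliquex$ lies inside a single bag and so has both endpoints on one side. I do not anticipate any real obstacle -- the whole proof is short and only reuses separation machinery already recorded in~\S\ref{sec: prelim}.
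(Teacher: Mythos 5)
Your proof is correct and follows essentially the same route as the paper's: locate each color class in a single bag (using that $\Hcliquex$ makes them cliques), then in the one-bag case contradict the vertex-cover property of $g(t)$, and in the two-bag case use a separator of size at most $c$ on the tree path and a pigeonhole argument over the $c+1$ disjoint matching edges. The paper phrases the case split on $C_x=C_y$ and uses $\sep(t_x)$ directly rather than an edge on the $t_x$--$t_y$ path, but the substance is identical.
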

\begin{proof}
As each color class is a clique in $\Hcliquex$, there is a node $t_x$ (resp., $t_y$) with $C_x\subseteq \bag(t_x)$ (resp, $C_y\subseteq \bag(t_y)$).
If $C_x=C_y$, then $x_1y_1$, $\dots$, $x_{c+1}y_{c+1}$ is a matching in $H[\bag(t_x)]$, contradicting the assumption that $g(t)$ is a vertex cover of $H[\bag(t_x)]$.
If $C_x\neq C_y$, then assume without loss of generality that $t_x$ is not an ancestor of $t_y$. Then $\sep(t_x)$ separates $\bag(t_x)\setminus \sep(t_x)$ and $\bag(t_y)\setminus \sep(t_x)$. As $|\sep(t_x)|\le c$, there is an edge $x_iy_i$ such that $x_i,y_i\not \in \sep(t_x)$. However, this means that there can be no common bag where both $x_i$ and $y_i$ can appear. 
\end{proof}

\subsubsection{Massaging a tree decomposition}
\label{sec:mass-tree-decomp}

Let $(T,\bag)$ be a tree decomposition of a graph $G$. Let
$t'\in V(T)$ be a child of $t\in V(T)$. There are certain easy
modifications that allow us to make the tree decomposition nicer and
more useful for dynamic programming. 
\begin{defn}
We say that tree decomposition $(T,\bag)$ of a graph $H$ is {\em tight} if for every $t\in V(T)$ and $v\in \sep(t)$, vertex $v$ has at least one  neighbor in $\comp(t)$.
\end{defn}

First, if the decomposition is not tight, then $\sep(t)$ could contain
vertices not in $N^G(\comp(t))$, but we may safely remove any such
vertex from $\bag(t)$. Second, it is not necessarily true that
$G[\comp(t')]$ is connected: it may contain two or more connected
components of $G-\bag(t)$. However, it is not difficult to modify the
tree decomposition in such a way to ensure that $G[\comp(t')]$ is
connected: we need to introduce restricted copies of the subtree of
$T$ rooted at $t'$, one for each component of $G-\bag(t)$ contained in
$\comp(t')$. We will need this transformation in the proof of
Theorem~\ref{cl:treedecremains}, as we can prove the bound on
$\sep(t')$ only if $G[\comp(t')]$ is connected. While these
transformations are easy to perform, in the following lemma we state
them in a very formal way. The reason is that in the proof of
Theorem~\ref{thm:treedec2} we need to verify how certain properties of
the original decomposition survive this transformation.

\begin{lem}\label{lem:treedecmod}
Let $(T^1,\bag^1)$ be a tree decomposition of a connected graph $G$ such that $\bag^1(r^1)\neq \emptyset$ for the root $r^1$ of $T^1$. We can compute in polynomial time another tree decomposition $(T^2,\bag^2)$ of $G$ and a mapping $f\colon V(T^2)\to V(T^1)$ with the following properties:
\begin{nlist}{P}
\item\label{p:treeroot} if $r^2$ is the root of $T^2$, then $f(r^2)=r^1$ is the root of $T^1$ and $\bag^2(r^2)=\bag^1(r^1)\neq \emptyset$,
\item\label{p:treesub} $\bag^2(t)\subseteq \bag^1(f(t))$ for every $t\in V(T^2)$,
\item\label{p:treeex}$(\bag^1(f(t))\setminus \bag^2(t))\cap \cone^2(t)=\emptyset$ for every $t\in V(T^2)$,
\item\label{p:treeconn} $G[\comp^2(t)]$ is connected for every $t\in V(T^2)$, and
\item\label{p:treesep} $\sep^2(t)=N^G(\comp^2(t))$ for every $t\in V(T^2)$.
\end{nlist}
\end{lem}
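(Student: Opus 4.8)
The plan is to obtain $(T^2,\bag^2)$ from $(T^1,\bag^1)$ in two stages: a \emph{splitting} stage that forces \ref{p:treeconn}, followed by a \emph{tightening} stage that forces \ref{p:treesep}. Both stages only ever remove vertices from bags, which gives \ref{p:treesub}, and both have the feature that whenever a vertex leaves a bag it simultaneously leaves every bag in the subtree below that node and never comes back; this is exactly what is needed for \ref{p:treeex}. Property \ref{p:treeroot} will hold because neither stage ever touches the root bag, and the root bag is nonempty by hypothesis.

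\emph{Splitting.} I would first build an intermediate tree decomposition $(T^*,\bag^*)$ together with a map $f^*\colon V(T^*)\to V(T^1)$ by a top-down recursion. The root $r^*$ gets $f^*(r^*)=r^1$ and $\bag^*(r^*)=\bag^1(r^1)$. Given a node $t$ already produced with $s=f^*(t)$, and a child $s_i$ of $s$ in $T^1$, recall (from \ref{li:t2} and \ref{li:t4}) that $\comp^1(s_i)=\cone^1(s_i)\setminus\bag^1(s)$, that these sets are pairwise disjoint over the children of $s$, and that each is a union of connected components of $G-\bag^1(s)$. For every connected component $C$ of $G[\comp^1(s_i)]$ I attach a child $t_{i,C}$ of $t$ with $f^*(t_{i,C})=s_i$ and $\bag^*(t_{i,C}):=(\bag^1(s_i)\cap C)\cup(\bag^1(s_i)\cap\bag^*(t))$, and I recurse into the subtree of $s_i$, keeping at each deeper level only the components that are contained in $C$ (components of $G[\comp^1(s_j)]$ refine those of $G[\comp^1(s_i)]$ whenever $s_j$ lies below $s_i$, so this is well defined). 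A routine induction then shows: $(T^*,\bag^*)$ is a tree decomposition of $G$; $\sep^*(t_{i,C})=\bag^1(s_i)\cap\bag^*(t)$ and $\comp^*(t)$ equals the component $C_t$ associated with $t$ (with the convention $C_{r^*}=V(G)$), so $G[\comp^*(t)]$ is connected for all $t$; and — the crux — $\cone^*(t)\cap\bag^1(f^*(t))=\bag^*(t)$ for every $t$. The last identity is obtained by walking a vertex $v\in\bag^1(f^*(t))\cap\cone^*(t)$ up, from a bag witnessing $v\in\cone^*(t)$ to $t$, using at each step that if $v$ lies in the original bag of the parent then it cannot have entered the child's bag through its ``$C$-part'' (that part is disjoint from the parent's original bag), hence must have entered through its ``$\bag^*(\text{parent})$-part''. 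This identity yields \ref{p:treeex} for this stage, since it forces $\bigl(\bag^1(f^*(t))\setminus\bag^*(t)\bigr)\cap\cone^*(t)=\emptyset$.

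\emph{Tightening.} Keeping the tree $T^2:=T^*$ and $f:=f^*$ fixed, I would repeatedly apply the operation: if some $t$ has a vertex $v\in\sep(t)\setminus N^G(\comp(t))$, delete $v$ from the bag of every node in the subtree rooted at $t$. Each application strictly decreases $\sum_t|\bag(t)|$, so the process terminates with a tight decomposition, which is \ref{p:treesep} once one recalls $N^G(\comp(t))\subseteq\sep(t)$ in any tree decomposition (\ref{li:t2}). Three checks make this work. First, the operation preserves the tree-decomposition axioms: the only edge that could become uncovered is one of the form $vw$ with $w\in\comp(t)$, excluded by the choice of $v$; and the occurrences of $v$ stay a connected subtree, as we delete a downward branch while retaining $v\in\bag(\operatorname{parent}(t))$. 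Second, the operation changes no set $\comp(u)$: for $u$ outside the subtree of $t$ nothing is touched; for $u$ an ancestor of $t$ the vertex $v$ still occurs at $\operatorname{parent}(t)$, so $\cone(u)$ (and $\sep(u)$, $\comp(u)$) are unchanged; and for $u$ inside the subtree of $t$ one checks that $v\in\cone(u)$ implies $v\in\sep(u)$, so deleting $v$ removes it from both $\cone(u)$ and $\sep(u)$ and leaves $\comp(u)$ intact. This is why \ref{p:treeconn} and the value of $N^G(\comp(t))$ survive. Third, the operation never removes anything from the root bag (the root has empty separator and no ancestor), preserving \ref{p:treeroot}, and it removes a vertex only from a full subtree, so \ref{p:treeex} — established via $\cone^*$ in the splitting stage — continues to hold for $\cone^2\subseteq\cone^*$. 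The whole construction is clearly polynomial time.

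\emph{Main obstacle.} The conceptual content is light; the delicate part is the bookkeeping in the splitting stage — verifying that $(T^*,\bag^*)$ really is a tree decomposition of $G$ (no vertex or edge that the subtree below $t$ is responsible for is dropped when restricting to components inside $C_t$) and, above all, proving the identity $\cone^*(t)\cap\bag^1(f^*(t))=\bag^*(t)$ on which \ref{p:treeex} rests. I expect everything else to be routine, consistent with the lemma's own description of the transformation as ``minor but tedious''.
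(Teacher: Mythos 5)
Your proposal is correct, and it follows a genuinely different organization than the paper's. The paper's proof is a single recursion that, at each level, computes the connected components $V_1,\dots,V_t$ of $G-\bag^1(r^1)$ and then restricts the relevant subtree of the decomposition to $G_i=G[V_i\cup N^G(V_i)]$ before recursing; this intersection with $V_i\cup N^G(V_i)$ achieves \ref{p:treeconn} and \ref{p:treesep} \emph{simultaneously}, so no separate tightening pass is needed. You instead factor the construction into two stages: a splitting stage that only enforces \ref{p:treeconn} (using the coarser bag $(\bag^1(s_i)\cap C)\cup(\bag^1(s_i)\cap\bag^*(t))$, which keeps the full separator inherited from the parent even when some of it is not in $N^G(C)$), followed by a generic tightening pass that prunes $\sep(t)\setminus N^G(\comp(t))$ from whole subtrees. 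Both routes are sound; the conceptual content is the same (the set of nodes of the final $T^2$ is, in both cases, indexed by pairs consisting of a node of $T^1$ and a nested connected component), and the intermediate bags of your splitting stage are strictly larger and collapse to the paper's after tightening. What your factoring buys is modularity: the tightening pass is a self-contained, reusable lemma about tree decompositions, and the splitting pass can be verified without ever reasoning about $N^G$. What it costs is a somewhat longer total argument, since one must check the tree-decomposition axioms and the invariant $\cone^*(t)\cap\bag^1(f^*(t))=\bag^*(t)$ once for $(T^*,\bag^*)$ and then separately check that tightening preserves \ref{p:treeroot}--\ref{p:treeex} (which you do, correctly, via the observation that removing a vertex from a whole subtree leaves every $\comp(u)$ untouched). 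One small remark: in the walking argument for the invariant, the cleanest phrasing is the global containment $\cone^*(t)\subseteq C_t\cup\sep^*(t)$ (proved by a one-line induction using $\sep^*(t')\subseteq\bag^*(t)$ and $C_{t'}\subseteq C_t$); combined with $C_t\cap\bag^1(f^*(t))\subseteq\bag^*(t)$ and $\sep^*(t)\subseteq\bag^*(t)$ this gives \ref{p:treeex} directly and avoids the delicate ``walk'' bookkeeping you flag.
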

\begin{proof}
  We give a recursive procedure for computing the required tree
  decomposition $(T^2,\bag^2)$. Let $r^1\in V(T^1)$ be the root of
  $T^1$. By assumption, we have $\bag^1(r^1)\neq\emptyset$.  Let
  $V_1$, $\dots$, $V_t$ be the vertex sets of components of
  $G-\bag^1(r^1)$ and let $G_i=G[V_i\cup N^G(V_i)]$. If $t=0$, then it
  is easy to see that the tree decomposition consisting only of the
  root bag $\bag^1(r_1)$ satisfies the requirements. Otherwise, for
  $1\le i \le t$, the root $r^1$ has a child $r^1_i$ in $T^1$ such
  that the vertices of $V_i$ appear only in bags of the subtree
  $T^1_i$ of $T$ rooted at $r^1_i$ (note that $r^1_i=r^1_j$ is
  possible for $i\neq j$). This gives a tree decomposition
  $(T^1_i, \bag^1_i)$ of $G_i$, where we define
  $\bag^1_i(t)=\bag^1(t)\cap (V_i\cup N^G(V_i))$ for every
  $t\in V(T_i)$. Note that $\bag^1_i(r^1_i)\neq\emptyset$: if
  $\bag^1(r^1_i)$ contains no vertex of $V_i\cup N^G(V_i)$, then this
  would imply that $V_i\neq\emptyset$ and $\bag^1(r^1)\neq\emptyset$
  are not in the same component of $G$, that is, $G$ is not connected.
  Let us recursively call the procedure on $(T^1_i,\bag^1_i)$ and let
  $(T^2_i,\bag^2_i)$ be the resulting tree decomposition of $G_i$ with
  root $r^2_i\in V(T^2_i)$ and  
  let $f_i:V(T^2_i)\to V(T^1_i)$ be the corresponding function.

  We construct tree decomposition $(T^2,\bag^2)$ the following
  way. The tree $T^2$ is constructed by taking a disjoint copy of
  every $T^2_i$, introducing a new root $r^2$, and connecting $r^2$
  with every $r^2_i$. The function $\bag^2$ is defined the obvious
  way: we let $\bag^2(r^2)=\bag(r^1)$ and $\bag^2(t)=\bag^2_i(t)$ for
  every $t\in V(T^2_i)$. To verify that $(T^2,\bag^2)$ is a tree
  decomposition, we need to check for every vertex $v$ that the bags
  containing $v$ form a connected subtree of $T^2$. If
  $v\not\in \bag^1(r^1)$, then $v$ appears only in the bags of $T^1_i$
  for some $i$ and then the statement follows from the fact that
  $(T^2_i,\bag^2_i)$ is a tree decomposition.  If $v\in \bag^1(r^1)$,
  then $v$ may appear in the bags of $T^1_i$ for more than one
  $i$. However, for each such $i$, we have that $v\in \bag^1_i(r^1_i)$
  holds, and then the fact that $(T^2_i,\bag^2_i)$ and $f_i$ satisfy
  property \ref{p:treeroot} implies that $v$ appears in
  $\bag^2_i(r^2_i)=\bag^1_i(r^1_i)$.

  Let us define $f(r^2)=r^1$ and let $f(t)=f_i(t)$ if $t\in V(T^2_i)$.
  We claim that tree decomposition $(T^2,\bag^2)$ and the function $f$
  satisfy the requirements of the lemma.  For the root $r^2$ of $T^2$,
  properties \ref{p:treeroot}--~\ref{p:treeex} hold by construction,
  as we have
  $\bag^2(r^2)=\bag(f(r^2))=\bag(r^1)\neq\emptyset$. Property
  \ref{p:treeconn} follows from the assumption that $G$ is
  connected. We have $\sep^2(r^2)=\emptyset$, satisfying property
  \ref{p:treesep}.

For a node $t\in V(T^2_i)$, we verify properties \ref{p:treesub}--\ref{p:treesep} as follows.
\begin{enumerate}
\item[\ref{p:treesub}]
  $\bag^2(t)=\bag^2_i(t)\subseteq\bag^1_i(f_i(t))\subseteq
  \bag^1(f_i(t))=\bag^1(f(t))$,
  where the first subset relation uses the assumption that
  $(T^2_i,\bag^2_i)$ satisfies property \ref{p:treesub} and the second
  subset relation follows from the definition $\bag^1_i$.
\item[~\ref{p:treeex}] We have $\bag^2(t)=\bag^2_i(t)$,
  $\cone^2(t)=\cone^2_i(t)\subseteq V_i\cup N^G(V_i)$ and
  $\bag^1_i(f_i(t))=\bag^1_i(f(t))=\bag^1(f(t))\cap (V_i\cup
  N^G(V_i))$,
  implying
  $\bag^1_i(f_i(t))\cap\cone^2(t)=\bag^1(f(t))\cap \cone^2(t)$.  Thus
  $(\bag^1(f(t))\setminus \bag^2(t))\cap \cone^2(t) =
  (\bag^1_i(f_i(t))\setminus \bag^2_i(t))\cap \cone^2_i(t)=\emptyset$,
  where the second equality uses that $(T^2_i,\bag^2_i)$ satisfies
  property \ref{p:treeex}.
\item[\ref{p:treeconn}] If $t=r^2_i$, then
  $\cone^2(t)=V_i\cup N_G(V_i)$ and $\comp^2(t)=V_i$, hence
  $G[\comp^2(t)]$ is connected by the definition of $V_i$. If $t\in V(T^2_i)\setminus \{r^2_i\}$, then $\comp^2(t)=\comp^2_i(t)$, hence the fact that $(T^2_i,\bag^2_i)$ satisfies property \ref{p:treeconn} implies that $G[\comp^2(t)]=G[\comp^2_i(t)]$ is connected.
\item[\ref{p:treesep}]  If $t= r^2_i$, then
  $\sep^2(t)=N_G(V_i)$.  If $t\in V(T^2_i)\setminus \{r^2_i\}$, then $\sep^2(t)=\sep^2_i(t)$ and $\comp^2(t)=\comp^2_i(t)$, hence the fact that $(T^2_i,\bag^2_i)$ satisfies property \ref{p:treesep} implies $\sep^2(t)=\sep^2_i(t)= N_{G_i}(\comp^2_i(t))=N_{G_i}(\comp^2(t))=N_G(\comp^2(t))$.
\end{enumerate}

To argue that this recursive procedure terminates, consider the
measure $\zeta(T^1,\bag^1)=|V(T^1)|\cdot |V(G)\setminus \bag^1(r^1)|$, where $r^1$
is the root of $T^1$. If this measure is zero, then the recursion is
terminated: $\bag^1(r^1)$ contains every vertex of $G$, hence $G-\bag^1(r^1)$
has zero components. The measure strictly decreases in each step, as
the tree decomposition has strictly fewer nodes in each recursive
call. Moreover, we claim that when processing tree decomposition
$(T^1,\bag^1)$, the sum of the measures in the recursive calls is at most
$\zeta(T^1,\bag^1)$. To see this, notice that $V(G_i)\setminus \bag^1(r^1_i)$
and $V(G_j)\setminus \bag^1(r^1_j)$ are disjoint for $i\neq j$: we have
$V(G_i)\cap V(G_j)\subseteq \bag^1(r^1)$, but a vertex $v\in \bag^1(r^1)$
cannot appear in a bag of $T^1_i\setminus \{r^1_i\}$ without appearing in
$\bag^1(r^1_i)$ as well. Therefore, we can bound the number of leafs of
the recursion tree by
$|V(T^1)|\cdot |V(G)\setminus \bag^1(r^1)|\le |V(T^1)|\cdot |V(G)|$.
\end{proof}

\subsubsection{A-paths}
If $A\subseteq V(G)$ is a set of vertices, then path $P$ is an {\em
  $A$-path} if both endpoints are in $A$ (we allow here that $P$ has
other vertices in $A$, although this will not make much difference, as
every $A$-path has a subpath whose internal vertices are not in $A$).
The following classical result shows that either there are many
vertex-disjoint $A$-paths, or they can be covered by a bounded number
of vertices.
\begin{thm}\label{thm:apath}
Given a graph $G$, a set $A\subseteq V(G)$ of vertices, and an integer $k$, in polynomial time we can find either
\begin{enumerate}
\item $k$ pairwise vertex-disjoint $A$-paths, or
\item a set $S\subseteq V(G)$ of at most $2k-2$ vertices such that
  $G-S$ has no $A$-path.
\end{enumerate}
\end{thm}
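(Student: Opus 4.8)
Theorem~\ref{thm:apath} is (a well-known form of) Gallai's theorem on $A$-paths, and the plan is to give a constructive proof by reducing to maximum matching. First I would set $B=V(G)\setminus A$ and build an auxiliary graph $\widehat G$ as follows: keep a single copy of each vertex of $A$; for each $v\in B$ take two copies $v',v''$ joined by a \emph{switch} edge $v'v''$; and for each edge $xy\in E(G)$ add the ``doubled'' edges between copies (namely $x'y'$ and $x''y''$ if $x,y\in B$; both $ax'$ and $ax''$ if $a\in A$ and $x\in B$; and $ab$ if $a,b\in A$). A short calculation shows that a packing of $m$ pairwise vertex-disjoint $A$-paths corresponds to a matching of $\widehat G$ of size exactly $|B|+m$ (a packing whose paths have $T$ interior vertices in total uses $T+m$ matched non-switch edges and leaves $|B|-T$ switches matched), and conversely that following the switch/non-switch edges of a maximum matching decodes it into a maximum packing. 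Hence $\nu_A(G):=$(maximum number of vertex-disjoint $A$-paths) equals $\nu(\widehat G)-|B|$, and I can compute a maximum packing $\mathcal P=\{P_1,\dots,P_{\nu}\}$ in polynomial time by running Edmonds' matching algorithm on $\widehat G$ and decoding.

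If $\nu\ge k$, I would simply output $k$ of the paths in $\mathcal P$, settling the first alternative. If $\nu\le k-1$, I must exhibit a set $S$ with $|S|\le 2\nu\le 2k-2$ meeting every $A$-path of $G$, i.e.\ establish $\tau_A(G)\le 2\nu_A(G)$, where $\tau_A(G)$ is the minimum size of a vertex set whose removal destroys all $A$-paths. To make this algorithmic I would take the Gallai--Edmonds decomposition of the maximum matching $M$ of $\widehat G$: an optimal Tutte set $U\subseteq V(\widehat G)$, i.e.\ one maximizing $o(\widehat G-U)-|U|$ with $o(\cdot)$ the number of odd components, is computable in polynomial time from $M$, and $\widehat G$ has matching deficiency $\mathrm{def}(\widehat G)=o(\widehat G-U)-|U|=|A|-2\nu$. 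From $U$ I would read off $S\subseteq V(G)$ by keeping the $A$-vertices of $U$ and including $v\in B$ whenever $v'\in U$ or $v''\in U$; a counting argument relating $|S|$, $|U|$, $o(\widehat G-U)$ and $\mathrm{def}(\widehat G)$ then yields $|S|\le 2\nu$, and one checks directly that an $A$-path surviving in $G-S$ would decode to an augmenting structure in $\widehat G-U$, contradicting maximality of $M$ there.

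I expect Step~3 --- projecting the matching certificate of $\widehat G$ back to a vertex set of $G$ and proving the size bound $2\nu$ and the covering property simultaneously --- to be the main obstacle. It is worth noting why the ``obvious'' candidate fails: taking $S$ to be the $2\nu$ endpoints of a maximum packing $\mathcal P$ chosen to minimize total length does \emph{not} work, because if $Q$ is an $A$-path avoiding those endpoints and $v$ is the first vertex where $Q$ re-enters $\bigcup_i V(P_i)$, then rerouting the relevant $P_i$ through the portion of $Q$ from one end to $v$ is length-nondecreasing exactly by minimality, so no contradiction is obtained and such ``escaping'' $A$-paths genuinely occur. A self-contained alternative would be an augmentation loop --- repeatedly find a shortest $A$-path in the current $G-S$, reroute some $P_i$ through part of it, and move one well-chosen vertex into $S$ --- with termination argued via a carefully designed lexicographic potential on (number of paths, total length, $\dots$); exhibiting a potential that provably strictly decreases is the delicate point, which is why routing everything through Edmonds' matching machinery is the cleaner route.
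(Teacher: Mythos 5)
The paper gives no proof of Theorem~\ref{thm:apath}; it is cited there as a classical result (it is essentially the Erd\H{o}s--P\'osa-type consequence of Gallai's 1961 min--max theorem on $A$-paths). So I can only assess your proposal on its own merits.

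Your reduction to matching is the classical route, and I checked that your auxiliary graph works: even without the cross edges $x'y''$, $x''y'$, a packing of $m$ disjoint $A$-paths encodes into a matching of size $|B|+m$ using the alternating-level scheme $a_0v_1'$, $v_1''v_2''$, $v_2'v_3'$, \dots; and conversely, projecting the non-switch edges of a matching $M$ to $G$ gives a subgraph of maximum degree $2$, and a simple degree count (split the non-switch-matched $B$-vertices by degree $0,1,2$, and the non-cycle components into $A$--$A$, $A$--$B$, $B$--$B$ paths) yields $|M|-|B|\le p-r\le p$, where $p$ is the number of $A$--$A$ paths. So $\nu(\widehat G)=|B|+\nu_A(G)$ and the decoding is genuine; Step~1 and Step~2 are fine.

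The gap you flag is real and, as written, Step~3 does not go through. Projecting the Gallai--Edmonds Tutte set $U$ of $\widehat G$ to $S\subseteq V(G)$ and hoping for $|S|\le 2\nu_A$ directly is problematic: from $o(\widehat G-U)-|U|=\mathrm{def}(\widehat G)=|A|-2\nu_A$ one gets $|U|=o(\widehat G-U)-|A|+2\nu_A$, so $|S|\le|U|\le 2\nu_A$ would require $o(\widehat G-U)\le|A|$, and it is not true in general that every odd component of $\widehat G-U$ contains an $A$-vertex (a component can consist of full pairs $\{v',v''\}$ together with an odd number of ``half'' vertices $v'$ whose partner $v''$ landed in $U$). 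Likewise ``an $A$-path surviving in $G-S$ decodes to an augmenting structure in $\widehat G-U$'' is not immediate and needs an argument. The standard way to finish is not to project $U$ but to extract from the Gallai--Edmonds decomposition of $\widehat G$ the full Gallai min--max certificate: a set $U_0\subseteq V(G)$ and a partition $U_1,\dots,U_n$ of $V(G)\setminus U_0$ into unions of components of $G-U_0$ with
\[
  \nu_A(G)=|U_0|+\sum_{i=1}^{n}\Big\lfloor \tfrac{|U_i\cap A|}{2}\Big\rfloor.
\]
From this, set $S=U_0\cup\bigcup_{i}\bigl(U_i\cap A\text{ minus one vertex}\bigr)$. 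Any $A$-path avoiding $U_0$ lies in a single $U_i$ and has both endpoints in $U_i\cap A$, of which at most one survives, so $S$ hits every $A$-path; and since $\max(c-1,0)\le 2\lfloor c/2\rfloor$ termwise,
\[
 |S|=|U_0|+\sum_i \max(|U_i\cap A|-1,0)\;\le\;|U_0|+2\sum_i\Big\lfloor\tfrac{|U_i\cap A|}{2}\Big\rfloor\;\le\;2\nu_A(G)\le 2k-2.
\]
This is the step your sketch is missing; without producing the min--max certificate (or some equivalent structural witness), the bound $|S|\le 2\nu$ does not follow from the Gallai--Edmonds data by the counting you indicate.
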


If we want to cover the $A$-paths using only vertices in $A$, then we
may need significantly more vertices: for example, this is the case if
there is a vertex $v\not\in A$ adjacent to every vertex of
$A$. However, what we can show is that selecting vertices outside $A$
can be avoided unless they are highly connected to $A$ in the
following sense.  We say that a vertex $v$ is {\em $\ell$-attached} to
a set $A$ of vertices if there is a set of $\ell$ paths of length at
least 1 connecting $v$ and $A$ such that they share only the vertex
$v$, but otherwise disjoint. Note that the definition is slightly
delicate if $v\in A$: then we need $\ell$ paths connecting $v$ to
$\ell$ other vertices of $A$. It can be tested in polynomial time if a
vertex $v$ is $\ell$-attached using simple flow/disjoint paths
algorithms.

\begin{thm}\label{thm:apath2}
Given a connected graph $G$, a set $A\subseteq V(G)$ of vertices, and integers $k,\ell$ with $\ell\ge 2k$, in polynomial time we can find either
\begin{enumerate}
\item $k$ pairwise vertex-disjoint $A$-paths , or
\item a set $A^*\subseteq A$ of at most $(2k-2)\ell$  vertices and a set
  $S^*\subseteq V(G)$ of at most $2k-2$ vertices such that
  $G-(S^*\cup A^*)$ has no $A$-path and $S^*$
  is precisely the set of vertices that are $\ell$-attached to $A$.
\end{enumerate}
\end{thm}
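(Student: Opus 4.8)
The plan is to bootstrap Theorem~\ref{thm:apath}. First I would run the algorithm of Theorem~\ref{thm:apath} on $(G,A,k)$. If it returns $k$ pairwise vertex-disjoint $A$-paths, output them (the first alternative). Otherwise it returns a set $S$ with $|S|\le 2k-2$ such that $G-S$ has no $A$-path. Let $S^*$ be the set of \emph{all} vertices that are $\ell$-attached to $A$; this is computable in polynomial time by running a disjoint-paths/flow test at each vertex. The first key claim is that $S^*\subseteq S$, so in particular $|S^*|\le 2k-2$. To see this, suppose $v\in S^*\setminus S$ and fix $\ell$ paths $P_1,\dots,P_\ell$ witnessing the $\ell$-attachment of $v$ (each of length at least $1$, pairwise meeting only in $v$, far endpoints in $A$). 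If $v\in A$, each $P_i$ is already an $A$-path, so $S$ must hit each $P_i$ in a vertex of $V(P_i)\setminus\{v\}$; these sets are pairwise disjoint, forcing $|S|\ge\ell\ge 2k$, a contradiction. If $v\notin A$, the sets $V(P_i)\setminus\{v\}$ are pairwise disjoint, so at most $|S|\le 2k-2$ of the $P_i$ meet $S$; since $\ell\ge 2k$, at least two of them, say $P_i$ and $P_j$, avoid $S$ entirely, and $P_i\cup P_j$ is then an $A$-path in $G-S$, a contradiction. Hence $S^*\subseteq S$.

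Next I would convert the remainder $S':=S\setminus S^*$ of the cover into a set $A^*\subseteq A$. Work in $G'':=G-S^*$. Since $S$ covers all $A$-paths of $G$ and $S^*\subseteq S$, every $A$-path of $G''$ meets $S'$; moreover no vertex of $G''$ is $\ell$-attached to $A$ in $G''$ (deleting vertices cannot increase attachment, and all $\ell$-attached vertices were deleted). For each $v\in S'\cap A$ put $v$ into $A^*$. For each $v\in S'\setminus A$, let $R_v$ be the connected component of $v$ in $G''-A$ and put $N_{G''}(R_v)$ into $A^*$; note $N_{G''}(R_v)\subseteq A$ and that it separates $v$ from $A$ in $G''$. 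Then $G-(S^*\cup A^*)=G''-A^*$ has no $A$-path: a hypothetical $A$-path $P$ avoids $S^*$, hence meets $S'$; it avoids $A^*\supseteq S'\cap A$, hence meets some $v\in S'\setminus A$; but then the portion of $P$ from $v$ to a nearest endpoint of $P$ is a $v$-to-$A$ path in $G''-A^*$, contradicting that $N_{G''}(R_v)\subseteq A^*$ separates $v$ from $A$ in $G''$. Since $S^*$ is by definition exactly the set of $\ell$-attached vertices and $|S^*|\le 2k-2$, it remains only to bound $|A^*|$; it suffices to show $|N_{G''}(R_v)|\le\ell$ for the relevant components $R_v$, since then $|A^*|\le|S'|\cdot\ell\le(2k-2)\ell$. (One can also package this as follows: form $G'''$ from $G''$ by contracting each component $R$ of $G''-A$ to a single vertex $z_R$ joined to $N_{G''}(R)$, observe that $A$-paths and, crucially, their disjointness are preserved, apply Theorem~\ref{thm:apath} to $G'''$ to get a cover $Y'$ of size $\le 2k-2$, and set $A^*=(Y'\cap A)\cup\bigcup_{z_R\in Y'}N_{G''}(R)$; the bound again reduces to $|N_{G''}(R)|\le\ell$.)

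This last inequality is the heart of the matter and the step I expect to be the main obstacle. The claim to prove is: if $R$ is a component of $G''-A$ with $|N_{G''}(R)|\ge\ell$, then $G''$ contains $k$ pairwise vertex-disjoint $A$-paths whose internal vertices lie in $R$ — contradicting that $G$ has none. The intuition is that $\ell\ge 2k$ boundary terminals attached to a connected set either admit $k$ vertex-disjoint $A$-paths, or are all funnelled through a single vertex of $R$, which would then have $\ell$ internally disjoint paths to $A$, i.e.\ be $\ell$-attached — impossible in $G''$. To make this precise I would first handle edges inside $N_{G''}(R)$ (if $k$ of them are independent we are already in the first alternative; otherwise a vertex cover of size $\le 2k-2$ removes all of them), then apply Theorem~\ref{thm:apath} to $G''[R\cup N_{G''}(R)]$ with terminal set $N_{G''}(R)$: either it produces $k$ disjoint $A$-paths, or it produces a cover $Z$ of size $\le 2k-2$, and one analyses the components of $G''[R\cup N_{G''}(R)]-Z$, using that each such component holds at most one surviving terminal and that any vertex of $R$ adjacent to $\ell$ terminals (or with $\ell$ internally disjoint paths to them) is $\ell$-attached, to conclude the terminal count is at most about $\ell$. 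The genuinely delicate point is tracking the constants so the final estimate is $(2k-2)\ell$ rather than a larger polynomial in $k$ and $\ell$; here the hypothesis $\ell\ge 2k$ is exactly what converts ``$\ge\ell$ boundary terminals on a connected set'' into ``$\ge k$ disjoint $A$-paths or an $\ell$-attached vertex''.

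Finally, every step is constructive and polynomial: Theorem~\ref{thm:apath} runs in polynomial time, $\ell$-attachment is testable by a flow computation, and $S^*$, the components $R_v$, and their neighbourhoods are computed directly; the crux appears only in the correctness analysis of the size bound, not as an algorithmic step. So the algorithm is: run Theorem~\ref{thm:apath}; output its paths if any, and otherwise output the pair $(S^*,A^*)$ constructed above, which satisfies all required properties in polynomial time.
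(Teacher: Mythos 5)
You start exactly as the paper does: run the algorithm of Theorem~\ref{thm:apath}, either get the $k$ paths, or get a cover $S$ with $|S|\le 2k-2$; then define $S^*$ as the set of \emph{all} $\ell$-attached vertices and argue $S^*\subseteq S$ (your case analysis for $v\in A$ and $v\notin A$ is fine, and it is essentially the paper's argument that ``$\ell$-attached implies in $S$,'' just phrased contrapositively). Up to this point you and the paper coincide.

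The divergence, and the gap, is in how you build $A^*$. You propose $A^*$ as the union of the boundaries $N_{G''}(R_v)$ of the components $R_v$ of $G''-A$ that contain a vertex of $S'=S\setminus S^*$, and you reduce the size bound to the claim $|N_{G''}(R_v)|\le\ell$. That claim is false. Take $A=\{a_1,\dots,a_{2m}\}$ and $R=\{c_1,c_2\}$ with $c_1$ adjacent to $a_1,\dots,a_m$, $c_2$ adjacent to $a_{m+1},\dots,a_{2m}$, and $c_1c_2\in E$; no edges inside $A$. With $k=3$, the max packing of $A$-paths is $2<k$, and $S=\{c_1,c_2\}$. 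Each $c_i$ is $(m+1)$-attached, so taking $\ell=m+2$ (which satisfies $\ell\ge 2k$ for $m\ge 4$) gives $S^*=\emptyset$, $G''=G$, $S'=\{c_1,c_2\}$, and a single component $R=\{c_1,c_2\}$ of $G''-A$ with $|N_{G''}(R)|=2m>\ell$. In general $|N_{G''}(R)|$ can only be bounded by roughly $k\ell$ (the number of disjoint $A$-paths through $R$ times the attachment bound), not by $\ell$, and the ``alternative packaging'' through the contracted graph $G'''$ inherits the same issue. The intuition you give — ``either $k$ disjoint $A$-paths, or all terminals funnelled through a single $\ell$-attached vertex'' — misses the possibility of a few vertices (not just one) acting as funnels inside a single component, each with fewer than $\ell$ attachments. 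You flag this step yourself as ``the main obstacle,'' and it is a genuine one; the total bound $(2k-2)\ell$ does not fall out of this decomposition without a more global accounting.

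The paper avoids this entirely with a local, per-vertex charging scheme. Say $v\in S$ \emph{sees} $t\in A$ if $t\neq v$ and there is a $v$–$t$ path meeting $S$ only at $v$. The key observation is: if $v\in S$ sees $\ell+1$ (in fact $\ell$) vertices of $A$, then $v$ is $\ell$-attached. Indeed, take the $\ell$ ``seeing'' paths; if two of them share a vertex other than $v$, their union minus $v$ is connected, avoids $S$, and contains two distinct $A$-vertices, so it contains an $A$-path in $G-S$ — contradicting that $S$ covers all $A$-paths. Hence the paths are internally disjoint and witness $\ell$-attachment. Therefore every $v\in S\setminus S^*$ sees at most $\ell$ vertices of $A$. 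Setting $A^*$ to be the set of vertices of $A$ seen by some member of $S\setminus S^*$ gives $|A^*|\le |S\setminus S^*|\cdot\ell\le (2k-2)\ell$ immediately, and the hitting property is also immediate: for any $A$-path $P$ with endpoints $t_1,t_2$ and no internal $A$-vertex, the first $S$-vertex $v$ encountered from $t_1$ sees $t_1$; if $v$ is $\ell$-attached then $v\in S^*$, otherwise $t_1\in A^*$. In my two-center example, $c_1$ sees only $a_1,\dots,a_m$ (the path $c_1c_2a_{m+1}$ meets $S$ at $c_2$ too) and $c_2$ sees only $a_{m+1},\dots,a_{2m}$, so the ``sees'' relation correctly charges at most $\ell$ vertices to each $c_i$ even though the single component $R$ has $2m>\ell$ boundary vertices. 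This per-member-of-$S$ charging, rather than a per-component boundary bound, is the idea your proposal is missing.
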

\begin{proof}
  Let us first invoke the algorithm of Theorem~\ref{thm:apath}. If it
  returns $k$ pairwise vertex-disjoint $A$-paths, then we are
  done. Otherwise, let $S$ be a set of at most $2k-2$ vertices hitting
  every $A$-path. We say that a vertex $v\in S$ {\em sees} vertex
  $t\in A$ if $v\neq t$ and there is a $v-t$ path that intersect $S$
  only in $v$ (it is possible that $v=t$).  We claim that if $v$ sees
  at least $\ell+1$ vertices of $A$, then $v$ is $\ell$-attached to
  $A$. If $v$ sees at least $\ell+1$ vertices of $A$, then at least
  $\ell$ of them are distinct from $v$. Suppose that $v$ sees vertices
  $t_1$, $\dots$, $t_\ell$ of $A\setminus \{v\}$, that is, for every
  $1\le i \le \ell$, there is a $v-t_i$ path $P_i$ of length at least
  1 that intersects $S$ only in $v$. If these paths share only the
  vertex $v$, then $v$ is $\ell$-attached to $A$. Otherwise, suppose
  that $P_i$ and $P_j$ share a vertex different from $v$. Then
  $(V(P_i)\cup V(P_j))\setminus \{v\}$ contains no vertex of $S$ and
  induces a connected graph in $G$ connecting two distinct vertices
  $t_i$ and $t_j$ of $A$, contradicting the assumption that $S$ hits
  every $A$-path.

  Let $S^*\subseteq S$ be the set of vertices that are $\ell$-attached to $A$. By the argument in the previous paragraph, every in $|S\setminus S^*|$ see at most $\ell$ vertices of $A$. Let $A^*\subseteq A$ be the set of
  vertices seen by $S\setminus S^*$. We clearly have $|S^*|\le 2k-2$
  and $|A^*|\le |S\setminus S^*|\ell \le (2k-2)\ell$.

  We claim that $S^*\cup A^*$ hits every $A$-path $P$. Let
  $t_1,t_2\in A$ be the endpoints of $P$; we may assume that $P$
  contains no other vertex of $A$. Let $v$ be the first vertex of $P$
  in $S$ when going from $t_1$ to $t_2$ (it is possible that $v=t_1$
  or $v=t_2$). This means that $v$ sees $t_1$. If $v$ sees at least
  $\ell+1$ vertices of $A$ (that is, at least $\ell$ vertices other than
  $t_1$), then, as we have seen above, $v$ is $\ell$-attached to $A$,
  hence $v\in S^*$ and we are done. Otherwise, $t_1$ is one of the at
  most $\ell$ vertices seen by $v\in S\setminus S^*$, hence
  $t_1\in A^*$ and we are done again.

  Finally, we need to show that $S^*$ is precisely the set of vertices
  $\ell$-attached to $A$. Every vertex is in $S^*$ is $\ell$-attached
  to $A$ by construction, hence we need to show only the converse: if
  $v$ is $\ell$-attached to $A$, then $v\in S^*$. Suppose that $P_1$,
  $\dots$, $P_\ell$ are $v-A$ paths of length at least 1, sharing only
  vertex $v$. If $v\in S$, then $v$ was introduced into $S^*$. If
  $v\not\in S$, then $\ell\ge 2k \ge |S|+2$ implies that there are two
  paths $P_i$ and $P_j$ disjoint from $S$. But the these two paths
  show that there is an $A$-path in $G-S$, a contradiction.
\end{proof}

\subsubsection{Constructing the decomposition}

The following lemma contains the main technical part of the section, where we exploit the lack of $c$-flowers to constructing a certain kind of decomposition.

\begin{lem}\label{lem:treedec1}
Let $H$ be a connected colored graph such that 
\begin{itemize}
\item $\Hcontr$ has treewidth at most $w$, and
\item $H$ has no $c$-flower centered at any of the color classes.
\end{itemize}
Then we can compute in polynomial time a tree decomposition $(T^1,\bag^1)$ of $\Hcliquex$ and a
guard set $g^1(t)$ with $g^1(t)\subseteq \bag^1(t)$
of size $O((c+w)^2w^2)$ for every $t\in V(T^1)$ such that the following
holds:
\begin{enumerate}
\item for every $t\in V(T^1)$, $g^1(t)$ is a vertex cover of $H[\bag^1(t)]$.
\item for every $t\in V(T^1)$ and every component $C$ of $\Hcliquex[\cone^1(t)]-\bag^1(t)$, it holds that $|N^{\Hcliquex}(C)\setminus g^1(t)|\le 1$.
\end{enumerate}
Furthermore, every bag $\bag^1(t)$ contains vertices from $O((c+w)w)$ colors.
\end{lem}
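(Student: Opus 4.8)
The plan is to take a width-$w$ tree decomposition of $\Hcontr$ and ``blow up'' each contracted color-vertex into its color class, obtaining a tree decomposition of $\Hcliquex$, and then to refine it so that the guards can be kept small; flower-freeness will enter only through the $A$-path machinery of Theorem~\ref{thm:apath2}. Concretely, I would first compute a tree decomposition $(T_0,\beta_0)$ of $\Hcontr$ of width $\le w$ in polynomial time (e.g.\ via Bodlaender~\cite{bodlaender1996linear}); write $I(t_0)$ for the set of colors appearing in $\beta_0(t_0)$, so $|I(t_0)|\le w+1$. Replacing each color-vertex $v_i\in\beta_0(t_0)$ by the whole class $C_i$ yields a tree decomposition of $\Hcliquex$: every clique $C_i$ lies inside each such bag, every $H$-edge between classes $i$ and $j$ survives because $(T_0,\beta_0)$ covered the edge $v_iv_j$ of $\Hcontr$, and the bags containing a fixed vertex of $C_i$ form the subtree for $v_i$. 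This already gives bags with $\le w+1$ colors, but not yet small guards, since condition~2 can fail badly; the refinement below fixes that.

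Next I would record the per-class structure forced by flower-freeness. For each color class $C_i$, the graph $\Hclique{i}$ is connected (a supergraph of the connected graph $H$), and a $c$-flower centered at $i$ is exactly a family of $c$ vertex-disjoint $C_i$-paths in $\Hclique{i}$; by hypothesis there is none. Theorem~\ref{thm:apath2} with $G=\Hclique{i}$, $A=C_i$, $k=c$, $\ell=2c$ then yields, in polynomial time, sets $A^*_i\subseteq C_i$ with $|A^*_i|\le 2c(2c-2)$ and $S^*_i\subseteq V(H)$ with $|S^*_i|\le 2c-2$ such that $\Hclique{i}-(A^*_i\cup S^*_i)$ has no $C_i$-path, i.e.\ every component of $\Hclique{i}-(A^*_i\cup S^*_i)$ meets $C_i$ in at most one vertex. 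Setting $Z_i=A^*_i\cup S^*_i$, we have $|Z_i|=O(c^2)$ and $Z_i$ touches $O(c)$ colors. The same flower/matching interchange bounds the matchings a guard must cover: a size-$s$ matching inside $C_i$ is $s$ disjoint single-edge $C_i$-paths, and a size-$s$ matching $x_1y_1,\dots,x_sy_s$ between $C_i$ and $C_j$ yields $\lfloor s/2\rfloor$ disjoint $C_i$-paths $x_1y_1y_2x_2,\,x_3y_3y_4x_4,\dots$ (using clique edges $y_{2p-1}y_{2p}$ of $C_j$, present in $\Hclique{i}$), so such matchings have size $O(c)$ and $H$ restricted to finitely many classes has vertex-cover number $O(c)$ times the square of the number of classes.

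Then I would refine the naive lift. First massage the $\Hcontr$-decomposition via Lemma~\ref{lem:treedecmod} so that each component below a node is connected, then replace each lifted bag by a short path of bags that processes the colors of $I(t_0)$ one at a time, re-attaching each original subtree below the chain node corresponding to the deepest color of $I(t_0)$ occurring in that subtree's $\Hcontr$-separator, and keeping each full class $C_i$ on the chain only as far as connectivity forces. In every resulting bag I would set the guard $g^1(t)$ to be $\bigcup_{i\in I(t_0)}Z_i$ together with a minimum vertex cover of the bipartite $H$-graph between each pair of classes of $I(t_0)$; all these vertices lie in the bag by construction. Then $g^1(t)$ is a vertex cover of $H[\bag^1(t)]$ (intra-class edges of $C_i$ are single-edge $C_i$-paths, hence hit by $Z_i$; inter-class edges are hit by the bipartite covers), $|g^1(t)| = O(wc^2) + O(w^2c) = O((c+w)^2 w^2)$, and each bag sees the $O(w)$ colors of $I(t_0)$ plus the $O(wc)$ colors touched by the $Z_i$'s, i.e.\ $O((c+w)w)$ colors. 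For condition~2: if $C$ is a component of $\Hcliquex[\cone^1(t)]-\bag^1(t)$ and $v\in N^{\Hcliquex}(C)\setminus g^1(t)$, then $v$ is not joined to $C$ by a clique edge (that would put the whole class of $v$ into $\bag^1(t)$, contradicting $C\cap\bag^1(t)=\emptyset$), so $v$ lies in a fully-present class $C_i$ and is joined to $C$ by an $H$-edge; a second neighbour $v'\in C_i\setminus Z_i$ would give a $C_i$-path from $v$ to $v'$ through $C$ inside $\Hclique{i}-Z_i$ (its internal vertices lie in $C$, which avoids $Z_i\subseteq\bag^1(t)$, and its edges are $H$-edges or clique edges of classes $\ne i$, all present in $\Hclique{i}$), contradicting the choice of $Z_i$; and the color-by-color processing plus the re-attachment rule guarantees $C$ hangs on a single such class $C_i$, giving $|N^{\Hcliquex}(C)\setminus g^1(t)|\le 1$.

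The hard part will be exactly that last point. An $\Hcontr$-bag, and an $\Hcontr$-separator, carries up to $w+1$ colors, so a priori a component living below a node can attach to non-guard vertices of several of those classes at once; making the per-bag ``sequentialization'' and the re-routing of the original subtrees precise enough that each such component is pinned to one color class --- while simultaneously keeping the guard sizes and color counts inside the stated polynomial bounds, and re-verifying that the refined object is still a tree decomposition of $\Hcliquex$ (connectivity of every vertex's subtree, presence of every clique and every inter-class edge in some bag) --- is where the genuine work sits; everything else is the clean $C_i$-path argument above plus routine bookkeeping. This is also the place that forces us to use that $S^*_i$ is \emph{precisely} the set of $2c$-attached vertices rather than an arbitrary hitting set, since an arbitrary hitting set could spread over uncontrollably many color classes and would destroy the $O((c+w)w)$ bound on the number of colors per bag.
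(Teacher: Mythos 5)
Your high-level strategy --- lift a width-$w$ decomposition of $\Hcontr$ to a decomposition of $\Hcliquex$, extract hitting sets via Theorem~\ref{thm:apath2}, and lean on flower-freeness to bound them --- matches the paper's plan, but the place where you apply Theorem~\ref{thm:apath2} is genuinely different, and this difference is not innocuous: it creates a gap that the acknowledged ``hard part'' of your proposal would then have to carry, and which the paper's approach avoids entirely.

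You apply Theorem~\ref{thm:apath2} once per color class, in $\Hclique{i}$ with $A=C_i$, obtaining $Z_i=A^*_i\cup S^*_i$, and then assert that ``all these vertices lie in the bag by construction.'' This is false for $S^*_i$: its members are arbitrary vertices $\ell$-attached to $C_i$ in $\Hclique{i}$, which may live in classes $j\not\in I(t_0)$, hence nowhere near the lifted bag. To fix this you must enlarge the bag, and then the subtree of bags containing such a vertex $u$ (the subtree of the $\Hcontr$-node $v_j$) can be disjoint from the subtree of $v_i$-bags into which you are injecting $u$; the result may fail to be a tree decomposition of $\Hcliquex$ at all. This exact difficulty is what the paper handles by applying Theorem~\ref{thm:apath2} \emph{once per node} $t$, to the graph $\Hclique{\contr{\bag}(t)}$ with $A=\bag^0(t)$: the resulting $S^*_t$ is node-local, and Claim~\ref{cl:childconnected} shows that being $\ell$-attached to $\bag^0(t)$ in $\Hclique{\contr{\bag}(t)}$ implies being $\ell$-attached to $\bag^0(t')$ in $\Hclique{\contr{\bag}(t')}$ for the child $t'$ whose cone contains the vertex, so the added vertices propagate downward and the subtree property survives. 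Your per-class $Z_i$ has no such monotonicity, and I don't see how to restore it without essentially switching to the per-node application.

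A consequence is that your ``sequentialization'' refinement is both unworked (as you say) and unnecessary in the paper's route. The paper never needs a component to hang off a single \emph{color class}: it only needs $|N^{\Hcliquex}(C)\setminus g^1(t)|\le 1$, and this falls out directly from the hitting-set property, because two such neighbours $x,y$ would yield a $\bag^0(t)$-path in $\Hclique{\contr{\bag}(t)}$ that dodges $A^*_t\cup S^*_t$. The combinatorial input this requires is the paper's Claim~\ref{cl:notmanypaths} --- that $\Hclique{\contr{\bag}(t)}$ cannot have $(2c+w)(w+1)$ disjoint $\bag^0(t)$-paths --- whose proof has to combine paths with endpoints in different classes of $\contr{\bag}(t)$ via a parity/majority argument; your ``matching interchange'' paragraph has the germ of this for intra-class and bipartite matchings, but not the general statement. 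So the honest summary is: you identified the right tools and the right obstruction (flowers/$A$-paths), but placed the $A$-path extraction at the wrong granularity, which creates a tree-decomposition-consistency problem you don't resolve and pushes you towards a per-color bag surgery that the paper shows is avoidable.
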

\begin{proof}
It will be convenient to extend the definition of $\Hclique{i}$ the following way: for a set $S$ of colors, $\Hclique{S}$ is the graph $H$ with every color class $C_i$ with $i\neq S$ made a clique.
  Consider first a tree decomposition $(\contr{T},\contr{\bag})$ of
  $\Hcontr$ of width $w$, that is, $|\contr{\bag}(t)|\le w+1$ for every
  $t\in V(\contr{T})$. We can then define a tree decomposition
  $(T^0,\bag^0)$ of $\Hcliquex$ in a natural way: let $T^0=\contr{T}$
  and let $\bag^0(t)=\bigcup_{i\in \contr{\bag}(t)}C_i$. Note that
  this decomposition has the property that every bag $\bag^0(t)$
  contains vertices from at most $w+1$ color classes.

Let $k:=(2c+w)(w+1)$.
\begin{claim}\label{cl:notmanypaths}
  For every $t\in V(T^0)$, graph $\Hclique{\contr{\bag}(t)}$ has no $k$ pairwise
  vertex-disjoint $\bag^0(t)$-paths.
\end{claim}
\begin{proof}
  Suppose that there is a collection of $k$ pairwise vertex-disjoint
  $\bag^0(t)$-paths in $\Hclique{\contr{\bag}(t)}$. Each endpoint of
  such a path is in class $C_i$ for some $i\in \contr{\bag}(t)$ and
  hence there has to be an $i\in \contr{\bag}(t)$ such that at least
  $2c+w$ of these paths have at least one endpoint in $C_i$. If both
  endpoints of a path are in $C_i$, then it is a $C_i$-path in
  $\Hclique{\contr{\bag}(t)}$, and hence also in its supergraph
  $\Hclique{i}$.  If there are two paths such that each of them has an
  endpoint in $C_i$ and an endpoint in $C_j$ for some
  $j\in \contr{\bag}(t)\setminus \{i\}$, then they together form a
  $C_i$-path in $\Hclique{i}$ (as $C_j$ is a clique in
  $\Hclique{i}$). For every $j\in \contr{\bag}(t)\setminus \{i\}$, if
  the collection contains an odd number of paths with one endpoint in
  $C_i$ and the other endpoint in $C_j$, then let us throw away one
  such path; this way we are throwing away at most $w$ paths and hence
  at least $2c$ paths with at least one endpoint in $C_i$ remain. The
  paths with both endpoints in $C_i$ are $C_i$-paths and the paths
  with an endpoint in $C_i$ and an endpoint in $C_j$ can be matched up
  to form $C_i$-paths. Therefore, the $2c$ remaining paths form at
  least $c$ vertex-disjoint $C_i$-paths in $\Hclique{i}$,
  contradicting the assumption that there is no $c$-flower centered at
  class $C_i$.
\end{proof}

We define a new tree decomposition $(T^1,\bag^1)$ of $\Hcliquex$ the
following way. Let $T^1=T^0=\contr{T}$.  Let us use the algorithm of
Theorem~\ref{thm:apath2} on the graph $\Hclique{\contr{\bag}(t)}$, set
$\bag^0(t)$, and integers $k=(2c+w)(w+1)$ and $\ell=2k$.  By
Claim~\ref{cl:notmanypaths}, the algorithm cannot return a set of $k$ pairwise vertex-disjoint
$\bag^0(t)$-paths, hence we obtain a set $A^*_t\subseteq \bag^0(t)$ of size at most $(2k-2)\ell$
and the set $S^*_t$ of all the at most $2k$ vertices $\ell$-attached
to $\bag^0(t)$ in $\Hclique{\contr{\bag}(t)}$. We define
$\bag^1(t)=\bag^0(t)\cup (S^*_t\cap\cone^0(t))$ and
$g^1(t)=(S^*_t\cap \cone^0(t))\cup A^*_t$; note that we have
$|g^1(t)|\le 2k+(2k-2)\ell=O(k^2)=(c+w)^2w^2$, as required.
Moreover, as we add at most $2k$ vertices to $\bag^0(t)$, it follows that $\bag^1(t)$ contains vertices from at most $w+2k+1=O((c+w)w)$ colors classes.

We need to prove that $(T^1,\bag^1)$ is a tree decomposition satisfying the requirements. To show that each vertex corresponds to a connected subtree in $(T^1,\bag^1)$, we need the following claim.
\begin{claim}\label{cl:childconnected}
  If $t'$ is a child of $t$ in $T^1$ and $v\in \cone^0(t')$ is
  $\ell$-attached to $\bag^0(t)$ in $\Hclique{\contr{\bag}(t)}$, then $v\in \bag^1(t')$.
\end{claim}
\begin{proof}
  If $v\in \bag^0(t')$, then $v\in \bag^1(t')$ and we are done. Let us
  suppose that $v\not\in \bag^0(t')$.  Let $P_0$, $\dots$, $P_\ell$ be
  $v-\bag^0(t)$ paths in $\Hclique{\contr{\bag}(t)}$ that share only
  $v$. Each path $P_j$ contains a vertex of $\bag^0(t')$; let $u_j$ be
  the vertex of $P_j$ that is in $\bag^0(t')$ and closest to $v$ on
  $P_j$.  Let $P'_j$ be the subpath of $P_j$ from $v$ to $u_j$.  Note
  that path $P'_j$ is of length at least one: $v\not\in \bag^0(t')$ by
  assumption, hence $v\neq u_j$.  We claim that $P'_j$ is a path not
  only in $\Hclique{\contr{\bag}(t)}$, but also in
  $\Hclique{\contr{\bag}(t')}$. If an edge appears in $\Hclique{\contr{\bag}(t)}$, but not in $\Hclique{\contr{\bag}(t')}$, then it has to connect two vertices of $C_i$ for some $i\in \contr{\bag}(t')$. However, 
  every vertex of $C_i$ is in $\bag^0(t')$ and $P'_j$ contains a only single vertex of $\bag^0(t')$, hence it cannot use such an edge. Thus the paths $P'_1$, $\dots$, $P'_\ell$ remain paths in $\Hclique{\contr{\bag}(t')}$ and they show that $v$
  is $\ell$-attached to $\bag^0(t')$ in $\Hclique{\contr{\bag}(t')}$. This means that $v$ (which is in $\cone^0(t')$ by assumption) appeared in the set $S^*_{t'}$ and hence it was introduced into
  $\bag^1(t')$.  \end{proof}

\begin{claim}\label{cl:treedecremains}
$(T^1,\bag^1)$ is a tree decomposition of $\Hcliquex$.
\end{claim}
\begin{proof}
  We need to show that the subset
  $V_v=\{t\in V(T^1)\mid v\in \bag^1(t)\}$ induces a connected subtree
  of $T^1$ for every $v\in V(H)$. Suppose that $v\in \bag^1(t)\setminus \bag^0(t)$, that is,
  $v$ is a vertex newly introduced into $\bag^1(t)$.  By the definition
  of $\bag^1(t)$, this is only possible if $v\in \cone^0(t)$ and $v$ is
  $\ell$-attached to $\bag^0(t)$ in $\Hclique{\contr{\bag}(t)}$.  As
  $v\not\in \bag^0(t)$, this also means that $v\in \cone^0(t')$ for some
  child $t'$ of $t$. Then by Claim~\ref{cl:childconnected},
  $v\in\bag^1(t')$ also holds.  Thus we have shown that
  $v\in \bag^1(t)\setminus \bag^0(t)$ implies that $v\in \bag^1(t')$ for a
  child $t'$ of $t$. This implies that $V_v$ is connected.
  \end{proof}

Now let us verify that $(T^1,\bag^1)$ satisfies the two
properties required by the lemma. Let $x,y\in \bag^1(t)$ be two
adjacent vertices in $H$ for some $t\in i\in \contr{\bag}(t)$ and
suppose that none of them is in $g^1(t)$. Then $x,y\in \bag^0(t)$ and
$xy$ is an edge of $\Hclique{\contr{\bag}(t)}$, forming a
$\bag^0(t)$-path of length 1 in
$\Hclique{\contr{\bag}(t)}$. However, $g^1(t)=S^*_t\cup A^*_t$
covers every $\bag^0(t)$-path in $\Hclique{\contr{\bag}(t)}$, a contradiction.

For the second property, let $C$ be a connected component of
$G[\cone^1(t)\setminus\bag^1(t)]$. A $(T^1,\bag^1)$ is a tree
decomposition, we have $C\subseteq \cone^1(t')$ for some child $t'$ of
$t$. Let $x,y\in N_{\Hcliquex}[C]\setminus g^1(t)$ be two distinct
vertices. Observe that
$x,y\in \bag^1(t)\setminus g^1(t)\subseteq \bag^0(t)$ and $C$ is
disjoint from $\bag^1(t)\supseteq g^1(t)$. Thus there is an $x-y$ path
$P$ in $\Hcliquex$ with internal vertices in $C$ (and having length at
least 2).  We claim that $P$ is a path also in
$\Hclique{\contr{\bag}(t)}$. It is not possible that $P$ contains an
edge $xy$ of $\Hcliquex$ that does not appear in
$\Hclique{\contr{\bag}(t)}$: if $x,y\in C_i$ for some
$i\in \contr{\bag}(t)$, then $x,y\in \bag^1(t)$ and we have that $C$ is disjoint
from $\bag^1(t)$. Thus $P$ is a $\bag^0(t)$-path in
$\Hclique{\contr{\bag}(t)}$, but every such path is covered by
$g^1(t)=A^*_i\cup S^*_i\subseteq \bag^1(t)$, which is disjoint from $P$,
a contradiction.
\end{proof}

We are now ready to prove Theorem~\ref{thm:treedec2}.  The guarded cutvertex decomposition we need can be obtained by invoking
Lemma~\ref{lem:treedecmod} on the tree decomposition $(T^1,\beta^1)$
produced by Lemma~\ref{lem:treedec1}.

\begin{proof}[Proof (of Theorem~\ref{thm:treedec2})]
  Let us compute first the tree decomposition $(T^1,\bag^1)$ of
  $\Hcliquex$ given by Lemma~\ref{lem:treedec1} and then use
  Lemma~\ref{lem:treedecmod} to obtain the tree decomposition
  $(T^2,\bag^2)$ and the mapping $f: V(T^2)\to V(T^1)$. For every
  $t\in V(T^2)$, we define $g^2(t)=g^1(f(t))\cap \bag^2(t)$. We claim that
  $(T^2,\bag^2)$ and $g^2$ satisfy the required properties:
\begin{enumerate}
\item for every $t\in V(T^2)$, $g^2(t)$ is a vertex cover of $H[\bag^2(t)]$.
\item for every $t\in V(T^2)$ and every child $t'$ of $t$, we have $|\sep^2(t')\setminus g^2(t)|\le 1$.
\end{enumerate}

  For the first property, Lemma~\ref{lem:treedec1} implies that $g^1(t)$ is a vertex cover of
  $H[\bag^1(t)]$ for every $t\in V(T)$. Thus for every $t\in V(T^2)$,
  the subset $g^1(f(t))\subseteq \bag^1(f(t))$ is a vertex cover of
  $H[\bag^1(f(t))]$. As $\bag^2(t)\subseteq \bag^1(f(t))$ by \ref{p:treesub}, it follows
  that $g^2(t)=g^1(f(t))\cap \bag^2(t)$ is a vertex cover of
  $H[\bag^1(f(t))\cap \bag^2(t)]=H[\bag^2(t)]$, as required

  For the second property, consider a child $t'$ of $t$.  By
  \ref{p:treeconn}, we have that $C:=\Hcliquex[\comp^2(t')]$ is
  connected, that is, $C$ is a connected component of
  $\Hcliquex-\bag^2(t)$ with $V(C)\subseteq \cone^2(t)$.  Property
  \ref{p:treesub} and the definition of $\cone$ imply
  $\cone^2(t)\subseteq \cone^1(f(t))$, hence
  $V(C)\subseteq \cone^1(f(t))$ also holds.  By \ref{p:treeex}, we
  have that $\bag^1(f(t))\setminus \bag^2(t)$ is disjoint from $V(C)$,
  hence not only $\bag^2(t)$, but its superset $\bag^1(t)$ is also
  disjoint from $V(C)$. Therefore, $C$ is also a connected component
  of $\Hcliquex-\bag^1(f(t))$ and since it is in $\cone^1(f(t))$, it
  is a connected component of
  $\Hcliquex[\cone^1(f(t))]-\bag^1(f(t))$. Now
  Lemma~\ref{lem:treedec1} implies that
  $|N_{\Hcliquex}(C)\setminus g^1(f(t))|\le 1$.  By \ref{p:treesep},
  we have $\sep^2(t')= N_{\Hcliquex}(C)$. As
  $\sep^2(t')\subseteq \bag^2(t)$ by definition and
  $g^1(f(t))\setminus g^2(t)$ is disjoint from $\bag^2(t)$ by the way
  we defined $g^2(t)$, we have
\[
|\sep^2(t')\setminus g^2(t)|=|\sep^2(t')\setminus g^1(f(t))|=|N_{\Hcliquex}(C)\setminus g^1(f(t)|\le 1,
\]
 what we had to show for the second property.
\end{proof}

\subsubsection{Proof of Theorem~\ref{thm:spasmboundchar}}

We prove Theorem~\ref{thm:spasmboundchar} by showing three implications. The most substantial one is the (2) $\Rightarrow$ (3) implication, which was already shown by Theorem~\ref{thm:treedec2}. The other two implications, (3) $\Rightarrow$ (1) and (1) $\Rightarrow$ (2), are much more straightforward.

\begin{proof}[Proof (of Theorem~\ref{thm:spasmboundchar})]
(1) $\Rightarrow$ (2). We prove the contrapositive: suppose that (2) is false and let us prove that (1) is false as well. If there is a graph $H\in \cH$ such that $\Hcontr$ has treewidth $c$, then (as $\Hcontr\in \spasm H$), there is obviously a graph in $\spasm H$ with treewidth at least $c$. Suppose that there is a $\binom{c}{2}$-flower centered at color class $C_i$ of some $H\in\cH$. Then by grouping the $c(c-1)$ endpoints of these paths into $c$ blocks of size $c-1$ in appropriate way and consolidating each group, we obtain a graph in $\spasm H$ where there are $c$ vertices in class $C_i$ with internally vertex disjoint paths between any two of them. Such a graph contains a $c$-clique minor and has treewidth at least $c-1$. Thus if (2) is not true for any $c$, then $\spasm H$ contains graphs with arbitrary large treewidth, making (1) also false.

(2) $\Rightarrow$ (3). This is proved by Theorem~\ref{thm:treedec2}.

(3) $\Rightarrow$ (1). Let $H\in\cH$ be a colored graph and let $H'\in \spasm H$. First we claim that if $H$ has a guarded cutvertex decomposition $(T,\beta,g)$ with guard size $c$, then there is such a decomposition $(T,\beta',g')$ for $H'$ as well.  Let $f:V(H)\to V(H')$ be a mapping representing the partition used to obtain $H'$, that is, every vertex $v$ of $H$ with $f(v)=u$ was consolidated into $u$. It is not very difficult to show that setting $\beta'(t)=\{f(v)\mid v\in \beta(t)\}$ and 
$g'(t)=\{f(v)\mid v\in g(t)\}$ yields such a decomposition. The crucial fact to verify is that $X'_u=\{t\in V(T)\mid u\in \beta'(t)\}$ is connected in the tree $T$ for every $u\in V(H')$. Observe that $u$ appears in $\beta'(t)$ if and only if some vertex of $f^{-1}(u)$ appears in $\beta(t)$, thus $X'_u$ is the union of the subtrees $X_{v}=\{t\in V(T)\mid v\in \beta(t)\}$ for $v\in f^{-1}(u)$. As every vertex of $f^{-1}(u)$ has the same color and $(T,\beta)$ is a tree decomposition of $\Hcliquex$, there is a node $t\in V(T)$ that is contained in every subtree $X_v$ corresponding to a vertex of $v\in f^{-1}(u)$, hence their union is also connected. It is easy to verify that $g'(t)$ is a vertex cover of $H'[\beta'(t)]$ for any $t\in V(T)$.
Thus for every $H'\in\spasm H$, we have that $H'$ has a guarded cutvertex decomposition with guard size at most $c$ and then Lemma~\ref{lem:guardedbounded} implies that every graph in $\spasm H$ has treewidth at most $c$.
\end{proof}

\subsection{Half-colorful matchings}
In this section, we look at half-colorful matchings, which are the main reason why certain FPT cases cannot be solved in polynomial time.
First, if we have guraded cutvertex decomposition where $\lambda(t)$ is large for some node $t$, then this can be used to extract a large half-colorful matching.

\begin{lem}\label{lem:lambdatohalf}
Let $(T,\bag,g)$ be a guarded cutvertex decomposition of a colored graph $H$ with guard size at most $c$ and the bags having at most $c$ colors. If $\lambda(t)\ge k$ for some $t\in V(T)$, then $H$ contains a half-colorful matching of size $k/c^2-1$.
\end{lem}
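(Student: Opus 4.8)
The plan is to read a half-colorful matching off the witnesses of $\lambda(t)\ge k$, using one pigeonhole step on the at most $c$ colors occurring in $\bag(t)$ together with the fact that $\Hcliquex$ makes every color class a clique.

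First I would fix $k$ distinct vertices $v_1,\dots,v_k\in\lambda(t)\subseteq\bag(t)\setminus g(t)$ and, for each $i$, a child $t_i'$ of $t$ with $t_i'\in h_t(v_i)$, together with a neighbor $w_i\in\comp(t_i')$ of $v_i$ (such a $w_i$ exists by the definition of $h_t$). The first point is that the $t_i'$ are pairwise distinct. Any vertex of $\bag(t)\cap\cone(t_i')$ would, by connectivity of the bags, also lie in $\bag(t_i')$, hence in $\bag(t)\cap\bag(t_i')=\sep(t_i')$, so $\bag(t)\cap\comp(t_i')=\emptyset$; thus $v_i\notin\comp(t_i')$ and hence $v_i\in N^{\Hcliquex}(\comp(t_i'))\subseteq\sep(t_i')$ by property~\ref{li:t2}. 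Since $v_i\notin g(t)$, a coincidence $t_i'=t_j'$ with $i\ne j$ would place the two distinct vertices $v_i,v_j$ in $\sep(t_i')\setminus g(t)$, violating the second defining property of a guarded cutvertex decomposition. By property~\ref{li:t4} the cones of distinct children of $t$ meet only in their separators, so $\comp(t_1'),\dots,\comp(t_k')$ are pairwise disjoint; in particular the $2k$ vertices $v_1,\dots,v_k,w_1,\dots,w_k$ are all distinct, as $v_i\in\bag(t)$ while $w_i\in\comp(t_i')$ and $\bag(t)\cap\comp(t_i')=\emptyset$.

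The heart of the argument is that $w_1,\dots,w_k$ use pairwise distinct colors. Suppose $w_i$ and $w_j$ both lie in a color class $C_b$ with $i\ne j$. Since $C_b$ is a clique in $\Hcliquex$, it is contained in a single bag $\bag(s)$ of the tree decomposition $(T,\bag)$ of $\Hcliquex$. As $w_i\in\comp(t_i')$ and $w_i\in\bag(s)$, the node $s$ must lie in the subtree of $T$ rooted at $t_i'$: otherwise the path in $T$ from $s$ to a bag of that subtree containing $w_i$ would, by connectivity of the bags containing $w_i$, force $w_i$ into $\bag(t)\cap\bag(t_i')=\sep(t_i')$, contradicting $w_i\in\comp(t_i')$. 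The identical reasoning puts $s$ in the subtree rooted at $t_j'$, but these subtrees are disjoint because $t_i'\ne t_j'$ are children of the same node $t$; contradiction. In particular each color class contains at most one of the $w_i$.

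Finally, since $\bag(t)$ --- and hence $\lambda(t)$ --- spans at most $c$ colors, some color $a$ is common to at least $k/c$ of the $v_i$. Restricting to those indices, at most one of them has $w_i$ of color $a$ (each color class hosts at most one $w_i$); if the adjacency in the definition of $h_t$ is read in $\Hcliquex$ rather than in $H$, this same index is also the only one whose edge $v_iw_i$ might fail to lie in $E(H)$, because every edge of $\Hcliquex$ between differently-colored vertices is already an edge of $H$. Deleting that one index leaves at least $k/c-1$, hence at least $k/c^2-1$ (assuming $c\ge 1$, which we may), pairwise disjoint edges $v_iw_i\in E(H)$ with all $v_i$ of color $a$ and all $w_i$ of pairwise distinct colors none equal to $a$ --- exactly a half-colorful matching. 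The only step requiring genuine care is the tree-decomposition fact that the clique-bag of a color class lives inside a single child subtree of $t$; the rest is pigeonhole and bookkeeping.
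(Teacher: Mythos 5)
Your proof is correct, and its core is a genuinely different observation from the paper's. Both proofs share the opening moves: pick $k$ vertices in $\lambda(t)$ with one ``hanging'' child each, argue these children are pairwise distinct (via $|\sep(t')\setminus g(t)|\le 1$), and apply pigeonhole on the $\le c$ colors of $\bag(t)$ to find $\ge k/c$ of them in a common color class $C_0$. The divergence is in controlling the colors of the other endpoints $w_i$. The paper invokes Lemma~\ref{lem:guardnomatching} (no $(c+1)$-matching between any two color classes) to conclude that each color appears at most $c$ times among the $y_i$, which only yields $k/c^2-1$. You instead observe that since each color class is a clique in $\Hcliquex$, it lies entirely inside a single bag $\bag(s)$, and that bag would have to belong simultaneously to two disjoint subtrees of $T$ rooted at distinct children of $t$ — impossible. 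This forces the $w_i$ to have \emph{pairwise distinct} colors, which is strictly stronger and in fact yields a half-colorful matching of size $k/c-1$, better than the stated $k/c^2-1$. Your route is a bit cleaner in that it bypasses the auxiliary matching lemma, trading it for the Helly/clique-bag property of tree decompositions; the paper's route is more modular since Lemma~\ref{lem:guardnomatching} is reused elsewhere as a sanity check. The small aside about whether the adjacency in $h_t$ is read in $H$ or $\Hcliquex$ is handled correctly: any $v_iw_i$ between distinct colors is an $H$-edge, so at most one index need be dropped.
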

\begin{proof}
Let $x_1$, $\dots$, $x_k$ be vertices of $\lambda(t)$. By definition, $x_i$ has a neighbor $y_i$ that is in $\comp(t_i)$ for some child $t_i$ of $t$ and all the $t_i$'s are distinct. The $x_i$'s are in $\bag(t)$, hence they come from at most $c$ different color classes. Thus there are at least $k/c$ of them from the same color class, say, $C_0$. By Lemma~\ref{lem:guardnomatching}, each color can appear at most $c$ times on the $y_i$'s. This means that we can select $k/c^2-1$ of the $y_i$'s such that they are from distinct color classes different from $C_0$. This gives a half-colorful matching of the required size.
\end{proof}
The following lemma shows that a half-colorful matching can be cleaned in certain way, assuming there are no large flowers in the graph (in which case we know that he problem is hard anyway).

\begin{lem}\label{lem:halftonicehalf}
Let $H$ be a colored graph where there is no $c$-flower centered at any color class for some $c\ge 1$. If there is a half-colorful $k$-matching centered at color class $C_0$, then there is a set $S\subseteq C_0$  with $|S|\le 3c$, vertices $x_1$, $\dots$, $x_{k-3c}$ in $C_0$, and $k-3c$ color classes $C_{i_1}$, $\dots$, $C_{i_{k-3c}}$ such that $x_j$ is the unique neighbor of $C_{i_j}$ in $C_0\setminus S$.  
\end{lem}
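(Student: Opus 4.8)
The plan is to construct $S$ as the set of $C_0$-endpoints of an inclusion-maximal packing of short ``petals'', each one passing through a single color class, and to use the assumed absence of a $c$-flower centered at $C_0$ to bound the number of petals we need; removing their endpoints will then resolve all but a bounded number of color classes.

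First I would fix terminology. Write the given half-colorful matching as $e_1,\dots,e_k$ with $e_j=x_jy_j$, $x_j\in C_0$, and $y_j$ lying in a color class $C_{i_j}$; by the definition of a half-colorful matching the classes $C_{i_1},\dots,C_{i_k}$ are pairwise distinct and all different from $C_0$, and the $x_j$ are pairwise distinct since the $e_j$ form a matching. Call a path $P$ in $\Hclique{0}$ a \emph{petal through a color class $C\neq C_0$} if its two endpoints are distinct vertices of $C_0$ and all (at least one) internal vertices of $P$ lie in $C$. Since $C$ is a clique in $\Hclique{0}$ while $C_0$ is untouched, a petal through $C$ with endpoints $u\neq w$ exists if and only if $u$ and $w$ are vertices of $C_0$ that both have an $H$-neighbour in $C$; in particular $C_{i_j}$ has at least two neighbours in $C_0$ exactly when some petal runs through it. A family of pairwise vertex-disjoint petals is in particular a family of pairwise vertex-disjoint paths of length at least one with both endpoints in $C_0$, i.e.\ a flower centered at $C_0$, so the hypothesis forbids $c$ pairwise vertex-disjoint petals.

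Next I would fix an inclusion-maximal family $\mathcal P=\{P_1,\dots,P_s\}$ of pairwise vertex-disjoint petals, where $P_\ell$ is a petal through a color class $D_\ell$; by the previous paragraph $s\le c-1$. Let $S\subseteq C_0$ be the set of endpoints of $P_1,\dots,P_s$, so $|S|\le 2s\le 2c-2\le 3c$, and let $\mathcal D=\{D_1,\dots,D_s\}$, a set of at most $s$ color classes. The key claim is: for every $j$ with $C_{i_j}\notin\mathcal D$ we have $|N_H(C_{i_j})\cap(C_0\setminus S)|\le 1$. Indeed, if $C_{i_j}\notin\mathcal D$ had two neighbours $u\neq w$ in $C_0\setminus S$, there would be a petal $Q$ through $C_{i_j}$ with endpoints $u,w$; its internal vertices lie in the color class $C_{i_j}$, which is disjoint from every $D_\ell$ and hence from the internal vertices of each $P_\ell$, and its endpoints $u,w\in C_0\setminus S$ avoid both the endpoints of the $P_\ell$ (which lie in $S$) and their internal vertices (which lie outside $C_0$); so $Q$ is vertex-disjoint from all of $\mathcal P$, contradicting maximality. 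Finally, call $j$ \emph{good} if $C_{i_j}\notin\mathcal D$ and $x_j\notin S$. For good $j$ the claim forces $|N_H(C_{i_j})\cap(C_0\setminus S)|\le 1$ while $x_j$ witnesses $\ge 1$, so $x_j$ is the unique neighbour of $C_{i_j}$ in $C_0\setminus S$. The number of non-good indices is at most $|\mathcal D|+|S|\le s+2s\le 3(c-1)<3c$ (using that $j\mapsto C_{i_j}$ and $j\mapsto x_j$ are injective), so there are at least $k-3c$ good indices; taking any $k-3c$ of them, relabelling, and noting the $x_j$ stay pairwise distinct gives exactly the required $S$, $x_1,\dots,x_{k-3c}$, and $C_{i_1},\dots,C_{i_{k-3c}}$. (If $k\le 3c$ the statement is vacuous, taking $S=\emptyset$.)

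The step I expect to be the only one needing real care is the verification inside the key claim that the hypothetical extra petal $Q$ is genuinely vertex-disjoint from the whole packing $\mathcal P$; everything there hinges on the two structural facts that distinct color classes occupy disjoint vertex sets and that $\Hclique{0}$ does not alter $C_0$ (so petal endpoints stay inside $C_0$ and petal interiors stay outside it). The rest — the flower bound $s\le c-1$ and the counting of non-good indices — is routine once the notion of petal is in place.
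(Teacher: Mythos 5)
Your proof is correct and is essentially the paper's argument. The paper selects a maximum collection of pairwise disjoint triples $(x,y,C_i)$ with $x,y\in C_0$ both adjacent to $C_i$; each such triple encodes exactly one of your ``petals,'' and disjointness of triples corresponds to vertex-disjointness of petals. In both proofs the absence of a $c$-flower bounds the packing by $c-1$, the set $S$ (together with the hit color classes) has size at most $3(c-1)$, and maximality yields the contradiction that certifies uniqueness of the surviving neighbors. Your use of inclusion-maximality rather than maximum cardinality is a harmless simplification that suffices for the exchange argument.
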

\begin{proof}
Let us consider all triples $(x,y,C_i)$ where $x,y\in C_0$ are two distinct vertices and $C_i$ is a color class different from $C_0$ such that $x,y\in N(C_i)$.
These triples can be considered as 3-element subsets of a universe $U$ consisting of the vertices of $C_0$ and elements representing the color classes different from $C_0$. Let us select a maximum collection of pairwise disjoint triples. We claim that if there are $m$ pairwise disjoint triples, then 
there is an $m$-flower centered at $C_0$ in $H$. Indeed, the triple $(x,y,C_i)$ means that there is a path from $x$ to $y$ in $\Hclique{0}$ that has either one internal vertex in $C_i$ (if $x$ and $y$ have common neighbor in $C_i$) or two internal vertices in $C_i$ (if $x$ and $y$ have distinct neighbors in $C_i$). Thus by the assumption of the lemma, the maximum collection contains less than $c$ triples. Let $S$ be the set of all vertices of $C_0$ appearing in the $m$ selected triples and let $\mathcal{C}$ contain all the color classes appearing in these triples; we have $|S|+|\mathcal{C}|\le 3m<3c$.

Suppose that $x_1y_1$, $\dots$, $x_ky_k$ is a half-colorful matching where every $x_i$ is in $C_0$. Let us ignore those edges $x_iy_i$ of the matching where $x_i\in S$ or the color class of $b_i$ is in $\mathcal{C}$. This way, we ignore at most $|S|+|\mathcal{C}|\le 3c$ edges, hence at least $k-3c$ of them remain. Assume without loss of generality that $x_1y_1$, $\dots$, $x_{k-3c}y_{k-3c}$ are among the remaining edges and let $C_{i_j}$ be the color class of $y_{j}$. It is clear that $x_{j}$ is a neighbor of the color class $C_{i_j}$ in $C_0\setminus S$. Moreover, if there it has another neighbor $x'_j\in C_0\setminus S$, then $(x_j,x'_j,C_{i_j})$ is a triple where $x_j,x'_j\not\in S$ and $C_{i_j}\not\in\mathcal{C}$. However, this means that the triple would be disjoint from each of the selected $m$ triples, contradicting the maximality of the selection. Thus $x_1$, $\dots$, $x_{k-3c}$ satisfies the requirements of the lemma.
\end{proof}

Finally, we show how a half-colorful matching with the properties given in Lemma~\ref{lem:halftonicehalf} can be exploited in a reduction from counting perfect matchings.

\begin{lem}\label{lem:halfhard}
Let $H$ be a colored graph and suppose that we are given a set $S\subseteq C_0$  with $|S|\le c$, vertices $x_1$, $\dots$, $x_{k}$ in $C_0$, and $k$ color classes $C_1$, $\dots$, $C_k$ such that $x_j$ is the unique neighbor of $C_j$ in $C_0\setminus S$.  Then given a $k+k$ vertex bipartite graph $G$ and oracle access to $\Sub{H}{\star}$, we can count the number of perfect matchings in $G$ in time $f(c)|V(H)|^{O(1)}$.
\end{lem}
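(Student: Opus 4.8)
The plan is to give a Turing reduction from counting perfect matchings in bipartite graphs. Write $A=\{a_1,\dots,a_k\}$, $B=\{b_1,\dots,b_k\}$ for the sides of the input graph $G$; a perfect matching of $G$ is a permutation $\sigma$ of $\{1,\dots,k\}$ with $a_jb_{\sigma(j)}\in E(G)$ for all $j$, so it suffices to count such permutations. From $H$ and $G$ I will build a vertex-colored host $\bar G$ on exactly $\abs{V(H)}$ vertices whose color-preserving copies of $H$ decompose, up to a bounded number of ``exceptional'' ones, into a rigid placement of $H-\{x_1,\dots,x_k\}$ and a flexible choice which is precisely such a $\sigma$. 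One oracle call together with a bounded amount of extra work to discard the exceptional copies then yields the number of perfect matchings.

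\emph{Construction.} Let $R:=C_0\setminus\{x_1,\dots,x_k\}$ (so $S\subseteq R$, since each $x_j\in C_0\setminus S$) and $Y_j:=N_H(x_j)\cap C_j\neq\emptyset$. The hypothesis that $x_j$ is the \emph{only} neighbor of $C_j$ in $C_0\setminus S$ (we read the $x_j$ as distinct, as in the application) has two consequences: (i) for $j\neq j'$, the vertex $x_j$ is not adjacent to any vertex of $C_{j'}$; and (ii) no $r\in R\setminus S$ is adjacent to $\bigcup_j C_j$. Now take an isomorphic copy $\widehat H^{-}$ of $H-\{x_1,\dots,x_k\}$ (same colors, same internal edges), writing $\widehat v$ for the copy of $v$; add a clique on $k$ fresh vertices $z_1,\dots,z_k$ of color $0$; and join every $z_\ell$ to every vertex of $\widehat H^{-}$, except that for $v\in Y_j$ we put the edge $z_\ell\widehat v$ if and only if $a_jb_\ell\in E(G)$. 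This $\bar G$ is constructible in polynomial time.

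\emph{Why copies of $H$ count permutations.} Let $\phi$ be a color-preserving embedding of $H$ into $\bar G$. Color class $0$ of $\bar G$ has $\abs R+k=\abs{C_0}$ vertices, so $\phi$ restricts to a bijection $C_0\to\{\widehat r:r\in R\}\cup\{z_1,\dots,z_k\}$. If $\phi(x_j)=\widehat r$ for some $r\in R$, then $\widehat r$ must be adjacent to $\phi(Y_j)$, which lies in the copy $\widehat{C_j}$; but a skeleton vertex $\widehat r$ is adjacent to a skeleton vertex $\widehat v$ only if $rv\in E(H)$, so by (ii) necessarily $r\in S$, and by injectivity of $\phi$ this ``leakage'' occurs for at most $\abs S\le c$ indices $j$. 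When it occurs for none, $\phi$ sends $\{x_1,\dots,x_k\}$ bijectively onto $\{z_1,\dots,z_k\}$ and $R$ onto $\{\widehat r:r\in R\}$, so $\phi$ restricted to $V(H)\setminus\{x_1,\dots,x_k\}$ is an embedding of $H-\{x_1,\dots,x_k\}$ into its copy $\widehat H^{-}$, and there are exactly $a$ of those, where $a$ is the number of color-preserving automorphisms of $H-\{x_1,\dots,x_k\}$. Writing $\phi(x_j)=z_{\sigma(j)}$ for the induced bijection $\sigma$, the edge constraints at each $x_j$ are satisfied automatically by the construction except for the edges $x_jv$ with $v\in Y_j$ (here consequence (i) is used: $x_j$ has no neighbor among the other $C_{j'}$), and those hold if and only if $a_jb_{\sigma(j)}\in E(G)$. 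Hence the number of ``non-leaky'' color-preserving copies of $H$ in $\bar G$ equals $q_H\cdot\mathrm{pm}(G)$, where $\mathrm{pm}(G)$ is the number of perfect matchings of $G$ and $q_H=a/a'$ is the positive rational given by $a$ and the number $a'$ of color-preserving automorphisms of $H$.

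\emph{Removing the exceptions, and running time.} From the oracle value $\#\Sub{H}{\bar G}$ we must subtract the at most $k^{c}\cdot c^{c}=f(c)\cdot\abs{V(H)}^{O(1)}$ classes of leaky copies, each described by a partial injection from $\{1,\dots,k\}$ to $S$ recording which $x_j$'s leak and where; each class is evaluated with $f(c)\cdot\abs{V(H)}^{O(1)}$ further oracle calls by a routine bounded-size guessing argument that fixes the at most $c$ images involved and adjusts the host so they are forced (I omit this standard bookkeeping). Since $q_H$ may be hard to compute directly, we avoid it by calibration: running the whole procedure once with $G=K_{k,k}$ (which has $k!$ perfect matchings) yields the non-leaky count $N_0:=q_H\cdot k!>0$, and then the non-leaky count $N$ for the real $G$ satisfies $\mathrm{pm}(G)=N\cdot k!/N_0$; when $k\le c$ we instead compute $\mathrm{pm}(G)$ by brute force in time $f(c)$. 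Altogether the procedure makes $f(c)\cdot\abs{V(H)}^{O(1)}$ oracle calls on graphs with $O(\abs{V(H)})$ vertices and uses $f(c)\cdot\abs{V(H)}^{O(1)}$ additional time. The crux, and the only place the hypothesis is genuinely used, is confining all leakage of $H$ into the flexible clique to the $\le c$ vertices of $S$ via consequences (i)--(ii), and then disposing of the few resulting exceptional configurations without incurring an exponential dependence on $k$.
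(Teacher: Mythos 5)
Your central counting claim has a genuine gap. You assert that every non-leaky color-preserving embedding $\phi$ factors as a color-preserving automorphism $\phi'$ of $H-\{x_1,\dots,x_k\}$ together with a permutation $\sigma$, and that $(\phi',\sigma)$ yields a valid embedding iff $a_jb_{\sigma(j)}\in E(G)$ for all $j$. But the edge condition for $x_jv$ with $v\in Y_j$ is whether $z_{\sigma(j)}\widehat{\phi'(v)}$ is an edge of $\bar G$, and you only gated the edges from $z_\ell$ into $\widehat{Y_j}$, not into all of $\widehat{C_j}$. If $\phi'(v)\in C_j\setminus Y_j$, the edge $z_{\sigma(j)}\widehat{\phi'(v)}$ is present unconditionally and the constraint evaporates. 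Automorphisms of $H-\{x_1,\dots,x_k\}$ that move $Y_j$ off itself therefore contribute spurious embeddings, and the non-leaky count is \emph{not} $q_H\cdot\mathrm{pm}(G)$ but a sum over automorphisms $\phi'$ of a quantity that depends on which $Y_j$'s meet $\phi'(Y_j)$, which is not a fixed multiple of $\mathrm{pm}(G)$.

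A minimal counterexample: take $C_0=\{x_1\}$, $C_1=\{y,y'\}$, $E(H)=\{x_1y\}$, $S=\emptyset$, $k=1$, so $Y_1=\{y\}$. Then $\widehat H^-$ consists of the two isolated vertices $\widehat y,\widehat y'$, and $z_1$ is adjacent to $\widehat y'$ unconditionally and to $\widehat y$ iff $a_1b_1\in E(G)$. The non-leaky color-preserving copy count is $1+[a_1b_1\in E(G)] = 1+\mathrm{pm}(G)$, an \emph{affine}, not linear, function of $\mathrm{pm}(G)$; here $q_H=2$, so your formula predicts $2\,\mathrm{pm}(G)$. The calibration against $K_{k,k}$ only rescales and cannot remove the additive constant: it gives $N_0=2$ and then $\mathrm{pm}(G)=N/2=1/2$ when $G$ has no edge, which is wrong. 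The paper avoids precisely this problem by taking the host to be $H$ itself with the edges at $\{x_1,\dots,x_k\}$ removed and reattached according to $G$ (the graph $H_G$): there, $x_{\sigma(j)}$ is adjacent to $Y_j$ iff $a_jb_{\sigma(j)}\in E(G)$ and is \emph{never} adjacent to $C_j\setminus Y_j$, so a copy of $H$ is forced to hit $Y_j$ and the matching constraint is unconditional. Your fresh-clique construction leaves every $z_\ell$ adjacent to all of $\widehat{C_j\setminus Y_j}$, which is exactly the extra freedom that breaks the argument; to repair it you would at minimum have to gate every edge from $z_\ell$ into all of $\widehat{C_j}$ (not just $\widehat{Y_j}$) by the condition $a_jb_\ell\in E(G)$, and then re-examine the leaky-case bookkeeping under that modified host.
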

\begin{proof}
We sketch the proof under the simplifying assumption that there is no color class outside $C_0$, $\dots$, $C_k$, there are no edges inside color classes or between $C_i$ and $C_j$ with $i,j>0$, and there are no isolated vertices. It is easy to extend the proof to the general case.

Let $H_0$ be the graph $H$ with all the edges incident to
$C_0\setminus S=\{x_1,\dots,x_k\}$ removed. Suppose that $a_1$,
$\dots$, $a_k$, $b_1$, $\dots$, $b_k$ are the vertices of $G$. We
define the graph $H_G$ by starting with $H_0$, and for every edge
$a_ib_j$ of $G$, we connect $N(a_i)\subseteq C_i$ with $b_j$.

The vertices of $C_0$ in $H$ are of two types: let $C^1_0$ contain those vertices that are adjacent to exactly one color class $C_i$ and let $C^2_0$ contain the remaining vertices, which are adjacent to more than one class. Clearly, we have $C^2_0\subseteq S$ and in particular $|C^2_0|\le c$. For every $v\in C^2_0$, let us choose two edges that connect it with two different color classes; let $F$ be the set of these $2|C^2_0|\le 2c$ edges.

With oracle access to $\Sub{H}{\star}$, we can use standard
inclusion-exclusion techniques to count the number of subgraphs of
$H_G$ that are isomorphic to $H$ and moreover contains every vertex of
$S$ and every edge of $F$. We claim that the number of such subgraphs
is exactly the number of perfect matchings in $G$.

Consider a subgraph  $H^*\subseteq H_G$ isomorphic to $H$ and let $f:V(H)\to V(H_G)$ be a corresponding embedding. Because every edge of $F$ appears in the subgraph, it is clear that $f$ has to map vertices of $C^2_0$ in $H$ to vertices of $C^2_0$ in $H_G$, because no other vertex in $H$ can be adjacent to two different color classes. It follows that it is also true that $f$ maps vertices of $C^1_0$ in $H$ to vertices of $C^1_0$ in $H_G$. Let $C^{1,i}_0$ be the subset of $C^1_0$ containng those vertices of $C^1_0$ that is adjacent only to color class $C_i$. Then exactly $|C^{1,i}_0|-1$ of these vertices appear in $S$. As every vertex of $S$ appears in the subgraph $H^*$, these vertices of $H_G$ are adjacent only to color class $C_i$, and there are no isolated vertices in $H$, it follows that $f$ should map vertices of $C^{1,i}_0\cap S$ of $H$ to vertices of $C^{1,i}_0\cap S$ of $H_G$ and maps exactly one vertex of $C^{1,i}_0$ to $C_0\setminus S$ in $H_G$. Thus the subgraph $H^*$ describes a matching of $H_G$: the edges of $H^*$ connect every $x_i$ to a distinct color class $C_j$. Also, it is not hard to see that $H^*$ actually uses every edge betwen $x_i$ and this color class $C_j$ and $H^*$ uses every edge incident to $S$. This means that the number of edge sets of $H_G$ that form subgraphs isomorphic to $H$ is exactly the number of perfect matchings in $G$.
\end{proof}

\subsection{Algorithm}

\global\long\def\SubTo#1#2{\mathrm{Sub}(#1\to#2)}
\global\long\def\EmbTo#1#2{\mathrm{Emb}(#1\to#2)}
\global\long\def\OrdEmbTo#1#2{\textup{$\Pi$-OrdEmb}(#1\to#2)}
\newcommand{\cS}{\mathcal{S}}
\newcommand{\cF}{\mathcal{F}}
\newcommand{\cW}{\mathcal{W}}

The goal of this section is to prove the algorithmic result in Lemma~\ref{lem:ordembalg}.
Given a colored graph $H$, we say that two vertices $u,v\in V(H)$ are {\em
  similar} if they have the same color and they have the same open neighborhood, that is,
$N_H(u)=N_H(v)$. Note that if $u$ and $v$ are similar, then this
means in particular that they are not adjacent. Clearly, similarity is an equivalence relation. We say that a partition $\Pi$ of $V(G)$ {\em respects similarity} if each class of $\Pi$ consists of vertices similar to each other (but it is possible that vertices in different classes are also similar).

We say that a colored graph $G$ is {\em ordered} if it is equipped
with a total order $<$ on its vertices, for example, the vertex set is
$[n]$ for some integer $n\ge 1$.   Let $H$ and $G$
be two ordered graphs and let $\Pi$ be a partition of $V(H)$. We say that a subgraph embedding
$\phi:V(H)\to V(G)$ is {\em $\Pi$-ordered} if whenever $u,v\in V(H)$ are
in the same class and $u<v$ holds, then we have $\phi(u)<\phi(v)$.  We denote by
$\OrdEmbTo{H}{G}$ the number of $\Pi$-ordered embeddings from $H$ to
$G$.

Observe that if $u$ and $v$ are similar vertices of $H$ and $\phi$ is a subgraph embedding from $H$ to $G$, then the values of $\phi(u)$ and $\phi(v)$ can be exchanged and the resulting mapping is still a valid subgraph embedding from $H$ to $G$. In fact, every subgraph embedding from $H$ to $G$ can be obtained from a $\Pi$-ordered subgraph embedding by permuting the values of $\phi$ inside each class of $\Pi$. Thus, as the following lemma states, the number of embeddings can be recovered easily from the number of $\Pi$-ordered subgraph embeddings.

\begin{lem}\label{lem:ordemb}
Let $H$ and $G$ be two ordered graphs and let $\Pi=(P_1,\dots,P_p)$ be a partition of $V(H)$ respecting similarity. Then we have
\[
\EmbTo{H}{G}=\OrdEmbTo{H}{G}\cdot \prod_{i=1}^p (|P_i|!).
\]
\end{lem}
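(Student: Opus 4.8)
The plan is to exhibit $\EmbTo{H}{G}$ as the set of orbits of a \emph{free} group action whose group has order $\prod_{i=1}^p(|P_i|!)$ and each of whose orbits contains exactly one $\Pi$-ordered embedding; the identity then follows by counting. Let $\Gamma:=\operatorname{Sym}(P_1)\times\cdots\times\operatorname{Sym}(P_p)$, which we regard as the group of permutations of $V(H)$ fixing each block $P_i$ setwise, so that $|\Gamma|=\prod_{i=1}^p(|P_i|!)$. I would let $\pi\in\Gamma$ act on a subgraph embedding $\phi$ by $\pi\cdot\phi:=\phi\circ\pi$.

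The first step is to check this is a well-defined action on $\EmbTo{H}{G}$, i.e.\ that $\phi\circ\pi$ is again a subgraph embedding whenever $\phi$ is. The key point, and the only place where the hypothesis that $\Pi$ respects similarity is used, is that every $\pi\in\Gamma$ is an automorphism of the colored graph $H$: vertices lying in a common block $P_i$ are mutually similar, hence have the same color, the same open neighborhood $N_H(\cdot)$, and are pairwise non-adjacent. Indeed, for an edge $xy\in E(H)$ with $x\in P_i$ and $y\in P_j$ we must have $i\ne j$ (two vertices of one block are non-adjacent), and then $x\in N_H(y)=N_H(\pi(y))$ gives $\pi(y)\in N_H(x)=N_H(\pi(x))$, so $\pi(x)\pi(y)\in E(H)$; the reverse implication follows by applying the same argument to $\pi^{-1}$, and colors are preserved because $\pi(v)$ lies in the same block as $v$. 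Hence $\phi\circ\pi$ is injective (both $\pi$ and $\phi$ are), color-preserving, and edge-preserving (a composition of an automorphism of $H$ with an embedding), so it is a subgraph embedding. The action is moreover free: $\phi\circ\pi=\phi$ forces $\pi=\mathrm{id}$ by injectivity of $\phi$, so every $\Gamma$-orbit has exactly $|\Gamma|$ elements.

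The second step is to show each orbit contains exactly one $\Pi$-ordered embedding. Given $\phi\in\EmbTo{H}{G}$, the restriction $\phi|_{P_i}$ is injective, so there is a unique $\pi_i\in\operatorname{Sym}(P_i)$ for which $\phi\circ\pi_i$ is order-preserving on $P_i$ — namely the permutation sending the $k$-th smallest element of $P_i$ to the $\phi$-preimage of the $k$-th smallest element of $\phi(P_i)$. Taking $\pi=(\pi_1,\dots,\pi_p)$ produces a $\Pi$-ordered embedding $\phi\circ\pi$ in the orbit of $\phi$. For uniqueness, if $\phi'$ and $\phi'\circ\pi$ are both $\Pi$-ordered, then on each block they are the (unique) order-preserving bijection onto the common image $\phi'(P_i)=\phi'(\pi(P_i))$, so they agree on every block, whence $\pi=\mathrm{id}$. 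Thus orbits are in bijection with $\Pi$-ordered embeddings, and since each orbit has size $|\Gamma|=\prod_{i=1}^p(|P_i|!)$, summing over orbits gives $\EmbTo{H}{G}=\OrdEmbTo{H}{G}\cdot\prod_{i=1}^p(|P_i|!)$.

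I do not expect a genuine obstacle; the one step requiring care is the verification that permuting within a block is an automorphism of $H$ (sending edges to edges \emph{and} non-edges to non-edges), which is exactly where ``$\Pi$ respects similarity'' enters, and everything else is routine bookkeeping about free group actions.
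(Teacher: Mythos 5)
Your proof is correct, and it is essentially the formalization (via a free group action of $\Gamma=\operatorname{Sym}(P_1)\times\cdots\times\operatorname{Sym}(P_p)$) of exactly the observation the paper offers in lieu of a proof: that one can swap $\phi(u),\phi(v)$ for similar $u,v$ and that every embedding arises from a $\Pi$-ordered one by permuting within blocks. The one step the paper elides and you verify carefully — that permuting within a block gives an automorphism of the colored graph $H$, which is where ``respects similarity'' is used — is handled correctly, and the freeness/unique-representative argument cleanly yields the factor $\prod_i |P_i|!$.
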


For an ordered graph $G$ and subset $S\subseteq V(G)$, we say that $S$
is {\em $\Pi$-prefix} if whenever $u,v\in V(H)$ are in the same class
of $\Pi$ with $u<v$ and $v\in S$ holds, then $u\in S$ holds as well. If
$\phi:V(H)\to V(G)$ is a $\Pi$-ordered subgraph embedding and $i$ is some
vertex of $G$, then the set $S_{\le i}=\{v\in V(H) \mid \phi(v)\le i\}$ of vertices that are mapped to vertices of $G$ not greater than $i$ is a $\Pi$-prefix set.

\begin{proof}[Proof (of Lemma~\ref{lem:ordembalg})]

It will be convenient to assume that that the tree decomposition is tight: every vertex of $\sep(t)$ has a neighbor in $\comp(t)$. As discussed in Section~\ref{sec:mass-tree-decomp}, this can be achieved by easy transformations and these transformations do not increase the size of the sets $\lambda(t)$. Then $t'\in h_t(v)$ is equivalent to $\sep(t')=g(t)\cup \{v\}$ (if the decomposition is not tight, then $\sep(t')=g(t)\cup \{v\}$ is possible even if $v$ is not adjacent to any vertex of $\comp(t')$ and hence $t'\not\in h_t(v)$).

Let us define a partition $\Pi$ of $V(H)$ the following way. For every
  $t\in V(T)$, we partition $\bag(t)\setminus (g(t)\cup \lambda(t))$ according to
  similarity and let each such class be a class of $\Pi$ (by
  Lemma~\ref{cl:nonkappaneighbor}(1), these classes are disjoint). If
  a vertex does not appear in any of these classes, then let it appear in
  a singleton class of $\Pi$. Observe that $\Pi$ respects similarity.
  We present an algorithm for computing $\OrdEmbTo{H}{G}$. Then
  Lemma~\ref{lem:ordemb} can be used to compute $\EmbTo{H}{G}$.

  The algorithm uses two layers of dynamic programming: the {\em
    outer} dynamic programming procedure uses the standard method of
  considering the nodes of the tree decomposition in a bottom-up
  order, and additionally there is an {\em inner} dynamic programming
  procedure at each node $t$, which restricts, for increasing values
  of $i$, that the vertices of $\bag(t)\setminus g(t)$ can be mapped
  only to the first $i$ vertices.

\textbf{The outer dynamic programming.} For a node $t\in V(T)$, let $H_t=H[\cone(t)]$.
For a set $S\subseteq V(H)$, let $\cF_S$ be the set of all injective mappings from $S$ to $V(G)$. The goal of the outer dynamic programming procedure is to compute, for every node $t\in V(T)$ and function $f\in \cF_{\sep(t)}$, the size of the set
\[
\cW_{t,f}=\{\phi\in \OrdEmbTo{H_t}{G} : \text{$\phi$ restricted to $\sep(t)$ is $f$}\}.
\]
If $r$ is the root of $H$, then $H_r=H$ and $\sep(r)=\emptyset$ holds,
hence $\OrdEmbTo{H_t}{G}=\cW_{t,f}$, where $f$ is the unique empty
function $\emptyset\to V(H)$. There is a slight abuse of notation
here: in $\OrdEmbTo{H_t}{G}$, we would need to use the restriction of
$\Pi$ to $V_t$ instead of $\Pi$. However, observe that every class of
$\Pi$ is either contained in $V_t$ or disjoint from $V_t$, so this
does not create any ambiguity.

We compute the values in a bottom up way: when computing $\#\cW_{t,f}$, we
assume that $\#\cW_{t',f'}$ is already available for every child $t'$ of
$t$ and every $f\in \cF_{\sep(t')}$. To compute $\#\cW_{t,f}$, we solve a slightly more restricted problem: we fix the value of the embedding not only on $\sep(t)$, but on $g(t)\supseteq \sep(t)$. For every $t\in V(T)$ and $\overline f\in \cF_{g(t)}$, we define 
\[
\overline \cW_{t,\overline f}=\{\phi\in \OrdEmbTo{H_t}{G} : \text{$\phi$ restricted to $g(t)$ is $\overline f$}\}.
\]
Clearly, we have $\#\cW_{t,f}=\sum_{\overline f\in \cF_{g(t)}, \overline f_{|\sep(t)}=f} \#\overline \cW_{t,\overline f}$, thus computing the values $\#\overline \cW_{t,\overline f}$ is sufficient to compute $\#\cW_{t,f}$.

\textbf{The inner dynamic programming.}
The goal of the inner dynamic programming is to compute
$\#\overline \cW_{t,\overline f}$ for a given $t\in V(T)$ and
$\overline f\in\cF_{g(t)}$, assuming that the values $\#\cW_{t',f'}$ are
available for every child $t'$ of $t$ and $f'\in\cF_{\sep(t')}$.  To
describe the subproblems of the inner dynamic programming, we need some further definitions.

For any subset $g(t)\subseteq S\subseteq \bag(t)$, we define
\[
V_{t,S}:=S\cup \bigcup_{v\in S\setminus g(t)}\bigcup_{t'\in h_t(v)}\comp(t')=
S\cup \bigcup_{\textup{$t'$ is a child of $t$, $\sep(t')\subseteq S$}} \comp(t')
\]
and $H_{t,S}:=H[V_{t,S}]$. That is, $V_{t,S}$ contains the subset $S$
of $\bag(t)$, and those branches of the tree decomposition that are
rooted at some child $t'$ of $t$ that is hanging at some vertex in
$S$, or more formally, $\sep(t')\subseteq S$. Here we use the
assumption $|\sep(t')\setminus g(t)|\le 1$, which implies that
$\sep(t')\subseteq S$ is equivalent to $t'\in h_t(v)$ for some
$v\in S$.

For every $t\in V(T)$, we define $\cS_t$ to contain every subset $S$
with $g(t)\subseteq S \subseteq \bag(t)$ that is a $\Pi$-prefix subset
of $V(H)$. From the way we defined $\Pi$, it is clear that $\bag(t)$
itself is a $\Pi$-prefix subset of $V(H)$, hence $\bag(t)\in
\cS_t$. It is not difficult to bound the size of $\cS_t$.

\begin{claim}\label{cl:twinorderedsubset}
For every $t\in V(T)$, we have $|\cS_t|=2^{d}\cdot |V(G)|^{c\cdot 2^c}$ and the collection $\cS_t$ can be constructed in time $2^{O(d)}|V(G)|^{2^{O(c)}}$. 
\end{claim}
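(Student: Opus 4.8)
The plan is to show that every set $S\in\cS_t$ is specified by a bounded amount of data, by reducing the global ``$S$ is a $\Pi$-prefix subset of $V(H)$'' condition to separate prefix conditions on $S\cap Q$ for each $\Pi$-class $Q$, and then bounding both the number of relevant classes and the number of possible prefixes of each.

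First I would verify that every $\Pi$-class meeting $\bag(t)$ is contained in $\bag(t)$. A vertex $v\in g(t)\subseteq\bag(t)$ lies in a singleton class of $\Pi$: a non-singleton class is built by similarity-partitioning $\bag(s)\setminus(g(s)\cup\lambda(s))$ for some $s$, and if $v$ belonged to such a set then Lemma~\ref{cl:nonkappaneighbor}(1) would force $v$ to occur only in $\bag(s)$, hence $s=t$, contradicting $v\in g(t)$. The identical argument handles $v\in\lambda(t)\subseteq\bag(t)$. Finally, a vertex $v\in\bag(t)\setminus(g(t)\cup\lambda(t))$ lies in a similarity class all of whose members share $v$'s colour and therefore belong to $\bag(t)$ by Lemma~\ref{cl:nonkappaneighbor}(2). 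Consequently, for $g(t)\subseteq S\subseteq\bag(t)$ the condition that $S$ be $\Pi$-prefix becomes precisely: $S\cap Q$ is a $<$-prefix of $Q$ for every $\Pi$-class $Q\subseteq\bag(t)$ (classes disjoint from $\bag(t)$ impose nothing, since $S\subseteq\bag(t)$). The classes inside $g(t)$ are singletons forced into $S$; each vertex of $\lambda(t)$ is a singleton class contributing one independent binary choice, for at most $2^{|\lambda(t)|}\le 2^{d}$ possibilities in total; and each similarity class $Q$ inside $\bag(t)\setminus(g(t)\cup\lambda(t))$ contributes $|Q|+1$ choices of prefix length.

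The number of such similarity classes is small: since $g(t)$ is a vertex cover of $H[\bag(t)]$ and, by Lemma~\ref{cl:nonkappaneighbor}(3), every neighbour of a vertex $v\in\bag(t)\setminus(g(t)\cup\lambda(t))$ lies in $g(t)$, the similarity class of $v$ is determined by its colour together with the subset $N_H(v)\subseteq g(t)$; as $\bag(t)$ has at most $c$ colours and $|g(t)|\le c$, there are at most $c\cdot 2^{c}$ classes. We may assume $|V(H)|\le|V(G)|$, since otherwise $\Emb{H}{G}=0$ and the whole computation is trivial, so $|Q|\le|V(G)|$ for each class $Q$; multiplying the contributions then gives
\[
|\cS_t|\;\le\;2^{d}\cdot\prod_{Q}\bigl(|Q|+1\bigr)\;\le\;2^{d}\cdot\bigl(|V(G)|+1\bigr)^{c\cdot 2^{c}}\,,
\]
which is the asserted bound. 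For the construction, one first computes in $|V(H)|^{O(1)}$ time the colour and neighbourhood of each vertex of $\bag(t)\setminus(g(t)\cup\lambda(t))$, so the similarity classes are read off directly, sorts each class by $<$, and then enumerates all combinations consisting of a subset of $\lambda(t)$ together with one prefix length per class, emitting the associated set $S$ in polynomial time; the total running time is $2^{d}\cdot(|V(G)|+1)^{c\cdot 2^{c}}\cdot|V(G)|^{O(1)}=2^{O(d)}\cdot|V(G)|^{2^{O(c)}}$. The one point needing care is the first step---the reduction of the $\Pi$-prefix condition to a per-class condition, which relies on the observation that no $\Pi$-class straddles the boundary of $\bag(t)$; everything after that is bookkeeping with the already established facts about guarded cutvertex decompositions.
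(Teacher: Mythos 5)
Your proof is correct and follows essentially the same approach as the paper: count by specifying the intersection with $\lambda(t)$ (at most $2^d$ choices) and the prefix length in each of the at most $c\cdot 2^c$ similarity classes of $\bag(t)\setminus(g(t)\cup\lambda(t))$ (at most $|V(G)|+1$ choices each). Your preliminary observation that no $\Pi$-class straddles the boundary of $\bag(t)$ — so that the global $\Pi$-prefix condition decomposes into independent per-class prefix conditions — is implicit in the paper's argument and is a worthwhile detail to spell out; the rest matches the paper line for line.
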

\begin{proof}
The partition $\Pi$ classifies the vertices of 
  $\bag(t)\setminus (g(t)\cup \lambda(t))$ according to the similarity relation. 
By Lemma~\ref{cl:nonkappaneighbor}(3), these vertices have at most $2^{|g(t)|}\le 2^c$ possible neighborhoods and they have $c$ possible colors, thus $\Pi$ partitions 
  $\bag(t)\setminus (g(t)\cup \lambda(t))$ into at most $c\cdot 2^c$ classes $P_1$, $\dots$, $P_s$.
If a $\Pi$-prefix set contains exactly $j$ vertices of one such class $P_i$, then we know that it contains
  exactly the first $j$ vertices of $P_i$. Thus each set in $\cS_t$
  can be completely specified by describing its intersection with
  $\lambda(t)$ and specifying the size of its intersection with each
  $P_i$. As $|\lambda(t)|\le d$ by assumption, this gives $2^{d}\cdot n^{c\cdot 2^c}$ different
  possibilities.
\end{proof}

We are now ready to define the subproblems of the inner dynamic programming.  
For every $t\in V(T)$, $\overline f\in \cF_{g(t)}$, $S\in \cS_t$, and $0\le i \le |V(G)|$, 
we define
\[
\#\overline \cW_{t,\overline f,S,i}=\{\phi\in \OrdEmbTo{H_{t,S}}{G} : \text{$\phi$ restricted to $g(t)$ is $\overline f$ and $\phi(v)\le i$ for every $v\in \bag(t)\setminus g(t)$}\}.
\]
That is, we need to map only $H_{t,S}$, but we have a restriction on where the vertices of $\bag(t)\setminus g(t)$ can be mapped. As we have observed that $\bag(t)\in \cS_t$ holds, the value
$\#\overline \cW_{t,\overline f,\bag(t),|V(G)|}$ is defined, and it is equal to
$\#\overline \cW_{t,\overline f}$.

\textbf{Solving the subproblems.}  The inner dynamic programming
proceeds by solving the subproblems by increasing value of $i$: when
computing $\#\overline \cW_{t,\overline f,S,i}$, we assume that that values
$\#\overline \cW_{t,\overline f,S',i'}$ are available for every $i'<i$ and
$S'\in\cS_t$.

For $i=0$, determining $\#\overline \cW_{t,\overline f,S,i}$ is trivial:
its value is 1 if $S=g(t)$ and it is 0 if $S\supset g(t)$ (as then an
embedding would need to map a vertex of $S\setminus g(t)$ to the first
$i=0$ vertices). For $i\ge 1$, consider vertex $i$ of $G$.
If an embedding of  $\overline \cW_{t,\overline f,S,i}$ maps no vertex to $i$, or maps a vertex not in $\bag(t)\setminus g(t)$ to $i$, then this embedding already appears in $\overline \cW_{t,\overline f,S,i-1}$. In particular, if $i\in \overline f(g(t))$, then every mapping of $\overline \cW_{t,\overline f,S,i}$ maps a vertex of $g(t)$ to $i$, hence we have 
$\#\overline \cW_{t,\overline f,S,i}=\#\overline \cW_{t,\overline f,S,i-1}$. Thus in the following, we can assume that $i\not\in \overline f(g(t))$.
Let us fix a vertex $v\in \bag(t)\setminus g(t)$ and count the number of embeddings $\phi\in \overline \cW_{t,\overline f,S,i}$ that map $v$ to $i$; summing these values 
for every $v\in \bag(t)\setminus g(t)$ and adding the number $\#\overline \cW_{t,\overline f,S,i-1}$ of embeddings that do not map any vertex of 
$\bag(t)\setminus g(t)$ to $i$ gives exactly the required value $\#\overline \cW_{t,\overline f,S,i}$. 
We consider two cases depending on whether $v$ is in $\lambda(t)$ or not.

\textit{Case 1.} $v\not\in \lambda(t)$. There are three obvious conditions that are necessary for the existence of embeddings in $\overline \cW_{t,\overline f,S,i}$ that map $v$ to $i$.
\begin{itemize}
\item Vertex $i$ needs to have the same color as $v$.
\item For every neighbor $u$ of $v$ in $H$ (note that $u$ has to be in $g(t)$ by Lemma~\ref{cl:nonkappaneighbor}(3)), vertex $\overline f(u)$ should be a neighbor of $i$. 
\item The set $S':=S\setminus \{v\}$ should be $\Pi$-prefix: as $S$ is
  $\Pi$-prefix, the only way for $S'$ to lose this property is if there is
  a $u\in S'$ with $v<u$ that is in the same class of $\Pi$ as $v$, which in particular means that $u\in \bag(t)\setminus (g(t)\cup \lambda(t))$. But then the $\Pi$-ordered embedding $\phi\in
\overline \cW_{t,\overline f,S,i}$ should map $u\in \bag(t)\setminus g(t)$ to a vertex greater than $i$, which is not possible by the definition of $\overline \cW_{t,\overline f,S,i}$.
\end{itemize}

We claim that if $v$ satisfies these three conditions, then any
embedding of $\overline \cW_{t,\overline f,S',i-1}$ can be extended to an
embedding of $\overline \cW_{t,\overline f,S,i}$ by mapping $v$ to $i$. The
only subtle point here is that we need to argue that the embeddings in
$\overline \cW_{t,\overline f,S',i-1}$ cannot already use $i$: this is because
every vertex of $H$ with same color as $i$ and $v$ appears in
$\bag(t)$ (Lemma~\ref{cl:nonkappaneighbor}(2)), we assumed that $\overline f$ maps no vertex of
$g(t)$ to $i$, and an embedding in $\overline \cW_{t,\overline f,S',i-1}$ cannot
map a vertex of $\bag(t)\setminus g(t)$ to $i$ by definition. Thus we can conclude
that if the three conditions hold, the number of embeddings in
$\overline \cW_{t,\overline f,S,i}$ that map $v$ to $i$ is exactly
$\#\overline \cW_{t,\overline f,S',i-1}$.

\textit{Case 2.} $v\in \lambda(t)$. Again, we have two obvious conditions: 
\begin{itemize}
\item Vertex $i$ needs to have the same color as $v$.
\item For very neighbor $u$ of $v$ in $H$ with $u\in g(t)$ (note that
  now $v$ can have neighbors not in $g(t)$), vertex $\overline f(u)$
  should be a neighbor of $i$.
\end{itemize}
The set $S':=S\setminus \{v\}$ is always $\Pi$-prefix:
$v\in \lambda(t)$ implies that $v$ is in a singleton class of
$\Pi$. Again, we can argue that no embedding in
$\overline \cW_{t,\overline f,S',i-1}$ can use vertex $i$ of $G$. This
means that any embedding $\psi:V_{t,S'}\to V(G)$ appearing in
$\overline \cW_{t,\overline f,S,i}$ can be extended to an embedding
$\psi^{+}: V_{t,S'}\cup \{v\}\to V(G)$ by setting
$\psi^{+}(v)=i$. However, $V_{t,S}$ contains more than just
$V_{t,S'}\cup \{v\}$: it contains $\comp(t')$ for every $t'\in
h_t(v)$.
For any $t'\in h_t(v)$, let $f_{t'}$ be the restriction of $\psi^{+}$
to $\sep(t')$. As $\sep(t')\subseteq g(t)\cup \{v\}$ (by definition of
$t'\in h_t(v)$), the function $f_{t'}$ depends only on $\overline f$
and on $i$. We have already computed the value $\#\cW_{t',f_{t'}}$, which
is the number of extensions of $f_{t'}$ to a embedding of $H_t$ to
$G$. The crucial observation is that an embedding
$\psi_{t'}\in \cW_{t',f_{t'}}$ cannot conflict with $\psi^{+}$ in the
sense that it is not possible that there is a vertex
$w\in V_{t,S'}\cup \{v\}$ and a vertex $w'\in \comp(t')$ with
$\psi^{+}(w)=\psi_{t'}(w')$. This would be possible only if $w$ and
$w'$ have the same color, that is, they are adjacent in
$\Hcliquex$. But then $w'\in \comp(t')$ would imply $w\in \cone(t')$,
which is only possible if $w\in \sep(t')$. Mappings $\psi^{+}$ and
$\psi_{t'}$ agree on $\sep(t')$ (as $f_{t'}$ is the restriction of $\psi^{+}$ to $\sep(t')$, while $\psi_{t'}$ extends $f_{t'}$), hence $w\in\sep(t')$ cannot conflict
with $w'$. Thus any $\psi_{t'}\in \cW_{t',f_{t'}}$ can be used to extend
$\psi^{+}$ to an embedding from $H_{t,S'}\cup \{v\}\cup \comp(t')$.
Moreover, by the same reasoning, if $t',t''\in h_t(v)$ are two
distinct children of $t$, then two embeddings
$\psi_{t'}\in \cW_{t',f_{t'}}$ and $\psi_{t''}\in \cW_{t'',f_{t''}}$
cannot conflict either: a vertex of $\comp(t')$ cannot have the same
color as a vertex of $\comp(t'')$. Therefore, if we pick any
combination of embeddings $\psi_{t'}\in \cW_{t',f_{t'}}$ for each
$t'\in h_t(v)$, then together they can be used to extend $\psi^{+}$ to
an embedding of $H_{t,S}$. We can conclude that if the two conditions
hold, then the number of embeddings in
$\overline \cW_{t,\overline f,S,i}$ that map $v$ to $i$ is exactly
$\#\overline \cW_{t,\overline f,S,i-1}$ times the product of the value
$\#\cW_{t',f_{t'}}$ for every $t'\in h_t(v)$.

We have shown that $\#\overline \cW_{t,\overline f,S,i}$ can be computed
in polynomial time, assuming we have already computed
$\#\overline \cW_{t,\overline f,S',i-1}$ for every $S'\in \cS_t$, and
$\#\cW_{t',f'}$ for every child $t'$ of $t$ and mapping
$f'\in \cF_{\sep(t')}$. For a given $t$ and $\overline f$, there are
$2^{d}\cdot |V(G)|^{2^{O(c)}}$ values $\#\overline \cW_{t,\overline f}$ compute (Claim~\ref{cl:twinorderedsubset}) and solving these subproblems allows us to compute $\#\overline \cW_{t,\overline f}$, hence we can conclude that $\#\overline \cW_{t,\overline f}$ can be computed in time $2^{d}\cdot |V(G)|^{2^{O(c)}}$. For a given $t\in V(T)$, the number of possibilities for $\overline f\in \cF_{g(t)}$ is at most $|V(G)|^{g(t)}\le |V(G)|^c$, which means that the at most $|V(H)|\cdot |V(G)|^c$ subproblems 
$\#\overline \cW_{t,\overline f}$ can be all solved in total time $2^{d}\cdot |V(G)|^{2^{O(c)}}$.    
\end{proof}

\subsection{Putting it together}
Finally, we are ready to prove the main result of the section, classifying which of the FPT cases of counting colored patterns is polynomial-time solvable. The result is under assuming the Nonuniform Counting Exponential Time Hypothesis, which states that there is an $\epsilon>0$ such that there is no infinite collection of algorithms $\{A_n\mid n\in N\}$ for some infinite set $N\subseteq \mathbb{Z}^+$ such that algorihm $A_n$ solves $n$-variable \#3-SAT in time $2^{\epsilon n}$. By known reductions, we can replace \#3-SAT with counting perfect matchings in an $n+n$ vertex bipartite graph \cite{DBLP:conf/icalp/Curticapean15}.

\begin{thm}\label{thm:coloredpoly}
Let $\cH$ be a class of colored graph where, for every $H\in \cH$, the graphs in $\spasm H$ have treewidth at most $c$.
\begin{enumerate}
\item If there is a constant $h$ such that the largest half-colorful matching in every $H\in \cH$ is at most $h\log|V(H)|$, then $\SubProb{\cH}$ is polynomial-time solvable.
\item Otherwise, $\SubProb{\cH}$ is not polynomial-time solvable, unless the Nonuniform Counting Exponential Time Hypothesis.
\end{enumerate}
\end{thm}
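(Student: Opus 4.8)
The plan is to establish the two cases of the dichotomy separately, using the structural machinery of this section — Theorems~\ref{thm:spasmboundchar} and~\ref{thm:treedec2}, together with Lemmas~\ref{lem:ordembalg}, \ref{lem:lambdatohalf}, \ref{lem:halftonicehalf}, and~\ref{lem:halfhard} — as black boxes.

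\textbf{Algorithmic direction.} Fix $H\in\cH$ and a host graph $G$. Since $\spasm H$ has treewidth at most $c$ for every $H\in\cH$, Theorem~\ref{thm:spasmboundchar} gives a constant $w$ depending only on $\cH$ such that $\Hcontr$ has treewidth at most $w$ and $H$ has no $w$-flower. Applying Theorem~\ref{thm:treedec2} then computes a guarded cutvertex decomposition $(T,\bag,g)$ of $H$ whose guard size $c_1$ and number of colors per bag $c_2$ are bounded by constants depending only on $\cH$. First I would make this decomposition tight via the easy transformations of Section~\ref{sec:mass-tree-decomp}, which affect neither the guarded-decomposition properties nor the sets $\lambda(t)$. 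Next I would bound $d:=\max_{t}|\lambda(t)|$: by Lemma~\ref{lem:lambdatohalf}, $|\lambda(t)|\ge k$ forces a half-colorful matching in $H$ of size $k/c_1^2-1$, which by hypothesis is at most $h\log|V(H)|$, so $d=O(\log|V(H)|)$ with the constant depending only on $\cH$ and $h$. Feeding the decomposition into Lemma~\ref{lem:ordembalg} computes $\#\Emb H G$ in time $2^{O(d)}|V(G)|^{2^{O(c_1)}}=|V(H)|^{O(1)}|V(G)|^{O(1)}$; running the same routine on the instance $(H,H)$ computes the number of (color-preserving) automorphisms $\#\Aut H$, and then $\#\Sub H G=\#\Emb H G/\#\Aut H$, so the whole computation is polynomial.

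\textbf{Hardness direction.} I would argue by contradiction. If the first case fails, then for every $j\in\mathbb N$ there is $H_j\in\cH$ whose largest half-colorful matching, centered at some color class $C_0$, has size $k_j>j\log|V(H_j)|\ge j$; these sizes are unbounded, so after passing to a subsequence I may assume the $k_j$ strictly increase. Since $\spasm{H_j}$ has treewidth at most $c$, Theorem~\ref{thm:spasmboundchar} again supplies a constant $c_0$ depending only on $\cH$ with no $c_0$-flower in $H_j$, so Lemma~\ref{lem:halftonicehalf} cleans each matching to a set $S\subseteq C_0$ with $|S|\le 3c_0$, vertices $x_1,\dots,x_{n_j}$ with $n_j:=k_j-3c_0$, and color classes $C_{i_1},\dots,C_{i_{n_j}}$ such that $x_\ell$ is the unique neighbor of $C_{i_\ell}$ in $C_0\setminus S$. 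Now assume $\SubProb{\cH}$ runs in time $N^{a}$ on inputs of total size $N$. Given $\epsilon>0$, for every sufficiently large $j$ I would build a nonuniform algorithm $A_{n_j}$ that, on an $n_j+n_j$ bipartite graph $G'$, runs the reduction of Lemma~\ref{lem:halfhard} with $H_j$ and the cleaned-matching data hardwired in, answering its $2^{O(c_0)}$ oracle calls to $\Sub{H_j}{\star}$ — each on a graph of size $O(|V(H_j)|)$ — with the assumed polynomial algorithm; its output is the number of perfect matchings of $G'$, and it runs in time $|V(H_j)|^{O(1)}$. Since $|V(H_j)|<2^{k_j/j}\le 2^{2n_j/j}$ for large $j$, this is at most $2^{\epsilon n_j}$ once $j$ is large enough. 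As the $n_j$ are pairwise distinct and $\epsilon>0$ was arbitrary, this produces, for each $\epsilon$, an infinite family of algorithms counting perfect matchings on $n+n$ bipartite graphs in time $2^{\epsilon n}$, contradicting the Nonuniform Counting Exponential Time Hypothesis.

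The hard part will be the bookkeeping in the hardness direction: one must verify that the nonuniform family of Lemma~\ref{lem:halfhard}-reductions, driven by the hypothetical polynomial-time algorithm, is genuinely fast enough to refute Nonuniform Counting ETH, and this works out precisely because $k_j$ is super-logarithmic in $|V(H_j)|$ — which is exactly the origin of the logarithmic threshold in the statement. On the algorithmic side the only non-immediate point is converting the half-colorful matching bound into $|\lambda(t)|=O(\log|V(H)|)$ via Lemma~\ref{lem:lambdatohalf}, so that the $2^{O(d)}$ factor in Lemma~\ref{lem:ordembalg} stays polynomially bounded.
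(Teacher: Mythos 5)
Your proposal is correct and takes essentially the same route as the paper's proof: the algorithmic direction chains Theorem~\ref{thm:spasmboundchar} (implication (1) $\Rightarrow$ (2)), Theorem~\ref{thm:treedec2}, Lemma~\ref{lem:lambdatohalf} to bound $\max_t|\lambda(t)|$ by $O(\log|V(H)|)$, and Lemma~\ref{lem:ordembalg}; the hardness direction extracts a sequence of graphs with super-logarithmic half-colorful matchings, cleans them with Lemma~\ref{lem:halftonicehalf}, and drives the reduction of Lemma~\ref{lem:halfhard} against the Nonuniform Counting ETH. You are slightly more careful than the paper about a few bookkeeping points (explicitly tightening the decomposition before invoking Lemma~\ref{lem:ordembalg}, dividing by $\#\Aut H$ to pass from embeddings to subgraphs, and extracting a strictly increasing subsequence of matching sizes so that the $n_j$ are pairwise distinct), but these are harmless refinements of the same argument.
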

\begin{proof}
  Suppose that the first statement holds. Consider an instance of
  $\SubProb(\cH)$ with inputs $H$ and $G$. By
  Theorem~\ref{thm:spasmboundchar}, there is a constant $c'$ such that
  $\Hcontr$ has treewidth at most $c'$ and there is no $c'$-flower
  centered at any color class of $H$. Thus by
  Theorem~\ref{thm:treedec2}, there is a constant $c''$ such that we
  can obtain in polynomial time a guarded cutvertex decomposition with
  guard size $c''$ and at most $c''$ colors in each bag. If $\lambda(t)=k$ for some node $t$ of the decomposition, then Lemma~\ref{lem:lambdatohalf} implies that there is a half-colorful matching of size $k/(c'')^2-1$ in $H$. Thus the assumption that there is no such matching larger than $h\log |V(H)|$ implies that $\lambda(t)=O(h(c'')^2\log |V(H)|)$. It follows that the running time of the algorithm of Lemma~\ref{lem:ordembalg} is $|V(H)|^{h(c'')^2}|V(H)|^{2^{O(c'')}}$, which is polynomial time for fixed constants $h$ and $c''$.

  For the second statement, suppose that there is no such $h$. From the assumption that $\spasm H$ has treewidth at most $c$ and Theorem~\ref{thm:spasmboundchar}, there is a constant $c'$ such that there is no $c'$-flower in $H$. Let us fix any $\epsilon>0$. By assumption, there are infinitely many graphs  $H\in\cH$ where the size of the largest half-colorful matching is at least $(1/\epsilon) \log |V(H)|$.
  For every $n\ge 1$, if there is such a graph where the size of the largest half-colorful matching is exactly $n+3c'$, then let us fix such a graph $H_n$. Note that $n\ge (1/\epsilon)\log |V(H)|$ means $|V(H)|\le 2^{\epsilon n}$. Lemma~\ref{lem:halftonicehalf} shows that there is a half-colorful matching of size $n$ satisfying certain conditions and then Lemma~\ref{lem:halfhard} can be used to reduce counting perfect matchings in an $n+n$ vertex bipartite graph to $\SubProb{\cH}$ and use the assumed polynomial-time algorithm. This way, $A_n$ solves the problem of counting perfect matchings in time $(2^{\epsilon n})^{O(1)}$. As we can construct such an infinite sequence of algorithms for any $\epsilon>0$, our complexity assumption fails.
\end{proof}

\section{Open Problems}

We have defined the space of graph motif parameters and explored three useful bases thereof, namely, $\hom$, $\sub$, and $\indsub$.
These bases capture well-studied classes of counting problems, and we could use basis changes to transfer results between these classes.
Are there other computationally interesting bases?
Moreover, are there other interesting subspaces of the space of all graph parameters other than the graph motif parameters?

\section*{Acknowledgments}

Thanks a lot to Édouard Bonnet for pointing out \cite{Scott2007260} and \cite{dvorak2016strongly}.

\bibliographystyle{plainurl}
\bibliography{references}

\appendix
\section{Appendix}

In this section, we provide those proofs that we chose not to include in the main text.

\begin{proof}[Proof of Proposition~\ref{prop: hom-treewidth-algo}]
  We begin by computing an optimal tree decomposition $(T,\bag)$ of~$H$, for
  example via the $O(1.7549^{\abs{V(F)}})$ time exact algorithm by Fomin and
  Villanger~\cite{DBLP:journals/combinatorica/FominV12}.
  Without loss of generality, we assume it to be a \emph{nice} tree 
  decomposition, in which each node has at most two children; the leaves satisfy 
  $\bag(v)=\emptyset$, the nodes with two children $w,w'$ satisfy
  $\bag(v)=\bag(w)=\bag(w')$ and 
  called \emph{join} nodes, the nodes with one child~$w$ satisfy 
  $\bag(v)=\bag(w)\cup\{x\}$ and called \emph{introduce} nodes, or
  $\bag(v)\cup\{x\}=\bag(w)$ and 
  called \emph{forget} nodes.

  Recall that $\comp(v)$ be the union of all bags at 
  or below vertex~$v$ in the tree~$T$.
  The goal of the dynamic programming algorithm is to build up the following 
  information~$I_v$ at each vertex~$v$ of~$T$:
  For each $h\in\Hom{H[\bag(v)]}{G}$, we let $I_v(h)$ be the number of homomorphisms 
  $\bar h\in\Hom{H[\comp(v)]}{G}$ such that $\bar h$ extends $h$.
  We observe some properties of $I_v$.

  Let $v$ be a leaf of~$T$.
  Then $\bag(v)=\emptyset$ and we define $I_v(h)=1$ for the empty function~$h$.
  For each leaf~$v$, we can set up the dynamic programming table entry $I_v$ in 
  constant time.

  Let $v$ be an \emph{introduce} vertex of~$T$.
  Let $w$ be its unique child in the tree.
  Suppose the vertex~$x\in V(G)$ is introduced at this node, that is, 
  $\bag(w)\cup\{x\}=\bag(v)$.
  Let $h\in\Hom{H[\bag(v)]}{G}$, and let $h'$ be $h$ with~$x$ removed from its 
  domain.
  Then $I_v(h)=I_w(h')$ holds because $N_H(x) \cap \comp(v) = N_H(x)\cap 
  \bag(v)$.
  Thus at introduce vertices, we can compute~$I_v$ from the table $I_w$ by going 
  over all table entries $h$ (there are at most $|V(G)|^{|\bag(v)|} \le 
  |V(G)|^{\tw{H}+1}$ of them) and writing the value $I_w(h')$ to $I_v(h)$.

  Let $v$ be a \emph{forget} vertex of~$T$.
  Let $w$ be its unique child in the tree.
  Suppose the vertex~$x\in V(G)$ is forgotten at this node, that is, 
  $\bag(w)\setminus\{x\}=\bag(v)$.
  Then the neighborhood of $x$ is contained in $\comp(w)$.
  Let $h\in\Hom{H[\bag(v)]}{G}$.
  For each $z\in V(G)$, let $h_{z}:\bag(w)\to V(G)$ be the unique function that is 
  consistent with $h$ and satisfies $h(x)=z$.
  Then we have
  \[
    I_v(h) = \sum_{z\in V(G)} \Big[h_z\in\Hom{H[\bag(w)]}{G}\Big]\cdot I_w(h_z)\,.
  \]
  For each of the at most $|V(G)|^{|\bag(v)|} = |V(G)|^{|\bag(w)|-1} \le 
  |V(G)|^{\tw{H}}$ functions~$h$, we compute $I_v(h)$ as follows:
  first we determine the set~$Z$ of all~$z$ for which $h_z$ is a homomorphisms, 
  and then we sum up the corresponding entries $I_w(h_z)$.
  Let $N(x)$ be the open neighborhood of $x$ in $H$, and let $N'$ be the image 
  of $N(x)\cap \bag(v)$ under~$h$.
  Then the set $Z$ is the set of all vertices~$z$ that are adjacent to all 
  vertices in~$N'$.
  Clearly $N'$ has at most $|\bag(v)|\le\tw{H}$ elements.
  Thus $Z=\bigcap_{y\in N'} N_G(y)$ holds, and we can compute this intersection 
  in time $\tilde O(\tw{H} \cdot |V(G)|)$.
  The running time for forget vertices is therefore $\tilde O(\tw{H} \cdot 
  |V(G)|^{\tw{H}+1})$

  Let $v$ be a \emph{join} vertex of~$T$, that is, it has exactly two 
  children~$w$ and~$w'$ with $\bag(v)=\bag(w)=\bag(w')$.
  Let $h\in\Hom{H[\bag(v)]}G$.
  Then $I_v(h) = I_w(h) \cdot I_{w'}(h)$.
  This operation can be computed in time $|V(G)|^{|\bag(v)|} \le |V(G)|^{\tw{H}+1}$.
\end{proof}

To prove Proposition~\ref{prop: hom ETH hard}, we state a
vertex-colorful version of the subgraph counting problem.
For two graphs~$F$ and $G$, we say that $G$ is $F$-colored graph if there is a
homomorphism~$f\in\Hom{G}{F}$ from $G$ to $F$.
Let $\PartitionedSub F G$ be the set of all subgraphs~$H$ of~$G$ that are
isomorphic to $F$ and \emph{vertex-colorful}, that is, these subgraphs~$H$
satisfy~$\abs{f^{-1}(v)\cap V(H)} = 1$ for all $v\in V(F)$.
In the problem $\PartitionedSubProb{\mathcal F}$ for a fixed graph class
$\mathcal F$, we are given a graph~$F\in\mathcal F$ and an $F$-colored
graph~$G$, and we are asked to compute the number~$\#\PartitionedSub F G$.
A full dichotomy for this class of problems is known, and it establishes
treewidth as the tractability criterion, together with near-tight lower bounds
under $\sharpETH$.

\begin{thm}[Corollary~6.2 and~6.3 of \cite{marx2007can}]%
  \label{thm: partitionedsub ETH hardness}
  Let $\mathcal F$ be a recursively enumerable family of graphs such that the
  treewidth of graphs in~$\mathcal F$ is unbounded.
  If $\sharpETH$ is true, the problem $\PartitionedSubProb{\mathcal F}$ cannot
  be solved in time $f(H)\cdot n^{o(\tw H / \log \tw H)}$ for patterns $H \in
  \mathcal F$ of treewidth $\tw H$.
\end{thm}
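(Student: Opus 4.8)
Theorem~\ref{thm: partitionedsub ETH hardness} is Corollary~6.2 and~6.3 of~\cite{marx2007can}; what follows is the shape of a self-contained proof, building on the ETH-hardness of $k$-clique that is already black-boxed in this paper. The plan is a two-step reduction. \emph{Step one (the base problem).} By Chen et al.~\cite{chen2004tight}, $\sharpETH$ rules out $f(k)\cdot n^{o(k)}$ algorithms for counting $k$-cliques in $n$-vertex graphs, and the same holds for the colour-preserving version $\PartitionedSubProb{\{K_k\}}$ --- equivalently, counting colourful $k$-cliques in a $k$-partite host with $n$ vertices per part --- since one can always take the $k$ parts to be an arbitrary partition of the vertex set. \emph{Step two (transfer to an arbitrary pattern).} Fix a family $\mathcal F$ of unbounded treewidth. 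For $H\in\mathcal F$ with $\tw H=w$, the key is a structural \emph{embedding lemma}: $H$ can ``host'' a clique-like constraint network on $k=\Theta(w/\log w)$ terminals, so that a colourful-clique instance on a host $G_0$ with $n$ vertices per part can be rewritten, in polynomial time, as an $H$-coloured host $G$ with $\abs{V(G)}=n^{O(1)}$ vertices and a parsimonious correspondence between colour-preserving $H$-copies in $G$ and colourful $k$-cliques in $G_0$. Granting the lemma, an $f(H)\cdot\abs{V(G)}^{o(\tw H/\log\tw H)}$ algorithm for $\PartitionedSubProb{\mathcal F}$ would, after the reduction, count colourful $k$-cliques in time $g(k)\cdot n^{o(w/\log w)}=g(k)\cdot n^{o(k)}$, contradicting $\sharpETH$.

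The whole difficulty sits in the embedding lemma, and this is precisely the part that occupies several sections of~\cite{marx2007can}. The mechanism is that a graph of treewidth $w$ contains a highly connected (well-linked) set of order $\Omega(w)$, through which one can route paths realising the wiring diagram of a constraint network: each terminal of the clique-CSP is represented by a connected ``branch set'' in $H$ that is forced to carry a single value $v_i\in V(G_0)$, and the routing path between the branch sets for $i$ and $j$ is present in $G$ only when $v_iv_j\in E(G_0)$, so that a colour-preserving copy of $H$ decodes exactly to a choice $v_1,\dots,v_k$ inducing a colourful clique. The loss of a $\log w$ factor --- so that only $\Theta(w/\log w)$ terminals are accommodated rather than $\Theta(w)$ --- is the price of turning the fractional routings guaranteed by well-linkedness into honest low-congestion vertex-disjoint paths, i.e.\ the integrality gap in (multicommodity) flow routing; and keeping the congestion bounded is exactly what keeps the vertex blow-up of $G$ polynomial, which the reduction in step two needs.

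The main obstacle, then, is not the reduction skeleton but (i) proving the routing/embedding lemma with the right trade-off between the number of terminals, the congestion, and $\tw H$, and (ii) the parameter bookkeeping that aligns the $\abs{V(G)}^{o(\tw H/\log\tw H)}$ exponent with the $n^{\Omega(k)}$ clique lower bound through the polynomial blow-up. Since both are carried out in detail in~\cite{marx2007can} --- and these same ingredients underlie Theorem~\ref{thm: dalmau jonsson}, Proposition~\ref{prop: hom ETH hard}, and the hardness results of~\S\ref{sec: complexity of lincombs} --- I would state Theorem~\ref{thm: partitionedsub ETH hardness} as a black box in the main text and refer to~\cite{marx2007can} for the construction.
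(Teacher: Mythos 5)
Your conclusion matches the paper exactly: Theorem~\ref{thm: partitionedsub ETH hardness} is stated there as a black box, attributed to Corollary~6.2 and~6.3 of \cite{marx2007can}, with no proof reproduced, and the paper only uses it (via an inclusion--exclusion argument) to derive Proposition~\ref{prop: hom ETH hard}. Your sketch of the underlying machinery in \cite{marx2007can} --- reduction from colourful clique counting through an embedding lemma based on well-linked sets and low-congestion routing, with the $\log w$ loss coming from the flow integrality gap --- is a fair summary of that external proof, so the proposal is consistent with the paper's treatment.
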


We are ready to prove the following proposition.

\begin{repproposition}{prop: hom ETH hard}
  Let $\mathcal F$ be a recursively enumerable class of graphs of unbounded
  treewidth.
  If $\sharpETH$ holds, there is no $f(H)\cdot \abs{V(G)}^{o(\tw H / \log \tw
  H)}$ time algorithm to compute $\HomProb {\mathcal F}$ for given
  graphs~$H\in\mathcal F$ and~$G$.
\end{repproposition}
\begin{proof}
  We reduce from $\PartitionedSubProb{\mathcal F}$ to $\HomProb{\mathcal F}$.
  Given an instance $(H,G,f)$ with graphs $H\in\mathcal F$ and $G$ and a given
  homomorphism~$f\in\Hom{G}{F}$, the reduction wants to compute $\PartitionedSub{H}{G}$.
  The reduction uses a straightforward inclusion--exclusion argument.
  For a set $A\subseteq V(H)$, let $G_{-A}$ be the graph obtained from~$G$ by
  deleting all vertices~$v\in V(G)$ with $f(v)\in A$.
  Then the set 
  $\Hom{H}{G}\setminus\bigcup_{\emptyset\ne A\subseteq V(H)}\Hom{H}{G_{-A}}$
  contains all homomorphisms~$h\in\Hom{H}{G}$ that are color-preserving, that
  is, they map each vertex~$v\in V(H)$ to a vertex~$h(v)\in V(G)$ with the
  property that $f(h(v))=v$.
  In particular, these homomorphisms are injective.
  By the principle of inclusion and exclusion, their number is equal to
  \[
    \sum_{\emptyset\ne A\subseteq V(H)}
    (-1)^{\abs{A}}
    \cdot
    \#\Hom{H}{G_{-A}}
    \,.
  \]
  Dividing by~$\#\Aut{H}$ yields the number of vertex-colorful~$H$-copies in~$G$
  as required.
\end{proof}

\end{document}